\documentclass[UTF-8,reqno]{amsart}
\usepackage{enumerate}
\usepackage{mhequ}
\usepackage{amssymb,url,color, booktabs,nccmath,amsmath}
\usepackage[left=2.7cm,right=2.7cm,top=3.5cm,bottom=3.5cm]{geometry}
\usepackage{longtable}
\usepackage{mathrsfs}
\usepackage{enumitem,dsfont,extpfeil}
\usepackage{relsize}
\usepackage{tikz}
\usepackage{bbm, dsfont}

\usepackage{color}
\usepackage[colorlinks=true]{hyperref}
\hypersetup{
	linkcolor=magenta,          
	citecolor= darkergreen,        
	filecolor=blue,      
	urlcolor=cyan
}

\definecolor{darkergreen}{rgb}{0.0, 0.5, 0.0}
\definecolor{myblue}{RGB}{0,0,139}

\numberwithin{equation}{section}
\def\theequation{\arabic{section}.\arabic{equation}}
\newcommand{\be}{\begin{eqnarray}}
	\newcommand{\ee}{\end{eqnarray}}
\newcommand{\ce}{\begin{eqnarray*}}
	\newcommand{\de}{\end{eqnarray*}}
\newtheorem{theorem}{Theorem}[section]
\newtheorem{proposition}[theorem]{Proposition}
\newtheorem{lemma}[theorem]{Lemma}
\newtheorem{assumption}[theorem]{Assumption}
\newtheorem{Examples}[theorem]{Example}
\newtheorem{corollary}[theorem]{Corollary}
\newtheorem{remark}[theorem]{Remark}

\newtheorem{definition}[theorem]{Definition}
\theoremstyle{definition}

\def\${|\!|\!|}

\DeclareMathOperator{\supp}{supp}

\def\Tr{\mathrm{Tr}}

\def\T{\mathbb{T}}

\def\eps{\varepsilon}

\def\<{{\langle}}
\def\>{{\rangle}}
\def\({{\Big(}}
\def\){{\Big)}}

\def\bx{{\mathbf{x}}}
\def\tr{\mathrm {tr}}
\def\W{{\mathcal W}}

\def\dif{{\mathord{{\rm d}}}}

\def\min{{\mathord{{\rm min}}}}

\def\={&\!\!=\!\!&}

\def\cB{{\mathcal B}}

\def\cE{{\mathcal E}}

\def\cH{{\mathcal H}}

\def\cK{{\mathcal K}}
\def\cL{{\mathcal L}}

\def\cN{{\mathcal N}}

\def\cT{{\mathcal T}}
\def\cU{{\mathcal U}}
\def\cV{{\mathcal V}}
\def\cW{{\mathcal W}}

\def\cZ{{\mathcal Z}}
\def\gH{{\mathfrak{H}}}
\def\gF{{\mathfrak{F}}}
\def\gS{{\mathfrak{S}}}

\def\1{{\mathbf{1}}}

\def\geq{\geqslant}
\def\leq{\leqslant}
\def\ge{\geqslant}

\def\eps{\varepsilon}

\def\bbH{\mathbb{H}}
\def\bbW{\mathbb{W}}
\def\<{{\langle}}
\def\>{{\rangle}}
\def\({{\Big(}}
\def\){{\Big)}}

\def\bx{{\mathbf{x}}}
\def\tr{\mathrm {Tr}}
\def\W{{\mathcal W}}

\def\dif{{\mathord{{\rm d}}}}

\def\min{{\mathord{{\rm min}}}}

\def\={&\!\!=\!\!&}
\def\bt{\begin{theorem}}
	\def\et{\end{theorem}}
\def\bl{\begin{lemma}}
	\def\el{\end{lemma}}
\def\br{\begin{remark}}
	\def\er{\end{remark}}
\def\bx{\begin{Examples}}
	\def\ex{\end{Examples}}
\def\bd{\begin{definition}}
	\def\ed{\end{definition}}
\def\bp{\begin{proposition}}
	\def\ep{\end{proposition}}
\def\bc{\begin{corollary}}
	\def\ec{\end{corollary}}

\def\geq{\geqslant}
\def\leq{\leqslant}
\def\ge{\geqslant}

\def\si{\phi}
 \def\R{\mathbb R}
 \def\R{\mathbb R}    
\def\N{\mathbb N}  
   
\def\<{\langle} \def\>{\rangle}

\makeatletter

\newcommand{\Rmnum}[1]{\expandafter\@slowromancap\romannumeral #1@}
\makeatother

\allowdisplaybreaks

\begin{document}
	\fontsize{10.0pt}{\baselineskip}\selectfont

	\title[Focusing $\Phi^6_1$ measure from many-body quantum mechanics]{Derivation of the focusing $\Phi^6_1$ measure in the optimal mass regime from many-body quantum Gibbs states}

	\author{Lin L\"u}  
	\address[L. L\"u]{School of Mathematics and Statistics, Beijing Institute of Technology, Beijing 100081, China}
	\email{lvlin0202@163.com}
	
	\author{Phan Th\`anh Nam}
	\address[P. T. Nam]{LMU Munich, Department of Mathematics, Theresienstrasse 39, 80333 Munich, Germany}
	\email{nam@math.lmu.de}
	
	\author{Rongchan Zhu}
	\address[R. Zhu]{School of Mathematics and Statistics, Beijing Institute of Technology, Beijing 100081, China}
	\email{zhurongchan@126.com}

	\begin{abstract}
		 We derive the focusing $\Phi^6_1$ measure on the torus $\mathbb{T}$ as the high-temperature/mean-field limit of many-body quantum Gibbs states with an attractive three-body interaction. The main difficulty in the focusing setting is to relate the classical mass cutoff to the quantum particle-number cutoff.
		Our result reaches the optimal mass threshold for the classical field identified by Oh, Sosoe, and Tolomeo \cite{OST22}, and thereby extends the earlier work of Rout and Sohinger \cite{RS25}.  At the critical threshold, the short-range interaction is allowed to shrink to a Dirac delta function on a logarithmic scale in the temperature parameter. Strictly below the threshold, the same convergence holds with a polynomial dependence on the temperature. Moreover, we establish a quantum-level phase transition at the same mass threshold. The proof develops the variational framework of Lewin, Nam, and Rougerie \cite{LNR15} in the focusing setting and relies on two new ingredients: a non-factorized trial state construction and a delicate tail estimate for the interacting lower symbol. These allow us to control the localization and relative entropy errors caused by the particle-number cutoff, as well as the contribution of the focusing exponential weight.
	\end{abstract}

	\subjclass[2020]{81V70, 35Q40, 35Q55}
	\keywords{Many-body quantum Gibbs states, Focusing $\Phi^6_1$ measure, Variational approach, Normalizability threshold}
	
	\maketitle
	\tableofcontents
	
	\section{Introduction}\label{sec:intro}
Nonlinear Gibbs measures provide a natural framework for the study of dispersive equations, constructive quantum field theory, and singular stochastic dynamics. Formally, a nonlinear Gibbs measure $\mu$ on $\Omega \subset \R^d$ is given by\footnote{The discussion in this introduction remains at a formal level, with rigorous definitions provided in Section~\ref{sec:problem formulation}.}
	\begin{align}\label{def:Gibbs:formal}
		\dif \mu(u)=\frac{1}{\mathscr{Z}} e^{-\cE(u)} \dif u,
	\end{align}
	where $\mathscr{Z}$ is the partition function, and the nonlinear Hamiltonian energy $\cE(u)$ is given by
	\begin{align}\label{def:Hamiltonian}
		\cE(u)=\frac12 \int_{\Omega} |\nabla u(x)|^2 \dif x \, \pm \, \frac{1}{p} \int_{\Omega} |u(x)|^p \dif x, \quad p\geq 2.
	\end{align}
  Measures of the form \eqref{def:Gibbs:formal} were first constructed in constructive quantum field theory; see, for example, \cite{GJ87, Nel73, Nel74, Sim74}. In the context of nonlinear Schr\"odinger equations (NLS), Bourgain showed in a series of seminal works that such measures are invariant under the NLS flow \cite{Bou94, Bou96, Bou97, Bou00}. Since then, this invariance has become a fundamental tool for constructing global solutions with rough random data; see, for instance, \cite{BB14a, BB14b, BDNY24, BS25, BT08a, BT08b, BTT13, Den12, DNY21, DNY24, OP16, Tzv08} and the references therein. These measures also arise as invariant measures and long-time limits of singular stochastic dynamics, thereby linking them to stochastic quantization and stochastic partial differential equations (SPDEs); see \cite{AR91, BG20, CC18, DD03, DHYZ25, DGR25, GH19, GH21,Hai14, Kup16, MW17a, MW17b, RZZ17} for further works in this direction.
	
	In this paper, we consider the mass-critical focusing case $p=6$ on the one-dimensional torus $\mathbb{T}=[0,1]$, where the nonlinear term in the Hamiltonian \eqref{def:Hamiltonian} carries a negative sign. The corresponding measure in \eqref{def:Gibbs:formal} is the focusing $\Phi^6_1$ measure, formally invariant under the 1D focusing quintic NLS. Since the Hamiltonian is unbounded from below, the construction of the focusing Gibbs measure requires suitable taming. In the seminal work \cite{LRS88}, Lebowitz, Rose, and Speer considered the focusing $\Phi^6_1$ measure with an $L^2$-cutoff:
	\begin{align}\label{phi61}
		\dif \mu^K(u):=\frac{1}{\mathscr{Z}^K} \mathrm{exp}\left(-  \int_{\mathbb{T}} \frac12 \left(|u'(x)|^2+|u(x)|^2 \right) \dif x+ \frac16 \int_{\mathbb{T}} |u(x)|^6 \dif x\right) \mathds{1}_{\{ \|u\|_{L^2(\mathbb{T}) } \leq K  \}}\dif u.
	\end{align}
We note that the general nonlinearities $|u|^p$ with $p\geq 3$ were also studied in \cite{LRS88}. For the focusing $\Phi^6_1$ measure in \eqref{phi61}, they identified the optimal threshold as the $L^2$-norm of the ground state $Q$ for the quintic NLS on $\R$, namely, the minimizer of the Gagliardo-Nirenberg-Sobolev inequality on $\R$ with $\|Q\|_{L^6(\R)}^6=3 \|Q'\|_{L^2(\R)}^2$. 
	It is expected that the following phase transition occurs:  $\mathscr{Z}^K<\infty$ if $K\leq  \|Q\|_{L^2(\R)}$, and $\mathscr{Z}^K=\infty$ if $K> \|Q\|_{L^2(\R)}$. The divergence in the supercritical regime was proved in \cite{LRS88}, while their argument for the normalizable regime $K< \|Q\|_{L^2(\R)}$ contains a minor gap. Subsequently, in \cite{Bou94}, Bourgain gave another proof, establishing normalizability for sufficiently small $K>0$. 
	
	More recently, in the breakthrough work \cite{OST22}, Oh, Sosoe, and Tolomeo proved	normalizability up to the optimal threshold $\|Q\|_{L^2(\R)}$. This sharp threshold is special to the one-dimensional mass-critical setting. In higher dimensions, the focusing problem is usually much more singular: for instance, the focusing $\Phi^4_2$ measure is not normalizable for any $K>0$; see \cite{BS96, OST24}. In such regimes, Hartree-type nonlocal interactions provide a useful substitute; see \cite{Bou97} in two and three dimensions and \cite{OOT24} for a refined three-dimensional analysis. For recent developments on the construction and phase transition of focusing Gibbs measures, as well as on the divergence regime and the associated dynamics, see also \cite{DRTW23, GLLOW25, HN25, LW22, RSTW25, TW23, Xia22}.
	
	Our goal is to derive the focusing $\Phi^6_1$ measure from many-body quantum Gibbs states and to show that the same critical threshold also appears at the quantum level. We consider the $n$-body Hamiltonian
	\begin{align}\label{n-body Schrodinger Hamiltonian}
		H_{\lambda,n}:=\frac12\sum_{i=1}^n \left( -\Delta_i +1 \right) -\frac{\lambda}{6} \sum_{1\leq j\neq k \neq \ell \leq n} w^\eps(x_j-x_k)w^\eps(x_j-x_\ell),
	\end{align}
	acting on the bosonic $n$-particle space $L^2_{\mathrm{sym}}(\mathbb{T}^n)$. Here, $-\Delta_i$ denotes the Laplacian acting on the $i$-th coordinate $x_i$, $w^\eps(\cdot)=\eps^{-1}w(\eps^{-1}\cdot)$ with $w\geq 0$, and $\lambda$ is the interaction strength. Note that the interaction term of \eqref{n-body Schrodinger Hamiltonian} is non-positive, so the construction of the quantum Gibbs state also requires a cutoff. Since the particle number $n$ varies, we work within the grand canonical ensemble. The grand canonical Gibbs state at temperature $\tau >0$ is
	\begin{align}\label{def:formal:Gibbs:state}
		\Gamma^{g_\eta}_\tau:=\frac{1}{\cZ^{g_\eta}_\tau} \bigoplus_{n=0}^{\infty}  e^{-H_{\lambda,n}/\tau} g_\eta \left( \frac{n}{\tau}\right),
	\end{align} 
	on the bosonic Fock space $\gF=\bigoplus_{n=0}^\infty L_{\mathrm{sym}}^2(\mathbb{T}^n)$. The scaling $ \lambda=\tau^{-2}$  is the natural choice in the mean-field/high-temperature regime: under the free Gibbs state at temperature $\tau$, the typical particle number is of order $\tau$. Hence, the one-body contribution has size $n \sim \tau$, whereas the three-body term has size $\lambda n^3\sim \lambda \tau^3$. Thus $\lambda=\tau^{-2}$ makes both terms contribute at order one in the Gibbs weight.
	
      From the many-body perspective, the sign in \eqref{n-body Schrodinger Hamiltonian} is crucial. Since $w^\eps\geq 0$, the three-body term is non-positive and becomes more negative when three particles are close on the $\eps$-scale. Thus, clustering lowers the energy, and the interaction is attractive. This is the many-body analogue of the focusing sign in \eqref{def:Hamiltonian}: the Gibbs weight favors concentration rather than penalizing it. With the opposite sign, configurations in which particles are close would instead increase the energy, corresponding to a repulsive (defocusing) model. 			
		At the quantum level, the sharp indicator lacks the regularity needed for functional calculus, so we replace it with a smooth family $g_\eta \to \mathds{1}_{[0, K^2] }$ as $\eta \to 0$, where $K\in ( 0,\|Q\|_{L^2(\R)}]$. Physically, the factor $g_\eta(n/\tau)$ restricts the grand canonical ensemble to the sector where the particle number per temperature stays below the critical mass threshold. In this sense, the cutoff $g_\eta(n/\tau)$ should be understood as the grand-canonical analogue of the classical $L^2$ cutoff in \eqref{phi61}: it prevents the attractive interaction from collapsing the partition function, while it retains the thermodynamically relevant states that converge to the classical focusing Gibbs measure.
	
The basic quantities are the interacting and free partition functions
	\begin{align*}
		\cZ^{g_\eta}_\tau =\sum_{n=0}^\infty \tr \left[ e^{-H_{\lambda,n}/\tau} g_\eta\left( \frac{n}{\tau}\right)  \right],
		\qquad 	\cZ_{\tau,0}=\sum_{n=0}^\infty \tr \left[ e^{-H_{0,n}/\tau} \right],
	\end{align*}
	where the trace is taken over $L^2_{\mathrm{sym}}(\mathbb{T}^n)$. When $w\in L^\infty(\mathbb{T})$, both $\cZ^{g_\eta}_\tau$ and $\cZ_{\tau,0}$ are finite for any fixed $\tau,\eps $ and $\eta $, but each diverges as $\tau \to \infty$. The meaningful object is therefore their ratio. Our first main result proves that, under an admissible relation among $\tau, \eps$ and $\eta$ whose precise form is described below, the relative partition function converges:
	\begin{align}\label{convergence:free:energy}
		\lim_{\tau \to \infty, \,\eps \to 0,\,  \eta \to 0} \frac{	\cZ^{g_\eta}_\tau}{\cZ_{\tau,0}} =\mathscr{Z}^K,
	\end{align}
	where $\mathscr{Z}^K$ is the partition function of the focusing $\Phi^6_1$ measure defined in \eqref{phi61}, for $K\in (0,\|Q\|_{L^2(\R)}]$. We further prove the convergence of the reduced density matrices of the grand-canonical Gibbs state, obtained by tracing out all but finitely many variables. Specifically, for any $K\in ( 0,\|Q\|_{L^2(\R)}]$ and every fixed $k\geq 1$,
	\begin{align}\label{convergence:density:matrices}
		\lim_{\tau \to \infty, \,\eps \to 0, \, \eta \to 0} \frac{1}{\tau^k \cZ^{g_\eta}_\tau} \sum_{n\geq k} \frac{n!}{(n-k)!} \tr_{k+1 \to n} \left[ e^{-H_{\lambda,n}/\tau} g_\eta \left( \frac{n}{\tau}\right)  \right] =\int_{L^2(\mathbb{T})} \big| u^{\otimes k} \big\rangle \big\langle u^{\otimes k} \big| \dif \mu^{K}(u),
	\end{align}
strongly in the trace-class. Here, the partial trace $\tr_{k+1 \to n} $ yields the $k$-body marginal of the $n$-body state and the sum
	over $n$ reflects the fluctuation of the particle number in the grand canonical ensemble. As in \cite[Sec. 2]{LNR14},  convergence of all these density matrices uniquely determines the limiting measure. Thus, \eqref{convergence:free:energy} and \eqref{convergence:density:matrices} give the optimal-mass
	derivation of the focusing $\Phi^6_1$ measure from many-body quantum Gibbs states.
	
A central feature of our result is that it distinguishes the critical and strictly subcritical mass regimes at the level of the microscopic interaction range. At the critical threshold $K=\|Q\|_{L^2(\R)}$, we obtain the convergence above while allowing the interaction range to shrink at the logarithmic scale; see Theorem~\ref{main:theorem} below:
	\begin{align*}
		\eps\gtrsim (\log \tau)^{-1/2}.
	\end{align*}
	If $K<\|Q\|_{L^2(\R)}$, the mass gap provides additional room in the quantum-to-Hartree estimates. Exploiting this room, we improve the admissible relation between $\eps$ and $\tau$ to the polynomial scale; see Theorem~\ref{main:theorem:sub} below:
	\begin{align*}
		\eps\gtrsim \tau^{-1/96}.
	\end{align*}
	Finally, we show that the relative partition function diverges once the cutoff exceeds the threshold, in agreement with the classical non-normalizability result of \cite{LRS88}. 
	More precisely, if $K>\|Q\|_{L^2(\R)}$ and $g\in C_c^\infty([0,\infty);\R_+)$ satisfies $g\equiv 1$ on $[0,K^2]$, then Theorem~\ref{blow:up:thm} below shows that
	\begin{align*}
		\lim_{\stackrel{\tau \to \infty,\, \eps \to 0,}{	\eps\gtrsim (\log \tau)^{-1/2}}}
		\frac{\sum_{n=0}^{\infty} \tr \!\left[e^{-H_{\lambda,n}/\tau} g\!\left( \frac{n}{\tau}\right)\right]}{\sum_{n=0}^{\infty} \tr \left[e^{-H_{0,n}/\tau}\right]}=\infty.
	\end{align*}
	Consequently, the many-body quantum model exhibits the same sharp phase transition as the classical focusing Gibbs measure \cite{OST22, LRS88}.
	
	The derivation of nonlinear Gibbs measures from many-body quantum Gibbs states was initiated by Lewin, Nam, and Rougerie \cite{LNR15}. Since then, this program has been extensively developed in various regimes \cite{CKRTG26, FKSS17, FKSS19, FKSS23, FKSS25, JR25, LNR15, LNR18, LNR21, NZZ26, Soh22}. In the defocusing case, they investigate the $\Phi^4_1$ theory \cite{FKSS17, FKSS19, LNR15, LNR18}, Hartree-type measures with nonlocal interactions in 2D and 3D \cite{FKSS17, FKSS23, LNR21, NZZ26, Soh22}, and the $\Phi^4_2$ theory in \cite{CKRTG26, FKSS25, JR25}. For more details on	the previously known literature, we refer the readers to the expository works \cite{FKSS20} and \cite{LNR19}.  In the most recent work \cite{NZZ25}, the second and third named authors, together with Xiangchan Zhu, rigorously derived the $\Phi^4_3$ measure from quantum Gibbs states by combining many-body methods with stochastic quantization.
	
For focusing models, Dinh and Rougerie \cite{DR26} recently studied one-dimensional attractive two-body interaction of Hartree type in the canonical ensemble. Since the particle number per temperature is fixed in this setting, rather than allowed to fluctuate as in the grand-canonical framework, the limiting Gibbs measure is supported on an $L^2$-sphere.
In the grand-canonical setting, the recent works of Rout and Sohinger \cite{RS23, RS25} treat the focusing $\Phi^4_1$ and $\Phi^6_1$ cases by complex-analytic methods based on a functional-integral representation of the interacting Gibbs state. In the quintic case, \eqref{convergence:free:energy} and \eqref{convergence:density:matrices} were partially established in \cite{RS25}, but only under certain restrictions:
	\begin{enumerate}[label=(\roman*)]
		\item the interaction range $\eps$ is not quantified in terms of the temperature $\tau$, and hence no explicit convergence rate for $\eps$ is obtained;
		\item the analysis only focuses on the regime where the mass cutoff $g$ in the focusing $\Phi^6_1$ measure is supported on sufficiently small intervals.
	\end{enumerate}
The present paper removes both restrictions by developing the variational approach of \cite{LNR15, LNR18,LNR21, NZZ25} in the focusing setting. It reaches the full classical normalizability range $K \in (0,\|Q\|_{L^2(\R)}]$ and proves the matching quantum blow-up above the threshold. More importantly, our result gives an explicit shrinking rate for $\eps$ in the critical and subcritical regimes. As a byproduct, our strategy also recovers and extends the focusing $\Phi^4_1$ result of \cite{RS23}, which corresponds to the standard two-body interaction potentials.

	\section{Problem formulation and main results}\label{sec:problem formulation}
	In this section, we introduce the notation, formulate the classical and quantum problems, and state the	main results together with an outline of the proof and the main difficulties.
	\subsection{Classical and quantum models}
In this subsection, we recall the definitions of the classical and
quantum models used throughout the paper.
	\subsubsection*{The general setup}
	Consider the one-dimensional torus $\mathbb{T}=[0,1]$. The
	one-particle space is $\mathfrak{H}=L^2(\mathbb{T;\mathbb{C}})$.
	The \textit{bosonic $n$-particle} space $\gH^{(n)}$ is the subspace of $\gH^{\otimes n}$ symmetric under permutations, defined by
	\begin{align}\label{def:n-particle:space}
		\gH^{(n)}=\{u\in \gH^{\otimes n}: u(x_1,...,x_n)=u(x_{\sigma(1)},...,x_{\sigma(n)}), \,  \forall (x_1,...,x_n) \in \mathbb{T}^n, \, \forall \sigma \in S_n \}.
	\end{align} 
	We define the one-body operator $h$ by
	\begin{align}\label{def:one-body operator}
		h=\frac12 (-\Delta+1),
	\end{align}
	which is a positive, self-adjoint, densely defined operator on $\mathfrak{H}$. The spectral decomposition of $h$ is given by
	\begin{align}\label{spectral expansion}
		h=\sum_{k\in \mathbb{Z}} \lambda_k  |u_k\rangle  \langle u_k|,
	\end{align}
	where the eigenvalues $\lambda_k$ and the corresponding normalized eigenfunctions $u_k$ are 
	\begin{align}\label{def:eigenvalues}
		\lambda_k=\frac12 ((2\pi k)^2+1), \qquad u_k=e^{2\pi i kx}.
	\end{align}
	In particular, it follows from \eqref{spectral expansion} and \eqref{def:eigenvalues} that 
	\begin{align}\label{fact:finite trace}
		\tr(h^{-1})=\sum_{k\in \mathbb{Z}} \frac{1}{\lambda_k}<\infty.
	\end{align} 
	
	\subsubsection*{Conventions}
	We denote by $\mathbb{N}:=\{1,2,3,...\}$ and $\mathbb{N}_0: =\mathbb{N} \cup \{0\}$. We write $a \lesssim  b$ if $a\leq cb$ for some constant $c>0$. We denote by $C$ a generic positive constant whose value may change from line to line. Let $\R_+:=[0,\infty)$. For an interval $I\subset \R$, we denote by $C_c^\infty(I;\R_+)$ the space of non-negative smooth functions with compact support in $I$. Let 
	\begin{align}\label{optimal:threshold}
		\cK_c:=\|Q\|_{L^2(\R)}
	\end{align} 
	denote the optimal threshold identified in \cite[Theorem 1.4]{OST22}.
	Let $\cH$ be a separable Hilbert space. We denote by $\mathds{1}$ the identity operator on $\cH$. For any subset $A\subset \cH$, let $\mathds{1}_A$ denote the characteristic function of $A$. Furthermore, $\cB(\cH)$ denotes the space of bounded linear operators on $\cH$ and $\cL(\cH)$ denotes the space of compact operators on $\cH$. Recall that the Schatten class $\mathfrak{S}^q(\cH)$ consists of operators $\cT \in \cL(\cH)$ with norm $\|\cT\|_{\mathfrak{S}^q(\cH)}<\infty$, where
	$$	 \|\cT\|_{\mathfrak{S}^q(\cH)}=
	\begin{cases}
		\left( \tr (|\cT|^q) \right)^{1/q} & q< \infty,\\
		\mathrm{sup}  \,  \mathrm{spec} |\cT| &  q= \infty.
	\end{cases}
	$$
	Here $|\cT|:=\sqrt{\cT^*\cT}$ and spec denotes the spectrum of an operator.	
	On the \textit{bosonic $n$-particle} space $\gH^{(n)}$, we define the symmetric tensor product as
	\begin{align*}
		f_1 \otimes_s \cdot \cdot \cdot \otimes_s f_n:= \frac{1}{\sqrt{n!}} \sum_{\sigma \in S_n}   f_{\sigma(1)} \otimes \cdot \cdot \cdot \otimes f_{\sigma(n)} ,
	\end{align*}
	for $f_1,...,f_n \in \gH$.
	If $\xi$ is a closed linear operator on $\gH^{(p)}$,  we can identify $\xi$ with its Schwartz kernel, denoted by $\xi(x_1,...,x_p; y_1,...,y_p)$; see for example \cite[Corollary V.4.4]{RS80}.

	\subsubsection*{The classical model}
	We begin with the non-interacting case. Consider the product probability space $(\mathbb{C}^{\N},\mathcal{G},\nu)$, where $\nu=\mathop  \bigotimes_{n= 1}^\infty \nu_n$ and $ \nu_n:= \frac{1}{\pi} e^{-|z|^2} \dif z$ denotes the standard complex Gaussian measure. Let $\{g_n\}_{n\in \N}$ be a sequence of independent standard complex-valued Gaussian random variables defined via the coordinate mapping $g_n(\omega):=\omega(n)$. We relabel the eigenpairs $\{(u_n,\lambda_n)\}_{n\in \N}$ of $h$, ordered so that $\lambda_n$ is nondecreasing, and define the classical free field $\phi$ via the series expansion
	\begin{align}\label{define:GFF}
		\phi= \sum_{n=1}^{\infty} \frac{g_n}{\sqrt{\lambda_n}} u_n.
	\end{align}
	Since $\tr(h^{-1})<\infty$, it is straightforward to verify that $\phi \in H^{\frac12-}(\mathbb{T})$, $\nu$-almost surely. The pushforward of the measure $\nu$ under the mapping $\phi$ is a complex Gaussian measure on $H^{\frac12-}(\mathbb{T})$ with covariance operator $h^{-1}$. This pushforward measure is referred to as the free Gibbs measure, denoted by $\mu_0$, and admits the formal density
	\begin{align}\label{def:free Gibbs measure}
		\dif \mu_0(u)=\frac{1}{\mathscr{Z}_0} \mathrm{exp}\left(- \int_{\mathbb{T}} \frac12 \left(|u'(x)|^2+|u(x)|^2 \right) \dif x\right) \dif u,
	\end{align}
	which must be interpreted in the sense of finite-dimensional projections. More precisely, the free Gibbs measure $\mu_0$ associated with $h$ is the unique probability measure (see, e.g. \cite[Sec. 3.1]{LNR15}), whose cylindrical projection onto the subspace $V_J=\mathrm{span}\{u_1,...,u_J \}$  for any $J\geq 1$ is given by
	\begin{align*}
		\dif \mu_{0,J}(u):=\mathop  \bigotimes \limits_{j = 1}^J \left(\frac{\lambda_j}{\pi} e^{-\lambda_j|\alpha_j|^2} \dif \alpha_j\right),
	\end{align*}
	where $\alpha_j=\langle u,u_j \rangle $ and $\dif \alpha_j =\dif \mathfrak{R} (\alpha_j) \dif \mathfrak{I}(\alpha_j) $ denotes the Lebesgue measure on $\mathbb{C}\simeq \R^2$. We remark that the measure $\mu_0$ is supported on $H^{\frac12-}(\mathbb{T})$, and hence there is no need to renormalize the mass in the interaction terms.
	
	We now turn to the interacting case and consider the classical potential energy $\cW$ defined by
	\begin{align}\label{def:Phi61:potential}
		\cW(u):= \frac16 \int_{\mathbb{T}} |u(x)|^6 \dif x.
	\end{align}
	Let $\mu_0$ and $\W$ be defined as in \eqref{def:free Gibbs measure} and \eqref{def:Phi61:potential}, and define the full focusing $\Phi^6_1$ measure by
	\begin{align}\label{def:Phi61:measure}
		\dif \mu^{\cK_c}(u): =\frac{1}{\mathscr{Z}^{\cK_c}} e^{\cW(u)}\mathds{1}_{\{\|u\|_{L^2} \leq \cK_c\}}\,\dif \mu_0(u),
	\end{align}
	where $\mathscr{Z}^{\cK_c}=\int_{\mathfrak{H}} e^{\cW(v)}\mathds{1}_{\{ \|v\|_{L^2} \leq \cK_c \}} \,\dif \mu_0(v)$ is the normalization constant (partition function). By \cite[Theorem 1.4]{OST22}, we have that $\mathscr{Z}^{\cK_c}<\infty$, and hence $\mu^{\cK_c}$ is well-defined. The potential energy $\cW$ can be interpreted as a local interaction corresponding to the Dirac delta function $\delta_0$. In this paper, we analyze the quantum model by first deriving a Hartree measure with a nonlocal interaction potential, and then
	proving its convergence to the focusing $\Phi^6_1$ measure \eqref{def:Phi61:measure}. To this end, we introduce the following nonlocal interaction with a regular kernel.
	
	\begin{assumption}\label{assum:on:w}
		Let $w\in L^\infty(\R)$ be an even, nonnegative function with compact support such that
		\begin{align*}
			\int_{\R} w(x)\dif x =\int_{\R} |w(x) |\dif x =1.
		\end{align*}
		Moreover, for $\eps>0$, we periodize $w$ and treat it as a function on $\mathbb{T}$ by setting
		\begin{align}\label{def:weps}
			w^\eps(x)=\sum_{k\in \mathbb{Z}} \eps^{-1} w(\eps^{-1}(x+k)). 
		\end{align}
	\end{assumption}
	
	Observe that Assumption~\ref{assum:on:w} implies that $w^\eps \in L^\infty(\T)$, $w^\eps $ converges weakly to the Dirac measure $\delta_0$ and $  \int_{\mathbb{T}} w^\eps(x)\dif x = \int_{\mathbb{T}} |w^\eps(x) |\dif x =1$. Let $w$ be as in Assumption~\ref{assum:on:w}. For $\eps>0$, we define the Hartree interaction $\mathcal{W}^\eps$ by
	\begin{align}\label{Hartree:interaction}
		\mathcal{W}^\eps(u):= \frac{1}{3!}\int w^\eps(x-y)w^\eps(x-z) |u(x)|^2 |u(y)|^2 |u(z)|^2 \dif x \dif y \dif z =\frac{1}{3!}  \int (w^\eps*|u|^2)^2 |u(x)|^2 \dif x.
	\end{align}
	Let $g\in C^\infty_c([0,\infty);\R_+)$ be supported in $[0,K^2]$ for some $K<\infty$. Since $w\in L^\infty(\mathbb{T})$, one verifies that for any fixed $\eps >0$, $	\mathscr{Z}^{\eps,g}:=\int_{\mathfrak{H}} e^{\cW^\eps(v)}g(\|v\|_{L^2}^2)\,\dif \mu_0(v) \leq \|g\|_{L^\infty}e^{\eps^{-2} \|w\|^2_{L^\infty} K^6}<\infty$. This allows us to define the following Hartree measure
	\begin{align}\label{def:Hartree:measure}
		\dif \mu^{\eps,g}(u): =\frac{1}{\mathscr{Z}^{\eps,g}} e^{\cW^\eps(u)}g(\|u\|_{L^2}^2)\,\dif \mu_0(u).
	\end{align}
	Moreover, the classical interaction $\mathcal{W}^\eps$ can be written as
	\begin{align*}
		\mathcal{W}^\eps=  \frac{1}{3!} \int \overline{u}(x_1) \overline{u}(x_2) \overline{u}(x_3)   W^\eps(x_1,x_2,x_3; y_1,y_2,y_3) u(y_1) u(y_2) u(y_3) \dif x_1 \dif x_2 \dif x_3 \dif y_1 \dif y_2 \dif y_3,
	\end{align*}
	where $W^\eps$ is the three-body operator on $\mathfrak{H}^{(3)}$ with kernel 
	\begin{equation}\label{def:operator W}
		\begin{aligned}
			W^\eps(x_1,x_2,x_3;y_1,y_2,y_3):=\frac{1}{3}&\big(w^{\varepsilon}(x_1-x_2)w^{\varepsilon}(x_1-x_3)+ w^{\varepsilon}(x_2-x_1) w^{\varepsilon}(x_2-x_3) 
			\\&+w^{\varepsilon}(x_3-x_1)w^{\varepsilon}(x_3-x_2)\big)
			\times \delta(x_1-y_1) \delta(x_2-y_2) \delta(x_3-y_3).
		\end{aligned}
	\end{equation}
	In particular, $W^\eps$ acts as a multiplication operator on $\gH^{(3)}$ given by
	\begin{align*}
		\frac13	\big(w^{\varepsilon}(x_1-x_2)w^{\varepsilon}(x_1-x_3)+ w^{\varepsilon}(x_2-x_1) w^{\varepsilon}(x_2-x_3) 
		+w^{\varepsilon}(x_3-x_1)w^{\varepsilon}(x_3-x_2) \big).
	\end{align*}
	Note that $W^\eps$ is a non-negative, self-adjoint, and bounded operator on $\gH^{(3)}$.

	\subsubsection*{The quantum model}
	The underlying many-body Hilbert space is the \textit{bosonic Fock space}
	\begin{align*}
		\mathfrak{F}=\mathfrak{F}(\mathfrak{H}):=\mathbb{C} \oplus \bigoplus_{p=1}^\infty\mathfrak{H}^{(p)},
	\end{align*}
	where $\mathfrak{H}^{(p)}$ denotes the $p$-particle symmetric subspace defined in \eqref{def:n-particle:space}. Since we work in the grand canonical setting, the number of particles is not fixed, and its expectation
	in a quantum state $\Gamma$ is given by $\tr[\cN \Gamma]$,  where the \textit{number operator} is
	\begin{align*}
		\cN=\bigoplus_{n=0}^\infty n \mathds{1}_{\gH^{(n)}}.
	\end{align*}
	If $\Gamma$ commutes with $\cN$, then it admits a diagonal decomposition of the form $\Gamma=\bigoplus_{k=0}^\infty  (\Gamma)_k$. The $n$-body \textit{reduced density matrices} of $\Gamma$ are defined by taking partial traces as follows:
	\begin{align}\label{def:density matrix}
		\Gamma^{(n)}:= \sum_{k\geq n} { k \choose n} \tr_{n+1 \to k} \big[ (\Gamma)_k\big] ,\qquad \forall n\geq 1.
	\end{align}
	We may interpret the reduced density matrices as the quantum analogue of marginal probability density functions. For every self-adjoint operator $A_n$ acting on $\gH^{(n)}$, one has the equivalent formulation
	\begin{align}\label{characterization:second:quan}
		\tr\big[ A_n \Gamma^{(n)} \big] =	\tr\big[ \mathbb{A}_n \Gamma \big] ,
	\end{align}
	where $\mathbb{A}_n$ denotes the \textit{second quantization} of $A_n$, defined by
	\begin{align*}
		\mathbb{A}_n=0\oplus \cdot \cdot \cdot \oplus 0\bigoplus_{k=n}^\infty \left( \sum_{1 \leq i_1 < ...< i_n \leq k} (A_n)_{i_1,...,i_n} \right) .
	\end{align*}
	Before proceeding with the quantum Hamiltonian, we recall the definitions of the creation and annihilation operators on $\gF$. For $\Psi= (\Psi^{(p)})_{p\in \N}\in \gF$ and $f\in \gH$, they are defined by
	\begin{align*}
		(a^\dagger(f)\Psi)^{(p)}(x_1,...,x_p)&=\frac{1}{\sqrt{p}} \sum_{i=1}^p f(x_i)  \Psi^{(p-1)}(x_1,...,x_{i-1},x_{i+1},...,x_p),
		\\(a(f)\Psi)^{(p)}(x_1,...,x_p)&=\sqrt{p+1} \int \overline{f(x)}\Psi^{(p+1)}(x,x_1,...,x_p).
	\end{align*}
	In particular, for each $k\in \mathbb{Z}$, we denote $a^\dagger_k=a^\dagger(u_k)$ and $a_k=a(u_k)$. These operators are also related to the operator-valued distributions $a^\dagger(x)$ and $a(x)$ via
	\begin{align*}
		a^\dagger(f)=\int_{\mathbb{T}} a^\dagger(x)f(x) \dif x, \qquad a(f)=\int_{\mathbb{T}}  a(x) \overline{ f(x)}\dif x.
	\end{align*}
	They satisfy the canonical commutation relations for all $f,g\in \gH$ :
	\begin{align}\label{CCR}
		[a(f),a^\dagger(g)]= \langle f,g\rangle \mathds{1}_{\gF(\gH)}, \qquad [a^\dagger(f),a^\dagger(g)]=[a(f),a(g)]=0.
	\end{align}
	Moreover, the $n$-particle density matrix of a state $\Gamma$, defined in \eqref{def:density matrix}, can also be expressed in terms of the creation and annihilation operators
	\begin{align}\label{characterize of density matrix}
		\big\langle f_1 \otimes_s \cdot \cdot \cdot \otimes_s f_n, \Gamma^{(n)}  g_1 \otimes_s \cdot \cdot \cdot \otimes_s g_n \big\rangle =\tr \left(  a^\dagger(g_1) \cdot \cdot \cdot a^\dagger(g_n) a(f_1) \cdot \cdot \cdot a(f_n) \Gamma  \right).
	\end{align}
	
	The creation and annihilation operators allow one to rewrite the second quantization of any $n$-body operator with a specified orthonormal basis for $\gH$.
	The number operator is defined as  
	\begin{align*}
		\mathcal{N}= \dif \Gamma (\mathds{1}) := \sum_{j\geq 1}  a^\dagger_j a_j .
	\end{align*} 
	For the one-body operator $h$ as in \eqref{def:one-body operator}, the \textit{quantum free Hamiltonian} is given by
	\begin{align*}
		\mathbb{H}_0=\dif \Gamma(h):= 0 \oplus \bigoplus_{n=1}^\infty   \left(\sum_{j=1}^n h_j\right)= \sum_{i,j\geq 1} \langle hu_i,u_j \rangle a^\dagger_i a_j =\int_{\mathbb{T}} a^\dagger(x)(ha)(x) \dif x.
	\end{align*}
	For $w$ as in Assumption~\ref{assum:on:w} and $W^\eps$ as in \eqref{def:operator W}, the \textit{quantum interaction} is defined by
	\begin{equation}\label{Hartree:quantum:interaction}
		\begin{aligned}
			\mathbb{W} :=& \, 0\oplus 0 \oplus 0 \bigoplus_{k=3}^\infty \left( \sum_{1 \leq i_1 < i_2< i_3 \leq k} (W^\eps)_{i_1,i_2,i_3} \right) 
			\\ =& \frac{1}{3!}\int a^\dagger(x)a^\dagger(y) a^\dagger(z)w^\eps(x-y)w^\eps(x-z)a(x)a(y)a(z)  \dif x \dif y \dif z.
		\end{aligned}
	\end{equation}
	Here, $\mathbb{W}$ is the second quantization of the three-body operator $W^\eps$, and therefore defines a non-negative operator on the Fock space $\gF$.	
	The many-particle interacting quantum system is then characterized by the \textit{quantum interacting Hamiltonian}
	\begin{align}\label{quantum interacting Hamiltonian}
		\mathbb{H}_{\tau}= \mathbb{H}_{\tau,0} - \mathbb{W}_{\tau}:= \frac{1}{\tau}\mathbb{H}_{0} -\frac{1}{\tau^3}\mathbb{W}.
	\end{align}
	The free Gibbs state is defined by 
	\begin{align}\label{def:free:Gibbs:state}
		\Gamma_{\tau,0}:= \frac{1}{ \cZ_{\tau,0}}e^{-\mathbb{H}_{\tau,0}}, \qquad \cZ_{\tau,0}:= \tr_{\gF(\gH)} \left( e^{-\mathbb{H}_{\tau,0}} \right).
	\end{align}
	We remark that the free Gibbs state $	\Gamma_{\tau,0}$ is well defined, since $\tr_{\gF(\gH)} [e^{-\bbH_{\tau,0}}]<\infty$ for any $\tau>0$. 
	As in the classical setting, a truncation of the particle number operator $\cN$ is required to ensure that the interacting Gibbs state is well-defined. Let $g\in C^\infty_c([0,\infty);\R_+ )$ and define the interacting Gibbs state by
	\begin{align}\label{def:interacting:gibbs_state}
		\Gamma^g_\tau:= \frac{1}{\mathcal{Z}^g_{\tau}}  e^{-\mathbb{H}_{\tau}} g\left(\frac{\mathcal{N}}{\tau}\right), \qquad \mathcal{Z}^g_{\tau}:=\Tr_{\gF(\gH)} \left(e^{-\mathbb{H}_\tau} g\left(\frac{\mathcal{N}}{\tau}\right)\right),
	\end{align}
	where $\mathcal{Z}^g_{\tau}$ is the partition function. Indeed, for any fixed $\tau>0$ and $\eps>0$, the partition function $\mathcal{Z}^g_{\tau}$ is finite. Consequently, $\Gamma^g_\tau$ defines a quantum state, i.e., a nonnegative trace-class operator on $\gF(\gH)$ satisfying $\tr_{\gF(\gH)}[\Gamma^g_\tau]=1$. Our goal is to recover the full focusing
	$\Phi^6_1$ measure with a sharp indicator cutoff. Such cutoffs, however, are not suitable at the quantum level, as they lack the regularity
	required for several essential estimates. We therefore work with smooth, compactly supported cutoff functions that approximate the sharp indicator cutoff. This necessitates additional conditions on the admissible
	cutoffs at the quantum level.

	\begin{assumption}\label{assum:cutoff}
		For any $K>0$ and $\eta \in\big(0,\frac12 K^2 \big)$, let $f_\eta \in C^\infty_c([0,\infty);[0,1])$ be a non-trivial function such that $f_\eta(s)=1$ for $s\leq K^2-\eta$ and $f_\eta(s)=0$  for $s> K^2$. Moreover, we assume that $\|f_\eta^{(j)}\|_{L^\infty}\lesssim \eta^{-j} $ for any $j\in \N$ and that $f_\eta \to 1_{[0,K^2)}$ pointwise as $\eta \to 0$.
	\end{assumption}
	\noindent
	Such functions can be constructed by mollification. Let $\{\vartheta_\ell\}_{\ell >0}$ be a family of standard mollifiers supported in $(-\ell,\ell)$, and define $g_\ell(x)=1_{(-\infty,K^2-\ell ]}(x)$. We then set $f_\eta(x)=(g_{\eta/2}*\vartheta_{\eta/2})(x)$, which satisfies Assumption~\ref{assum:cutoff}.
	In the sequel, the cutoff parameter $K$ may vary according to the required support of the	cutoff functions.

	\subsection{Main results}
We state the main results in the order in which they are proved. Theorem~\ref{main:theorem} gives the optimal-mass derivation of the focusing $\Phi^6_1$ measure, including the endpoint $\cK_c$. Theorem~\ref{main:theorem:sub} proves that, strictly below the threshold, the convergence rate of the interaction range $\eps$ can be improved to a polynomial scale. Theorem~\ref{Rmk:Hartree} isolates the quantum-to-Hartree convergence at fixed $\eps>0$, where no normalizability threshold is needed. Theorem~\ref{blow:up:thm} proves the complementary supercritical divergence and hence identifies the same
sharp threshold at the quantum level.
	\begin{theorem}[\bf{Derivation of the focusing $\Phi^6_1$ measure in the optimal mass regime}]\label{main:theorem}
		Let $w$ satisfy Assumption~\ref{assum:on:w}, and let $\cK_c=\|Q\|_{L^2(\R)}$ be the optimal threshold defined in \eqref{optimal:threshold}. Let $\cK \in(0,\cK_c]$ be arbitrary. For all sufficiently large $\tau>0$, let $\eta \in \big[ \tau^{-\frac{1}{64}} ,\frac{1}{2}\cK^2  \big)$, and let the cutoff $f_\eta$ satisfy Assumption~\ref{assum:cutoff} with cutoff parameter $\cK$. Assume $1>\eps \geq M \left(\log \tau \right)^{-1/2}$, where $M=(2\|w\|^2_{L^\infty}+1)(\cK^2+4)^3$. Denote by $\Gamma^{f_\eta}_\tau$ the interacting Gibbs state as in \eqref{def:interacting:gibbs_state}, with cutoff $f_\eta$, and by $\mathcal{Z}^{f_\eta}_{\tau}$ the associated partition function. Then, as $\tau \to \infty$ and $\eps, \eta \to 0$, the relative partition function converges:
		\begin{align}\label{main:result:partition}
			\left|	 \frac{\cZ^{f_\eta}_{\tau}}{\cZ_{\tau,0}} -  \int_{\mathfrak{H}} e^{\cW(u)} \mathds{1}_{\{\|u\|_{L^2} \leq \cK \}} \dif \mu_0(u) \right| \to 0,
		\end{align}
		where $\cW=\frac16 \int_{\mathbb{T}} |u(x)|^6 \dif x$ is defined in \eqref{def:Phi61:potential}.
		Moreover, as $\tau \to \infty$ and $\eps, \eta \to 0$, we have trace-class convergence of the density matrices: for all $k\geq 1$,
		\begin{align}\label{main:result:density matrices}
		 \left\| \frac{k!}{\tau^k}(\Gamma^{f_\eta}_{\tau})^{(k)} -\int_{\gH}|u^{\otimes k}\rangle\langle u^{\otimes k}|\,\dif \mu^{\cK}(u) \right\|_{\gS^1(\gH^{(k)})}	\to 0,
		\end{align}		
		where $\dif \mu^{\cK}(u)=\frac{1}{\mathscr{Z}^\cK} e^{\cW(u)}\mathds{1}_{\{\|u\|_{L^2} \leq \cK \}}  \,\dif \mu_0$ is the focusing $\Phi^6_1$ measure.
	\end{theorem}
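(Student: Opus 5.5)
The plan splits the convergence into two stages:
\[
\frac{\cZ^{f_\eta}_\tau}{\cZ_{\tau,0}} \;\approx\; \mathscr{Z}^{\eps,f_\eta}=\int e^{\cW^\eps(u)}f_\eta(\|u\|_{L^2}^2)\,\dif\mu_0(u) \;\approx\; \int e^{\cW(u)}\mathds{1}_{\{\|u\|_{L^2}\le\cK\}}\,\dif\mu_0(u),
\]
with quantitative errors in $\tau,\eps,\eta$ at each step.

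\textbf{First stage (quantum to Hartree).} I would use the variational framework of \cite{LNR15}. The starting point is the Gibbs variational principle for the relative free energy,
\[
-\log\frac{\cZ^{f_\eta}_\tau}{\cZ_{\tau,0}}=\min_{\Gamma}\Big\{\cH(\Gamma,\Gamma_{\tau,0})-\tr\big[\mathbb{W}_\tau\Gamma\big]-\tr\big[\Gamma\log f_\eta(\cN/\tau)\big]\Big\},
\]
where $\Gamma$ ranges over states supported on $\{f_\eta(\cN/\tau)>0\}$. Its classical counterpart is the Gibbs principle $-\log\mathscr{Z}^{\eps,f_\eta}=\min_\mu\{\cH_{\rm cl}(\mu,\mu_0)-\int(\cW^\eps+\log f_\eta)\dif\mu\}$.

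\emph{Upper bound on the free energy.} Insert a trial state built from the Hartree measure, namely an anti-Wick (coherent-state) quantization of $\mu^{\eps,f_\eta}$ projected to a finite number of modes $\Lambda$ of $h$. Because a pure product trial state does not respect the particle-number cutoff $f_\eta(\cN/\tau)$, this is where the non-factorized construction enters. I would take the low-mode part to be the anti-Wick state localized to $\{f_\eta>0\}$, with the high modes in the free state restricted to small particle number. The aim is to bound the number-cutoff cost by $\eta^{-C}\tau^{-c}$, using the derivative bounds $\|f_\eta^{(j)}\|_{L^\infty}\lesssim\eta^{-j}$.

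\emph{Lower bound on the free energy.} Localize in Fock space to the first $\Lambda$ modes and apply the Berezin--Lieb inequality and the quantitative de Finetti theorem to the lower symbol. The error terms are of two kinds. The localization and relative-entropy errors scale like $\Lambda/\tau$ times powers of $\eta^{-1}$. The interaction error is harder: replacing $\tau^{-3}\mathbb{W}$ by $\cW^\eps$ on the lower symbol costs roughly $\|w^\eps\|_\infty^2\,\Lambda\,\tau^{-1}$, and it multiplies the focusing weight $e^{\cW^\eps}\le e^{\eps^{-2}\|w\|_\infty^2(\cK^2+C)^3}$.

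\textbf{Second stage (Hartree to local).} Show
\[
\int e^{\cW^\eps}f_\eta\,\dif\mu_0\;\to\;\int e^{\cW}\mathds{1}_{\{\|u\|_{L^2}\le\cK\}}\,\dif\mu_0 .
\]
Pointwise convergence holds $\mu_0$-a.s., since $u\in H^{1/2-}\subset L^6$. For domination I would use the uniform bound $\cW^\eps(u)\le\cW(u)$ (Young's inequality together with $\int w^\eps=1$), combined with $e^{\cW}\mathds{1}_{\{\|u\|_{L^2}\le\cK_c\}}\in L^1(\mu_0)$ from \cite{OST22}. Since $f_\eta\le\mathds{1}_{[0,\cK^2]}$ and $\mu_0(\|u\|_{L^2}=\cK)=0$, dominated convergence gives the limit, including the critical case $\cK=\cK_c$.

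\textbf{Density matrices.} These follow by the standard perturbation argument: replace $\mathbb{W}_\tau$ by $\mathbb{W}_\tau+s\,\tau^{-k}\,\mathbb{A}_k$ for a finite-rank $A_k$, differentiate the free energies in $s$ using convexity, and upgrade weak convergence to trace-class convergence by the argument of \cite{LNR15}.

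\textbf{Main obstacle.} The key difficulty is the focusing exponential weight in the lower bound. The quantum error must be beaten by $\tau^{-c}$ even though the weight is as large as $e^{\eps^{-2}M'}$, and with $\eps\ge M(\log\tau)^{-1/2}$ this weight is at most $\tau^{1/2}$ or so. The quantum error therefore has to be bounded by something like $\tau^{-1+\delta}$ times this weight. To get this, I would first establish a tail estimate for the interacting lower symbol: its mass on $\{\cW^\eps\text{ large}\}$ should be controlled by the corresponding $\mu_0$ tail, uniformly in $\tau$. I would prove it by a Hölder split, using the crude bound on the bulk and Gaussian concentration of $\|u\|_{L^\infty}$ on the tail. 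This is exactly where the choice of constant $M$ and the logarithmic scale are dictated.
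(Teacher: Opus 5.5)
Your second stage (domination via $\cW^\eps\le\cW$ and the Oh--Sosoe--Tolomeo integrability at $\cK_c$, plus $\mu_0(\|u\|_{L^2}=\cK)=0$) is exactly the paper's Proposition~\ref{prop:hartree to phi}. A Feynman--Hellmann argument for the density matrices is a workable alternative to the paper's de Finetti/Pinsker route. The genuine gap is in the free-energy lower bound, where the paper locates the main difficulty: turning the quantum particle-number cutoff into a classical mass cutoff.

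You pair the quantum Gibbs principle relative to the untruncated $\Gamma_{\tau,0}$, which carries $-\tr[\Gamma\log f_\eta(\cN/\tau)]$, with a classical principle carrying $-\int\log f_\eta(\|u\|_{L^2}^2)\,\dif\mu$. Your plan does not say what happens to this term after localization, and neither natural option works.

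\emph{Keeping the term.} After Berezin--Lieb the classical functional is evaluated on the lower symbol $\mu^{\tau^{-1}}_{P,\Gamma}$. Its density is built from Bargmann transforms (entire functions), so it is positive almost everywhere on $P\gH$. Hence $-\int\log f_\eta(\|u\|_{L^2}^2)\,\dif\mu^{\tau^{-1}}_{P,\Gamma}=+\infty$, and the comparison you need points the wrong way.

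\emph{Discarding the term.} The term $-\tr[\Gamma\log f_\eta(\cN/\tau)]$ is nonnegative, but once it is dropped no mass restriction survives. The classical side becomes $-\log\int e^{\cW^\eps_P}\dif\mu_{0,P}=-\infty$, since $\cW^\eps$ is $6$-homogeneous and equals $\frac16\|u\|_{L^2}^6$ on constants. If you cut the interaction off at a fixed mass $R$, you instead get a bound containing the spurious term $\mu_0(\|u\|^2_{L^2}>R)$, which does not match your upper bound.

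Your proposed tail estimate does not repair this. Controlling the interacting lower symbol on $\{\cW^\eps\ \text{large}\}$ by a $\mu_0$-tail of $\|u\|_{L^\infty}$ targets the wrong quantity. The focusing state is tilted by $e^{+\cW^\eps}$ exactly on that set, so no $\eps$-uniform domination by $\mu_0$ holds there. In any case, an estimate phrased through $\cW^\eps$ or $\|u\|_{L^\infty}$ cannot produce $\mathds{1}_{\{\|u\|_{L^2}\le\cK\}}$.

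What is needed is a \emph{mass} tail bound that uses the support of $\Gamma^{f_\eta}_\tau$ in $\{\cN\le\cK^2\tau\}$. The paper (Lemma~\ref{lemma:interacting:tail}) shows by Schur's lemma that the anti-Wick quantization of a radial $G$ acts on the $n$-particle sector as the scalar $\mathbf{E}[G(\mathbf{Y}/\tau)]$, with $\mathbf{Y}\sim\Gamma(n+J,1)$. This yields $\int_{\{\|Pu\|^2_{L^2}>R\}}\|Pu\|^6_{L^2}\,\dif\mu^{\tau^{-1}}_{P,\tau}\le Ce^{-c\tau}$ for every $R>\cK^2$. Two further ingredients complete the argument:
\begin{itemize}
\item the \emph{truncated} reference state $\Gamma^{f_\eta}_{\tau,0}$, so that no $\log f_\eta$ term ever appears;
\item the $L^1$ comparison of its lower symbol with $\mu^{f_\eta}_{0,P}$ (Lemma~\ref{lemma:L1:norm}).
\end{itemize}
Together these confine the focusing weight to $\{\|u\|^2_{L^2}\le\cK^2+1\}$, where it is at most $e^{M\eps^{-2}}$. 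The logarithmic scale comes from beating this pointwise bound with $\Lambda_e^{-1/8}$, $\Lambda_e\le\tau^{1/4}$, not from $L^\infty$ concentration.

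Inside your untruncated framework you could instead apply Donsker--Varadhan with a penalty on $\{\|u\|^2_{L^2}>\cK^2+\delta\}$ and a margin $\tau^{-1/2}\ll\delta\ll e^{-C\eps^{-2}}$. That route also rests on this particle-number mass tail.

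A milder version of the same issue affects your upper bound. An anti-Wick trial state has Poisson particle number, so it must be restricted to where $f_\eta(\cN/\tau)=1$, not merely $f_\eta>0$, with the classical measure kept at distance $\gg\tau^{-1/2}$ from the edge. Otherwise the cost $-\tr[\Gamma\log f_\eta(\cN/\tau)]$ is uncontrolled.
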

 Thus, Theorem~\ref{main:theorem} identifies both the limiting free energy and the limiting density matrices. In particular, for $\cK\leq \cK_c$, the quantum Gibbs state converges to the focusing $\Phi^6_1$ measure itself, rather than merely at the level of partition functions. This is the first result to cover the optimal range of the mass cutoff, including the threshold case. When $\cK>\cK_c$, the situation is completely different; we return to this case later in Theorem~\ref{blow:up:thm}. 
 
 This result strengthens \cite[Theorem 1.10]{RS25} in two independent directions. First, it reaches the full range of classical normalizability $\cK\in(0,\cK_c]$, including both sharp $L^2$-cutoffs $\mathds{1}_{\{\|u\|_{L^2}\leq \cK\}}$ for any $\cK\in(0,\cK_c]$, and smooth mass cutoffs $g(\|u\|_{L^2}^2)$ with $g\in C_c^\infty([0,\cK_c^2];\R_+)$. Second, unlike in \cite{RS25}, where the dependence between $\eps$ and $\tau$ was not quantified and hence no precise convergence rate was available, Theorem~\ref{main:theorem} makes this dependence quantitative. In particular, it allows $\eps$ to converge within the logarithmic admissible regime $\eps \gtrsim (\log \tau)^{-\frac12}$. Moreover, in the mass-subcritical regime $\cK<\cK_c$, this logarithmic dependence can be further improved to a polynomial dependence on $\tau$, as shown in the next main theorem.
		\begin{theorem}[\bf{Derivation of the focusing $\Phi^6_1$ measure in the mass-subcritical regime}]\label{main:theorem:sub}
		Let $w$ satisfy Assumption~\ref{assum:on:w}, and let $\cK \in(0,\cK_c)$ be arbitrary. For all sufficiently large $\tau>0$, let $\eps \in [\tau^{-\frac{1}{96}},1)$, $\eta \in \big[ \tau^{-\frac{1}{64}} ,\frac{1}{2}\cK^2  \big)$, and let the cutoff $f_\eta$ satisfy Assumption~\ref{assum:cutoff} with cutoff parameter $\cK$. Let $\Gamma^{f_\eta}_\tau$ be the interacting Gibbs state as in \eqref{def:interacting:gibbs_state}, let $\mathcal{Z}^{f_\eta}_{\tau}$ denote the corresponding partition function, and let $\cW$ and $\mu^{\cK}$ be as in Theorem~\ref{main:theorem}. Then, as $\tau \to \infty$ and $\eps, \eta \to 0$, the relative partition function converges:
		\begin{align}\label{main:result:partition:sub}
			\left|	 \frac{\cZ^{f_\eta}_{\tau}}{\cZ_{\tau,0}} -  \int_{\mathfrak{H}} e^{\cW(u)} \mathds{1}_{\{\|u\|_{L^2} \leq \cK \}} \dif \mu_0(u) \right| \to 0.
		\end{align}
	Moreover, the corresponding density matrices converge in trace class: for all $k\geq 1$,
		\begin{align}\label{main:result:density matrices:sub}
			\left\| \frac{k!}{\tau^k}(\Gamma^{f_\eta}_{\tau})^{(k)} -\int_{\gH}|u^{\otimes k}\rangle\langle u^{\otimes k}|\,\dif \mu^{\cK}(u) \right\|_{\gS^1(\gH^{(k)})}	\to 0.
		\end{align}		
	\end{theorem}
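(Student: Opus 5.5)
We split the error into a quantum-to-Hartree step and a Hartree-to-local step. Writing $\mathscr{Z}^{\eps,f_\eta}=\int e^{\cW^\eps(u)}f_\eta(\|u\|_{L^2}^2)\,\dif\mu_0(u)$, the triangle inequality gives
\begin{align*}
\Big|\frac{\cZ^{f_\eta}_\tau}{\cZ_{\tau,0}}-\mathscr{Z}^{\cK}\Big|\leq \Big|\frac{\cZ^{f_\eta}_\tau}{\cZ_{\tau,0}}-\mathscr{Z}^{\eps,f_\eta}\Big| +\big|\mathscr{Z}^{\eps,f_\eta}-\mathscr{Z}^{\cK}\big|,
\end{align*}
and the density matrices split the same way.

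\textbf{Classical step.} We use that $\cK<\cK_c$ leaves a mass gap. The Gagliardo--Nirenberg inequality on $\T$, with the sharp constant fixed by $Q$, gives $\cW^\eps(u)\leq \cW(u)\leq (1-\delta)\frac12\|u'\|^2_{L^2}+C$ whenever $\|u\|_{L^2}\leq\cK$. Here $\delta=\delta(\cK)>0$, and the first inequality uses $\int w^\eps=1$ together with Young's inequality. Following the argument of \cite{OST22}, now with room to spare, we obtain uniform-in-$\eps$ exponential integrability $\sup_\eps\int e^{(1+\delta')\cW^\eps}\mathds{1}_{\{\|u\|\leq\cK\}}\dif\mu_0<\infty$. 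Since $\cW^\eps\to\cW$ $\mu_0$-a.s. (indeed $\|w^\eps*|u|^2-|u|^2\|_{L^\infty}\lesssim\eps^{s}$ for $u\in H^{1/2-}$), dominated convergence handles $\eps\to0$. The limit $\eta\to0$ follows because $\mu_0(\|u\|_{L^2}=\cK)=0$. The same bounds control $\int|u^{\otimes k}\rangle\langle u^{\otimes k}|$ in trace class.

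\textbf{Quantum step, for fixed $\eps$ and with quantitative dependence.} We follow the variational approach of \cite{LNR15}, writing the relative free energy $-\log(\cZ^{f_\eta}_\tau/\cZ_{\tau,0})$ as a minimum of quantum relative entropy minus the interaction, corrected by $\log f_\eta(\cN/\tau)$. The two directions are handled separately.
\begin{enumerate}[label=(\roman*)]
\item \emph{Upper bound on the partition function ratio.} We build a trial state from the Hartree measure via a non-factorized (anti-Wick) quantization: truncate to the first $\Lambda\sim\tau^{a}$ modes, and use the free Gibbs state on the high modes. By Berezin--Lieb inequalities, the error in the entropy is $O(\Lambda/\tau)$. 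The interaction and cutoff errors are controlled by moments $\tau^{-1}$ times powers of $\|w^\eps\|_\infty^2\lesssim\eps^{-2}$ and of $\eta^{-j}$.
\item \emph{Lower bound.} We use the de Finetti/localization argument with lower symbols on $P_\Lambda\gH$ and the Berezin--Lieb inequality in the other direction. The resulting errors involve $\mathrm{Tr}(h^{-1}\mathds{1}_{h>\Lambda})\sim\Lambda^{-1/2}$ and $\Lambda^{k}/\tau$, multiplied by exponential weights $e^{\cW^\eps}$.
\end{enumerate}
Because of the mass gap, these weights are bounded polynomially rather than as $e^{C\eps^{-2}}$. Using $\cW^\eps(u)\leq(1-\delta)\frac12\|u'\|^2+C$ on the support, the relevant expectations are uniformly bounded with only polynomial losses $\eps^{-m}$. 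Optimizing $\Lambda$ then gives an error of the form $\eps^{-c}\eta^{-c'}\tau^{-\gamma}$, which vanishes for $\eps\geq\tau^{-1/96}$ and $\eta\geq\tau^{-1/64}$. By contrast, in Theorem~\ref{main:theorem} the crude bound $e^{\eps^{-2}\|w\|^2_\infty K^6}$ forces the logarithmic scale. Convergence of the density matrices follows by the standard perturbation trick: differentiate the free energy in $\tau^{-k}\mathrm{d}\Gamma(A_k)$ for bounded $A_k$, use convexity, and upgrade from weak-$*$ to trace-class convergence by the argument of \cite{LNR15}, since traces are controlled.

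\textbf{Main obstacle.} The hard step is the lower bound. There the particle-number cutoff $f_\eta(\cN/\tau)$ on the quantum side must be compared to $f_\eta(\|u\|^2)$ on the lower symbol. Localization to $P_\Lambda$ only sees $\cN_{\leq\Lambda}\leq\cN$, so the classical mass may lie slightly above $\cK^2$ by an amount $\tau^{-\gamma}$. This costs a tail estimate for the interacting lower symbol on $\{\cK^2<\|u\|^2\leq\cK^2+\eta\}$ together with a relative entropy error. First we would try to absorb the mass overflow into the gap: pick $\cK'\in(\cK,\cK_c)$, so that Gagliardo--Nirenberg remains strictly subcritical for masses up to $\cK'^2$, and bound the tail using Gaussian concentration of $\cN_{>\Lambda}/\tau$ around $\mathrm{Tr}(h^{-1}\mathds{1}_{h>\Lambda})$.
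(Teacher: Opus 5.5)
Your overall architecture is the paper's. You go quantum-to-Hartree via localization, lower symbols and Berezin--Lieb, and Hartree-to-local via $\cW^\eps\le\cW$, the Oh--Sosoe--Tolomeo bound and dominated convergence. You also use an intermediate mass $\cK'\in(\cK,\cK_c)$ with uniform $(1+\delta)$-exponential moments in place of the pointwise bound $e^{C\eps^{-2}}$. Your Feynman--Hellmann route to the density matrices is a reasonable alternative to the paper's de Finetti plus Pinsker argument.

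\textbf{The gap is in the trial-state step.} This step gives an \emph{upper} bound on the free energy, i.e.\ a \emph{lower} bound on $\cZ^{f_\eta}_\tau/\cZ_{\tau,0}$, the reverse of your label. Your trial state (anti-Wick quantization of the projected Hartree measure, tensored with the free Gibbs state on the high modes) charges every particle-number sector. A coherent state has Poisson-distributed particle number, and the high-mode free state has geometric occupations. Hence $-\tr(\Gamma\log f_\eta(\cN/\tau))=+\infty$, equivalently $\cH(\Gamma,\Gamma^{f_\eta}_{\tau,0})=+\infty$, and the variational bound is empty. The cutoff is a hard constraint that enters through the entropy; it is not a Lipschitz error of size $\eta^{-j}\tau^{-1}$.

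Enforcing it requires cutting off the total number $\cN=\cN_P+\cN_Q$ inside the trial state. That couples low and high modes and destroys the tensor/anti-Wick structure behind your $O(\Lambda/\tau)$ entropy estimate. It is the same coupling that makes the factorized ansatz inadequate in the paper. The paper handles it with the non-factorized state $\Gamma_{\mathrm{ts}}\propto e^{-\bbH_{\tau,0}+\cU^*(\bbW_{\tau,P}\otimes\mathds{1})\cU}f_\eta(\cN/\tau)$. Its entropy discrepancy with the localized problem is written exactly through truncated partition functions and controlled in Lemmas~\ref{lemma:relative entropy} and \ref{lemma:relative entropy:sub}. The subcritical version needs a uniform bound on $\tr(e^{-\bbH_{\tau,P}}\mathds{1}_{\{\cN_P\le\cK^2\tau\}})/\tr(e^{-\bbH_{\tau,0,P}}\mathds{1}_{\{\cN_P\le\cK^2\tau\}})$, which itself comes from the lower-bound machinery. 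The finite-dimensional partition function is then bounded below via coherent states, Peierls--Bogoliubov and a Poisson average, where only $\|f_\eta'\|_{L^\infty}\lesssim\eta^{-1}$ enters. Nothing in your proposal plays this role.

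\textbf{The lower-bound sketch misplaces the mechanism.} Localization cannot push mass above $\cK^2$, since $\cN_P\le\cN\le\cK^2\tau$. The overflow of the interacting lower symbol comes from coherent-state smearing. On the $n$-particle sector, the anti-Wick quantization of a radial $G$ is the scalar $\mathbf{E}[G(\mathbf{Y}/\tau)]$ with $\mathbf{Y}\sim\Gamma(n+J,1)$. This gives $e^{-c\tau}$ decay beyond any fixed $R>\cK^2$ (Lemma~\ref{lemma:interacting:tail}); concentration of $\cN_{>\Lambda}/\tau$ has nothing to do with it. The paper takes $R=\cK_s^2$ at fixed distance above $\cK^2$, not a shell of width $\eta$. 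It then controls $\{\|Pu\|^2\le\cK_s^2\}$ by H\"older and the $L^1$-comparison of Lemma~\ref{lemma:L1:norm}. This needs a $(1+\varsigma)$-moment that is uniform in the projection $P$ (Appendix~\ref{sec:appendix:B}) and valid for the free lower symbol, via $\dif\mu^{\tau^{-1}}_{P,0}\le Ce^{\Lambda_e^2\|u\|^2/\tau}\dif\mu_{0,P}$. Your classical step only gives uniformity in $\eps$.

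If you keep $\Gamma_{\tau,0}$ as reference, the correction $-\tr(\Gamma\log f_\eta(\cN/\tau))$ has no lower-symbol counterpart. Husimi functions have full support, so the classical analogue is $+\infty$. The paper avoids this by taking $\Gamma^{f_\eta}_{\tau,0}$ as reference, so the cutoff enters through $\mu^{\tau^{-1}}_{P,0}$. Finally, the exponents $1/96$ and $1/64$ are asserted rather than derived; in the paper they come from \eqref{refined:relation}.
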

	\begin{remark}
		The exponent $1/96$ in the admissible scale $\eps \geq \tau^{-1/96}$
		is not expected to be optimal. 
		A major open challenge is to reach the physically relevant scale $\eps \ll \tau^{-1}$, suggested by the diluteness condition, which requires the interaction range to be much smaller than the typical inter-particle distance (see, e.g., \cite{Rou20} for a general discussion on different scaling regimes). Achieving this scale seems beyond the reach of the current methods.
	\end{remark}
	
	\begin{remark}
	The same argument also applies to the cubic case with a two-body interaction potential. In this setting, the focusing $\Phi^4_1$ measure is well defined under an arbitrary $L^2$-cutoff $ \mathds{1}_{\{\|u\|_{L^2} \leq K \}}$, $K<\infty$, as summarized in \cite[Theorem 1.1]{OST22}. Consequently, the conclusion of Theorem~\ref{main:theorem:sub} extends directly to the focusing $\Phi^4_1$ measure with  $L^2$-cutoff $ \mathds{1}_{\{\|u\|_{L^2} \leq K \}}$ for any $K>0$, with the same polynomial dependence on $\eps$. This recovers and extends the focusing $\Phi^4_1$ result of \cite{RS23}.
	\end{remark}
We emphasize that the threshold phenomenon does not appear at the Hartree level when $\eps>0$ is fixed. The interaction is bounded on every compact mass set, so the quantum model converges to the Hartree measure for any compactly supported smooth cutoff $g$. The support of $g$ is required to lie below the threshold only when passing from the Hartree measure to the focusing $\Phi^6_1$ measure. We thus obtain the following Hartree-level convergence result.
	\begin{theorem}[\bf{Derivation of the Hartree measure}]\label{Rmk:Hartree}
		Let $w$ satisfy Assumption~\ref{assum:on:w}, fix $\eps >0$, and let $g\in C^\infty_c([0,\infty);\R_+)$. Let $\Gamma^g_\tau$ be the interacting Gibbs state defined in \eqref{def:interacting:gibbs_state} with cutoff $g$, and let $\mathcal{Z}^g_{\tau}$ be the associated partition function. Let $\cW^{\eps}$ be the Hartree interaction defined in \eqref{Hartree:interaction} and let $\mu^{\eps,g}$ be the Hartree measure defined in \eqref{def:Hartree:measure} with cutoff $g$. Then, as $\tau \to \infty$,
		\begin{align*}
			\left|	 \frac{\mathcal{Z}^g_{\tau}}{\cZ_{\tau,0}} -  \int_{\mathfrak{H}} e^{\cW^\eps(u)} g \left( \|u\|_{L^2}^2 \right) \dif \mu_0(u) \right| \to 0.
		\end{align*}
		Moreover, we have the convergence of the density matrices for all $k\geq 1$,
		\begin{align*}
		 \left\| \frac{k!}{\tau^k}(\Gamma_\tau^g)^{(k)} -\int_{\gH}|u^{\otimes k}\rangle\langle u^{\otimes k}|\,\dif \mu^{\eps,g}(u) \right\|_{\gS^1(\gH^{(k)})}	\xrightarrow{\tau \to \infty } 0.
		\end{align*}
	\end{theorem}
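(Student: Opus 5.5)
We plan to follow the variational approach of \cite{LNR15,LNR18}, adapted to the focusing sign and the number cutoff. The first step is to write the relative free energy as a minimization over states. By the Gibbs variational principle,
\begin{align*}
-\log\frac{\cZ^g_\tau}{\cZ_{\tau,0}}=\inf_{\Gamma}\Big\{\cH(\Gamma,\Gamma_{\tau,0})-\tr\big[\bbW_\tau\Gamma\big]-\tr\big[\log g(\cN/\tau)\,\Gamma\big]\Big\},
\end{align*}
where $\cH$ denotes the quantum relative entropy. Only states supported on $\{g(\cN/\tau)>0\}$ contribute. The classical side is analogous:
\begin{align*}
-\log\frac{\mathscr{Z}^{\eps,g}}{1}=\inf_\nu\Big\{\cH_{cl}(\nu,\mu_0)-\int\big(\cW^\eps+\log g(\|u\|^2)\big)\dif\nu\Big\}.
\end{align*}
Since $\eps$ is fixed, $\bbW_\tau$ is bounded by $C_\eps(\cN/\tau)^3$ on the support of $g$, which is compact. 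The focusing sign therefore causes no danger at fixed $\eps$, and the threshold plays no role.

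For the upper bound on $-\log\cZ^g_\tau/\cZ_{\tau,0}$, we take a trial state built from the classical Hartree measure. We first project onto $P=\mathds{1}(h\le\Lambda)$ and form the anti-Wick (upper-symbol) quantization $\int|\xi(\tau^{1/2}u)\rangle\langle\xi(\tau^{1/2}u)|\dif\mu^{\eps,g}_\Lambda(u)$ of coherent states on $P\gH$. We tensor this with the free Gibbs state on $Q\gH$. The Berezin--Lieb inequality then bounds the relative entropy by its classical counterpart. The interaction and the cutoff expectations differ from the classical ones only by $O(\Lambda/\tau)$ Wick-ordering errors. The subtle point is that coherent states do not have a fixed particle number. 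This makes $g(\cN/\tau)$ sample masses near $\|u\|^2$ with fluctuations of order $\tau^{-1/2}$, plus the free $Q$-part mass. We handle this by smoothness of $g$ and by $\log g$ being controlled away from the boundary: we shrink the support slightly and remove the shrinking at the end. Sending $\tau\to\infty$ and then $\Lambda\to\infty$ gives the upper bound.

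For the lower bound we localize. Using the geometric localization in Fock space with respect to $P$ and $Q$, together with monotonicity of relative entropy, we get $\cH(\Gamma,\Gamma_{\tau,0})\ge\cH(\Gamma_P,\Gamma_{\tau,0,P})$. We then apply the quantum de Finetti theorem (lower symbols) on $P\gH$: the lower symbol $\mu_{P,\tau}$ satisfies $\cH(\Gamma_P,\Gamma_{0,P})\ge\cH_{cl}(\mu_{P,\tau},\mu_{0,P})$ by the Berezin--Lieb lower bound. The states are normalized by $\tau^{-k}$, and the number cutoff gives uniform bounds $\tr[(\cN/\tau)^k\Gamma]\le C$. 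Two error sources remain. The first is the interaction carried by the $Q$-modes, which we must bound. We expect this to be the main obstacle, because $\bbW$ enters with a negative sign, so we cannot simply drop it. We plan to bound $\tr[\bbW_\tau\Gamma]-\tr[\bbW_\tau\Gamma_P]$ using $\|w^\eps\|_\infty$ and $\cN/\tau\le C$, together with an a priori bound showing that the $Q$-part of the kinetic energy (equivalently, the relative entropy at high modes) is small for near-minimizers, which yields $\tr[\cN_Q\Gamma]/\tau\to0$ as $\Lambda\to\infty$. The second is that $g(\cN/\tau)$ is not a function of $\cN_P$ alone. Since $\cN=\cN_P+\cN_Q$ and $\cN_Q/\tau$ is small, we use monotonicity where available, and otherwise smoothness of $g$, to compare with $g(\|u\|^2)$ under the lower symbol, up to vanishing errors. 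Passing to the limit by weak compactness of the symbols and lower semicontinuity of the classical functional gives the matching lower bound, and hence convergence of the relative partition function.

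Finally, for the density matrices we use the standard perturbation trick. We replace $\bbW$ by $\bbW+\delta\,\dif\Gamma^{(k)}(A)$ for bounded finite-rank $A$. We apply the free-energy convergence to the perturbed problem, which is still bounded on the support of $g$, and differentiate in $\delta$ at $\delta=0$ using convexity of the free energies in $\delta$. This yields weak-$*$ convergence of $\frac{k!}{\tau^k}(\Gamma^g_\tau)^{(k)}$. Convergence of the traces, obtained from the case $A=\mathds{1}$ and the number bounds, upgrades this to convergence in $\gS^1$.
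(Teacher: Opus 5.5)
\textbf{There is a genuine gap, and it is in the treatment of the particle-number cutoff, which is exactly where the paper concentrates its effort.}

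\textbf{Upper bound.} Because you use the untruncated reference $\Gamma_{\tau,0}$, your functional contains $-\tr[\Gamma\log g(\cN/\tau)]$. With $g\in C^\infty_c$, this term is $+\infty$ for every state that gives positive weight to $\{g(\cN/\tau)=0\}$.

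Your trial state $\Gamma_{\mathrm{trial}}=\int|\xi(\sqrt\tau u)\rangle\langle\xi(\sqrt\tau u)|\,\dif\nu(u)\otimes\Gamma_{\tau,0,Q}$ is such a state. Under each coherent state $\cN_P$ is Poisson, and under $\Gamma_{\tau,0,Q}$ the variable $\cN_Q$ is a sum of geometric variables; both have full support on $\mathbb{N}_0$. Hence your trial free energy is $+\infty$ and the upper bound is empty.

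Shrinking the support of $\nu$ does not fix this. It only places $\|u\|_{L^2}^2$ inside $\{g\geq\delta\}$; it does not remove the Poisson and geometric tails of $\cN/\tau$, so smoothness of $g$ never gets to act.

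What is missing is an explicit truncation of the trial state in the total number $\cN$. For example, take $\Gamma'=\Pi\Gamma_{\mathrm{trial}}/p$ with $\Pi=\mathds{1}(g(\cN/\tau)\geq\delta)$; this commutes with $\Gamma_{\mathrm{trial}}$ when $\nu$ is gauge invariant. You then need estimates showing the truncation costs $o(1)$:
\begin{itemize}
\item $p\to1$;
\item $\cH(\Gamma',\Gamma_{\tau,0})\leq p^{-1}\big(\cH(\Gamma_{\mathrm{trial}},\Gamma_{\tau,0})-(1-p)\log(1-p)\big)-\log p$;
\item smallness of $\tr[\bbW_\tau\Gamma_{\mathrm{trial}}(\mathds{1}-\Pi)]$, via moments of $\cN/\tau$.
\end{itemize}
The truncation destroys the anti-Wick structure, so this is a real additional step, not a Wick-ordering error.

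\textbf{Lower bound.} The same obstruction appears here. The lower symbol $\mu_{P,\tau}$ of any state has an a.e.\ positive density on $P\gH$. Hence $-\int\log g(\|u\|_{L^2}^2)\,\dif\mu_{P,\tau}(u)=+\infty$ for every $\tau$, and no finite-$\tau$ comparison of the cutoff term ``up to vanishing errors'' exists.

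In your compactness framework this can only be repaired after passing to the limit. You must show that the law of $\cN/\tau$ under $\Gamma^g_\tau$ converges to the law of $\|u\|_{L^2}^2$ under the de Finetti limit. This needs $\tr[\cN_Q\Gamma^g_\tau]/\tau\to0$ uniformly in $\tau$ as $\Lambda\to\infty$, i.e.\ no loss of mass at high momenta. You then invoke lower semicontinuity of $-\log g$.

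\textbf{How the paper avoids this.} It never writes $\log g$. It uses the truncated references $\Gamma^{g}_{\tau,0}$ and $\mu^g_0$. Its trial state is the non-factorized $e^{-\bbH_{\tau,0}+\bbW_{\tau,P,\cU}}g(\cN/\tau)$, which carries the exact cutoff, so its entropy is finite and the localization error is computable. On the lower-bound side it pays with the $L^1$ comparison of Lemma~\ref{lemma:L1:norm} and the interacting tail bound of Lemma~\ref{lemma:interacting:tail}. The latter is needed because $e^{\cW^\eps_P}$ is not integrable against the full-support lower symbol of the truncated free state.

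Apart from the cutoff, your route (anti-Wick trial state, compactness and lower semicontinuity, Feynman--Hellmann for the density matrices) is a legitimate qualitative alternative at fixed $\eps$.
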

	
	Here, fixing $\eps >0$ is essential: it makes $\int_{\mathfrak{H}} e^{\cW^\eps(u)} g \left( \|u\|_{L^2}^2 \right) \dif \mu_0(u)$ finite for every $g\in C^\infty_c([0,\infty) ;\R_+)$.
Theorem~\ref{Rmk:Hartree} therefore extends \cite[Theorem 1.8]{RS25}, where the cutoff is required to have sufficiently small support.	
We now turn to the supercritical regime $\cK>\cK_c$. At the classical level, this is precisely the regime in which the focusing $\Phi^6_1$ measure is no longer normalizable; see \cite[Theorem 2.2(b)]{LRS88}. The following theorem shows that the quantum model captures the same transition: once the cutoff exceeds $\cK_c$, the relative partition function diverges.
	\begin{theorem}[\bf{Blow-up above the mass threshold}]\label{blow:up:thm}
		Let $w$ satisfy Assumption~\ref{assum:on:w}. For any $\cK>\cK_c$, let $g\in C_c^\infty([0,\infty);\R_+)$ satisfy $g \equiv 1$ on $[0,\cK^2]$. Assume $1>\eps \geq M \left(\log \tau \right)^{-\frac{1}{2}}$, where $M=(2\|w\|^2_{L^\infty}+1)(\cK^2+4)^3$. Let the quantum Hamiltonians $\mathbb{H}_\tau$ and $\mathbb{H}_{\tau,0}$ be defined as in \eqref{quantum interacting Hamiltonian}. Then, as $\tau \to \infty$ and $\eps \to 0$, the relative partition function diverges:
		\begin{align*}
			\frac{\tr_{\gF(\gH)}\left(e^{-\mathbb{H}_\tau} g(\cN/\tau) \right)}{\tr_{\gF(\gH)}
				\left(e^{-\mathbb{H}_{\tau,0}}\right)} \to \infty.
		\end{align*}
	\end{theorem}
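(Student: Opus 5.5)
We plan to prove Theorem~\ref{blow:up:thm} by a lower bound coming from the Gibbs variational principle. The relative partition function is
\[
-\log\frac{\tr(e^{-\mathbb{H}_\tau}g(\cN/\tau))}{\tr(e^{-\mathbb{H}_{\tau,0}})}=\inf_{\Gamma}\Big\{\cH(\Gamma,\Gamma_{\tau,0})-\tr(\mathbb{W}_\tau\Gamma)-\tr(\Gamma\log g(\cN/\tau))\Big\},
\]
where $\cH$ is the quantum relative entropy. Since $g\equiv1$ on $[0,\cK^2]$, any trial state $\Gamma$ supported on $\{\cN\le \cK^2\tau\}$ gives
\[
\log(\text{ratio})\ \ge\ \tr(\mathbb{W}_\tau\Gamma)-\cH(\Gamma,\Gamma_{\tau,0}).
\]
So it suffices to produce, for every $R>0$, trial states supported on $\{\cN\le \cK^2\tau\}$ for which the right-hand side exceeds $R$ once $\tau$ is large and $\eps\to0$ within the admissible regime.

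\textbf{Classical input.} We use the mechanism of \cite[Theorem 2.2(b)]{LRS88}. Pick $\cK_c<m<\cK$. Let $Q_L(x)=\sqrt{L}\,Q(Lx)$, rescaled so that its mass is $m^2$, and periodized on $\T$. Then $\cW(Q_L)-\frac12\|Q_L'\|^2\gtrsim (m^2/\cK_c^2)^2\,c L^2$ grows like $L^2$. At the Hartree level with range $\eps$ this persists as long as $L\ll\eps^{-1}$, since $\cW^\eps(Q_L)=\cW(Q_L)(1+O(\eps L))$. The kinetic cost of localizing at scale $L^{-1}$ is $O(L^2)$, but with a coefficient strictly smaller than the gain by the mass excess. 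The classical recipe for blow-up is to shift the Gaussian measure by $Q_L$ and restrict to high-frequency fluctuations that are small in $L^2$. The relative entropy of the shifted Gaussian with respect to $\mu_0$ is $\langle Q_L,hQ_L\rangle\approx\frac12\|Q_L'\|^2+O(1)$, while the energy gain is $\cW^\eps(Q_L+\text{fluct})$. The net gain is of order $L^2$, which goes to infinity.

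\textbf{Quantum trial state.} Take the Weyl-translated state $\Gamma=W(\sqrt{\tau}Q_L)\,\Gamma_{\tau,0}^{\le}\,W(\sqrt\tau Q_L)^*$. Here $\Gamma_{\tau,0}^{\le}$ is the free Gibbs state conditioned (suitably smoothly) on low particle number, for instance projected so that the fluctuation particle number is at most $\delta\tau$, with $\delta$ small so that the total is below $\cK^2\tau$. Three estimates are then needed.
\begin{enumerate}[label=(\roman*)]
\item The relative entropy $\cH(W\Gamma_{\tau,0}W^*,\Gamma_{\tau,0})=\tau^{-1}\langle\sqrt\tau Q_L,h\sqrt\tau Q_L\rangle=\langle Q_L,hQ_L\rangle$, plus the cost of conditioning. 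The conditioning costs $-\log\Gamma_{\tau,0}(\cN\le\delta\tau)$, which is $O(1)$ because $\tr(h^{-1})<\infty$ makes $\cN/\tau$ tight under $\Gamma_{\tau,0}$.
\item The interaction $\tr(\mathbb{W}_\tau\Gamma)$. Expanding $a\mapsto a+\sqrt\tau Q_L$, the leading term is $\cW^\eps(Q_L)$. The cross terms are controlled by the normal-ordered density matrices of the conditioned free state, and are small relative to $L^2$ because those density matrices are $O(\tau^k)$ with the bounds $\|w^\eps\|_\infty\le\eps^{-1}\|w\|_\infty$.
\item Particle-number support. The state must genuinely lie in $\{\cN\le\cK^2\tau\}$. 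A Weyl-translated state has Poisson-type number fluctuations of order $\sqrt\tau$, so I would instead multiply by the smooth cutoff $\1(\cN\le\cK^2\tau)$ and renormalize. The tail probability $P(\cN>\cK^2\tau)$ is exponentially small in $\tau(\cK^2-m^2-\delta)$, so the error in both terms is negligible.
\end{enumerate}

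\textbf{Choosing $L$ and the main obstacle.} Choose $L$ with $L\to\infty$ and $L\eps\to0$; for example, $L=\eps^{-1/2}$ works. I expect the main obstacle to be the quantum error terms in (ii) and (iii). They carry factors like $\eps^{-2}\|w\|_\infty^2$ times powers of $(\cK^2+4)$ and must be beaten by the exponential-in-$\tau$ tail bounds. This is exactly where the condition $\eps\ge M(\log\tau)^{-1/2}$ is used: errors of size $e^{C\eps^{-2}}\tau^{-c}$ vanish precisely when $\eps^{-2}\lesssim \log\tau/M^2$. These bounds should be handled using the tail estimates already developed for the convergence part (Theorem~\ref{main:theorem}). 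Once they are in place, the lower bound gives $\log(\text{ratio})\gtrsim L^2-O(1)\to\infty$.
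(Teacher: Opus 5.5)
Your route is genuinely different from the paper's. The paper builds no trial state for this theorem. It takes $f_\eta$ with cutoff parameter $\cK$, so that $\mathds{1}_{[0,\frac12(\cK_c^2+\cK^2)]}\le f_\eta\le g$ for small $\eta$. It then applies Proposition~\ref{Thm:Hartree}, which holds for every cutoff parameter with no threshold, together with \eqref{convergence:for:feta}. This bounds the ratio asymptotically from below by $\int_{\gH} e^{\cW^\eps}\mathds{1}_{\{\|u\|_{L^2}^2\le\frac12(\cK_c^2+\cK^2)\}}\,\dif\mu_0$. Divergence of this integral as $\eps\to0$ follows by truncating to $e^{\min\{\cW^\eps,R\}}$, using $\cW^\eps\to\cW$ $\mu_0$-almost surely with dominated convergence, and then the non-normalizability of \cite{LRS88} with monotone convergence in $R$. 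The hypothesis $\eps\ge M(\log\tau)^{-1/2}$ is simply inherited from Proposition~\ref{Thm:Hartree}.

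You instead re-derive the soliton mechanism behind \cite{LRS88} directly at the quantum level, using the Gibbs variational principle and a displaced, number-conditioned free state. This is self-contained, bypasses the Hartree machinery, and done correctly needs only a very weak relation between $\eps$ and $\tau$. The paper's argument is a few lines once Proposition~\ref{Thm:Hartree} is available and uses the classical divergence as a black box. Your computations of the relative entropy, $-\log p+\langle Q_L,hQ_L\rangle$ with linear terms vanishing by gauge invariance, and of the gain $\cW^\eps(Q_L)-\langle Q_L,hQ_L\rangle\sim c(m)L^2$ when $\eps L\to0$, are correct.

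However, step (ii) is misdiagnosed, and the justification you give would not work. The terms of $W(\sqrt\tau Q_L)^*\,\mathbb{W}\,W(\sqrt\tau Q_L)$ containing one or two $a^\dagger a$ pairs are bulk contributions, not tail contributions. After the factor $\tau^{-3}$ they are of order one in $\tau$; their classical analogue is $\mathbf{E}[\cW^\eps(Q_L+u)]-\cW^\eps(Q_L)-\mathbf{E}[\cW^\eps(u)]$. They have no definite sign, so no tail estimate in $\tau$ can absorb them.

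If you bound them with $\|w^\eps\|_{L^\infty}\le\eps^{-1}\|w\|_{L^\infty}$ and $\tr\gamma^{(k)}\lesssim(\delta\tau)^k$, you get errors of order $\delta\eps^{-2}$. This exceeds your gain $L^2$ for every $L\ll\eps^{-1}$; for your choice $L=\eps^{-1/2}$ the gain is only $\eps^{-1}$. What works is $\|w^\eps\|_{L^1}=1$ combined with the uniform diagonal bound $\gamma^{(k)}(x;x)\le C_k\tau^k$ for the conditioned state. That bound follows from $\Gamma^{\le}_{\tau,0}\le p^{-1}\Gamma_{\tau,0}$ and $\sum_k\lambda_k^{-1}<\infty$. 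It gives contributions of order $\|Q_L\|_{L^\infty}^2\sim L$ and of order one, both $o(L^2)$. Consequently your last paragraph misplaces the logarithmic constraint: no $e^{C\eps^{-2}}$ errors arise in your route.

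Three further steps need actual arguments.
\begin{enumerate}
\item The number cutoff must be a compression $\Pi\Gamma\Pi/\tr(\Pi\Gamma)$ with $\Pi=\mathds{1}(\cN\le\cK^2\tau)$, since $\Gamma$ does not commute with $\cN$. Its relative entropy is controlled by data processing under the pinching $\rho\mapsto\Pi\rho\Pi+(1-\Pi)\rho(1-\Pi)$, which fixes $\Gamma_{\tau,0}$.
\item The exponential number tail of the displaced state needs a proof. One way is to split off the mode $Q_L/\|Q_L\|_{L^2}$ and use a Chernoff bound for displaced Fock states; the natural condition is $(m+\sqrt\delta)^2<\cK^2$.
\item The bound $-\log\tr(\Gamma_{\tau,0}\mathds{1}(\cN\le\delta\tau))=O(1)$ follows from the Gaussian small-ball positivity $\mu_0(\|u\|_{L^2}^2\le\delta/2)>0$ together with \eqref{convergence:for:any:g}, not from tightness of $\cN/\tau$.
\end{enumerate}
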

	
	Taken together, Theorems~\ref{main:theorem} and \ref{blow:up:thm} describe a sharp quantum phase transition at the cutoff $\cK_c$. Below and at the threshold $\cK_c$, the normalized partition function converges to a finite classical value, and the many-body system has a well-defined focusing $\Phi^6_1$ limit. Above the threshold, the same normalized quantity diverges. This matches the classical normalizability picture proved in \cite{LRS88,OST22}.

	\begin{remark}
		Although the preceding discussion concerns the time-independent case, our analysis can be extended to the time-dependent setting considered in \cite{FKSS19, RS25}. In particular, the convergence of the time-dependent correlation functions established in \cite[Theorem 1.14]{RS25} can be extended to the full focusing $\Phi^6_1$ measure. The proof combines the Schwinger-Dyson expansion developed in \cite{FKSS19}, the time-independent convergence established in Theorem~\ref{main:theorem} and a stability analysis for the quintic NLS. As this argument is essentially parallel to those of \cite{FKSS19, RS25} and not directly related to the variational framework of the present paper, we do not pursue this direction further.
	\end{remark}
	
	\subsection{Outline of variational proof and difficulties}\label{Outline of variational proof and difficulties}
	The proof of Theorem~\ref{main:theorem} has two layers. First, we derive a Hartree-type measure from the quantum Gibbs state by adapting the variational framework of \cite{LNR15, LNR18, LNR21, NZZ25} to the present focusing model with a particle-number cutoff. Second, we pass from the Hartree interaction to the local $\Phi^6_1$ interaction. The first layer is summarized by the following general-cutoff statement.
	
	\begin{proposition}\label{Thm:Hartree}
		Let $w$ satisfy Assumption~\ref{assum:on:w}, and let $\cK>0$ be arbitrary. For all sufficiently large $\tau>0$, let $\eta \in \big[ \tau^{-\frac{1}{64}} ,\frac{1}{2}\cK^2  \big)$, and let the cutoff $f_\eta$ satisfy Assumption~\ref{assum:cutoff} with cutoff parameter $\cK$. Assume $1>\eps \geq M\left(\log \tau \right)^{-\frac12}$, where $M=(2\|w\|^2_{L^\infty}+1)(\cK^2+4)^3$.
	Denote by $\cZ^{f_\eta}_{\tau}$ and $\mathcal{Z}^{f_\eta}_{\tau,0}$ the associated partition functions. Then, as $\tau \to \infty$,
		\begin{align}\label{key:bound1}
			\left| \frac{\cZ^{f_\eta}_{\tau}}{\cZ^{f_\eta}_{\tau,0}} - \frac{\int_{\gH} e^{ \cW^\eps(u) } f_\eta(\|u\|_{L^2}^2) \dif \mu_0(u)}{\int_{\gH} f_\eta(\|u\|_{L^2}^2) \dif \mu_0(u) } \right| \to 0,
		\end{align}
		where $	\cW^\eps$ is the Hartree interaction defined in \eqref{Hartree:interaction}. Moreover, for all $k\geq 1$,
		\begin{align}\label{key:bound2}
				\left\| \frac{k!}{\tau^k}(\Gamma_{\tau}^{f_\eta})^{(k)} - \int_{\gH} |u^{\otimes k}\rangle  \langle u^{\otimes k} | \dif \mu^{\eps,f_\eta}(u) \right\| _{\gS^1(\gH^{(k)})}  \to 0,
		\end{align}
		where $	\dif  \mu^{\eps,{f_\eta}}(u):=\frac{e^{\cW^\eps(u)} f_\eta(\|u\|_{L^2}^2) \dif   \mu_{0}(u)}{\int e^{\cW^\eps(v)} f_\eta(\|v\|_{L^2}^2) \dif   \mu_{0}(v)}$ denotes the Hartree measure on $\gH$.
	\end{proposition}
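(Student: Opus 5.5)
We follow the variational route of Lewin--Nam--Rougerie. The starting point is the Gibbs variational principle, applied relative to the reference state $\Gamma^{f_\eta}_{\tau,0}:=e^{-\mathbb{H}_{\tau,0}}f_\eta(\cN/\tau)/\cZ^{f_\eta}_{\tau,0}$, which commutes with $\mathbb{H}_{\tau,0}$ and $\cN$. This gives
\begin{align*}
-\log \frac{\cZ^{f_\eta}_{\tau}}{\cZ^{f_\eta}_{\tau,0}}=\min_{\Gamma}\Big\{ \mathcal{H}(\Gamma,\Gamma^{f_\eta}_{\tau,0}) - \tau^{-3}\tr[\mathbb{W}\Gamma]\Big\},
\end{align*}
with the minimum over states supported on $\supp f_\eta(\cN/\tau)$; strictly, one should first write $e^{-\mathbb{H}_\tau}f_\eta=e^{-\mathbb{H}_{\tau,0}}f_\eta\, e^{\mathbb{W}_\tau}$ on each $n$-sector. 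The classical counterpart is
\[
-\log\frac{\int e^{\cW^\eps}f_\eta\,\dif\mu_0}{\int f_\eta\,\dif\mu_0}=\min_{\nu}\Big\{\mathcal{H}_{\rm cl}(\nu,\mu_0^{f_\eta})-\int \cW^\eps\,\dif\nu\Big\}.
\]
Convergence of the free energies then follows from matching upper and lower bounds.

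\textbf{Upper bound.} We insert a trial state built from the classical minimizer $\mu^{\eps,f_\eta}$. The natural candidate is a Wick/anti-Wick quantization, i.e. an upper-symbol construction $\Gamma=\int |\xi(\sqrt\tau u)\rangle\langle \xi(\sqrt\tau u)|\,\dif\nu(u)$ built from coherent states $\xi$, projected onto the spectral subspace of $\cN$ below $\cK^2\tau$. This projection breaks factorization: coherent states are not eigenvectors of $\cN$. We therefore take a non-factorized trial state of the form $f_\eta(\cN/\tau)^{1/2}\,(\cdot)\,f_\eta(\cN/\tau)^{1/2}$ applied to the upper-symbol quantization, with a slightly shrunken cutoff. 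The relative entropy is then controlled by the Berezin--Lieb inequality together with the fact that $\cN/\tau$ concentrates on coherent states up to $O(\tau^{-1/2})$. The commutator error is bounded using $\|f_\eta^{(j)}\|_\infty\lesssim\eta^{-j}$, which is why we need $\eta\geq\tau^{-1/64}$.

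\textbf{Lower bound.} We localize to a finite-dimensional subspace $P=\mathds{1}(h\le\Lambda)$ with $\Lambda=\tau^{\alpha}$. We use the Fock-space localization $\Gamma\mapsto(\Gamma_P,\Gamma_{Q})$, the relative-entropy monotonicity and superadditivity bounds of \cite{LNR15}, and the de Finetti (lower symbol) measure $\nu_{P}$ of $\Gamma_P$. Quantitative de Finetti bounds with error $\lesssim \dim(P)/(\tau\cdot\text{mass})$ reduce $\tau^{-3}\tr[\mathbb{W}\Gamma]$ to $\int\cW^\eps\,\dif\nu_P$. The constants here involve $\|W^\eps\|\lesssim\eps^{-2}\|w\|_\infty^2$ multiplied by powers of $\cN/\tau\le\cK^2$, up to localization errors. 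The high-momentum part is controlled by the free-state bound $\tr[h^{-1}\cdots]$ and the smallness of $\Lambda^{-1/2}$.

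\textbf{Main obstacle.} The hardest step is controlling the focusing exponential weight at the lower-bound stage. The variational bound only yields $\int e^{\cW^\eps}$ after exponentiation, and the errors appear inside exponentials of size $e^{C\eps^{-2}\cK^6}$. This forces $\eps^{-2}\lesssim \log\tau/M^2$, matching the constant $M$ in the statement. We therefore need a tail estimate for the interacting lower symbol: its mass outside $\{\|u\|^2\le\cK^2+\delta\}$ must be exponentially small. We would derive this from the cutoff $f_\eta(\cN/\tau)$ together with the concentration of $\cN$. Once the free energies match, we obtain the density-matrix convergence \eqref{key:bound2} by perturbing $\mathbb{W}\to\mathbb{W}+s\,\mathbb{A}_k$, differentiating in $s$, and using convexity (Griffiths argument). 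Trace-class convergence then follows from weak convergence plus convergence of traces, as in \cite{LNR15}.
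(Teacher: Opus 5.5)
Your lower bound is essentially the paper's route: localize to $P\gH$, pass to lower symbols via Berezin--Lieb, and discard $\{\|u\|_{L^2}^2>\cK^2+\delta\}$ with an exponential tail estimate for the interacting lower symbol. Note that only monotonicity of the relative entropy is available there, since superadditivity needs a factorized reference. The upper bound, however, fails as you describe it.

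A coherent-state (upper-symbol) quantization $\Gamma=\int|\xi(\sqrt\tau u)\rangle\langle\xi(\sqrt\tau u)|\,\dif\nu(u)$ of a measure $\nu\ll\mu_0$ on all of $\gH$ satisfies $\tr[\bbH_{\tau,0}\Gamma]=\int\langle u,hu\rangle\,\dif\nu(u)=+\infty$, because $\mu_0$-almost every $u$ lies outside $H^1(\T)$. For $\nu=\mu_0$, for instance, $\Gamma$ is the quasi-free state with one-body density matrix $\tau h^{-1}$ rather than $(e^{h/\tau}-1)^{-1}$. Each mode with $\lambda_j\gg\tau$ then contributes order one to the relative entropy. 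Hence $\cH(\Gamma,\Gamma^{f_\eta}_{\tau,0})=+\infty$.

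Compressing by $f(\cN/\tau)^{1/2}$ does not help. Every $n$-sector of $f(\cN/\tau)^{1/2}\xi(\sqrt\tau u)$ is a multiple of $u^{\otimes n}$ and still has infinite kinetic energy. The trial state must therefore be semiclassical only on $P\gH$ and equal to the quantum free Gibbs state on $Q\gH$.

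Once you localize, the genuine focusing difficulty appears:
\begin{itemize}
\item $f_\eta(\cN/\tau)$ with $\cN=\cN_P+\cN_Q$ couples the two factors, so the reference state $\Gamma^{f_\eta}_{\tau,0}$ does not factorize.
\item A compression $f^{1/2}\Gamma f^{1/2}$ is no longer a coherent-state mixture, so Berezin--Lieb gives no control on its entropy.
\item The term $-\tr[\Gamma\log f_\eta(\cN/\tau)]$ from the reference state is singular at the edge of the support and sees the Poisson tails of coherent states.
\end{itemize}
Concentration of $\cN/\tau$ under coherent states does not resolve these issues.

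The missing idea is the paper's non-factorized trial state
\[
\Gamma_{\mathrm{ts}}\propto e^{-\bbH_{\tau,0}+\cU^*(\bbW_{\tau,P}\otimes\mathds{1})\cU}\,f_\eta(\cN/\tau).
\]
It is free on the high modes, interacts only through the $P$-localized three-body term, and keeps the \emph{full} cutoff. Since it is itself a Gibbs state, its relative entropy with respect to $\Gamma^{f_\eta}_{\tau,0}$ is an explicit combination of partition-function ratios. Its deviation from $\cH(\Gamma^{f_\eta}_{\tau,P},\Gamma^{f_\eta}_{\tau,0,P})$ is controlled using only three inputs (Lemmas~\ref{errors:localization_interaction} and~\ref{lemma:relative entropy}):
\begin{itemize}
\item $\|f_\eta'\|_{L^\infty}\lesssim\eta^{-1}$;
\item Golden--Thompson on $\{\cN_P\leq\cK^2\tau\}$;
\item $\tr(\Gamma_{\tau,0,Q}\cN_Q)/\tau\lesssim\Lambda_e^{-1/2}$.
\end{itemize}
The semiclassical step is then only a lower bound on $\tr(e^{-\bbH_{\tau,P}}f_\eta(\cN_P/\tau))$. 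This uses coherent states, Peierls--Bogoliubov on each $n$-sector, and Poisson concentration, which together convert the number cutoff into the mass cutoff. No upper-symbol entropy estimate is needed.

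Separately, your Griffiths argument for \eqref{key:bound2} is not immediate here. The target $\int|u^{\otimes k}\rangle\langle u^{\otimes k}|\,\dif\mu^{\eps,f_\eta}$ moves with $\tau$ through $\eps$ and $\eta$, so ``weak convergence plus convergence of traces'' has no fixed limit to act on. You would also have to redo both free-energy bounds for $\bbW+s\mathbb{A}_k$, where the positivity of $\bbW$ used in the paper is lost. The paper instead obtains trace-norm convergence directly, from quantitative de Finetti on $P\gH$ together with an $L^1$ bound on lower symbols that follows from Pinsker's inequality and the matched free energies.
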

	Proposition~\ref{Thm:Hartree} is proved in Section~\ref{sec3:lower:upper:free-energy}. For the quantum-to-Hartree argument itself, the same proof applies to any fixed $g\in C_c^\infty([0,\infty);\R_+)$. Once the cutoff is fixed, $\eta$ can also be treated as fixed, and it suffices to keep quantitative control of the errors in $\tau$ and $\eps$. This substantially simplifies the argument. The additional assumptions on $f_\eta$ are needed only in the proof of Theorem~\ref{main:theorem}, where the cutoff approaches the critical regime as $\eta \to 0$, and the resulting cutoff errors must be balanced in a $\tau$-dependent way using Proposition~\ref{lemma:non-interacting} below. 
	After Proposition~\ref{Thm:Hartree}, the normalizability result of \cite[Theorem 1.4]{OST22}, and a classical approximation argument identify the local focusing $\Phi^6_1$ limit. It is only in this second layer of the argument that $\supp f_\eta \subset [0,\cK_c^2]$ is required.

We now outline the variational strategy for the proof of Proposition~\ref{Thm:Hartree}. The starting point is the pair of quantum and classical Gibbs variational principles, which reduce the problem to comparing two minimization problems with different reference states or measures. Recall that, by \cite[(1.9)]{LNR15}, the interacting Gibbs state $\Gamma^{f_\eta}_\tau$, defined in \eqref{def:interacting:gibbs_state}, is the $\textit{unique minimizer}$ of the variational problem\footnote{We remark that applying the variational principle to the free Gibbs state $\Gamma_{\tau,0}$ leads to an extra term $\tr\left( \Gamma \log f_\eta(\cN/\tau)\right)$ on the right-hand side of \eqref{variational principle quantum}. Similarly, its application to the free Gibbs measure $\mu_0$ leads to an extra term $\int_{\gH} \log f_\eta(\|u\|_{L^2}^2) \dif \nu(u)$ on the right-hand side of \eqref{variational principle classical}. These terms pose significant difficulties for proving upper and lower bounds. For this reason, we employ truncated free Gibbs states and measures here.}:
	\begin{align}\label{variational principle quantum}
		-\log  \frac{\cZ^{f_\eta}_{\tau}}{\cZ^{f_\eta}_{\tau,0}}= \inf_{\substack{\Gamma\geq 0\\ \tr_{\gF}\Gamma=1}} \Big\{ \cH(\Gamma,\Gamma^{f_\eta}_{\tau,0}) -  \Tr[ \bbW_\tau \Gamma]  \Big\},
	\end{align}
	where $\cH(\Gamma,\Gamma')= \tr\big[ \Gamma(\log \Gamma -\log \Gamma') \big]\geq 0$ denotes the quantum relative entropy and
	$\Gamma^{f_\eta}_{\tau,0}$ is the truncated free Gibbs state with cutoff $f_\eta$
	\begin{align}\label{def:free:gibbs_state}
		 \Gamma^{f_\eta}_{\tau,0} := \frac{1}{\mathcal{Z}^{f_\eta}_{\tau,0}}  e^{-\mathbb{H}_{\tau,0}} f_\eta\left(\frac{\mathcal{N}}{\tau}\right), \qquad \mathcal{Z}^{f_\eta}_{\tau,0}:=\Tr_{\gF(\gH)} \left(e^{-\mathbb{H}_{\tau,0}} f_\eta \left(\frac{\mathcal{N}}{\tau}\right)\right),
	\end{align} 
	On the classical side, the Hartree measure $\mu^{\eps,f_\eta}$ is the $\textit{unique minimizer}$ of the classical variational problem
	\begin{align}\label{variational principle classical}
		-\log \Big( \int_{\gH} e^{\cW^\eps(u) } \dif \mu^{f_\eta}_0 (u)\Big) = \inf_{\nu \ll \mu^{f_\eta}_0} \Big\{ \mathcal{H}_{\mathrm{cl}} (\nu,\mu^{f_\eta}_0) - \int_{\gH} \cW^\eps(u)  \, \dif \nu(u)  \Big\},
	\end{align}
	where $\cH_{\mathrm{cl}}$ is the classical relative entropy between two probability measures and $\mu^{f_\eta}_0$ is the truncated free Gibbs measure defined by 
	\begin{align}\label{def:truncated:Gaussian:measure}
		\dif \mu^{f_\eta}_0(u):=\frac{f_\eta(\|u\|_{L^2}^2)\,\dif \mu_0(u)}{\int_{\mathfrak{H}}f_\eta(\|v\|_{L^2}^2)\,\dif \mu_0(v)}.
	\end{align}
	
The comparison is performed after a finite-dimensional localization. We introduce the low-frequency projection $P=\mathds{1}(h\leq \Lambda_e)$ and split the free energy into low- and high-momentum components. The low modes form a finite-dimensional problem, where semiclassical tools and the quantitative quantum de Finetti theorem are available. The high modes are treated as an error term with a uniformly controlled
contribution. In this way, the quantum variational problem is connected to the classical variational problem in \eqref{variational principle classical}. However, obtaining relative free-energy upper and lower bounds with errors explicit in $\Lambda_e$, $\eta$, $\eps$ and $\tau$ poses additional difficulties.
	
For the upper bound, the localization is implemented by choosing a suitable trial state. In the earlier defocusing works \cite{LNR15, LNR18, LNR21}, a factorized ansatz was sufficient: one used the tensor product of the interacting Gibbs state on the low modes and the free Gibbs state on the high modes. In the present setting, however, this choice becomes problematic. The particle-number cutoff $f_\eta\left(\cN/\tau\right)$ couples the low and high modes, and is therefore not compatible with a tensor-product decomposition.  Using a factorized cutoff in the localized reduction would produce relative-entropy errors involving $\tr\big[\Gamma_{\mathrm{ts}}(\log f_\eta(\cN/\tau) -\log (f_\eta(\cN_P/\tau) \otimes f_\eta(\cN_Q/\tau)  ) )\big]$, which are difficult to control effectively. We therefore keep the full particle-number cutoff and use the non-factorized trial state
\begin{align*}
\Gamma_{\mathrm{ts}}:=\frac{e^{-\bbH_{\tau,0}+\bbW_{\tau,P}}f_\eta\left(\cN/\tau\right)}{\tr\Big(e^{-\bbH_{\tau,0}+\bbW_{\tau,P}}f_\eta\left(\cN/\tau\right)\Big)}.
\end{align*}
This allows the interaction and relative-entropy localization errors to be estimated directly; see Lemmas~\ref{errors:localization_interaction} and \ref{lemma:relative entropy} below.
It remains to derive a classical lower bound for the localized quantum Gibbs weight. To this end, we use the coherent states to express $\tr_{\gF(P\gH)}( e^{-\bbH_{\tau, P}} f_\eta(\cN_P/\tau))$ as an integral over the classical field $u\in P\gH$.  Then, decomposing the coherent state into $n$-particle sectors identifies the integrand with a Poisson expectation. This is the mechanism that converts the quantum cutoff $f_\eta(\cN_P/\tau)$ into the classical cutoff $f_\eta(\|Pu\|_{L^2}^2)$, and the quantum Hamiltonian $\bbH_{\tau,P}$ into the classical Hartree functional $\cW^{\eps}(Pu)- \langle Pu,hPu \rangle$, up to controlled errors. After these substitutions, the principal term can be bounded from below and yields the desired quantity $\int_{P\gH} e^{\cW^\eps(Pu)-\langle Pu,hPu\rangle} f_\eta(\|Pu\|_{L^2}^2) \dif u$.

The lower bound follows the same localization scheme up to the finite-dimensional reduction. The subsequent semiclassical comparison, however, is more delicate. One is naturally led to compare the finite-dimensional partition functions $	\int_{P\gH} e^{\cW^\eps} \dif  \mu_{P,0}^{\tau^{-1}}$ and $	\int_{P\gH}  e^{\cW^\eps} \dif   \mu^{f_\eta}_{0,P}$, where $\mu_{P,0}^{\tau^{-1}}$ is the lower symbol associated with the truncated free Gibbs state, and $\mu^{f_\eta}_{0,P}$ is the projected truncated Gaussian measure. The $L^1$-estimate $\|\mu_{P,0}^{\tau^{-1}}-\mu^{f_\eta}_{0,P} \|_{L^1(P\gH)}$ is not sufficient by itself to control these integrals, because the focusing weight $e^{\cW^\eps }$ may be large. On the other hand, a pointwise comparison of the two densities, as in \cite{NZZ25}, is technically cumbersome in the presence of cutoffs. Our key observation is to insert an interacting lower-symbol tail estimate before comparing the two partition functions. More precisely, for any $R$ strictly larger than the support parameter $\cK^2$ of the cutoff, the interacting lower-symbol $\mu_{P,\tau}^{\tau^{-1}}$ has negligible contribution on the region $\{\|Pu\|_{L^2}^2>R\}$; namely, by Lemma~\ref{lemma:interacting:tail} below, 
 \begin{align*}
 	\int_{\{\|Pu\|_{L^2}^2>R\}} \cW^\eps \dif \mu_{P,\tau}^{\tau^{-1}}\ll 1.
 \end{align*} 
It follows that the variational lower bound can be reduced to an integral in which the exponential interaction weight is effectively restricted to a bounded-mass region. Thus, it remains only to compare
\begin{align*}
	\int_{P\gH} e^{\cW^\eps \mathds{1}_{\{\|Pu\|^2_{L^2}\leq R\}}} \dif  \mu_{P,0}^{\tau^{-1}}, \qquad \mathrm{and} \qquad 	\int_{P\gH} e^{\cW^\eps } \dif  \mu^{f_\eta}_{0,P}.
\end{align*}
On this restricted region, the $L^1$ comparison $\|\mu_{P,0}^{\tau^{-1}}-\mu^{f_\eta}_{0,P} \|_{L^1(P\gH)}$ is sufficient, and yields the matching projected lower bound. Combining this with the refined estimate on the classical partition function and a Cauchy argument then removes the projection and yields \eqref{key:bound1}.

This completes the argument for general cutoffs, but it also reveals a quantitative limitation. At the endpoint $\cK_c$, the tail estimate is only available above the cutoff level, namely for $R>\cK^2_c$. The remaining bounded-mass region must therefore be controlled by the rough pointwise exponential bound $e^{C_R\eps^{-2}}$, which leads to the logarithmic admissible scale for $\eps$ in terms of  $\tau$. In the strictly subcritical regime, one can instead choose an intermediate mass $\cK_s$ with $\cK<\cK_s<\cK_c$ and apply the tail estimate at $R=\cK_s^2$. On the complementary region $\|Pu\|_{L^2}\leq \cK_s $, the mass remains strictly below the critical threshold, so the Oh-Sosoe-Tolomeo moment bound in \cite[Sec. 4]{OST22} controls the focusing exponential weight. Thus, Lemma~\ref{lemma:interacting:tail} removes the large-mass contribution, while the bounded-mass part is handled by subcritical exponential integrability. This eliminates the pointwise exponential loss in the free energy lower bound.
The same mechanism is used in the proof of the free energy upper bound. In the general-cutoff argument, both the relative-entropy localization error and the final semiclassical error are controlled by rough pointwise exponential bounds of the form $e^{O(\eps^{-2})}$. In the subcritical regime, the intermediate mass level $\cK_s$, Lemma~\ref{lemma:interacting:tail}, and the Oh-Sosoe-Tolomeo estimate make it possible to replace these pointwise bounds with uniform estimates on truncated interacting partition functions. This replacement improves the admissible scale from logarithmic to polynomial and is precisely where the subcritical mass gap enters the error analysis.

		\subsubsection*{Organization of the paper}
	This paper is organized as follows. In Section~\ref{sec3:lower:upper:free-energy} we analyze the variational problem with a general cutoff. We derive upper and lower bounds on the free energy and obtain the Hartree measure from the quantum model. Section~\ref{sec:mass:subcritical:regime} treats the same problem in the mass-subcritical regime, where the admissible range of the interaction scale $\eps$ is improved. In Section~\ref{sec:hartree to phi61}, we recover the focusing $\Phi^6_1$ measure from the Hartree measure. Section~\ref{Proof of main results} completes the proofs of the main theorems. To make the paper self-contained, we also include in Appendix~\ref{sec:appendix A} a proof of the quantitative convergence estimate for the truncated free Gibbs state, and in Appendix~\ref{sec:appendix:B} a uniform projected Oh-Sosoe-Tolomeo bound.

	\section{From quantum model to Hartree measure: a general cutoff regime}\label{sec3:lower:upper:free-energy}		
In this section, we prove the convergence from the quantum model to the Hartree measure under a general cutoff condition, in which the cutoff is not assumed to be supported below the critical threshold. As explained in Subsection~\ref{Outline of variational proof and difficulties}, the argument is variational, but it can be carried out only after localizing the problem to a finite-dimensional subspace. More precisely, in Subsection~\ref{subsec:fock}, we recall the localization on Fock space and the associated lower symbols.
These tools allow us to compare the quantum and classical variational problems in finite dimensions. Subsections~\ref{sec:upper bound} and \ref{sec:lower bound} then establish matching upper and lower bounds for the localized free energy. Subsection~\ref{subsec:proof:hartree} is devoted to removing the finite-dimensional projection, which completes the proof of Proposition~\ref{Thm:Hartree} and, as an immediate consequence, yields Theorem~\ref{Rmk:Hartree}.
We begin by recalling the localization of the problem on Fock space.
	\subsection{Fock-space localization and de Finetti measure}\label{subsec:fock}
	Let $P$ be a finite-dimensional orthogonal projection on $\mathfrak{H}$ and let $Q=\mathds{1}-P$. Since $\mathfrak{H}=(P\mathfrak{H})\oplus (Q\mathfrak{H})$, we have the unitary equivalence
	\begin{align*}
		\mathfrak{F} =\mathfrak{F} ((P\mathfrak{H})\oplus (Q\mathfrak{H}))   \approx  \mathfrak{F}(P\mathfrak{H})\otimes 	\mathfrak{F}(Q\mathfrak{H}),
	\end{align*}
	namely there is a unitary
	\begin{align}\label{def:unitary}
		\cU : \gF ( P\gH \oplus Q\gH) \mapsto \gF(P\gH) \otimes \gF (Q\gH),
	\end{align}
	satisfying $\cU\cU^* = \mathds{1}$ and acting on creation operators  as
	\begin{align}\label{property:unitary}
		\cU a ^\dagger( f ) \cU ^*  = a ^\dagger(Pf) \otimes  \mathds{1} +  \mathds{1} \otimes a^\dagger(Q f) ,
	\end{align}
	and a similar formula for annihilation operators. Consequently, for any state $\Gamma$ on $\gF$ and any orthogonal
	projector $P$, we define its localization $\Gamma_P$ as a state on $\gF(P\gH)$ by taking the partial trace over
	$\gF(Q\gH)$:
	\begin{align}\label{def:P-localization}
		\Gamma_P:= \tr_{\gF(Q\gH)}\left[ \cU \Gamma \cU^* \right],
	\end{align}
	or, equivalently,
	\begin{align}\label{equiv:P-localization}
		\tr_{\gF(P\gH)}\left[ A \Gamma_P  \right] =	\tr_{\gF(P\gH)\otimes \gF(Q\gH)}\left[ (A \otimes \mathds{1}_{\gF(Q\gH)}) \cU \Gamma \cU^*  \right] ,
	\end{align}
	for any bounded operator $A$ on $\gF(P\gH)$.	The density matrices of $\Gamma_P$ can be shown to satisfy
	\begin{align*}
		(\Gamma_P)^{(k)}=P^{\otimes k}\Gamma^{(k)}P^{\otimes k},\qquad \forall k\geq1.
	\end{align*}
	We recall that a coherent state is a Weyl-rotation of the vacuum $|0\rangle$  in the Fock space $\gF$:
	\begin{align*}
		\xi(u):=W(u) |0\rangle := \exp(a^\dagger(u)-a(u))  |0\rangle = e^{-\|u\|^2/2} \exp\left(a^\dagger(u)\right)  |0\rangle = e^{-\|u\|^2/2} \bigoplus_{n \ge 0} \frac{1}{\sqrt{n!}} u^{\otimes n},
	\end{align*}
	where the Weyl operator $W(u)$ is a unitary operator translating creation and annihilation operators
	\begin{equation}
		\label{eq:Weyl-action}
		W(f)^* a^\dagger(g) W(f)=a^\dagger(g) + \langle f,g \rangle, \quad W(f)^* a(g) W(f)=a(g) + \langle g,f \rangle.
	\end{equation}
	We have the resolution of the identity (see, e.g.~\cite[(6.3)]{LNR15})
	\begin{equation}
		\pi^{-\Tr(P)} \int_{P\gH} |\xi(u)\rangle \langle \xi (u)| \dif u = \pi^{-\Tr(P)} \left( \int_{P\gH} e^{-\|u\|^2} \dif u \right)   \mathds{1}_{\gF(P\gH)} =  \mathds{1}_{\gF(P\gH)}.
		\label{eq:resolution_coherent2}
	\end{equation}
	\noindent
	We next recall the lower symbols and a quantitative bound from \cite[Lemma 6.2 and Remark 6.4]{LNR15}: 
	\begin{lemma}\label{Quantitative quantum de Finetti}
		For any state $\Gamma$ on $\gF$ and any scale $\varsigma>0$, we define the \emph{lower symbol} (or {\em Husimi function}) of $\Gamma$ on $P\gH$ at scale $\varsigma$ by
		\begin{equation} \label{eq:Husimi}
			\dif \mu_{P,\Gamma}^{\varsigma}(u):=(\varsigma \pi)^{-\Tr(P)}\langle \xi(u/\sqrt{\varsigma}),\Gamma_P \xi(u/\sqrt{\varsigma})\rangle_{\gF(P\gH)} \dif u.
		\end{equation}
		Here $\dif u$ is the usual Lebesgue measure on $P\gH \simeq \mathbb{C}^{\Tr(P)}$.  
		We have, for all $k\in\N$,
		\begin{equation}\label{eq:Chiribella}
			\int_{P\gH}|u^{\otimes k}\rangle \langle u^{\otimes k}| \, \dif \mu^\varsigma_{P,\Gamma}(u) = k!\varsigma^k\Gamma^{(k)}_P + k! \varsigma ^k \sum_{\ell = 0} ^{k-1} {k \choose \ell} \Gamma_P^{(\ell)} \otimes_s \mathds{1}_{\otimes_s ^{k-\ell} P\gH}.
		\end{equation}
		Thus, with $J=\Tr [P]$, 
		\begin{align}\label{de finetti estimate}
			\Tr \left| k!\varsigma^k\Gamma^{(k)}_P-\int_{P\gH}|u^{\otimes k}\rangle \langle  u^{\otimes k}| \, \dif \mu^\varsigma_{P,\Gamma}(u) \right| \leq \varsigma^k \sum_{\ell=0}^{k-1}{k\choose \ell}^2  \frac{(k-\ell +J-1)!}{(J-1)!}\tr_{\gF(P\gH)} \left[ \cN_P^{\ell}\Gamma_P\right]. 
		\end{align}
	\end{lemma}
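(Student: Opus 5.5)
The plan is to prove \eqref{eq:Chiribella} by an anti-Wick/Wick reordering computation on the finite-dimensional Fock space $\gF(P\gH)$, and then deduce \eqref{de finetti estimate} by bounding the error terms in trace norm. Throughout, $J=\Tr P<\infty$.

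By \eqref{equiv:P-localization}, everything only involves $\Gamma_P$, so I may assume $\Gamma$ is a state on $\gF(P\gH)$. I will first treat $\varsigma=1$ and recover general $\varsigma$ by the substitution $u=\sqrt{\varsigma}\,v$. That substitution gives
\[
\int_{P\gH}|u^{\otimes k}\rangle\langle u^{\otimes k}|\,\dif\mu^\varsigma_{P,\Gamma}(u)=\varsigma^k\int_{P\gH}|v^{\otimes k}\rangle\langle v^{\otimes k}|\,\dif\mu^1_{P,\Gamma}(v),
\]
which is exactly the overall factor $\varsigma^k$ in both formulas.

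\textbf{Step 1: identify the $k$-body moment of the lower symbol as an anti-normally ordered expectation.} The key identity is $a(f)\xi(v)=\langle f,v\rangle\xi(v)$, which follows from \eqref{eq:Weyl-action} because $a(f)|0\rangle=0$. Fix $f,g\in P\gH$ and use $\langle f^{\otimes k},v^{\otimes k}\rangle=\langle f,v\rangle^k$. Together with the resolution of the identity \eqref{eq:resolution_coherent2}, this gives
\[
\Big\langle f^{\otimes k},\int|v^{\otimes k}\rangle\langle v^{\otimes k}|\,\dif\mu^1_{P,\Gamma}\,g^{\otimes k}\Big\rangle
=\pi^{-J}\int\langle\xi(v),\,a^\dagger(g)^k\,\Gamma\,a(f)^k\,\xi(v)\rangle\,\dif v
=\tr\big[a(f)^k a^\dagger(g)^k\,\Gamma\big].
\]
This is legitimate since $\Gamma$ has finite moments of $\cN$; otherwise I would first truncate in $\cN$. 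Since vectors of the form $f^{\otimes k}$ span $\gH^{(k)}$ by polarization, it then suffices to identify this quadratic form.

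\textbf{Step 2: Wick reorder.} Using the CCR \eqref{CCR} (induction on $k$, or the standard formula for the anti-normal-to-normal reordering of $a^k(a^\dagger)^k$), I get
\[
a(f)^k a^\dagger(g)^k=\sum_{\ell=0}^k{k\choose \ell}^2(k-\ell)!\,\langle f,g\rangle^{k-\ell}\,a^\dagger(g)^\ell a(f)^\ell .
\]
By \eqref{characterize of density matrix} (with the $\sqrt{n!}$ normalisation of $\otimes_s$), $\tr[a^\dagger(g)^\ell a(f)^\ell\Gamma]$ is $\ell!\,\langle f^{\otimes\ell},\Gamma^{(\ell)}g^{\otimes\ell}\rangle$. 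Also $\langle f,g\rangle^{k-\ell}$ is the matrix element of the identity on $\otimes_s^{k-\ell}P\gH$. Reassembling the symmetrised tensor product, the $\ell=k$ term gives $k!\,\Gamma^{(k)}$ and the terms $\ell<k$ give $k!{k\choose\ell}\Gamma^{(\ell)}\otimes_s\mathds 1$. I expect this to be the main obstacle: tracking the combinatorial constants so that they match the normalisation of $\otimes_s$ used in \eqref{eq:Chiribella}. I would check it against the case $k=1$, where it must reduce to $\int|u\rangle\langle u|\,\dif\mu=\Gamma^{(1)}+\mathds 1_{P\gH}$ (the $\ell=0$ term, with $\Gamma^{(0)}=\tr\Gamma=1$).

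\textbf{Step 3: the trace bound.} Subtracting the $\ell=k$ term, the triangle inequality reduces \eqref{de finetti estimate} to bounding $\tr|\Gamma^{(\ell)}\otimes_s\mathds 1_{\otimes_s^{k-\ell}P\gH}|$ for each $\ell<k$. This operator is nonnegative, so its trace norm is its trace. I will bound that trace by $\tr\Gamma^{(\ell)}$ times the dimension of the symmetric space, namely $\dim\otimes_s^{k-\ell}P\gH=\frac{(k-\ell+J-1)!}{(k-\ell)!\,(J-1)!}$, up to the symmetrisation constants from Step 2. Next, $\tr\Gamma^{(\ell)}=\tr\big[{\cN\choose\ell}\Gamma\big]\le\frac{1}{\ell!}\tr[\cN^\ell\Gamma]$. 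Multiplying by the prefactor $k!\,{k\choose\ell}$ and using $\frac{k!}{\ell!\,(k-\ell)!}={k\choose\ell}$ then yields the coefficient ${k\choose\ell}^2\frac{(k-\ell+J-1)!}{(J-1)!}\,\tr[\cN_P^\ell\Gamma_P]$. Restoring $\varsigma^k$ finishes the proof.
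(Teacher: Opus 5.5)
Your proof is correct, and it is essentially the standard argument behind \cite[Lemma 6.2]{LNR15}, which the paper quotes without proving. The coherent-state identity $a(f)\xi(v)=\langle f,v\rangle\xi(v)$ turns the $k$-th moment of the lower symbol into $\tr[a(f)^k a^\dagger(g)^k\Gamma_P]$, and normal ordering via the CCR gives the coefficients $k!{k\choose \ell}$. In this step $\Gamma_P^{(\ell)}\otimes_s\mathds{1}$ is the compression $\Pi_s(\Gamma_P^{(\ell)}\otimes\mathds{1})\Pi_s$ to the symmetric subspace. The one step you leave implicit, $\tr[\Gamma_P^{(\ell)}\otimes_s\mathds{1}]\le \tr\Gamma_P^{(\ell)}\,\dim(\otimes_s^{k-\ell}P\gH)$, holds with constant exactly $1$ because $\otimes_s^{k}P\gH\subset(\otimes_s^{\ell}P\gH)\otimes(\otimes_s^{k-\ell}P\gH)$. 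Combined with $\tr\Gamma_P^{(\ell)}=\tr[{\cN_P\choose \ell}\Gamma_P]\le \tr[\cN_P^\ell\Gamma_P]/\ell!$, no extra symmetrisation factors appear and you land exactly on the constants in \eqref{de finetti estimate}.
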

	\noindent
	Finally, we recall a Berezin-Lieb type inequality from \cite[Theorem 7.1]{LNR15}, which links the relative entropy of two quantum states to the classical entropy of their lower symbols.
	\begin{lemma}\label{thm:rel-entropy}
		Let $\Gamma$ and $\Gamma'$  be two states on the Fock space $\gF$.  Let $\mu_{P,\Gamma}^{\varsigma}$ and $\mu_{P,\Gamma'}^{\varsigma}$ be the corresponding lower symbol defined as in \eqref{eq:Husimi}. Then we have
		\begin{equation}
			\cH(\Gamma,\Gamma') \geq \cH(\Gamma_P,\Gamma'_P) \geq \cH_{\mathrm{cl}}(\mu_{P,\Gamma}^{\varsigma}, \mu_{P,\Gamma'}^{\varsigma}).
			\label{eq:Berezin-Lieb}
		\end{equation}
	\end{lemma}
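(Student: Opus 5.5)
The first inequality should follow from monotonicity of the quantum relative entropy under partial traces. The unitary $\cU$ in \eqref{def:unitary} leaves the relative entropy unchanged, so $\cH(\Gamma,\Gamma')=\cH(\cU\Gamma\cU^*,\cU\Gamma'\cU^*)$. By \eqref{def:P-localization}, the pair $(\Gamma_P,\Gamma'_P)$ is obtained from this pair by taking $\tr_{\gF(Q\gH)}$. The Lindblad--Uhlmann monotonicity theorem states that relative entropy does not increase under completely positive trace-preserving maps, in particular under partial traces, and holds for states on infinite-dimensional separable Hilbert spaces with values in $[0,\infty]$. Applying it gives $\cH(\Gamma,\Gamma')\geq\cH(\Gamma_P,\Gamma'_P)$. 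If $\cH(\Gamma,\Gamma')=+\infty$ there is nothing to prove, so we may assume it is finite.

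For the second inequality I would view the lower symbol as a measurement channel. Set $J=\tr P$ and define $\Phi_\varsigma(A)(u):=(\varsigma\pi)^{-J}\langle\xi(u/\sqrt\varsigma),A\,\xi(u/\sqrt\varsigma)\rangle$ for trace-class $A$ on $\gF(P\gH)$. The change of variables $u\mapsto u/\sqrt\varsigma$ in \eqref{eq:resolution_coherent2} shows that $\{(\varsigma\pi)^{-J}|\xi(u/\sqrt\varsigma)\rangle\langle\xi(u/\sqrt\varsigma)|\,\dif u\}$ is a POVM on $P\gH$ that resolves the identity. Hence $\Phi_\varsigma$ is positive and trace preserving, mapping density matrices to probability densities on $P\gH\simeq\mathbb C^J$. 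Because its range is commutative (an $L^1$ space), positivity already implies complete positivity. Uhlmann's monotonicity for such maps then gives $\cH(\Gamma_P,\Gamma'_P)\geq\cH_{\mathrm{cl}}(\Phi_\varsigma\Gamma_P,\Phi_\varsigma\Gamma'_P)$, and the right-hand side is $\cH_{\mathrm{cl}}(\mu^\varsigma_{P,\Gamma},\mu^\varsigma_{P,\Gamma'})$ by \eqref{eq:Husimi}.

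To avoid relying on the abstract theorem in infinite dimensions, I would first prove the finite-dimensional case by hand and then approximate. For the hand proof I would use the Berezin--Lieb argument: expand $A=\Gamma_P$ and $B=\Gamma'_P$ spectrally, $A=\sum a_i|\phi_i\rangle\langle\phi_i|$ and $B=\sum b_j|\psi_j\rangle\langle\psi_j|$. With the probability weights $p_i(u)=|\langle\xi,\phi_i\rangle|^2$ and $q_j(u)=|\langle\xi,\psi_j\rangle|^2$ (these sum to $1$ since $\xi$ is normalized), Jensen's inequality applied to $x\log x$ and to $-\log y$ gives
\begin{align*}
\Phi(A)\log\Phi(A)\le\sum_i p_ia_i\log a_i,\qquad -\Phi(A)\log\Phi(B)\le-\sum_{i,j}p_i\,|\langle\phi_i,\psi_j\rangle|^2\ldots
\end{align*}
The second bound is the delicate step: pointwise in $u$, the cross term $\sum_i p_i(u)\,a_i\langle\phi_i,(\log B)\phi_i\rangle$ must be bounded by the corresponding Husimi quantity. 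The cleanest route is to invoke the Lieb--Ruskai/Petz result that relative entropy is monotone under any positive trace-preserving map whose target algebra is commutative. This uses joint convexity of $(a,b)\mapsto a\log a-a\log b$ together with a Stinespring dilation of the POVM, and I would rely on it rather than a raw Jensen computation.

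The main obstacle is the infinite-dimensional setting: $\gF(P\gH)$ is infinite-dimensional even though $P\gH$ is not. I would handle it by truncating with the spectral projections $\mathds 1(\cN_P\le n)$. The compressed and renormalized states live on finite-dimensional spaces, and their lower symbols converge in total variation to $\mu^\varsigma_{P,\Gamma}$ and $\mu^\varsigma_{P,\Gamma'}$. The relative entropy of the truncations converges to $\cH(\Gamma_P,\Gamma'_P)$; this is the standard monotone approximation of relative entropy, using lower semicontinuity and the fact that the truncations form a compression channel up to normalization. Lower semicontinuity of $\cH_{\mathrm{cl}}$ under weak convergence of measures then passes the inequality to the limit. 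The key point to verify carefully is that the normalization constants tend to $1$ and therefore contribute no error.
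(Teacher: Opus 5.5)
Your proposal is correct and is essentially the standard argument behind this lemma, which the paper does not prove but quotes from \cite{LNR15}. That argument is monotonicity of the relative entropy under the partial trace $\tr_{\gF(Q\gH)}$, and under the Husimi map $\Gamma_P\mapsto\mu^{\varsigma}_{P,\Gamma}$, whose dual $g\mapsto(\varsigma\pi)^{-\tr(P)}\int g(u)\,|\xi(u/\sqrt{\varsigma})\rangle\langle\xi(u/\sqrt{\varsigma})|\,\dif u$ is a normal, unital, positive (hence completely positive) map on the commutative algebra $L^\infty(P\gH)$.

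You can drop the finite-dimensional Jensen sketch and the $\mathds{1}(\cN_P\leq n)$ truncation. Uhlmann's theorem already holds in this infinite-dimensional setting. Moreover, the truncation step would in any case need $\limsup_n\cH(A_n,B_n)\leq\cH(\Gamma_P,\Gamma'_P)$ for the normalized compressions $A_n,B_n$. That bound is itself monotonicity under an infinite-dimensional compression channel, not a consequence of lower semicontinuity, so the detour does not avoid the abstract theorem.
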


	\subsection{Free energy upper bound}\label{sec:upper bound}
	In this subsection, we establish the upper bound on the localized free energy by applying the quantum variational principle with a suitable trial state. In contrast to \cite{LNR15, LNR18, LNR21, NZZ25}, the main new difficulty is that the cutoff $f_\eta(\cN/\tau)$ destroys the simple tensor-product structure separating low and high modes. Consequently, the standard factorized trial state is no longer directly compatible with the relative entropy term.
The overall strategy is as follows. First, we localize the interaction to $P\gH$ and introduce a trial state whose cutoff still depends on the full number operator. This is the key device that allows us to keep the entropy contribution under control. Second, we show that the error incurred by replacing the full interaction with its $P$-localized version, together with the corresponding error in the entropy, is negligible. This is the most delicate part of the argument and is carried out in Lemmas~\ref{errors:localization_interaction} and~\ref{lemma:relative entropy}. Third, after the problem has been reduced to a finite-dimensional variational problem, the semi-classical analysis, combined with the convergence of the truncated free Gibbs state established in Proposition~\ref{lemma:non-interacting} below, identifies the limiting classical quantity.
	\begin{proposition}\label{prop:upper bound}
		Let $w$ satisfy Assumption~\ref{assum:on:w}, and let $\cK >0$ be arbitrary. Let $\Lambda_e>0$ be sufficiently large, and let $\eta \in \big(0 ,\frac{1}{2}\cK^2  \big)$ and $\eps \in (0,1)$ satisfy $\eta^{-1} e^{M\eps^{-2}} \leq \Lambda_e^{\frac18}$, where $M=(2\|w\|^2_{L^\infty}+1)(\cK^2+4)^3$. Define $P:=\mathds{1}(h\leq \Lambda_e)$ to be the orthogonal projection on $ \gH$. With the three-body operator $W^\eps$ defined in \eqref{def:operator W}, we introduce the projected interaction potential
		\begin{align}\label{def:widetilde:WP}
			\cW^\eps_P(u)=\frac{1}{3!} \big\langle  P^{\otimes 3}  u^{\otimes 3}, W^\eps P^{\otimes 3} u^{\otimes 3}\big\rangle.
		\end{align}
		Let the cutoff $f_\eta$ satisfy Assumption~\ref{assum:cutoff} with cutoff parameter $\cK$. Denote by $\cZ^{f_\eta}_{\tau}$ and $\cZ^{f_\eta}_{\tau,0}$ the partition functions as in \eqref{def:interacting:gibbs_state} and \eqref{def:free:gibbs_state}, respectively, associated with the cutoff $f_\eta$. Then, for all sufficiently large $ \tau$ and $\Lambda_e$ satisfying $\Lambda_e \leq \tau^{\frac14}$, we have
		\begin{align}\label{ineq:upper bound}
			-\log  \frac{\cZ^{f_\eta}_{\tau}}{\cZ^{f_\eta}_{\tau,0}} \leq -\log \left( \int_{P \gH} e^{\cW_P^\eps(u) } \dif \mu^{f_\eta}_{0,P} (u)\right) + O\left( \Lambda_e^{-\frac18}\right).
		\end{align}
		Here, $\mu^{f_\eta}_{0,P}$ is the projected Gaussian measure with cutoff $f_\eta$ on $P\gH$, defined by
		\begin{align}\label{truncated Gaussian measure}
			\dif \mu^{f_\eta}_{0,P}(u):=\frac{1}{\int_{P \gH} f_\eta\left( \|Pu\|_{L^2}^2 \right)\dif  \mu_{0,P}(u) }f_\eta\left( \|Pu\|_{L^2}^2 \right) \dif \mu_{0,P}(u).
		\end{align}
	\end{proposition}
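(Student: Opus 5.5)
The plan is to apply the quantum variational principle \eqref{variational principle quantum} with the non-factorized trial state
\begin{align*}
\Gamma_{\mathrm{ts}}=\frac{e^{-\bbH_{\tau,0}+\bbW_{\tau,P}}f_\eta(\cN/\tau)}{\cZ_{\mathrm{ts}}},\qquad \cZ_{\mathrm{ts}}:=\tr\Big(e^{-\bbH_{\tau,0}+\bbW_{\tau,P}}f_\eta(\cN/\tau)\Big).
\end{align*}
Here $\bbW_{\tau,P}=\tau^{-3}\,\dif\Gamma$-second quantization of $P^{\otimes3}W^\eps P^{\otimes 3}$; it acts only on $\gF(P\gH)$ under $\cU$ and commutes with $\cN$, $\bbH_{\tau,0}$ and $f_\eta(\cN/\tau)$. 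Since all these operators commute, $\log\Gamma_{\mathrm{ts}}-\log\Gamma^{f_\eta}_{\tau,0}=\bbW_{\tau,P}-\log(\cZ_{\mathrm{ts}}/\cZ^{f_\eta}_{\tau,0})$ on the range of $f_\eta(\cN/\tau)$, where both states live. Hence
\begin{align*}
-\log\frac{\cZ^{f_\eta}_\tau}{\cZ^{f_\eta}_{\tau,0}}\leq -\log\frac{\cZ_{\mathrm{ts}}}{\cZ^{f_\eta}_{\tau,0}}+\tr\big[(\bbW_{\tau,P}-\bbW_\tau)\Gamma_{\mathrm{ts}}\big].
\end{align*}
The proof then has two parts: a localization error and a semiclassical lower bound on $\cZ_{\mathrm{ts}}$.

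\emph{Step 1 (localization error).} Expanding $a(x)=a(Px)+a(Qx)$, the difference $\bbW_\tau-\bbW_{\tau,P}$ consists of terms containing at least one $Q$-mode operator. After Cauchy--Schwarz and $\|w^\eps\|_{L^\infty}\lesssim\eps^{-1}$, each term is bounded by $\eps^{-2}\tau^{-3}\tr[\cN^2\cN_Q\Gamma_{\mathrm{ts}}]^{1/2}$ times a similar factor. On the support of $f_\eta$ we have $\cN\leq\cK^2\tau$, so it suffices to show $\tr[\cN_Q\Gamma_{\mathrm{ts}}]/\tau\lesssim e^{M\eps^{-2}}\Lambda_e^{-1/2}$. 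For this, I bound $\Gamma_{\mathrm{ts}}$ from above by the weight $e^{\|W^\eps\|\cK^6}\lesssim e^{M\eps^{-2}}$ times the free state. I also compare the denominator $\cZ_{\mathrm{ts}}\geq\cZ^{f_\eta}_{\tau,0}$ using $\bbW_{\tau,P}\geq0$. Under the free state one has $\tr[\cN_Q\Gamma_{\tau,0}]\approx\tau\tr(Qh^{-1})\lesssim\tau\Lambda_e^{-1/2}$. The cutoff $f_\eta$ must be handled in the ratio, which is where Proposition~\ref{lemma:non-interacting} is used: it keeps $\cZ^{f_\eta}_{\tau,0}/\cZ_{\tau,0}$ bounded below by a constant depending on $\cK$. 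The hypothesis $\eta^{-1}e^{M\eps^{-2}}\leq\Lambda_e^{1/8}$ is designed to absorb these exponential losses, giving an error $O(\Lambda_e^{-1/8})$. I expect this step (together with its relative-entropy analogue, Lemmas~\ref{errors:localization_interaction}--\ref{lemma:relative entropy}) to be the main obstacle. The cutoff couples $P$ and $Q$, so $\Gamma_{\mathrm{ts}}$ is not a tensor product, and one has to control $Q$-moments under a state that is only comparable to the free one up to the factor $e^{O(\eps^{-2})}$.

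\emph{Step 2 (decoupling $Q$).} Under $\cU$, $\cZ_{\mathrm{ts}}=\tr[e^{-\bbH_{\tau,P}}\otimes e^{-\bbH_{\tau,0,Q}}f_\eta((\cN_P+\cN_Q)/\tau)]$. Using $f_\eta(s+t)\geq f_\eta(s)-\|f_\eta'\|_\infty t$ with $\|f_\eta'\|_\infty\lesssim\eta^{-1}$, I bound this below by $\tr_P[e^{-\bbH_{\tau,P}}f_\eta(\cN_P/\tau)]\,\tr_Q e^{-\bbH_{\tau,0,Q}}$ minus $\eta^{-1}\tau^{-1}\tr[\cN_Q\,\cdot\,]$. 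The subtracted term is again $O(\eta^{-1}e^{M\eps^{-2}}\Lambda_e^{-1/2})$ relative to the main term. The same splitting applied to $\cZ^{f_\eta}_{\tau,0}$ in the upper direction, via Proposition~\ref{lemma:non-interacting}, reduces the problem to the finite-dimensional ratio $\tr_P[e^{-\bbH_{\tau,P}}f_\eta(\cN_P/\tau)]/\tr_P e^{-\bbH_{\tau,0,P}}$.

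\emph{Step 3 (semiclassics on $P\gH$).} I would insert the resolution of the identity \eqref{eq:resolution_coherent2} at scale $\tau^{-1}$. For the coherent state $\xi(\sqrt\tau u)$, write $\langle\xi,e^{-\bbH_{\tau,P}}f_\eta(\cN_P/\tau)\xi\rangle=\sum_n p_n(u)f_\eta(n/\tau)\langle \hat u^{\otimes n},e^{-H_n}\hat u^{\otimes n}\rangle$, where $p_n$ is Poisson with mean $\tau\|u\|^2$. Jensen's inequality on each sector gives the lower bound $e^{-\langle\hat u^{\otimes n},H_n\hat u^{\otimes n}\rangle}$. The exponent equals $\frac{n}{\tau}\frac{\langle u,hu\rangle}{\|u\|^2}-\frac{n(n-1)(n-2)}{\tau^3}\frac{\cW^\eps_P(u)}{\|u\|^6}$. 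By Poisson concentration, $|n/\tau-\|u\|^2|\leq\tau^{-1/2+\delta}$ with overwhelming probability. With $\Lambda_e\leq\tau^{1/4}$ (so $\langle u,hu\rangle\leq\Lambda_e\|u\|^2$) and the Lipschitz bound on $f_\eta$, the integrand is bounded below by $e^{-\langle u,hu\rangle+\cW^\eps_P(u)}f_\eta(\|u\|^2)(1-O(\eta^{-1}e^{M\eps^{-2}}\Lambda_e\tau^{-1/2+\delta}))$. Here the exponential factor only enters on $\|u\|^2\leq\cK^2$. Dividing by the free analogue, computed exactly, yields $\int_{P\gH}e^{\cW^\eps_P}\dif\mu^{f_\eta}_{0,P}$ up to a $(1+O(\Lambda_e^{-1/8}))$ factor. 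Taking $-\log$ and combining with Steps 1--2 gives \eqref{ineq:upper bound}.
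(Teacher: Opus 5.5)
Your proposal is correct. It uses the same trial state $\Gamma_{\mathrm{ts}}$ and the same coherent-state, Peierls--Bogoliubov and Poisson semiclassics as the paper, but your entropy bookkeeping is genuinely simpler.

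The paper splits the right-hand side of \eqref{upper bound to finite dimension} into three pieces: $-\log(\cZ^{f_\eta}_{\tau,P}/\cZ^{f_\eta}_{\tau,0,P})$, an interaction error, and a relative-entropy difference. It then estimates the $\textbf{A}$-, $\textbf{B}$- and $\textbf{C}$-terms of \eqref{difference:relative entropy} separately. Your identity $\cH(\Gamma_{\mathrm{ts}},\Gamma^{f_\eta}_{\tau,0})=\tr[\bbW_{\tau,P}\Gamma_{\mathrm{ts}}]-\log(\cZ_{\mathrm{ts}}/\cZ^{f_\eta}_{\tau,0})$ shows that the term $\textbf{A}_\tau/\textbf{B}_\tau-\textbf{A}_{\tau,P}/\textbf{B}_{\tau,P}$ in \eqref{error:terms_in_upper_bound} cancels exactly against the same term in \eqref{difference:relative entropy}. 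Hence only one-sided $\textbf{B}$- and $\textbf{C}$-comparisons (your Step 2) are needed. The estimate of $|\textbf{A}_\tau-\textbf{A}_{\tau,P}|$ and \eqref{Gamma_tau,P-Gamma_12} can be dropped.

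In the other direction, you control the high-mode mass $\tau^{-1}\tr[\cN_Q\Gamma_{\mathrm{ts}}]$ by dominating by the free state at cost $e^{O(\eps^{-2})}$. Lemma~\ref{errors:localization_interaction} instead bounds it through $\cH(\Gamma_{\mathrm{ts}},\Gamma_{\tau,0})=O(\eps^{-2})$, giving error $\eps^{-2}\Lambda_e^{-1/2}$ with no exponential loss. Under the hypothesis $\eta^{-1}e^{M\eps^{-2}}\leq\Lambda_e^{1/8}$ your cruder bound suffices. The entropy route is more robust, though. It is reused without exponential bounds in Section~\ref{sec:mass:subcritical:regime}. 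It also applies to the true Gibbs state $\Gamma^{f_\eta}_\tau$ in the lower bound, where $\bbW_\tau$ does not commute with $\cN_Q$ and your commuting-weight argument does not apply.

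When writing this out, fix three points.

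\textbf{Commutation.} $\bbW_{\tau,P}$ does not commute with $\bbH_{\tau,0}$, since the projected kinetic and interaction operators do not commute. Your formula for $\log\Gamma_{\mathrm{ts}}-\log\Gamma^{f_\eta}_{\tau,0}$ still holds, because it only uses that both exponents commute with $\cN$. However, the domination of $\Gamma_{\mathrm{ts}}$ by $e^{O(\eps^{-2})}$ times the free state is not an operator inequality. It holds only for traces against nonnegative weights commuting with $\cN_P$ and $\cN_Q$ (such as $\cN_Q$ or $\cN^2\cN_Q$), via \eqref{trace:ineq}, exactly as in \eqref{first part in Gamma_tau,P-Gamma_1.6}.

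\textbf{Cutoff in Step 2.} The error from $f_\eta(s+t)\geq f_\eta(s)-\|f_\eta'\|_{L^\infty}t$ must keep the indicator $\mathds{1}_{\{\cN_P\leq\cK^2\tau\}}$, because $\tr_{\gF(P\gH)}e^{-\bbH_{\tau,P}}=\infty$ in the focusing case. This is legitimate since $f_\eta(s+t)=f_\eta(s)=0$ for $s>\cK^2$ and $t\geq 0$. With the indicator, and using the lower bound \eqref{denominator3} on the main term, the relative error is $O(\eta^{-1}e^{\|w\|_{L^\infty}^2\cK^6\eps^{-2}}\Lambda_e^{-1/2})$.

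\textbf{Step 3.} Replacing $f_\eta(n/\tau)$ by $f_\eta(\|u\|^2)$ produces an additive error, not a multiplicative factor in front of $f_\eta(\|u\|^2)$; the multiplicative form can fail where $f_\eta$ is small. The additive version, integrated over $\{\|u\|\leq\cK\}$ and normalized as in \eqref{lower:bound3}, gives the same conclusion.
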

	\begin{proof}
For clarity, we organize the proof so that each type of error is estimated once and then reused throughout the argument. The proof is divided into four steps. Step~1 reduces the variational principle to the projected space. Step~2 controls the localization and entropy errors arising from the non-factorized trial state. Step~3 establishes the finite-dimensional semiclassical bound. Step~4 rewrites this projected estimate in the normalized Gaussian form stated in the proposition.
		
\textbf{Step 1: Reduction to a finite-dimensional estimate.}
		We start by defining the interacting Gibbs state on $\gF(P\gH)$:
		\begin{align}\label{interacting Gibbs in trial}
		\Gamma^{f_\eta}_{\tau,P}:=\frac{e^{-\bbH_{\tau,P}}f_\eta\left(\cN_{P}/\tau\right)}{\tr_{\gF(P\gH)}\Big(e^{-\bbH_{\tau,P}}f_\eta\left(\cN_{P}/\tau\right)\Big)},
		\end{align}
		with $\cN_{P}= \dif \Gamma(P)$ and
		\begin{align}\label{regularized Hamiltonian}
			\bbH_{\tau,P}:=\bbH_{\tau,0,P}-\bbW_{\tau,P}:=\frac{1}{\tau}\dif \Gamma(PhP)-\frac{1}{\tau^3} \mathbb{W}_{P}.
		\end{align}
		Here, $\mathbb{W}_P$ is the second quantization of the operator $W_P^\eps:=P^{\otimes 3} W^\eps P^{\otimes 3}$. With the same conventions, we define the free Gibbs state on $\gF(P\gH)$ as
		\begin{align}\label{non-interacting Gibbs in trial}
			\Gamma^{f_\eta}_{\tau,0,P}:=\frac{e^{-\bbH_{\tau,0,P}}f_\eta\left(\cN_{P}/\tau \right)}{\tr_{\gF(P\gH)} \Big(e^{-\bbH_{\tau,0,P}}f_\eta\left(\cN_{P}/\tau\right)\Big)},\qquad 
			\Gamma_{\tau,0,P}:=\frac{e^{-\bbH_{\tau,0,P}}}{\tr_{\gF(P\gH)} \Big(e^{-\bbH_{\tau,0,P}}\Big)} .
		\end{align}
		Similarly, we define $	\Gamma^{f_\eta}_{\tau,0,Q}$ and $	\Gamma_{\tau,0,Q}$
		by replacing $P$ with $Q=\mathds{1}-P$ in the above definitions. Now we explain the choice of the trial state, which is a key difference from the previous works \cite{LNR15, LNR18, LNR21}.
		Following the strategy in those papers, one would naturally consider the trial state $$\Gamma_{\mathrm{ts}}= \cU^* (	 \Gamma^{f_\eta}_{\tau,P}\otimes 	\Gamma^{f_\eta}_{\tau,0,Q} ) \cU.$$ 
		However, in the present setting, the truncated free Gibbs state $\Gamma^{f_\eta}_{\tau,0}$ does not factorize. After applying the variational principle (see \eqref{upper bound to finite dimension} below), the term $\tr\big[\Gamma_{\mathrm{ts}}(\log f_\eta(\cN/\tau) -\log (f_\eta(\cN_P/\tau) \otimes f_\eta(\cN_Q/\tau)  ) )\big]$ appears in the discrepancy between the relative entropy $\cH(\Gamma_{\mathrm{ts}}, \Gamma^{f_\eta}_{\tau,0})$ and its $P$-localized counterpart. This term is difficult to control directly. We therefore introduce the following trial state on $\gF(\gH)$, with a cutoff given by the full number operator:
		\begin{align}\label{def:trial:state}
\Gamma_{\mathrm{ts}}:=\frac{e^{-\bbH_{\tau,0}+\cU^* (  \bbW_{\tau,P} \otimes \mathds{1}_{\gF(Q\gH)} ) \cU}f_\eta\left(\cN/\tau\right)}{\tr_{\gF(\gH)}\Big( e^{-\bbH_{\tau,0}+ \cU^* (  \bbW_{\tau,P} \otimes \mathds{1}_{\gF(Q\gH)} ) \cU}f_\eta\left(\cN/\tau\right) \Big)},
		\end{align}
		where the associated partition function is denoted by $\cZ_{\mathrm{ts}}$.
		Using the variational principle \eqref{variational principle quantum}, it follows that
		\begin{align}\label{upper bound to finite dimension}
			- &\log  \frac{\cZ^{f_\eta}_{\tau}}{\cZ^{f_\eta}_{\tau,0}}  \leq \cH\left( 	\Gamma_{\mathrm{ts}} ,	\Gamma^{f_\eta}_{\tau,0} \right) - \tr_{\gF(\gH)} \left( \bbW_\tau	\Gamma_{\mathrm{ts}} \right) 
			\\&= \cH\left(\Gamma^{f_\eta}_{\tau,P}, \Gamma^{f_\eta}_{\tau,0,P} \right) - \tr_{\gF(P\gH)} \left( \bbW_{\tau,P} \Gamma^{f_\eta}_{\tau,P} \right) - \tr_{\gF(\gH)} \left(  \bbW_{\tau}	\Gamma_{\mathrm{ts}} \right) + \tr_{\gF(P\gH)} \left(  \bbW_{\tau,P} \Gamma^{f_\eta}_{\tau,P} \right) \nonumber 
			\\ & \qquad   \qquad   \qquad  \qquad  \quad +\cH\left( 	\Gamma_{\mathrm{ts}},	\Gamma^{f_\eta}_{\tau,0} \right)- \cH\left( \Gamma^{f_\eta}_{\tau,P},  \Gamma^{f_\eta}_{\tau,0,P} \right) \nonumber
			\\ & = -\log \frac{\cZ^{f_\eta}_{\tau,P}}{\cZ^{f_\eta}_{\tau,0,P}} - \left[  \tr_{\gF(\gH)} \left(  \bbW_{\tau}	\Gamma_{\mathrm{ts}} \right) -\tr_{\gF(P\gH)} \left(  \bbW_{\tau,P} \Gamma^{f_\eta}_{\tau,P} \right) \right]
			+\left[\cH\left( 	\Gamma_{\mathrm{ts}},	\Gamma^{f_\eta}_{\tau,0} \right)- \cH\left(\Gamma^{f_\eta}_{\tau,P}, \Gamma^{f_\eta}_{\tau,0,P} \right) \right], \nonumber
		\end{align}
		where we denote $		\cZ^{f_\eta}_{\tau,P}=\tr_{\gF(P\gH)} \left( e^{-\bbH_{\tau,P}}f_\eta\left(\cN_{P}/\tau\right) \right) $ and $	\cZ^{f_\eta}_{\tau,0,P}=\tr_{\gF(P\gH)} \left( e^{-\bbH_{\tau,0,P}}f_\eta\left(\cN_{P}/\tau\right) \right)$. 
		Although the trial state $\Gamma_{\mathrm{ts}}$ does not factorize, we next show that the error terms in \eqref{upper bound to finite dimension} are sufficiently small.
		
\textbf{Step 2: Error estimates for the trial state.}
		First, using the characterization of second quantization \eqref{characterization:second:quan}, the unitary equivalence $\cU$ and \eqref{equiv:P-localization}, we have
		\begin{equation}\label{rewrite:first}
			\begin{aligned}
				\tr_{\gF(\gH)}   \left(\cU^*\big( \bbW_{\tau,P}	\otimes \mathds{1}_{\gF(Q\gH)} \big) \cU \Gamma_{\mathrm{ts}} \right)  &= 	\tr_{\gF(P\gH) \otimes \gF(Q\gH) }  \left(\big( \bbW_{\tau,P}	\otimes \mathds{1}_{\gF(Q\gH)} \big) \cU \Gamma_{\mathrm{ts}}\cU^* \right)  
				\\ &= \tr_{\gF(P\gH)  }  \left(  \bbW_{\tau,P}	\big( \Gamma_{\mathrm{ts}}\big)_P \right)  =\frac{1}{\tau^3} \tr \big(W_P^\eps \big((\Gamma_{\mathrm{ts}}\big)_P\big)^{(3)} \big) .
			\end{aligned}
		\end{equation}
		Then, we use \eqref{rewrite:first} to rewrite the second term on the right-hand side of \eqref{upper bound to finite dimension} as 
		\begin{equation}
			\begin{aligned}\label{error:terms_in_upper_bound}
				\tr_{\gF(\gH)} & \left(  \bbW_{\tau}	\Gamma_{\mathrm{ts}} \right) -\tr_{\gF(P\gH)}  \left(  \bbW_{\tau,P} \Gamma^{f_\eta}_{\tau,P} \right)
	\\ &=	\tr_{\gF(\gH)}  \left(  \left(\bbW_{\tau} -\cU^*\big( \bbW_{\tau,P}	\otimes \mathds{1}_{\gF(Q\gH)} \big) \cU \right)	\Gamma_{\mathrm{ts}} \right)  
				\\ &\qquad  \qquad +  \tr_{\gF(\gH)} \left( \cU^* \big( \bbW_{\tau,P}	\otimes \mathds{1}_{\gF(Q\gH)}\big) \cU \Gamma_{\mathrm{ts}} \right) -\tr_{\gF(P\gH)} \left(\bbW_{\tau,P}	 \Gamma^{f_\eta}_{\tau,P}  \right) 
				\\&= \frac{1}{\tau^3} \tr \left( \big(W^\eps- W^\eps_P\big)\Gamma_{\mathrm{ts}}^{(3)} \right)  
			+\left[ \tr_{\gF(\gH)} \left( \cU^* \big( \bbW_{\tau,P}	\otimes \mathds{1}_{\gF(Q\gH)}\big) \cU \Gamma_{\mathrm{ts}} \right) -\tr_{\gF(P\gH)} \left(\bbW_{\tau,P}	 \Gamma^{f_\eta}_{\tau,P}  \right) \right].
			\end{aligned}
		\end{equation}
	The next lemma isolates the interaction part of the upper-bound error. Its content is that once the high-mode mass of the trial state is small, replacing $W^\eps$ by $W^\eps_P$ costs only a negligible amount. We now control the first term on the right-hand side. 
		\begin{lemma}\label{errors:localization_interaction}
			For all sufficiently large $\tau$ and $\Lambda_e$, and all $ \eta\in \big(0,\frac{1}{2}\cK^2\big)$, $\eps \in(0,1)$ satisfying $\eps^{-3} \leq \Lambda_e^{\frac18}  $, there exists a constant $C>0$ depending only on $w$ and $\cK$ such that
			\begin{align*}\label{errors:localization_interaction1}
				\frac{1}{\tau^3} \left|  \tr \left( \big( W^\eps_{P}- W^\eps \big) \Gamma_{\mathrm{ts}}^{(3)}\right)\right| \leq C \Lambda_e^{-\frac18}.
			\end{align*}
		\end{lemma}
		\begin{proof}
		The proof is divided into two parts. We first reduce the three-body localization errors to the one-body high-frequency mass of the trial state. We then control this high-mode mass by means of a free tail term together with a relative-entropy contribution.
			
			\textit{\underline{Step 1}: reduce the localization error to the high-mode one-body mass.} 	Observe that $W^\eps$ is self-adjoint and satisfies
			$$W^\eps- W^\eps_{P} = P^{\otimes 3} W^\eps  \big(	\mathds{1}^{\otimes 3} -P^{\otimes 3}\big) +  \big(	\mathds{1}^{\otimes 3} -P^{\otimes 3}\big)  W^\eps P^{\otimes 3}  + \big(	\mathds{1}^{\otimes 3} -P^{\otimes 3}\big)  W^\eps \big(	\mathds{1}^{\otimes 3} -P^{\otimes 3}\big).$$
			We begin by estimating the first two terms on the right-hand side.
			By the Cauchy-Schwarz inequality (see, e.g., \cite[Sec. 1]{Sim79}), for any $\varsigma>0$, we have
			\begin{equation}\label{CS:ineq}
				\begin{aligned}
					& \pm \big( P^{\otimes 3} W^\eps  (	\mathds{1}^{\otimes 3} -P^{\otimes 3})  + (	\mathds{1}^{\otimes 3} -P^{\otimes 3}) W^\eps P^{\otimes 3}   \big)
					\\ & \quad \leq  \varsigma  P^{\otimes 3}|W^\eps| P^{\otimes 3}  +\frac{1}{\varsigma} (	\mathds{1}^{\otimes 3} -P^{\otimes 3}) |W^\eps| (	\mathds{1}^{\otimes 3} -P^{\otimes 3})  ,
				\end{aligned}
			\end{equation}
			where $|W^\eps|:= \sqrt{(W^\eps)^*W^\eps}$. Since $\Gamma_{\mathrm{ts}}^{(3)} \geq 0$, combining this with \eqref{CS:ineq} yields, for any $\varsigma>0$,
			\begin{equation}\label{CS:ineq2}
				\begin{aligned}
					\frac{1}{\tau^3} & \left|  \tr \left(\left(P^{\otimes 3} W^\eps  (	\mathds{1}^{\otimes 3} -P^{\otimes 3})  + (	\mathds{1}^{\otimes 3} -P^{\otimes 3}) W^\eps P^{\otimes 3}   \right) \Gamma_{\mathrm{ts}}^{(3)}\right)\right| 
					\\ & \leq \frac{\varsigma}{\tau^3} \tr \left( P^{\otimes 3}|W^\eps| P^{\otimes 3}  \Gamma_{\mathrm{ts}}^{(3)} \right) +  \frac{1}{ \varsigma \tau^3}  \tr \left((	\mathds{1}^{\otimes 3} -P^{\otimes 3}) |W^\eps| (	\mathds{1}^{\otimes 3} -P^{\otimes 3}) \Gamma_{\mathrm{ts}}^{(3)} \right)
					\\ &=\frac{\varsigma}{\tau^3} \tr \left( |W^\eps| P^{\otimes 3}  \Gamma_{\mathrm{ts}}^{(3)} P^{\otimes 3} \right) +  \frac{1}{ \varsigma \tau^3}  \tr \left(|W^\eps| (	\mathds{1}^{\otimes 3} -P^{\otimes 3}) \Gamma_{\mathrm{ts}}^{(3)} (	\mathds{1}^{\otimes 3} -P^{\otimes 3})  \right) ,
				\end{aligned}
			\end{equation}
			where we used the cyclicity of the trace in the last line.
			Optimizing over $\varsigma>0$, we choose $$\varsigma =  \sqrt{\tr \big(|W^\eps| (	\mathds{1}^{\otimes 3} -P^{\otimes 3}) \Gamma_{\mathrm{ts}}^{(3)}(	\mathds{1}^{\otimes 3} -P^{\otimes 3})  \big)} \sqrt{ \tr \big( |W^\eps| P^{\otimes 3}  \Gamma_{\mathrm{ts}}^{(3)} P^{\otimes 3} \big)^{-1}},$$ 
			and substituting into \eqref{CS:ineq2}, we obtain
				\begin{align}\label{CS:ineq3}
					\frac{1}{\tau^3} & \left|  \tr \big(\left(P^{\otimes 3} W^\eps  (	\mathds{1}^{\otimes 3} -P^{\otimes 3})  + (	\mathds{1}^{\otimes 3} -P^{\otimes 3}) W^\eps P^{\otimes 3}   \right)  \nonumber \Gamma_{\mathrm{ts}}^{(3)}\big)\right| 
					\\ & \leq  \frac{1}{\tau^3} \sqrt{\tr \big( |W^\eps| P^{\otimes 3}  \Gamma_{\mathrm{ts}}^{(3)} P^{\otimes 3} \big)} \sqrt{\tr \big( |W^\eps| (	\mathds{1}^{\otimes 3} -P^{\otimes 3}) \Gamma_{\mathrm{ts}}^{(3)} (	\mathds{1}^{\otimes 3} -P^{\otimes 3}) \big)}  \nonumber
					\\ & \leq \frac{\eps^{-2}\|w\|^2_{L^\infty}}{\tau^3}  \sqrt{ \tr \big(  \Gamma_{\mathrm{ts}}^{(3)}  \big)} \sqrt{\tr \big( (	\mathds{1}^{\otimes 3} -P^{\otimes 3}) \Gamma_{\mathrm{ts}}^{(3)}  \big)}
					\\ & \leq  \eps^{-2}\|w\|^2_{L^\infty}   \sqrt{  \tr_{\gF(\gH)}  \big( (\cN/\tau)^3 \Gamma_{\mathrm{ts}}  \big)}
					\sqrt{  \frac{1}{\tau^3} \tr \big( \big(	\mathds{1}^{\otimes 3} -P^{\otimes 3}\big) \Gamma_{\mathrm{ts}}^{(3)}  \big)}  \nonumber
					\\ & \leq  \eps^{-2}\|w\|^2_{L^\infty}  \cK^3   \sqrt{  \frac{1}{\tau^3} \tr \big( \big(	\mathds{1}^{\otimes 3} -P^{\otimes 3}\big) \Gamma_{\mathrm{ts}}^{(3)}  \big)}, \nonumber
				\end{align}
			where in the second inequality we used $0\leq P\leq \mathds{1}_{\gH} $ and the bound $\big\| |W^\eps | \big\|_{\mathrm{op}} \leq \eps^{-2}\|w\|^2_{L^\infty} $. Similarly, we also obtain
			\begin{align}\label{CS:ineq4}
				\frac{1}{\tau^3} & \left|  \tr \left( (	\mathds{1}^{\otimes 3} -P^{\otimes 3}) W^\eps (	\mathds{1}^{\otimes 3} -P^{\otimes 3})  \Gamma_{\mathrm{ts}}^{(3)}\right)\right| \leq \frac{1}{\tau^3} \eps^{-2}\|w\|^2_{L^\infty}  \tr \left( (	\mathds{1}^{\otimes 3} -P^{\otimes 3}) \Gamma_{\mathrm{ts}}^{(3)}  \right).
			\end{align}
			
			Next, we estimate $\tr \big( (	\mathds{1}^{\otimes 3} -P^{\otimes 3}) \Gamma_{\mathrm{ts}}^{(3)}  \big)$. Using the elementary inequality $\mathds{1}^{\otimes 3}-P^{\otimes 3}\leq \sum_{j=1}^3 \mathds{1}^{\otimes(j-1)}\otimes Q\otimes \mathds{1}^{\otimes(3-j)}$, we find that
			\begin{equation}\label{WP-WGammatau}
				\begin{aligned}
					\frac{1}{\tau^3}  \tr \left( \big(	\mathds{1}^{\otimes 3} -P^{\otimes 3}\big) \Gamma_{\mathrm{ts}}^{(3)}  \right) &\leq     	\frac{1}{\tau^3} \tr \left( (Q \otimes \mathds{1} \otimes  \mathds{1} + \mathds{1}   \otimes Q\otimes  \mathds{1}+ \mathds{1} \otimes  \mathds{1} \otimes Q) \Gamma_{\mathrm{ts}}^{(3)}\right) 
					\\&	\leq \frac{C}{\tau} \tr_{\gF(\gH)} \left( \cU^* \big(\mathds{1}_{\gF(P\gH)} \otimes  \cN_Q \big) \cU \left( \frac{\cN}{\tau} \right)^2 \Gamma_{\mathrm{ts}}  \right)
					\\ &\leq \frac{C}{\tau}  \cK^4   \tr_{\gF(\gH)}  \left( \cU^* \big(\mathds{1}_{\gF(P\gH)} \otimes  \cN_Q \big) \cU \Gamma_{\mathrm{ts}}  \right)
					\\ &=  \frac{ C }{\tau}  \cK^4 \tr \left( Q \Gamma_{\mathrm{ts}}^{(1)}  \right) 
					\leq \frac{C}{\tau}  \left| \tr \left( Q \big(\Gamma_{\mathrm{ts}}^{(1)} - \Gamma_{\tau,0}^{(1)} \big) \right) \right|+ \frac{C}{\tau}   \tr \left( Q\Gamma_{\tau,0}^{(1)} \right) ,
				\end{aligned}
			\end{equation}
			where in the third inequality we used that $\cU^* (\mathds{1}_{\gF(P\gH)} \otimes  \cN_Q ) \cU \geq0$ and that it commutes with $\cN$. 
			Collecting the bounds \eqref{CS:ineq3}--\eqref{WP-WGammatau}, we conclude that
			\begin{equation}\label{Wp-W:main}
				\begin{aligned}
					\frac{1}{\tau^3}  \left|  \tr \left( \left( W^\eps_{P}- W^\eps \right) \Gamma_{\mathrm{ts}}^{(3)}\right)\right|
				& \leq  C \eps^{-2}\|w\|^2_{L^\infty}  \cK^3  \left( \frac{1}{\tau}  \left| \tr \left( Q \big(\Gamma_{\mathrm{ts}}^{(1)} - \Gamma_{\tau,0}^{(1)} \big) \right) \right|+ \frac{1}{\tau}   \tr \left( Q\Gamma_{\tau,0}^{(1)} \right) \right)^{\frac12}
					\\&  \quad + C \eps^{-2}\|w\|^2_{L^\infty}   \left( \frac{1}{\tau}  \left| \tr \left( Q \big(\Gamma_{\mathrm{ts}}^{(1)} - \Gamma_{\tau,0}^{(1)} \big) \right) \right|+ \frac{1}{\tau}   \tr \left( Q\Gamma_{\tau,0}^{(1)} \right) \right).
				\end{aligned}
			\end{equation}
			\textit{\underline{Step 2:} control the high-mode one-body mass.}
			At this point, the problem has been reduced to the free high-frequency tail and to the deviation of the trial state's one-body density matrix from the free one. The first is deterministic, while the second is handled through the relative entropy with respect to $\Gamma_{\tau,0}$.
			The second term on the right-hand side of \eqref{WP-WGammatau} can be estimated as
			\begin{align}\label{Wp-W:second}
				\frac{1}{\tau}\tr \left( Q\Gamma_{\tau,0}^{(1)}  \right)= \tr \left( \frac{Q/\tau}{e^{h/\tau}-1} \right) \leq \tr(Qh^{-1})\leq C \Lambda_e^{-\frac12}  ,
			\end{align}
			for any $\tau >0$, where $C>0$ is independent of $\tau$. 
			
			Turning to the first term on the right-hand side of \eqref{WP-WGammatau}, we apply H\"older's inequality in Schatten space (see, e.g., \cite[Theorem 2.8]{Sim79}) together with \cite[Theorem 6.1]{LNR21} to obtain
			\begin{equation}\label{control_relative_entropy:1}
				\begin{aligned}
					\frac{1}{\tau}	\left| \tr \left( Q \big(\Gamma_{\mathrm{ts}}^{(1)} - \Gamma_{\tau,0}^{(1)} \big) \right) \right| & = \frac{1}{\tau} \left| \tr \left( Qh^{-1} \sqrt{h} \big(\Gamma_{\mathrm{ts}}^{(1)} - \Gamma_{\tau,0}^{(1)} \big) \sqrt{h} \right) \right| 
					\\ &\leq  \frac{1}{\tau} \|Qh^{-1}\|_{\mathfrak{S}^2}\left\| \sqrt{h} \big(\Gamma_{\mathrm{ts}}^{(1)} - \Gamma_{\tau,0}^{(1)} \big) \sqrt{h}  \right\|_{\mathfrak{S}^2} 
					\\ &\leq 2\|Qh^{-1}\|_{\mathfrak{S}^2}  \sqrt{ \cH\big(\Gamma_{\mathrm{ts}},\Gamma_{\tau,0}\big)}\left(\sqrt{2}+\sqrt{ \cH\big(\Gamma_{\mathrm{ts}} ,\Gamma_{\tau,0}\big)}\right).
				\end{aligned}
			\end{equation}
			To control the relative entropy $	\cH \big( \Gamma_{\mathrm{ts}} , \Gamma_{\tau,0} \big)$, we first apply the variational principle \eqref{variational principle quantum} to obtain
			\begin{align}	\label{control_relative_entropy:2}
				& \cH  \big( \Gamma_{\mathrm{ts}} , \Gamma_{\tau,0} \big) = -\log\left( \frac{\cZ_{\mathrm{ts}}}{\cZ_{\tau,0}}\right) + \tr_{\gF(\gH)} \left( \cU^* \big(  \bbW_{\tau,P} \otimes \mathds{1}_{\gF(Q\gH)} \big) \cU \Gamma_{\mathrm{ts}}\right) + \tr_{\gF(\gH)}  \big( \Gamma_{\mathrm{ts}} \log f_\eta(\cN/\tau) \big) \nonumber
				\\ & \quad \leq -\log\left( \frac{\cZ_{\mathrm{ts}}}{\cZ^{f_\eta}_{\tau,0}}\right) -\log\left( \frac{\cZ^{f_\eta}_{\tau,0}}{\cZ_{\tau,0}}\right) 
				+\tr_{\gF(\gH)} \left( \cU^* \big(  \bbW_{\tau,P} \otimes \mathds{1}_{\gF(Q\gH)} \big) \cU \Gamma_{\mathrm{ts}}\right) 
				\\ &\quad \leq -\tr_{\gF(\gH)} \left(\cU^* \big(  \bbW_{\tau,P} \otimes \mathds{1}_{\gF(Q\gH)} \big) \cU \Gamma^{f_\eta}_{\tau,0}\right)-\log\left( \frac{\cZ^{f_\eta}_{\tau,0}}{\cZ_{\tau,0}}\right)+ \tr_{\gF(\gH)} \left( \cU^* \big(  \bbW_{\tau,P} \otimes \mathds{1}_{\gF(Q\gH)} \big) \cU \Gamma_{\mathrm{ts}}\right), \nonumber
			\end{align}
			where we used $\tr_{\gF(\gH)}  \big( \Gamma_{\mathrm{ts}} \log f_\eta(\cN/\tau) \big)\leq 0$ in the second line.
			We next estimate the three terms on the right-hand side of \eqref{control_relative_entropy:2} separately.
			Recall the following bounds for the second quantization $\mathbb{W}$, proved in \cite[(3.91)]{Kno09}:
			\begin{align*}
				\frac{1}{\tau^3} \big\| \mathbb{W} \big|_{\gH^{(n)}}  \big\|\leq \left(\frac{n}{\tau}\right)^3 \eps^{-2} \|w\|_{L^\infty}^2,
			\end{align*}
			which in particular imply
			\begin{align}\label{bound:number operator}
				\left\|  	\frac{1}{\tau^3} \mathbb{W} \, \mathds{1}_{\{\cN/\tau \leq \cK^2 \}} \right\| \leq  \eps^{-2} \|w\|_{L^\infty}^2\cK^6.
			\end{align}
			By property \eqref{property:unitary}, we have $\cU \cN \cU^*=\cN_P  \otimes \mathds{1}_{\gF (Q\gH)} +  \mathds{1}_{\gF (P \gH)} \otimes \cN_Q $. Combining this with the bound \eqref{bound:number operator}, the first term on the right-hand side of \eqref{control_relative_entropy:2} satisfies
			\begin{align}\label{control_relative_entropy:22}
				& \left|  \tr_{\gF(\gH)} \left(\cU^* \big(  \bbW_{\tau,P} \otimes \mathds{1}_{\gF(Q\gH)} \big) \cU \Gamma^{f_\eta}_{\tau,0}\right)\right|  \nonumber
				\\ & \qquad = 	\left| 	\tr_{\gF(\gH)}  \left( \cU^* \big(  \bbW_{\tau,P} \otimes \mathds{1}_{\gF(Q\gH)} \big) \cU \mathds{1}_{\{\cN/\tau \leq \cK^2 \}} \Gamma^{f_\eta}_{\tau,0}   \right) \right| \nonumber
				\\ & \qquad  \leq \left\| \cU^* \big(  \bbW_{\tau,P} \otimes \mathds{1}_{\gF(Q\gH)} \big) \cU  \mathds{1}_{\{\cU^* ( \cN_P  \otimes \mathds{1}_{\gF (Q\gH)} +  \mathds{1}_{\gF (P \gH)} \otimes \cN_Q ) \cU \leq \cK^2\tau \}} \right\|  \nonumber
				\\ & \qquad  \leq  \left\|  \frac{1}{\tau^3} \mathbb{W}_P \otimes \mathds{1}_{\gF(Q\gH)} \mathds{1}_{\{ \cN_P  \otimes \mathds{1}_{\gF (Q\gH)} +  \mathds{1}_{\gF (P \gH)} \otimes \cN_Q  \leq \cK^2\tau \}} \right\|   
			\leq  \eps^{-2} \|w\|_{L^\infty}^2\cK^6. 
			\end{align}
			The third term on the right-hand side of \eqref{control_relative_entropy:2} is estimated in the same way and satisfies the same bound. It remains to control the second term. By  \eqref{convergence:for:feta}, for every $\eta$ satisfying $\frac{\cK^2}{2}>\eta \geq \tau^{-\frac{1}{64}}$, we have
			\begin{equation}\label{limit:on:freegibbs}
				\begin{aligned}
					\lim_{\tau \to \infty}  \tr_{\gF(\gH)} \left( \Gamma_{\tau,0} f_\eta \left(\frac{\cN}{\tau}\right) \right)=\int_{\gH}  f_\eta(\|u\|^2_{L^2}) \dif \mu_0(u)
				> \frac12 \int_{\gH} \mathds{1}_{\big\{ \|u\|_{L^2}^2 \leq \frac{\cK^2}{4}\big\}} \dif \mu_0(u)=: C(\cK).
				\end{aligned}
			\end{equation}			
			Consequently, there exists $T_0>0$ such that, for all $\tau >T_0$ and $ \eta\in \big[\tau^{-\frac{1}{64}},\frac{1}{2}\cK^2\big)$,
			\begin{align}\label{upper_bound_Ztau0}
				-\log\left[ \tr_{\gF(\gH)} \left( \Gamma_{\tau,0} f_\eta \left(\frac{\cN}{\tau}\right) \right)  \right] \leq -\log\left( \frac12 \int_{\gH} \mathds{1}_{\big\{ \|u\|_{L^2}^2 \leq \frac{\cK^2}{4}\big\}} \dif \mu_0(u) \right) =\log\left(\frac{1}{C(\cK)}\right). 
			\end{align}
			Inserting the bounds \eqref{control_relative_entropy:2}, \eqref{control_relative_entropy:22} and \eqref{upper_bound_Ztau0} into \eqref{control_relative_entropy:1}, we obtain, for any $\tau> T_0$,
				\begin{align}\label{control_relative_entropy:11}
					\frac{1}{\tau}\left| \tr \left( Q \big(\Gamma_{\mathrm{ts}}^{(1)} - \Gamma_{\tau,0}^{(1)} \big) \right) \right| & \leq 2\left(\sqrt{2}+\sqrt{\log\big(1/C(\cK)\big)+2\eps^{-2} \|w\|_{L^\infty}^2\cK^6}  \right)^2 \|Qh^{-1}\|_{\mathfrak{S}^2} \nonumber
					\\& \leq C(w,\cK) \eps^{-2} \|Qh^{-1}\|_{\mathfrak{S}^2} \lesssim_{w,\cK}  \eps^{-2} \Lambda_e^{-\frac12}.
				\end{align}
			
			Once \eqref{Wp-W:second} and \eqref{control_relative_entropy:11} are in hand, the claimed localization estimate follows directly from \eqref{Wp-W:main}.
			Substituting \eqref{Wp-W:second} and \eqref{control_relative_entropy:11} into \eqref{Wp-W:main}, we obtain
			\begin{align*}
				\frac{1}{\tau^3} &	 \left|  \tr \left( \left( W^\eps_{P}- W^\eps \right) \Gamma_{\mathrm{ts}}^{(3)}\right)\right|
			\lesssim_{w,\cK}  \eps^{-2} (\eps^{-2} \Lambda_e^{-\frac12}+   \Lambda_e^{-\frac12} )^{\frac12}+  \eps^{-2} ( \eps^{-2} \Lambda_e^{-\frac12}+   \Lambda_e^{-\frac12} ) \lesssim_{w,\cK} \Lambda_e^{-\frac18},
			\end{align*}
			where the last inequality follows from $\eps^{-3} \leq  \Lambda_e^{\frac18}$. This completes the proof of Lemma~\ref{errors:localization_interaction}. 
		\end{proof}
		
   Next, we control the second term on the right-hand side of \eqref{error:terms_in_upper_bound} as well as the difference of the relative entropies in the last term of \eqref{upper bound to finite dimension}. The purpose is to recover, from the full quantum trial state, the entropy term that naturally appears in the finite-dimensional classical variational problem. The difficulty lies precisely in the lack of factorization induced by the cutoff: after localization, the $P$- and $Q$-modes are still coupled through $f_\eta(\cN/\tau)$, hence \cite[Lemma 10.3]{LNR21} is not applicable. We therefore rewrite the entropy difference into interaction contributions and logarithms of truncated partition functions, and then bound these pieces separately using the localization estimates already obtained, together with a uniform positive lower bound on the truncated free partition function.
		\begin{lemma}\label{lemma:relative entropy}
			For all sufficiently large $\tau$ and $\Lambda_e$, and all $ \eta\in \big(0,\frac{1}{2}\cK^2\big)$, $\eps \in(0,1)$ satisfying $\eta^{-1}e^{M\eps^{-2}} \leq \Lambda_e^{\frac18}  \leq  \tau^{\frac{1}{32}}$, there exists a constant $C>0$ depending only on $w$ and $\cK$ such that
			\begin{align*}
				\left| \cH(\Gamma_{\mathrm{ts}}, \Gamma^{f_\eta}_{\tau,0}) -\cH( \Gamma^{f_\eta}_{\tau,P}, \Gamma^{f_\eta}_{\tau,0,P})\right| \leq  C \Lambda_e^{-\frac{1}{4}}.
			\end{align*}
		\end{lemma}
		\begin{proof}
For notational convenience, set
			\begin{align*}
				\bbW_{\tau,P,\cU}&:=\cU^* \big( \bbW_{\tau,P}	\otimes \mathds{1}_{\gF(Q\gH)}\big) \cU, 
				\\
				\cN_{P,\cU} &:= \cU^* \big( \cN_{P}	\otimes \mathds{1}_{\gF(Q\gH)}\big) \cU,  &	\cN_{Q,\cU} &:= \cU^* \big( \mathds{1}_{\gF(P \gH)} \otimes \cN_{Q}	 \big) \cU,
				\\ 
				\textbf{A}_{\tau,P}&:=\tr_{\gF(P\gH)}\Big(e^{-\bbH_{\tau,P}}f_\eta(\cN_P/\tau)\bbW_{\tau,P}\Big),
				& 
				\textbf{A}_\tau&:=\cZ_{\tau,0,Q}^{-1}\tr_{\gF(\gH)}\Big(e^{-\bbH_{\tau,0}+\bbW_{\tau,P,\cU}}f_\eta(\cN/\tau)\bbW_{\tau,P,\cU}\Big),
				\\
				\textbf{B}_{\tau,P}	&:= 
				\tr_{\gF(P\gH)}\Big(e^{-\bbH_{\tau,P}}f_\eta(\cN_P/\tau)\Big),
				& \textbf{B}_\tau&:=\cZ_{\tau,0,Q}^{-1}\tr_{\gF(\gH)}\Big(e^{-\bbH_{\tau,0}+\bbW_{\tau,P,\cU}}f_\eta(\cN/\tau)\Big),
				\\
				\textbf{C}_{\tau,P}&:=\tr_{\gF(P\gH)}\Big(e^{-\bbH_{\tau,0,P}}f_\eta(\cN_P/\tau)\Big),
				& 	\textbf{C}_\tau&:=\cZ_{\tau,0,Q}^{-1}\tr_{\gF(\gH)}\Big(e^{-\bbH_{\tau,0}}f_\eta(\cN/\tau)\Big).
			\end{align*}
			Then a direct computation gives
			\begin{align}\label{difference:relative entropy}
				\tr_{\gF(\gH)} \left( \cU^* \big( \bbW_{\tau,P}	\otimes \mathds{1}_{\gF(Q\gH)}\big) \cU \Gamma_{\mathrm{ts}} \right) -\tr_{\gF(P\gH)} \left(\bbW_{\tau,P}	 \Gamma^{f_\eta}_{\tau,P}  \right)  &= \frac{ \textbf{A}_\tau}{ \textbf{B}_\tau} -\frac{ \textbf{A}_{\tau,P}}{ \textbf{B}_{\tau,P}} , \nonumber
				\\
				\cH (\Gamma_{\mathrm{ts}}, \Gamma^{f_\eta}_{\tau,0}) -\cH(\Gamma^{f_\eta}_{\tau,P}, \Gamma^{f_\eta}_{\tau,0,P})
				=\left(\frac{ \textbf{A}_\tau}{ \textbf{B}_\tau} -  \frac{ \textbf{A}_{\tau,P}}{ \textbf{B}_{\tau,P}} \right) &-\log\!\left(\frac{ \textbf{B}_\tau}{ \textbf{B}_{\tau,P}}\right)+\log\!\left(\frac{ \textbf{C}_\tau}{ \textbf{C}_{\tau,P}}\right).
			\end{align}
			We then estimate the three terms on the right-hand side of \eqref{difference:relative entropy} separately. The first step compares the interacting traces before and after localization. The second step provides uniform lower bounds for the relevant denominators. The third step inserts these estimates into the entropy identity and converts them into the claimed bound.
		
			\textit{\underline{Step 1:} comparison of the interacting partition functions $\left| \textbf{A}_{\tau}- \textbf{A}_{\tau,P}  \right| $ and $\left| \textbf{B}_{\tau}- \textbf{B}_{\tau,P}  \right| $.} Since $ \mathds{1}_{\gF(P\gH)} \otimes \Gamma_{\tau,0,Q}$ commutes with $e^{-\bbH_{\tau,P}}f_\eta\big(\cN_P/\tau\big) \bbW_{\tau,P} \otimes \mathds{1}_{\gF(Q\gH)} $, we first use the unitary equivalence \eqref{def:unitary} to rewrite $\textbf{A}_{\tau,P}$ as
			\begin{align*}
				&	\tr_{\gF(P\gH)}  \Big( e^{-\bbH_{\tau,P}}  f_\eta\big(\cN_P/\tau\big) \bbW_{\tau,P} \Big)
				\\ & =	\tr_{\gF(P\gH)\otimes \gF(Q\gH)}\Big( e^{-\bbH_{\tau,P}}  f_\eta\big(\cN_P/\tau\big) \bbW_{\tau,P} \otimes \Gamma_{\tau,0,Q} \Big) 
				\\ &= \cZ_{\tau,0,Q}^{-1}\tr_{\gF(\gH)}\Big( \cU^*  \big( e^{-\bbH_{\tau,P}} \otimes \mathds{1}_{\gF(Q\gH)}  \big) \big(  \mathds{1}_{\gF(P\gH)} \otimes e^{-\bbH_{\tau,0,Q}}  \big) \big( f_\eta\big(\cN_P/\tau\big)  \otimes \mathds{1}_{\gF(Q\gH)}  \big) \big( \bbW_{\tau,P}  \otimes \mathds{1}_{\gF(Q\gH)}  \big) \cU   \Big)
				\\ & = \cZ_{\tau,0,Q}^{-1}\tr_{\gF(\gH)}\Big(  e^{-\cU^*( \bbH_{\tau,P}\otimes \mathds{1}_{\gF(Q\gH)} )\cU }    e^{-\cU^* (\mathds{1}_{\gF(P\gH)} \otimes \bbH_{\tau,0,Q} )\cU}   f_\eta\left( \cN_{P,\cU} /\tau\right) \bbW_{\tau,P,\cU} \Big)
				\\ &= \cZ_{\tau,0,Q}^{-1}\tr_{\gF(\gH)}\Big(  e^{-\cU^*( \bbH_{\tau,0,P}\otimes \mathds{1}_{\gF(Q\gH)} +\mathds{1}_{\gF(P\gH)} \otimes \bbH_{\tau,0,Q}-\bbW_{\tau,P}	\otimes \mathds{1}_{\gF(Q\gH)})\cU }  f_\eta\left( \cN_{P,\cU} /\tau\right) \bbW_{\tau,P,\cU}  \Big)
				\\ &= \cZ_{\tau,0,Q}^{-1}\tr_{\gF(\gH)}\Big(  e^{- \bbH_{\tau,0}+\bbW_{\tau,P,\cU} }  f_\eta\left( \cN_{P,\cU} /\tau\right) \bbW_{\tau,P,\cU}  \Big),
			\end{align*}
			where in the last equality we used $\cU^* \bbH_{\tau,0}\cU =\bbH_{\tau,0,P}\otimes \mathds{1}_{\gF(Q\gH)} +\mathds{1}_{\gF(P\gH)} \otimes \bbH_{\tau,0,Q}$. By the mean-value formula and $\mathds{1}_{\{\cN /\tau\leq \cK^2 \}}  \mathds{1}_{\{\cN_{P,\cU} /\tau\leq \cK^2 \}} = \mathds{1}_{\{\cN /\tau\leq \cK^2 \}}  $, we obtain
			\begin{align}\label{first part in Gamma_tau,P-Gamma_1}
			&	\cZ_{\tau,0,P}^{-1} \left| \textbf{A}_{\tau}- \textbf{A}_{\tau,P}  \right|  \nonumber
				\\ &=\cZ_{\tau,0,P}^{-1}   \cZ_{\tau,0,Q}^{-1} \left|  \tr_{\gF(\gH)} \Big(  e^{-\bbH_{\tau,0}+\bbW_{\tau,P,\cU} } \big( f_\eta\left(\cN/\tau\right)-f_\eta\left(\cN_{P,\cU} /\tau\right) \big) \bbW_{\tau,P,\cU} \Big) \right|    \nonumber
				\\ & = \cZ_{\tau,0,P}^{-1}   \cZ_{\tau,0,Q}^{-1} \left|  \tr_{\gF(\gH)} \Big(  e^{-\bbH_{\tau,0}+\bbW_{\tau,P,\cU} } \big( f_\eta\left(\cN/\tau\right)-f_\eta\left(\cN_{P,\cU} /\tau\right) \big) \mathds{1}_{\{\cN_{P,\cU} /\tau\leq \cK^2 \}} \bbW_{\tau,P,\cU} \Big) \right|   \nonumber
				\\				& =  \cZ_{\tau,0,P}^{-1}   \cZ_{\tau,0,Q}^{-1}\left| \tr_{\gF(\gH)}  \Big( e^{-\bbH_{\tau,0}+\bbW_{\tau,P,\cU} } \int_0^1 f_\eta'\Big( \frac{\cN_{P,\cU} +s  \cN_{Q,\cU} }{\tau}  \Big) \frac{\cN_{Q,\cU}}{\tau} \dif s \mathds{1}_{\{\cN_{P,\cU} /\tau\leq \cK^2 \}} \bbW_{\tau,P,\cU} \Big)  \right| . 
			\end{align}
			Since the operators $\mathds{1}_{\{\cN_{P,\cU} /\tau\leq \cK^2 \}} $, $e^{-\bbH_{\tau,0}+ \bbW_{\tau,P,\cU}}  $ and $  \cN_{Q,\cU}   $ are non-negative and mutually commute, their product is also non-negative. Invoking again \eqref{bound:number operator} and \eqref{control_relative_entropy:22}, we deduce that the right-hand side of \eqref{first part in Gamma_tau,P-Gamma_1} is bounded by
			\begin{align}\label{first part in Gamma_tau,P-Gamma_1.4}
				&\|f_\eta'\|_{L^\infty}	\cZ_{\tau,0,P}^{-1}\cZ_{\tau,0,Q}^{-1}  \tr_{\gF(\gH)}  \Big(  e^{-\bbH_{\tau,0}+\bbW_{\tau,P,\cU} }  \frac{\cN_{Q,\cU}}{\tau}  \mathds{1}_{\{\cN_{P,\cU}  /\tau \leq \cK^2 \}}   \Big) 
				  \left\|  \mathds{1}_{\{\cN_{P,\cU}  /\tau \leq \cK^2 \}}   \bbW_{\tau,P,\cU} \right\| 
				\\&	\leq \eta^{-1}\eps^{-2}\|w\|_{L^\infty}^2   \cK^6	\cZ_{\tau,0,P}^{-1} \cZ_{\tau,0,Q}^{-1}  \tr_{\gF(\gH)}  \left(  e^{-\bbH_{\tau,0}+\bbW_{\tau,P,\cU} } \frac{\cN_{Q,\cU}}{\tau} \mathds{1}_{\{ \cN_{P,\cU}/\tau  \leq \cK^2\}}  \right). \nonumber
			\end{align}
			Since $\cN_{P,\cU}$ commutes with $\bbH_{\tau,0}-\bbW_{\tau,P,\cU}$ and $\mathds{1}_{\{ \cN_{P,\cU} /\tau\leq \cK^2\}}$ is a projection, the following identity holds:
				\begin{align}\label{op:identity}
						\mathds{1}_{\{ \cN_{P,\cU}/ \tau \leq \cK^2\}} e^{-\bbH_{\tau,0}+\bbW_{\tau,P,\cU} }  =	\mathds{1}_{\{ \cN_{P,\cU}/ \tau  \leq \cK^2\}}  e^{-(\bbH_{\tau,0}-\bbW_{\tau,P,\cU})		\mathds{1}_{\{\cN_{P,\cU}/ \tau  \leq \cK^2 \}}  }.  
			\end{align}
			We recall the Golden-Thompson inequality (see, e.g., \cite[Sec. 8.1]{Sim79}): for self-adjoint operators $X,Y,Z$ satisfying $Z\geq 0$ and $[X,Z]=[Y,Z]=0$, one has
			\begin{align}\label{trace:ineq}
				\tr\big(Ze^{X+Y}\big) \leq 	\tr\big(Ze^Xe^Y\big).
			\end{align}
			Moreover, since $  \mathds{1}_{\{ \cN_{P,\cU}/ \tau  \leq \cK^2 \}} $ commutes with $\bbW_{\tau,P,\cU}	\mathds{1}_{\{ \cN_{P,\cU}/ \tau  \leq \cK^2 \}} $ and $\bbH_{\tau,0}	\mathds{1}_{\{ \cN_{P,\cU}/ \tau  \leq \cK^2 \}} $, it follows from \eqref{bound:number operator}, \eqref{control_relative_entropy:22}, \eqref{op:identity} and \eqref{trace:ineq} that
			\begin{align}	\label{first part in Gamma_tau,P-Gamma_1.6}
				  \tr_{\gF(\gH)} &  \Big(e^{-\bbH_{\tau,0}+\bbW_{\tau,P,\cU} } \frac{\cN_{Q,\cU}}{\tau} \mathds{1}_{\{ \cN_{P,\cU}/ \tau  \leq \cK^2  \}}  \Big)  \nonumber
				\\ &= \tr_{\gF(\gH)}  \Big(\frac{\cN_{Q,\cU}}{\tau} \mathds{1}_{\{ \cN_{P,\cU}/ \tau  \leq \cK^2  \}}  e^{-(\bbH_{\tau,0}-\bbW_{\tau,P,\cU})		\mathds{1}_{\{\cN_{P,\cU}/ \tau  \leq \cK^2  \}} } \Big) \nonumber
				\\ &\leq  \tr_{\gF(\gH)}  \Big( \frac{\cN_{Q,\cU}}{\tau} \mathds{1}_{\{ \cN_{P,\cU}/ \tau  \leq \cK^2  \}} e^{-\bbH_{\tau,0}	\mathds{1}_{\{ \cN_{P,\cU}/ \tau  \leq \cK^2  \}} }  e^{\bbW_{\tau,P,\cU}	\mathds{1}_{\{ \cN_{P,\cU}/ \tau  \leq \cK^2  \}} }  \Big) \nonumber
				\\ & \leq  \tr_{\gF(\gH)}  \Big( \frac{\cN_{Q,\cU}}{\tau}\mathds{1}_{\{ \cN_{P,\cU}/ \tau  \leq \cK^2  \}} e^{-\bbH_{\tau,0}	\mathds{1}_{\{ \cN_{P,\cU}/ \tau  \leq \cK^2  \}} }   \Big) \Big\|
			 e^{\bbW_{\tau,P,\cU}	\mathds{1}_{\{ \cN_{P,\cU}/ \tau  \leq \cK^2 \}} }  \Big\|   \nonumber
				\\ &\leq e^{ \eps^{-2} \|w\|_{L^\infty}^2  \cK^6}    \tr_{\gF(\gH)}  \Big( \frac{\cN_{Q,\cU}}{\tau}e^{-\bbH_{\tau,0}}   \Big) .
			\end{align}
		By the factorized property $\cU e^{-\bbH_{\tau,0}} \cU^* = e^{-\bbH_{\tau,0,P}} \otimes e^{-\bbH_{\tau,0,Q}}  $, we obtain
				\begin{align}\label{first part in Gamma_tau,P-Gamma_1.7}
					\cZ_{\tau,0,P}^{-1} & \cZ_{\tau,0,Q}^{-1}     \tr_{\gF(\gH)}  \Big( \frac{\cN_{Q,\cU}}{\tau}e^{-\bbH_{\tau,0}}   \Big)  \nonumber
					\\ &=  \cZ_{\tau,0,P}^{-1}  \cZ_{\tau,0,Q}^{-1} \tr_{\gF(P\gH)\otimes \gF(Q\gH)}\Big( \big( \mathds{1}_{\gF(P\gH)} \otimes  \frac{\cN_Q}{\tau} \big) \big(e^{-\bbH_{\tau,0,P}} \otimes e^{-\bbH_{\tau,0,Q}} \big)\Big) \nonumber 
				\\ &=   \tr_{\gF(Q\gH)} \left(  \frac{\cN_Q}{\tau}  \Gamma_{\tau,0,Q}    \right) \leq C  \Lambda_e^{-\frac12}, 
				\end{align}
			where the last inequality follows from \eqref{Wp-W:second}. Collecting the bounds \eqref{first part in Gamma_tau,P-Gamma_1.4}, \eqref{first part in Gamma_tau,P-Gamma_1.6} and \eqref{first part in Gamma_tau,P-Gamma_1.7}, we obtain
			\begin{equation}\label{first part in Gamma_tau,P-Gamma_1.5}
				\begin{aligned}
					\cZ_{\tau,0,P}^{-1} &\left| \textbf{A}_{\tau}- \textbf{A}_{\tau,P}  \right| 
				\leq 
					C \eta^{-1}\eps^{-2}\|w\|_{L^\infty}^2   \cK^6e^{ \eps^{-2}\|w\|_{L^\infty}^2   \cK^6} \Lambda_e^{-\frac12} \leq C \eta^{-1}\eps^{-2} e^{ \eps^{-2} \|w\|_{L^\infty}^2  \cK^6} \Lambda_e^{-\frac12},
				\end{aligned}
			\end{equation}
			for some constant $C$ depending only on $w$ and $\cK$.
			Replacing $ \bbW_{\tau,P,\cU} $ and $ \bbW_{\tau,P} $ in \eqref{first part in Gamma_tau,P-Gamma_1} by the identity operators $\mathds{1}_{\gF(\gH)}$ and $\mathds{1}_{\gF(P\gH)}$, respectively, and repeating the above arguments, we also control
				\begin{align}\label{first part in Gamma_tau,P-Gamma_1.3}
					\cZ_{\tau,0,P}^{-1}\left| \textbf{B}_{\tau}- \textbf{B}_{\tau,P}  \right| 
					\leq C \eta^{-1}e^{ \eps^{-2} \|w\|_{L^\infty}^2  \cK^6} \Lambda_e^{-\frac12}. 
				\end{align}

			\textit{\underline{Step~2:} lower bounds for the partition function $\textbf{B}_{\tau,P}$.}
		We first apply Golden-Thompson inequality \eqref{trace:ineq} again to obtain
			\begin{align}\label{lower:bound:for:denominator}
				\tr_{\gF(P\gH)} & \Big( f_\eta(\cN_P/\tau)  e^{-\bbH_{\tau,0,P} }   \Big)  
				 = \tr_{\gF(P\gH)}\Big(  f_\eta(\cN_P/\tau)  e^{-\bbH_{\tau,P} - \bbW_{\tau,P} }   \Big)  \nonumber
				\\ &  \leq  \tr_{\gF(P\gH)} \Big(  f_\eta(\cN_P/\tau)  e^{-\bbH_{\tau,P}  } e^{-\bbW_{\tau,P}	 }   \Big) 
	  \leq  \tr_{\gF(P\gH)}\Big(  f_\eta(\cN_P/\tau)  e^{-\bbH_{\tau,P} } \Big)
               =\textbf{B}_{\tau,P}, 
			\end{align}
			where we used $\bbW_{\tau,P}\geq 0$ in the last line. It therefore remains to derive a lower bound on the free partition function. We use \eqref{convergence on same index2:new} and \eqref{int:f(u)-f(Pu)} to obtain that for $\eta^{-1}\leq \Lambda_e^{\frac18}\leq \tau^{\frac{1}{32}}$,
			\begin{align}\label{diff:trf(NP)-intf(u)}
					\Big|  \tr_{\gF(P\gH)}  & \Big( f_\eta(\cN_P/\tau)    \Gamma_{\tau,0,P} \Big)  -	\int_{\gH} f_\eta(\|u\|_{L^2}^2) \dif \mu_0(u) \Big| \nonumber
				\\& \leq  \Big| \tr_{\gF(P\gH)}  \Big( f_\eta(\cN_P/\tau)    \Gamma_{\tau,0,P} \Big)   -	\int_{\gH} f_\eta(\|Pu\|_{L^2}^2) \dif \mu_0(u) \Big| \nonumber
	 \\ & \qquad +  \Big| 	\int_{\gH} f_\eta(\|Pu\|_{L^2}^2) \dif \mu_0(u) -\int_{\gH} f_\eta(\|u\|_{L^2}^2) \dif \mu_0(u)  \Big| \nonumber
				\\ &\leq C\tau^{-\frac14}+ C\Lambda_e^{-\frac18} \leq C\Lambda_e^{-\frac18}.
			\end{align}
			Invoking \eqref{limit:on:freegibbs} and \eqref{upper_bound_Ztau0} once more, together with \eqref{diff:trf(NP)-intf(u)}, we conclude that for all sufficiently large $\tau>0$ and $\Lambda_e>0$ satisfying $\eta^{-1}\leq \Lambda_e^{\frac18}\leq \tau^{\frac{1}{32}}$,
			\begin{equation}\label{lower:bound:for:denominator1}
				\begin{aligned}
					&\tr_{\gF(P\gH)}  \Big( f_\eta(\cN_P/\tau)    \Gamma_{\tau,0,P} \Big)  \geq    	\int_{\gH} f_\eta(\|u\|_{L^2}^2) \dif \mu_0(u) - C\Lambda_e^{-\frac18} 
					\\ & \geq \int_{\gH} \mathds{1}_{\big\{ \|u\|_{L^2}^2 \leq \frac{\cK^2}{4}\big\}} \dif \mu_0(u) -C\Lambda_e^{-\frac18} \geq \frac12 \int_{\gH} \mathds{1}_{\big\{ \|u\|_{L^2}^2 \leq \frac{\cK^2}{4}\big\}} \dif \mu_0(u)=C (\cK).
				\end{aligned}
			\end{equation}
			Combining the bounds \eqref{lower:bound:for:denominator} and \eqref{lower:bound:for:denominator1}, we deduce that for all sufficiently large $\tau>0$ and $\Lambda_e>0$ satisfying $\eta^{-1}\leq \Lambda_e^{\frac18}\leq \tau^{\frac{1}{32}}$,
			\begin{align}\label{denominator3}
				\cZ_{\tau,0,P}^{-1}  \textbf{B}_{\tau,P}=	\cZ_{\tau,0,P}^{-1} \tr_{\gF(P\gH)}  \big(  e^{-\bbH_{\tau,P}} f_\eta(\cN_P/\tau)  \big) \geq  C(\cK) .
			\end{align}
			
			\textit{\underline{Step~3:} conclusion of the bound on \eqref{difference:relative entropy}.}
			Now, we combine \eqref{bound:number operator}, \eqref{control_relative_entropy:22}, \eqref{first part in Gamma_tau,P-Gamma_1.5}, \eqref{first part in Gamma_tau,P-Gamma_1.3} and \eqref{denominator3} to deduce that, for all sufficiently large $\tau>0$ and $\Lambda_e>0$ satisfying $\eta^{-1} e^{M\eps^{-2}}  \leq \Lambda_e^{\frac18}\leq \tau^{\frac{1}{32}}$, the first term in \eqref{difference:relative entropy} is controlled by
			\begin{align}\label{Gamma_tau,P-Gamma_12}
				& \left| \frac{ \textbf{A}_{\tau,P}}{ \textbf{B}_{\tau,P}}-\frac{ \textbf{A}_\tau}{ \textbf{B}_\tau}\right| \leq \frac{\cZ_{\tau,0,P}^{-1}| \textbf{A}_{\tau,P}- \textbf{A}_{\tau}| }{\cZ_{\tau,0,P}^{-1}  \textbf{B}_{\tau,P} } + \frac{\cZ_{\tau,0,P}^{-1}| \textbf{B}_{\tau,P}- \textbf{B}_{\tau}| \textbf{A}_{\tau} }{\cZ_{\tau,0,P}^{-1}  \textbf{B}_{\tau,P}   \textbf{B}_{\tau}} 
				 \nonumber
				\\  & \stackrel{\eqref{bound:number operator}, \, \eqref{control_relative_entropy:22}}{\leq }  \frac{\cZ_{\tau,0,P}^{-1}| \textbf{A}_{\tau,P}- \textbf{A}_{\tau}| }{\cZ_{\tau,0,P}^{-1}  \textbf{B}_{\tau,P} } + \frac{\cZ_{\tau,0,P}^{-1}   \eps^{-2}\|w\|_{L^\infty}^2   \cK^6\textbf{B}_{\tau}  | \textbf{B}_{\tau,P}- \textbf{B}_{\tau}|  }{\cZ_{\tau,0,P}^{-1}  \textbf{B}_{\tau,P}   \textbf{B}_{\tau}} 
				\nonumber
				\\ &\stackrel{ \eqref{first part in Gamma_tau,P-Gamma_1.5},\, \eqref{first part in Gamma_tau,P-Gamma_1.3}, \, \eqref{denominator3}}{\leq} C \eta^{-1} \eps^{-2}  e^{ \|w\|_{L^\infty}^2  \cK^6\eps^{-2} } \Lambda_e^{-\frac12}  \leq  C	\eta^{-1}	 e^{M\eps^{-2}}\Lambda_e^{-\frac12}    \leq C \Lambda_e^{-\frac14},
			\end{align} 
			for some constant $C$ depending only on $w$ and $\cK$.

			For the second term in \eqref{difference:relative entropy}, combining the bounds \eqref{first part in Gamma_tau,P-Gamma_1.3}, \eqref{denominator3}, and \eqref{Gamma_tau,P-Gamma_12}, and using that $ \log(1+t) = O(t) $ for small $|t|$, we obtain, for all  large $\tau>0$, $\Lambda_e>0$ satisfying $\eta^{-1} e^{M\eps^{-2}}  \leq \Lambda_e^{\frac18}\leq \tau^{\frac{1}{32}}$,
			\begin{align*}
				 \left| \log \left(\frac{ \textbf{B}_\tau}{ \textbf{B}_{\tau,P}}\right) \right| =  \left| \log\!\left(1+ \frac{ \cZ_{\tau,0,P}^{-1}  (\textbf{B}_\tau -\textbf{B}_{\tau,P}) }{ \cZ_{\tau,0,P}^{-1}  \textbf{B}_{\tau,P}}\right) \right| 
			 \leq \frac{ \cZ_{\tau,0,P}^{-1} |\textbf{B}_\tau -\textbf{B}_{\tau,P} | }{ \cZ_{\tau,0,P}^{-1}  \textbf{B}_{\tau,P}}
				\leq C\Lambda_e^{-\frac{1}{4}}.
			\end{align*}
		For the last term in \eqref{difference:relative entropy}, we use $ \log(1+t) = O(t) $ for $|t|$ sufficiently small, \eqref{lower:bound:for:denominator1} and \eqref{diff:Gibbs:cutoff} to obtain
			\begin{align}\label{third:free:state:control}
				&	\left| \log \left( \frac{ \textbf{C}_\tau }{ \textbf{C}_{\tau,P}} \right) \right|
			  \leq   \frac{ \left| \tr_{\gF(P\gH)}  \Big( f_\eta(\cN_P/\tau)    \Gamma_{\tau,0,P} \Big)  -\tr_{\gF(\gH)}  \Big( f_\eta(\cN/\tau)    \Gamma_{\tau,0} \Big)  \right| }{ \tr_{\gF(P\gH)}  \Big( f_\eta(\cN_P/\tau)    \Gamma_{\tau,0,P} \Big) } \leq C\Lambda_e^{-\frac14} .
			\end{align}
			This completes the proof of Lemma~\ref{lemma:relative entropy}.
		\end{proof} 
	Combining Lemmas~\ref{errors:localization_interaction} and \ref{lemma:relative entropy}, we obtain that, for all sufficiently large $\tau$ and $\Lambda_e$, and all $ \eta\in \big(0,\frac{1}{2}\cK^2\big)$, $\eps \in(0,1)$ satisfying $\eta^{-1}e^{M\eps^{-2}} \leq \Lambda_e^{\frac18}  \leq \tau^{\frac{1}{32}}$, 
		\begin{equation}\label{all:error_in_upper_bound}
			\begin{aligned}
				&\left|    \tr_{\gF(P\gH)} \left(  \bbW_{\tau,P} \Gamma^{f_\eta}_{\tau,P} \right) -  \tr_{\gF(\gH)} \left(  \bbW_{\tau}	\Gamma_{\mathrm{ts}} \right)  +\cH\left( 	\Gamma_{\mathrm{ts}},	\Gamma^{f_\eta}_{\tau,0} \right)- \cH\left(\Gamma^{f_\eta}_{\tau,P},  \Gamma^{f_\eta}_{\tau,0,P} \right)\right| \leq O \big( \Lambda_e^{-\frac18} \big).
			\end{aligned}
		\end{equation}
		
		\begin{remark}\label{rmk:upper:exponential}
			 Lemma ~\ref{lemma:relative entropy} is the first place in the upper-bound argument where $\tau$ becomes exponentially dependent on $\eps^{-2}$. This loss comes from the rough treatment of the trace term $ \tr_{\gF(\gH)}  (e^{-\bbH_{\tau,0}+\bbW_{\tau,P,\cU} } \cN_{Q,\cU}/\tau$ $ \mathds{1}_{\{ \cN_{P,\cU}/ \tau  \leq \cK^2  \}}  )$, where the interaction weight is taken out of the trace and estimated by its operator norm on the support of the cutoff. In the subcritical regime, this step will be refined by estimating the trace directly, which removes this exponential loss in $\eps^{-2}$.
		\end{remark}
		
		\textbf{Step 3: Semi-classical analysis in projected space.}
		Let $J:= \mathrm{dim}(P\gH)= \tr(P) \leq \sqrt{\Lambda_e} $.
	We now prove the following projected semiclassical estimate. For all sufficiently large $\tau$ and $\Lambda_e$, and all $ \eta\in \big(0,\frac{1}{2}\cK^2\big)$, $\eps\in(0,1)$ satisfying $\eta^{-1}e^{M\eps^{-2}}\leq \Lambda_e^{\frac18}\leq \tau^{\frac{1}{32}}$, we have
		\begin{align}\label{upper bound:end1}
			\frac{\tr _{\gF(P\gH)}\left(e^{-\bbH_{\tau,P}}f_\eta(\mathcal{N}_P/\tau) \right) }{\tr_{\gF(P\gH)}(e^{-\bbH_{\tau,0,P}})}
			\geq \int_{P\gH}  e^{\cW_P^\eps(u) }  f_\eta\left( \|Pu\|_{L^2}^2 \right) 
			\dif \mu_{0,P}(u) - C\Lambda_e^{-\frac12}.
		\end{align}
	where $C>0$ depends only on $w$ and $\cK$. The proof proceeds naturally in two steps: first, we perform a coherent-state decomposition and separate the principal term from the cutoff errors; second, we bound the principal term from below and remove the cutoff error.
		
		\textit{\underline{Step 3a:} Coherent-state decomposition and separation of principal and error terms.} Since $[\bbH_{\tau,P}, \cN_P]= 0$, the coherent-state trace can be decomposed into particle-number sectors.
	More precisely, by the resolution of the identity in terms of coherent states \eqref{eq:resolution_coherent2}, we obtain
			\begin{align}\label{conherent-state:identity}
				\tr_{\gF(P\gH)}&\left(e^{-\bbH_{\tau,P}}f_\eta(\mathcal{N}_P/\tau) \right)
				=\pi^{-J} \int_{P\gH} \tr_{\gF(P\gH)} \left(  e^{-\bbH_{\tau,P}} f_\eta(\mathcal{N}_P/\tau) \big|\xi(u) \big\rangle \big \langle  \xi(u)\big| \right) \dif u \nonumber
				\\ &=\frac{\tau^J}{\pi^{J}} \int_{P\gH} \left\langle  \xi(\sqrt{\tau}u), e^{-\bbH_{\tau,P}} f_\eta(\mathcal{N}_P/\tau) \xi(\sqrt{\tau}u) \right\rangle \dif u
				\\ &\geq \frac{\tau^J}{\pi^{J}} \int_{P\gH} \mathds{1}_{\{\|Pu\|_{L^2}\leq \cK\}} \left\langle  \xi(\sqrt{\tau}u), e^{-\bbH_{\tau,P}} f_\eta(\mathcal{N}_P/\tau) \xi(\sqrt{\tau}u) \right\rangle \dif u \nonumber
				\\ & =\frac{\tau^J}{\pi^{J}} \int_{P\gH} \mathds{1}_{\{\|Pu\|_{L^2}\leq \cK\}} e^{-\tau \|Pu\|_{L^2}^2} \sum_{n=0}^\infty \frac{(\tau\|Pu\|_{L^2}^2 )^n}{n!} f_\eta \left(\frac{n}{\tau}\right) \left\langle \frac{(Pu)^{\otimes n} }{\|Pu\|_{L^2}^n} ,e^{-\bbH_{\tau,P}^{(n)}}  \frac{(Pu)^{\otimes n} }{\|Pu\|_{L^2}^n}\right\rangle \dif u .\nonumber
			\end{align}
		Here, the $n$-body Hamiltonian is defined by $\bbH_{\tau,P}^{(n)}= \frac{1}{\tau}\sum_{j=1}^n (PhP)_{j}- \frac{1}{\tau^3}\sum_{1 \leq i_1 < i_2< i_3 \leq n} (W_P^\eps)_{i_1,i_2,i_3}$, $n\geq 3$, whereas for $n=1,2$, only the kinetic part remains. In the formulas above we first assume $\|Pu\|_{L^2}>0$. The resulting estimate extends to the case $\|Pu\|_{L^2}=0$ since the corresponding coherent-state Poisson distribution has mean zero and only the sector $n=0$ contributes. By the Peierls-Bogoliubov inequality $\langle x,e^A x \rangle \geq e^{\langle x,Ax \rangle}$ for normalized vector $x$, we obtain
		\begin{align*}
			\left\langle \frac{(Pu)^{\otimes n} }{\|Pu\|_{L^2}^n} ,e^{-\bbH_{\tau,P}^{(n)}}  \frac{(Pu)^{\otimes n} }{\|Pu\|_{L^2}^n}\right\rangle  \geq \mathrm{exp} \left(  - 	\left\langle \frac{(Pu)^{\otimes n} }{\|Pu\|_{L^2}^n} ,\bbH_{\tau,P}^{(n)}  \frac{(Pu)^{\otimes n} }{\|Pu\|_{L^2}^n}\right\rangle \right) =: e^{-\cE_{n,\tau} (Pu)}.
		\end{align*}
	For $\|Pu\|_{L^2}>0$, this exponent is explicitly
		\begin{align*}
			\cE_{n,\tau} (Pu) = \begin{cases}
				\frac{n}{\tau\|Pu\|_{L^2}^2 }\langle u,PhPu \rangle -  \frac{n(n-1)(n-2)}{\tau^3\|Pu\|_{L^2}^6 } \cW^\eps_P,& n\geq 3,
				\\ 	 \frac{n}{\tau\|Pu\|_{L^2}^2 }\langle u,PhPu \rangle, & n=1,2,
				\\ 0, & n=0.
			\end{cases}
		\end{align*}
Consequently, inserting this expression into \eqref{conherent-state:identity} yields
		\begin{align}\label{tr:lower_two_part}
				\tr_{\gF(P\gH)} \left( e^{-\bbH_{\tau,P}}f_\eta(\mathcal{N}_P/\tau) \right) & \geq \frac{\tau^J}{\pi^{J}} \int_{P\gH} \mathds{1}_{\{\|Pu\|_{L^2}\leq \cK\}} e^{-\tau \|Pu\|_{L^2}^2} \sum_{n=0}^\infty \frac{(\tau\|Pu\|_{L^2}^2 )^n}{n!} f_\eta \left(\frac{n}{\tau}\right) e^{-\cE_{n,\tau} (Pu)} \dif u \nonumber
				\\ &  =  \frac{\tau^J}{\pi^{J}} \int_{P\gH} \mathds{1}_{\{\|Pu\|_{L^2}\leq \cK\}}  \mathbf{E} \Big( f_\eta  \Big( \frac{\mathbf{X}}{\tau}\Big)  e^{-\cE_{\mathbf{X},\tau } (Pu) } \Big)  \dif u,
		\end{align} 
		where, for each fixed $u$, $\mathbf{X}$ denotes a Poisson random variable with mean $\tau \|Pu\|_{L^2}^2$. Setting $\cE(Pu):= \langle u,PhPu \rangle- \cW^\eps_P$, we claim that,
	\begin{align}\label{assertion:diff:Hamiltonian}
	\mathds{1}_{\{n/\tau \leq \cK^2,\, \|Pu\|_{L^2}\leq \cK\} }  	\left| \cE_{n,\tau} (Pu) - \cE(Pu)  \right |
\leq C_{\cK,w} (\Lambda_e +\eps^{-2})  \left(\Big| \frac{n}{\tau } - \|Pu\|_{L^2}^2 \Big| +\tau^{-1}\right).
	\end{align}
	Indeed, since $PhP\leq \Lambda_e P$, the kinetic contribution is bounded, for all $n\geq0$, by
		\begin{align*}
			\left| \frac{n}{\tau\|Pu\|_{L^2}^2 }\langle u,PhPu \rangle - \langle u,PhPu \rangle  \right| \leq 	\left| 1- \frac{n}{\tau\|Pu\|_{L^2}^2 }  \right |  \langle u,PhPu \rangle \leq \Lambda_e  \left|  \frac{n}{\tau }  - \|Pu\|_{L^2}^2 \right |.
		\end{align*}
	For the interaction contribution, using $\cW^\eps_P \leq \eps^{-2} \|w\|_{L^\infty}^2 \|Pu\|_{L^2}^6 $, we obtain, for $n\geq 3$,
		\begin{align*}
			&	\mathds{1}_{\{n/\tau \leq \cK^2,\, \|Pu\|_{L^2}\leq \cK\} }  \Big| \frac{n(n-1)(n-2)}{\tau^3\|Pu\|_{L^2}^6 } \cW^\eps_P -  \cW^\eps_P \Big| 
			\\ &\qquad \leq  	\mathds{1}_{\{n/\tau \leq \cK^2,\, \|Pu\|_{L^2}\leq \cK\} }    \Big|   \frac{n(n-1)(n-2) }{\tau^3 } - \|Pu\|_{L^2}^6 \Big|  \eps^{-2} \|w\|_{L^\infty}^2  
			 \\ & \qquad \leq 	\mathds{1}_{\{n/\tau \leq \cK^2,\, \|Pu\|_{L^2}\leq \cK\} }     \left(  \Big| \frac{n^3 }{\tau^3 } -  \|Pu\|_{L^2}^6  \Big| + \frac{3 n^2 }{\tau^3 } + \frac{2n }{\tau^3 } \right) \eps^{-2} \|w\|_{L^\infty}^2
			 \\ &\qquad \leq C_{\cK,w} \left(\Big| \frac{n}{\tau } - \|Pu\|_{L^2}^2 \Big| +\tau^{-1}\right) \eps^{-2} .
		\end{align*}
	For $n=0,1,2$, the interaction term is absent from $\cE_{n,\tau}$. In this case the same bound follows from
		\begin{align*}
			& 	\mathds{1}_{\{n/\tau \leq \cK^2,\, \|Pu\|_{L^2}\leq \cK\} }   \cW^\eps_P \leq  \eps^{-2} \|w\|_{L^\infty}^2\cK^4 \|Pu\|_{L^2}^2
				\\ &\qquad \leq  \eps^{-2} \|w\|_{L^\infty}^2\cK^4  \Big|\frac{n}{\tau}-\|Pu\|_{L^2}^2\Big| + 2 \eps^{-2} \|w\|_{L^\infty}^2\cK^4 \frac{1}{\tau} \leq  C_{\cK,w}  \left(\Big| \frac{n}{\tau } - \|Pu\|_{L^2}^2 \Big| +\tau^{-1}\right) \eps^{-2} .
		\end{align*}
   This proves the claim \eqref{assertion:diff:Hamiltonian}. Turning back to \eqref{tr:lower_two_part}, we use $e^{-x} \geq 1-x$ together with \eqref{assertion:diff:Hamiltonian} to obtain
    \begin{align}\label{transform:numbercut:to:masscut}
    &	\mathds{1}_{\{ \|Pu\|_{L^2}\leq \cK\} }  	f_\eta  \Big( \frac{\mathbf{X}}{\tau}\Big)  e^{-\cE_{\mathbf{X},\tau } (Pu) } \geq \mathds{1}_{\{\|Pu\|_{L^2}\leq \cK\} }    f_\eta  \Big(\frac{\mathbf{X}}{\tau}\Big) e^{-\cE(Pu)} e^{-|\cE_{\mathbf{X},\tau}(Pu)-\cE(Pu) |} \nonumber
    \\ & \quad  \geq  \mathds{1}_{\{\|Pu\|_{L^2}\leq \cK\} }   f_\eta  \Big(\frac{\mathbf{X}}{\tau}\Big) e^{-\cE(Pu)} - \mathds{1}_{\{\|Pu\|_{L^2}\leq \cK\} }  f_\eta  \Big(\frac{\mathbf{X}}{\tau}\Big) e^{-\cE(Pu)} |\cE_{\mathbf{X},\tau}(Pu)-\cE(Pu) | \nonumber
    	\\ & \quad \geq \mathds{1}_{\{\|Pu\|_{L^2}\leq \cK\} }  \Big( 	f_\eta  \big(\|Pu\|_{L^2}^2\big) -\|f_\eta' \|_{L^\infty} \Big| 	 \frac{ \mathbf{X}}{\tau}- \|Pu\|_{L^2}^2   \Big|\Big) e^{-\cE(Pu)}  \nonumber
    	\\ &  \qquad\qquad \qquad-C_{\cK,w} (\Lambda_e +\eps^{-2})  \mathds{1}_{\{\|Pu\|_{L^2}\leq \cK\} }  \left(  \Big| \frac{ \mathbf{X}}{\tau } - \|Pu\|_{L^2}^2 \Big| +\tau^{-1}\right) e^{-\cE(Pu)}  . 
    \end{align}
    Since $\mathbf{X}$  is a Poisson random variable with mean $\tau\|Pu\|_{L^2}^2  $, we have
    \begin{align}\label{expectation:Poisson}
     \mathds{1}_{\{\|Pu\|_{L^2}\leq \cK\} } 	\mathbf{E} \Big| 	 \frac{ \mathbf{X}}{\tau}- \|Pu\|_{L^2}^2   \Big| \leq \tau^{-1}  \mathds{1}_{\{\|Pu\|_{L^2}\leq \cK\} }  \left(	\mathbf{E} \Big| 	  \mathbf{X}- \tau \|Pu\|_{L^2}^2   \Big|^2 \right)^{1/2} \leq \cK\tau^{-\frac12}.
    \end{align}
Taking expectation on both sides of \eqref{transform:numbercut:to:masscut} and inserting the resulting bound into \eqref{tr:lower_two_part}, we obtain
    	\begin{align}\label{lower:bound3}
    	\tr_{\gF(P\gH)}& \left( e^{-\bbH_{\tau,P}}f_\eta(\mathcal{N}_P/\tau) \right)  \geq  \frac{\tau^J}{\pi^{J}} \int_{P\gH} 	f_\eta  \big(\|Pu\|_{L^2}^2\big)  e^{-\cE(Pu)}  \dif u
    	\\ &- \frac{\tau^J}{\pi^{J}} \int_{P\gH} \eta^{-1}  \mathds{1}_{\{\|Pu\|_{L^2}\leq \cK\} }  \mathbf{E} \Big| 	 \frac{ \mathbf{X}}{\tau}- \|Pu\|_{L^2}^2   \Big| e^{-\cE(Pu)}  \dif u \nonumber
    	\\ &-  C_{\cK,w} (\Lambda_e +\eps^{-2})   \frac{\tau^J}{\pi^{J}} \int_{P\gH}  \mathds{1}_{\{\|Pu\|_{L^2}\leq \cK\} }\left(  \mathbf{E} \Big| \frac{ \mathbf{X}}{\tau } - \|Pu\|_{L^2}^2 \Big| +\tau^{-1}\right) e^{-\cE(Pu)} \dif u   \nonumber
    	\\ &\geq \frac{\tau^J}{\pi^{J}} \int_{P\gH} 	f_\eta  \big(\|Pu\|_{L^2}^2\big)  e^{-\cE(Pu)}  \dif u- C_{\cK,w} \frac{ \Lambda_e}{\sqrt{\tau}}   \frac{\tau^J}{\pi^{J}} \int_{P\gH}  \mathds{1}_{\{\|Pu\|_{L^2}\leq \cK\} } e^{-\cE(Pu)} \dif u,\nonumber
    \end{align} 
    where we used \eqref{expectation:Poisson} and $\eta^{-1}\eps^{-2}\leq \Lambda_e$ in the last inequality.
    
    	\textit{\underline{Step 3b:} 
    Lower bound on the principal term and removal of the cutoff errors.}
		From \cite[(8.18)]{LNR15}, the free partition function is given by $\tr_{\gF(P\gH)}(e^{-\bbH_{\tau,0,P}})=\prod_{j=1}^J \frac{1}{1-e^{-\lambda_j/\tau}}$. 
		As already observed in \cite[(9.15)]{LNR21}, Bernoulli's inequality, together with $\Lambda_e\leq \tau^{\frac14}$ yields
		\begin{align}\label{bernoulli:inequality}
			\prod_{j=1}^J \frac{\tau}{\lambda_j} (1-e^{-\lambda_j/\tau})  \geq \prod_{j=1}^J \left( 1- \frac{\lambda_j}{2\tau}\right) \geq 1-\sum_{j=1}^J \frac{\lambda_j}{2\tau} \geq 1-\frac{\Lambda_e^{3/2}}{\tau} \geq 1- \Lambda_e^{-1}.
		\end{align}
		Applying \eqref{bernoulli:inequality} to the first term on the right-hand side of \eqref{lower:bound3}, we obtain
		\begin{align}	\label{first_hard_term1}
			&\frac{1}{\tr_{\gF(P\gH)}(e^{-\bbH_{\tau,0,P}})}\frac{\tau^J}{\pi^{J}}  \int_{P\gH} f_\eta\left( \|Pu\|_{L^2}^2 \right) e^{-\cE(Pu)}
			\dif u 
			\\ & \qquad = \prod_{j=1}^J \frac{\tau}{\lambda_j} (1-e^{-\lambda_j/\tau})  \int_{P\gH} f_\eta\left( \|Pu\|_{L^2}^2 \right)  e^{ \cW_P^\eps }
			\prod_{j=1}^J \frac{\lambda_j}{\pi} e^{-\langle u,PhPu\rangle}	\dif u  \nonumber 
			\\&   \qquad \geq \left( 1-\Lambda_e^{-1} \right) \int_{P\gH} f_\eta\left( \|Pu\|_{L^2}^2 \right)  e^{\cW_P^\eps }  
		 \dif \mu_{0,P}(u)  
	\geq \int_{P\gH} f_\eta\left( \|Pu\|_{L^2}^2 \right)  e^{\cW_P^\eps }  
		 \dif \mu_{0,P}(u) - C\Lambda_e^{-\frac12} ,   \nonumber
		\end{align}  
		where the last inequality follows from the pointwise bound on the support of the cutoff,
		\begin{align*}
\Lambda_e^{-1} \int_{P\gH} f_\eta\left( \|Pu\|_{L^2}^2 \right)  e^{\cW_P^\eps } 
	\dif \mu_{0,P}(u) \leq \Lambda_e^{-1} e^{\eps^{-2}\|w\|_{L^\infty}^2 \cK^6} \leq \Lambda_e^{-\frac12}.
		\end{align*}
		For the second term on the right-hand side of \eqref{lower:bound3}, the same argument gives
			\begin{align}\label{second_easy_term}
			&C_{\cK,w} \frac{ \Lambda_e}{\sqrt{\tau}}  	\frac{1}{\tr_{\gF(P\gH)}(e^{-\bbH_{\tau,0,P}})}   \frac{\tau^J}{\pi^{J}} \int_{P\gH}  \mathds{1}_{\{\|Pu\|_{L^2}\leq \cK\} } e^{-\cE(Pu)} \dif u 
				\\ &  =C_{\cK,w} \frac{ \Lambda_e}{\sqrt{\tau}}   \prod_{j=1}^J \frac{\tau}{\lambda_j} (1-e^{-\lambda_j/\tau})  \int_{P\gH}  \mathds{1}_{\{\|Pu\|_{L^2}\leq \cK\} } e^{\cW_P^\eps } 
				\dif \mu_{0,P}(u) 
			 \leq C_{\cK,w} \frac{ e^{\eps^{-2}\|w\|_{L^\infty}^2 \cK^6} \Lambda_e  }{\sqrt{\tau}}  \leq  C \Lambda_e^{-\frac12},  \nonumber
			\end{align}
		where in the last inequality we used $\Lambda_e \leq   \tau^{\frac14}$.
	Combining \eqref{lower:bound3}, \eqref{first_hard_term1} and \eqref{second_easy_term}, we obtain \eqref{upper bound:end1}.
	\begin{remark}\label{rmk:upper:exponential2}
		 The estimates \eqref{first_hard_term1} and \eqref{second_easy_term} are another source of exponential dependence on $\eps^{-2}$ in the upper-bound argument. This comes from the pointwise control of the interaction weight on the support of the cutoff, namely $e^{\cW_P^\eps } \leq e^{\eps^{-2}\|w\|_{L^\infty}^2 \cK^6} $. In the subcritical regime, this pointwise estimate will be replaced by the uniform classical moment bound, thereby removing the corresponding exponential loss.
	\end{remark}

		\textbf{Step~4: Conclusion of \eqref{ineq:upper bound}.}
		Going back to  \eqref{upper bound to finite dimension}, we first use \eqref{upper bound:end1} to derive
		\begin{equation}\label{finite_analysis1}
			\begin{aligned}
				-\log \frac{\cZ^{f_\eta}_{\tau,P}}{\cZ^{f_\eta}_{\tau,0,P}} &= -\log  \frac{ \tr_{\gF(P\gH)}\left (e^{-\bbH_{\tau,P}}f_\eta(\mathcal{N}_P/\tau)\right) }{\tr_{\gF(P\gH)}( e^{-\bbH_{\tau,0,P}})}
				+ \log \tr_{\gF(P\gH)}(\Gamma_{\tau,0,P} f_\eta(\cN_P/\tau))
				\\ &\leq -\log \left( \int_{P \gH}  e^{\cW_P^\eps(u) }  f_\eta\left( \|Pu\|_{L^2}^2 \right) 
				\dif \mu_{0,P}(u) - \Lambda_e^{-\frac12} \right) +\log  \tr_{\gF(P\gH)}\big(\Gamma_{\tau,0,P} f_\eta(\cN_P/\tau)\big)
				\\ &= -\log \left( \int_{P \gH}  e^{\cW_P^\eps(u) }  f_\eta\left( \|Pu\|_{L^2}^2 \right) 
				\dif \mu_{0,P}(u) - \Lambda_e^{-\frac12} \right) 
				\\ &\qquad \qquad + \log \left(\int_{P \gH} f_\eta\left( \|Pu\|_{L^2}^2 \right) \dif \mu_{0,P}(u)   \right) + \log \left( \frac{\tr_{\gF(P\gH)}\big(\Gamma_{\tau,0,P} f_\eta(\cN_P/\tau)\big)}{\int_{\gH} f_\eta\left( \|Pu\|_{L^2}^2 \right) \dif \mu_{0}(u) }\right).
			\end{aligned}
		\end{equation}
		To control the right-hand side, we first provide a lower bound for $\int_{P \gH}  e^{\cW_P^\eps(u) }  f_\eta\left( \|Pu\|_{L^2}^2 \right) 
		\dif \mu_{0,P}(u) $.  Using \eqref{int:f(u)-f(Pu)}, we obtain that for all sufficiently large $\Lambda_e>0$ satisfying $\eta^{-1}\leq \Lambda_e^{\frac18}$,
		\begin{equation}\label{lower:bound:for:integral:f(Pu)}
			\begin{aligned}
				\int_{P \gH} &  e^{\cW_P^\eps(u) }  f_\eta\left( \|Pu\|_{L^2}^2 \right) 
			\dif \mu_{0,P}(u) 	\geq \int_{\gH}  f_\eta\left( \|Pu\|_{L^2}^2 \right) 
				\dif \mu_{0}(u) \geq \int_{\gH}  f_\eta\left( \|u\|_{L^2}^2 \right) 
				\dif \mu_{0}(u) -C\Lambda_e^{-\frac18} 
				\\ & \geq \int_{\gH}\mathds{1}_{\{ \|u\|_{L^2}^2 \leq \frac{\cK^2}{2}\}}	 \dif \mu_{0}(u) -C\Lambda_e^{-\frac18} \geq \frac12 \int_{\gH}\mathds{1}_{\{ \|u\|_{L^2}^2 \leq \frac{\cK^2}{2}\}}	 \dif \mu_{0}(u) >0.
			\end{aligned}
		\end{equation}
		By this bound and the fact $\log(1+t)=O(|t|)$ for small $|t|$, we deduce that for sufficiently large $\tau>0$ and $\Lambda_e>0$ satisfying $\eta^{-1}\leq \Lambda_e^{\frac18}$,
			\begin{align}\label{finite_analysis2}
				- &\log \left( \int_{P \gH}  e^{\cW_P^\eps(u) }  f_\eta\left( \|Pu\|_{L^2}^2 \right) 
				\dif \mu_{0,P}(u) - \Lambda_e^{-\frac12} \right)  + \log \left(\int_{P \gH} f_\eta\left( \|Pu\|_{L^2}^2 \right) \dif \mu_{0,P}(u)   \right)\nonumber
				\\ &=  -\log \left( \int_{P \gH}  e^{\cW_P^\eps(u) }  
				\dif \mu_{0,P}^{f_\eta}(u) \right) -\log \left(  1- \frac{\Lambda_e^{-\frac12}}{ \int_{P \gH}  e^{\cW_P^\eps(u) }  f_\eta\left( \|Pu\|_{L^2}^2 \right) 
					\dif \mu_{0,P}(u) } \right)
				\\& \leq -\log \left( \int_{P \gH}  e^{\cW_P^\eps(u) } 
				\dif \mu_{0,P}^{f_\eta}(u) \right) + C\Lambda_e^{-\frac12}.\nonumber
			\end{align}
		For the third term on the right-hand side of \eqref{finite_analysis1}, using \eqref{diff:trf(NP)-intf(u)}, \eqref{lower:bound:for:integral:f(Pu)}, \eqref{convergence on same index2:new} and $\log(1+t)=O(|t|)$ for $|t|$ small again, we derive for sufficiently large $\tau>0$ and $\Lambda_e>0$ satisfying $\eta^{-1}\leq \Lambda_e^{\frac18}\leq  \tau^{\frac{1}{32}}$,
		\begin{align}\label{finite_analysis3}
			\log &\left( \frac{\tr_{\gF(P\gH)}(\Gamma_{\tau,0,P} f_\eta(\cN_P/\tau))}{\int_{\gH} f_\eta\left( \|Pu\|_{L^2}^2 \right) \dif \mu_{0}(u) }\right) =  \log \left(1+ \frac{\tr_{\gF(P\gH)}(\Gamma_{\tau,0,P} f_\eta(\cN_P/\tau))-\int_{\gH} f_\eta\left( \|Pu\|_{L^2}^2 \right) \dif \mu_{0}(u) }{\int_{\gH} f_\eta\left( \|Pu\|_{L^2}^2 \right) \dif \mu_{0}(u) } \right) \nonumber
			\\ &\leq C\left|\tr_{\gF(P\gH)}\big(\Gamma_{\tau,0,P} f_\eta(\cN_P/\tau)\big)-\int_{\gH} f_\eta\left( \|Pu\|_{L^2}^2 \right) \dif \mu_{0}(u) \right|    
  \leq  C \tau^{-\frac14} \leq C\Lambda_e^{-\frac18}.
		\end{align}
		Finally, collecting all error terms and substituting \eqref{all:error_in_upper_bound}, \eqref{finite_analysis1}, \eqref{finite_analysis2} and \eqref{finite_analysis3} into \eqref{upper bound to finite dimension}, we obtain 
		\begin{align*}
			-\log \frac{\cZ^{f_\eta}_{\tau}}{\cZ^{f_\eta}_{\tau,0}}  \leq - \log \left( \int_{P\gH}  e^{\cW_P^\eps(u) } 
			\dif \mu^{f_\eta}_{0,P}(u) \right) + O\left(  \Lambda_e^{-\frac18}\right).
		\end{align*}
		This concludes the proof of Proposition~\ref{prop:upper bound}.
	\end{proof}
	
	\subsection{Free energy lower bound}\label{sec:lower bound}
	We now establish the lower bound matching Proposition~\ref{prop:upper bound}. Up to the first localization step, the proof follows the same strategy as for the upper bound: we project onto $P\gH$, apply the Berezin-Lieb inequality, and use lower symbols to reduce the quantum variational problem to a classical one. The control of the localization error proceeds exactly as in the proof of Lemma~\ref{errors:localization_interaction}. The main difficulty arises after this reduction, when one has to compare the lower symbol $\mu_{P,0}^{\tau^{-1}}$ of the truncated free Gibbs state with the projected Gaussian measure $\mu^{f_\eta}_{0,P}$.
Since the relevant observable is the potentially large weight $e^{\cW_P^\eps}$, one requires a quantitative $L^1$-comparison between these two measures. Another key observation is an interacting lower-symbol tail estimate (see Lemma~\ref{lemma:interacting:tail}), which gives the required control of the mass tail and allows us to complete this comparison. The following result is the lower-bound counterpart of Proposition~\ref{prop:upper bound}.
	\begin{proposition}\label{prop:lower bound}
		Let $w$ satisfy Assumption~\ref{assum:on:w}, and let $\cK>0$ be arbitrary. Let $\Lambda_e>0$ be sufficiently large, and let $\eta \in \big(0 ,\frac{1}{2}\cK^2  \big)$ and $\eps \in (0,1)$ satisfy $\eta^{-1} e^{M\eps^{-2}} \leq \Lambda_e^{\frac18}$, where $M=(2\|w\|^2_{L^\infty}+1)(\cK^2+4)^3$. Let the cutoff $f_\eta$ satisfy Assumption~\ref{assum:cutoff} with cutoff parameter $\cK$. 
		Let $\cZ^{f_\eta}_{\tau}$ and $\cZ^{f_\eta}_{\tau,0}$ be the partition functions defined in \eqref{def:interacting:gibbs_state} and \eqref{def:free:gibbs_state}, respectively, associated with the cutoff $f_\eta$. Then, for all sufficiently large $ \tau$ and $\Lambda_e$ satisfying $\Lambda_e \leq  \tau^{\frac14}$, we have
		\begin{align}\label{ineq:lower bound}
			-\log  \frac{\cZ^{f_\eta}_{\tau}}{\cZ^{f_\eta}_{\tau,0}} \geq -\log \left( \int_{P\gH} e^{ \cW_P^\eps(u) } \dif \mu^{f_\eta}_{0,P} (u)\right) - O\left( \Lambda_e^{-\frac{1}{8}}   \right) ,
		\end{align}
		where $P:=\mathds{1}(h\leq \Lambda_e)$ is the orthogonal projection on $ \gH$, $\cW^\eps_P$ is the interaction potential defined in \eqref{def:widetilde:WP} and $\mu^{f_\eta}_{0,P} $ is the projected Gaussian measure defined in \eqref{truncated Gaussian measure}.
	\end{proposition}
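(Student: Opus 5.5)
We evaluate the quantum variational principle \eqref{variational principle quantum} at the true minimizer $\Gamma^{f_\eta}_\tau$ and localize it to $P\gH$. This gives
\begin{align*}
-\log\frac{\cZ^{f_\eta}_\tau}{\cZ^{f_\eta}_{\tau,0}}=\cH(\Gamma^{f_\eta}_\tau,\Gamma^{f_\eta}_{\tau,0})-\tr(\bbW_\tau\Gamma^{f_\eta}_\tau).
\end{align*}
The proof then has four steps.

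\textbf{Step 1: localization of the interaction.} We replace $\bbW_\tau$ by its $P$-localized version, so that the interaction term becomes $\tau^{-3}\tr(W^\eps_P(\Gamma^{f_\eta}_\tau)_P^{(3)})$. The error is estimated exactly as in Lemma~\ref{errors:localization_interaction}, with $\Gamma_{\mathrm{ts}}$ replaced by $\Gamma^{f_\eta}_\tau$. The only new input is a bound on $\cH(\Gamma^{f_\eta}_\tau,\Gamma_{\tau,0})$, which follows as in \eqref{control_relative_entropy:2}. On the support of $f_\eta$ the interaction is bounded by $\eps^{-2}\|w\|^2_{L^\infty}\cK^6$ via \eqref{bound:number operator}, and $-\log\tr(\Gamma_{\tau,0}f_\eta)\le\log(1/C(\cK))$ by \eqref{upper_bound_Ztau0}. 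Together these give an error $O(\Lambda_e^{-1/8})$ provided $\eps^{-3}\le\Lambda_e^{1/8}$.

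\textbf{Step 2: reduction to a classical problem.} We apply Lemma~\ref{thm:rel-entropy} at scale $\varsigma=\tau^{-1}$:
\begin{align*}
\cH(\Gamma^{f_\eta}_\tau,\Gamma^{f_\eta}_{\tau,0})\ge\cH_{\mathrm{cl}}\big(\mu_{P,\tau}^{\tau^{-1}},\mu_{P,0}^{\tau^{-1}}\big).
\end{align*}
Here $\mu_{P,\tau}^{\tau^{-1}}$ and $\mu_{P,0}^{\tau^{-1}}$ are the lower symbols of $\Gamma^{f_\eta}_\tau$ and $\Gamma^{f_\eta}_{\tau,0}$. We then use Lemma~\ref{Quantitative quantum de Finetti} with $k=3$ to write
\begin{align*}
\tau^{-3}\tr\big(W^\eps_P(\Gamma^{f_\eta}_\tau)_P^{(3)}\big)=\int\cW^\eps_P\,\dif\mu_{P,\tau}^{\tau^{-1}}+\text{error}.
\end{align*}
By \eqref{de finetti estimate}, the error is bounded by $\eps^{-2}\tau^{-1}J^3$ times moments of $\cN_P/\tau$. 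These moments are $O(1)$ because $\cN\le\cK^2\tau$ on the support of the state. With $J\le\sqrt{\Lambda_e}$ and $\Lambda_e\le\tau^{1/4}$ the error is $O(\Lambda_e^{-1/8})$. Since the classical Gibbs variational principle with reference measure $\mu_{P,0}^{\tau^{-1}}$ is always a lower bound, the free energy is bounded below by $-\log\int e^{\cW^\eps_P}\dif\mu_{P,0}^{\tau^{-1}}$. We do not apply this directly, for the reason given in Step 3.

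\textbf{Step 3: restriction to bounded mass (main obstacle).} The weight $e^{\cW^\eps_P}$ is unbounded on $P\gH$, while $\mu_{P,0}^{\tau^{-1}}$ has Gaussian-type tails extending past $\|u\|^2=\cK^2$. The $L^1$ comparison with $\mu^{f_\eta}_{0,P}$ therefore cannot be applied directly. The remedy is to fix $R=\cK^2+1$ and split
\begin{align*}
\cW^\eps_P=\cW^\eps_P\mathds{1}_{\{\|u\|^2\le R\}}+\cW^\eps_P\mathds{1}_{\{\|u\|^2>R\}}.
\end{align*}
The second piece, integrated against $\mu_{P,\tau}^{\tau^{-1}}$, must be shown to be small. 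This is the interacting lower-symbol tail estimate (Lemma~\ref{lemma:interacting:tail}), which I expect to be the hardest step. To prove it, I would compute $\mu_{P,\tau}^{\tau^{-1}}$ via coherent states and decompose into $n$-sectors, as in \eqref{conherent-state:identity}. Only sectors with $n\le\cK^2\tau$ contribute, so a mass $\|u\|^2>R$ requires a Poisson variable with mean $\tau\|u\|^2$ to fall below $\cK^2\tau$. This has probability $e^{-c\tau(\|u\|^2-\cK^2)}$, which beats the polynomial growth of $\cW^\eps_P$ and the factor $e^{O(\eps^{-2})}$ arising from the normalization. Once this is in place, the classical variational principle gives the lower bound $-\log\int e^{\cW^\eps_P\mathds{1}_{\{\|u\|^2\le R\}}}\dif\mu_{P,0}^{\tau^{-1}}-o(1)$.

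\textbf{Step 4: comparison of measures.} On the bounded region the weight is at most $e^{\eps^{-2}\|w\|^2_{L^\infty}R^3}\le e^{M\eps^{-2}}$. It then suffices to have
\begin{align*}
\|\mu_{P,0}^{\tau^{-1}}-\mu^{f_\eta}_{0,P}\|_{L^1}\lesssim\eta^{-1}\tau^{-c},
\end{align*}
obtained from the free-state convergence estimates of Proposition~\ref{lemma:non-interacting}. Since $\mu^{f_\eta}_{0,P}$ is supported in $\{\|u\|^2\le\cK^2\le R\}$, the restricted integral against $\mu^{f_\eta}_{0,P}$ equals $\int e^{\cW^\eps_P}\dif\mu^{f_\eta}_{0,P}$. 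The resulting error is $\eta^{-1}e^{M\eps^{-2}}\tau^{-c}\le\Lambda_e^{-1/8}$. Finally, $\int e^{\cW^\eps_P}\dif\mu^{f_\eta}_{0,P}\ge1$, so passing the additive error through $\log$ costs only $O(\Lambda_e^{-1/8})$, which yields \eqref{ineq:lower bound}.
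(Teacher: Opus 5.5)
\textbf{The gap is in Step~4.} Proposition~\ref{lemma:non-interacting} only controls the scalar $\tr(\Gamma_{\tau,0}f_\eta(\cN/\tau))$. It gives no information about densities on $P\gH$, and the bound $\|\mu_{P,0}^{\tau^{-1}}-\mu^{f_\eta}_{0,P}\|_{L^1}\lesssim\eta^{-1}\tau^{-c}$ you want from it is in fact false. The cutoff in $\Gamma^{f_\eta}_{\tau,0}$ acts on the full $\cN=\cN_P+\cN_Q$. Hence the localized state is proportional to $\tr_{\gF(Q\gH)}[\cU\Gamma_{\tau,0}f_\eta(\cN/\tau)\cU^*]=\Gamma_{\tau,0,P}\,g_\tau(\cN_P)$, where $g_\tau(n)=\tr(\Gamma_{\tau,0,Q}f_\eta((n+\cN_Q)/\tau))$, and this does not factorize. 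At fixed $\Lambda_e$, as $\tau\to\infty$, its lower symbol converges to the $P$-marginal of $\mu^{f_\eta}_0$. That marginal has density with respect to $\mu_{0,P}$ proportional to $\int f_\eta(\|Pu\|^2_{L^2}+\|Qv\|^2_{L^2})\,\dif\mu_0(v)$. By contrast, $\mu^{f_\eta}_{0,P}$ has density proportional to $f_\eta(\|Pu\|^2_{L^2})$. The discrepancy is governed by the high-mode mass $\tr(\Gamma_{\tau,0,Q}\cN_Q/\tau)\le\tr(Qh^{-1})\sim\Lambda_e^{-1/2}$, and it does not vanish as $\tau\to\infty$.

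\textbf{What is needed.} You need a separate argument, namely the paper's Lemma~\ref{lemma:L1:norm}, in three parts.
\begin{enumerate}
\item Bound the non-factorized part by $\|f_\eta'\|_{L^\infty}\tr(\Gamma_{\tau,0,Q}\cN_Q/\tau)\,\Gamma_{\tau,0,P}\lesssim\eta^{-1}\Lambda_e^{-1/2}\,\Gamma_{\tau,0,P}$. Then dominate the Husimi function of $\Gamma_{\tau,0,P}$ by $e^{\Lambda_e^2\|u\|^2_{L^2}/\tau}\,\dif\mu_{0,P}$, as in \eqref{same NZZ}.
\item Bound the principal Husimi density from below by $f_\eta(\|Pu\|^2_{L^2})\,\dif\mu_{0,P}$, up to $O(\Lambda_e\tau^{-1/2})\,\dif\mu_{0,P}$. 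This uses the coherent-state/Poisson computation together with \eqref{bernoulli:inequality}.
\item Since both are probability measures, $\|\mu-\nu\|_{L^1}=2\int(\mu-\nu)_-$, so the one-sided bound suffices.
\end{enumerate}
The scalar estimates of Proposition~\ref{lemma:non-interacting} and Lemma~\ref{project:convergence:rate} enter only through normalization constants. This gives $C\Lambda_e^{-1/4}$ whenever $\eta^{-1}\le\Lambda_e^{1/8}\le\tau^{1/32}$. Your Step~4 then closes with error $e^{M\eps^{-2}}\Lambda_e^{-1/4}\lesssim\Lambda_e^{-1/8}$.

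\textbf{The rest of your outline.} It follows the paper's route, and two sub-steps differ harmlessly.
\begin{itemize}
\item In Step~2 the paper avoids the de Finetti error entirely. The extra terms in \eqref{eq:Chiribella} are nonnegative and $W^\eps_P\ge 0$, which gives $-\tau^{-3}\tr(W^\eps_P(\Gamma^{f_\eta}_\tau)^{(3)}_P)\ge-\int\cW^\eps_P\,\dif\mu^{\tau^{-1}}_{P,\tau}$ directly. Your error bound also works.
\item In Step~3 your bound $\langle\xi(\sqrt\tau u),\Gamma_P\,\xi(\sqrt\tau u)\rangle\le\mathbf{P}(\mathrm{Poisson}(\tau\|u\|^2_{L^2})\le\cK^2\tau)$ is valid, because $\Gamma_P=\Pi\Gamma_P\Pi$ with $\Pi=\mathds{1}_{\{\cN_P\le\cK^2\tau\}}$ and $\Gamma_P\le 1$. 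However, no $e^{O(\eps^{-2})}$ normalization factor arises, since the state is normalized. What you must absorb instead is the prefactor $(\tau/\pi)^J$ together with the phase-space volume in $\mathbb{C}^J$. After integration this produces a multiplicity $\binom{\lfloor\cK^2\tau\rfloor+J}{J}\le e^{O(J\log\tau)}$, which is harmless because $J\le\tau^{1/8}$. The paper sidesteps this via Schur's lemma, which reduces the radial anti-Wick operator to a scalar Gamma expectation on each $n$-sector.
\end{itemize}
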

	\begin{proof}
		Let $J:=\tr[P]$ and define $W^\eps_P=P^{\otimes 3} W^\eps P^{\otimes 3} $. We denote by $(\Gamma^{f_\eta}_{\tau})_{P}^{(3)}= P^{\otimes 3} (\Gamma_{\tau}^{f_\eta})^{(3)} P^{\otimes 3}$ the $P$-localization of $(\Gamma_{\tau}^{f_\eta})^{(3)} $, and by $\mu_{P,\tau}^{\tau^{-1}}$ and $\mu_{P,0}^{\tau^{-1}}$  
		the corresponding lower symbols of $(\Gamma^{f_\eta}_\tau)_P$ and $(\Gamma^{f_\eta}_{\tau,0})_P$, respectively. 
		
		\textbf{Step~1: Localization of the interaction and control of the error term.}
		By applying the variational principle \eqref{variational principle quantum}, \eqref{variational principle classical} and the Berezin-Lieb type inequality \eqref{eq:Berezin-Lieb}, we obtain that
		\begin{equation}\label{lower:variational}
			\begin{aligned}
				-\log  \frac{\cZ^{f_\eta}_{\tau}}{\mathcal{Z}^{f_\eta}_{\tau,0}}  &=\cH(\Gamma^{f_\eta}_{\tau},\Gamma^{f_\eta}_{\tau,0}) - \frac{1}{ \tau^3} \Tr \big( W^\eps(\Gamma_{\tau}^{f_\eta})^{(3)} \big)
				\\ &\geq  \cH_{\mathrm{cl}} (\mu_{P,\tau}^{\tau^{-1}}, \mu_{P,0}^{\tau^{-1}})- \frac{1}{\tau^3} \tr  \left(W^\eps_{P} (\Gamma_{\tau}^{f_\eta})^{(3)} \right)
			- \frac{1}{\tau^3}\tr \left( \left(W^\eps-W^\eps_{P}\right) (\Gamma_{\tau}^{f_\eta})^{(3)} \right).
			\end{aligned}
		\end{equation}
	We begin with the localized interaction terms in \eqref{lower:variational}. By \eqref{eq:Chiribella}, we derive
		\begin{align*}
		-&	\frac{1}{\tau^3}  \tr  \left(W^\eps_{P} (\Gamma_{\tau}^{f_\eta})^{(3)} \right)= - \frac{1}{\tau^3} \tr \left(W^\eps_{P} P^{\otimes 3}(\Gamma_{\tau}^{f_\eta})^{(3)} P^{\otimes 3} \right) 
		\\ &= - \frac{1}{\tau^3} \tr \left(W^\eps_P \big((\Gamma_{\tau}^{f_\eta})_P\big)^{(3)} \right) \nonumber
			=- \frac{1}{6} \int_{P \gH} \langle  u^{\otimes 3}, W^\eps_{P} u^{\otimes 3}\rangle \dif \mu_{P,\tau}^{\tau^{-1}}(u) 
			\\&\qquad  \qquad  \qquad  \qquad  \qquad \qquad + \frac{1}{ \tau^3} \sum_{\ell=0}^{2}{3\choose \ell}^2  \frac{(3-\ell +J-1)!}{(J-1)!} \tr_{\gF(P\gH)} \Big( W^\eps_P \big((\Gamma^{f_\eta}_{\tau})_{P}^{(\ell)} \otimes_s \mathds{1}_{\otimes_s^{3-\ell} P\gH}\big) \Big) .
		\end{align*} 
		Since $W^\eps_P\geq 0$ and $(\Gamma^{f_\eta}_{\tau})_{P}^{(\ell)} \otimes_s \mathds{1}_{\otimes_s^{3-\ell} P\gH}\geq 0$, each trace term in the sum is nonnegative. Therefore,
		\begin{align}\label{lower bound interaction energy1}
		-	\frac{1}{\tau^3}  \tr  \left(W^\eps_{P} (\Gamma_{\tau}^{f_\eta})^{(3)} \right)\geq -\frac{1}{6} \int_{P\gH} \langle  u^{\otimes 3}, W^\eps_P u^{\otimes 3}\rangle \dif \mu_{P,\tau}^{\tau^{-1}}(u)= - \int_{P\gH}  \cW^\eps_P(u) \dif \mu_{P,\tau}^{\tau^{-1}}(u).
		\end{align}

		We next estimate the localization errors on the second line of \eqref{lower:variational}. As in Lemma~\ref{errors:localization_interaction}, these terms are reduced to the high-frequency one-body mass. More precisely, repeating the computations in \eqref{CS:ineq}--\eqref{Wp-W:main}, we obtain
		\begin{align*}
			\left| \frac{1}{\tau^3}\tr \left( \left(W^\eps-W^\eps_{P}\right) (\Gamma_{\tau}^{f_\eta})^{(3)}\right)\right|
		& \leq  C \eps^{-2}\|w\|^2_{L^\infty}  \cK^3  \left( \frac{1}{\tau}  \left| \tr \left( Q \big( (\Gamma_{\tau}^{f_\eta})^{(1)} - \Gamma_{\tau,0}^{(1)} \big) \right) \right|+ \frac{1}{\tau}   \tr \left( Q\Gamma_{\tau,0}^{(1)} \right) \right)^{\frac12}
			\\&  \quad + C\eps^{-2}\|w\|^2_{L^\infty}  \left( \frac{1}{\tau}  \left| \tr \left( Q \big((\Gamma_{\tau}^{f_\eta})^{(1)} - \Gamma_{\tau,0}^{(1)} \big) \right) \right|+ \frac{1}{\tau}   \tr \left( Q\Gamma_{\tau,0}^{(1)} \right) \right).
		\end{align*}
		To bound the right-hand side, it is enough to control the relative entropy $\cH(\Gamma^{f_\eta}_{\tau}, \Gamma_{\tau,0})$, thanks to \eqref{control_relative_entropy:1}. Repeating the argument in \eqref{control_relative_entropy:2}--\eqref{upper_bound_Ztau0}, we obtain for sufficiently large $\tau>0$
		\begin{align*}
			\cH\big( \Gamma^{f_\eta}_{\tau}, \Gamma_{\tau,0} \big)  
			& \leq  -\tr_{\gF(\gH)} \left( \bbW_{\tau} \Gamma^{f_\eta}_{\tau,0}\right)-\log\left( \frac{\cZ^{f_\eta}_{\tau,0}}{\cZ_{\tau,0}}\right) + \tr_{\gF(\gH)} \left( \bbW_{\tau}\Gamma^{f_\eta}_{\tau} \right)
			\\& \leq  -\log\left[ \tr_{\gF(\gH)} \left( \Gamma_{\tau,0} f_\eta \left(\frac{\cN}{\tau}\right) \right)  \right]+2\|w\|_{L^\infty}^2\eps^{-2}\cK^6
		 \\ & \leq \log \left( 1/ C(\cK)\right)+2\|w\|_{L^\infty}^2\eps^{-2}\cK^6.
		\end{align*}
	The remaining estimates are identical to those in \eqref{control_relative_entropy:11}. Hence, for all sufficiently large $\tau>0$, $\Lambda_e>0$, $ \eta\in \big(0,\frac{1}{2}\cK^2\big)$ and $ \eps \in(0,1)$ satisfying $ \eps^{-3}\leq \Lambda_e^{\frac{1}{8}}$, we conclude that
		\begin{align}\label{errors:localization_interaction2}
			\left| \frac{1}{\tau^3} \tr \left( \left(W^\eps-W^\eps_{P}\right) (\Gamma_{\tau}^{f_\eta})^{(3)}\right)  \right| \leq C\Lambda_e^{-\frac18}.
		\end{align}
	Combining \eqref{lower:variational}--\eqref{errors:localization_interaction2} and invoking the classical variational principle \eqref{variational principle classical} once more, we obtain that for all sufficiently large $\tau>0$, $\Lambda_e>0$, $ \eta\in \big(0,\frac{1}{2}\cK^2\big)$ and $ \eps \in(0,1)$ satisfying $ \eps^{-3}\leq \Lambda_e^{\frac{1}{8}}$, and any $R>0$,
		\begin{align}\label{variation}
			-\log  \frac{\cZ^{f_\eta}_{\tau}}{\mathcal{Z}^{f_\eta}_{\tau,0}}  
			&\geq \cH_{\mathrm{cl}} (\mu_{P,\tau}^{\tau^{-1}}, \mu_{P,0}^{\tau^{-1}})-  \int_{P\gH}  \cW^\eps_P   \dif \mu_{P,\tau}^{\tau^{-1}}(u)-C\Lambda_e^{-\frac18} \nonumber
			\\ & =  \cH_{\mathrm{cl}} (\mu_{P,\tau}^{\tau^{-1}}, \mu_{P,0}^{\tau^{-1}})-   \int_{P\gH}  \cW^\eps_P \mathds{1}_{\{\|u\|_{L^2}^2 \leq R \}}  \dif \mu_{P,\tau}^{\tau^{-1}}(u)- \int_{P\gH}  \cW^\eps_P \mathds{1}_{\{\|u\|_{L^2}^2 > R \}}  \dif \mu_{P,\tau}^{\tau^{-1}}(u)-C\Lambda_e^{-\frac18} \nonumber
			\\ & \geq -\log \left( \int_{P\gH} e^{ \cW_P^\eps \mathds{1}_{\{\|u\|_{L^2}^2 \leq R \}} } \dif   \mu_{P,0}^{\tau^{-1}}(u) \right)- \int_{P\gH}  \cW^\eps_P \mathds{1}_{\{\|u\|_{L^2}^2 > R \}}  \dif \mu_{P,\tau}^{\tau^{-1}}(u) -C\Lambda_e^{-\frac18} .
		\end{align}
			\textbf{Step 2: Interacting lower-symbol tail estimate.}
		 		To control the second term on the right-hand side of \eqref{variation}, we first prove a general tail estimate for interacting lower symbols.
		\begin{lemma}\label{lemma:interacting:tail}
			Let $K>0$ and $R>K^2$. Let $\Gamma$ be a state on	$\gF(\gH)$ supported in the low-particle-number sector $\{\cN\leq K^2\tau\}$. Let $P$ be a $J$-dimensional orthogonal projection on $\gH$, and let $\mu^{\tau^{-1}}_{P,\Gamma}$ denote the lower symbol of $\Gamma$ on $P\gH$ at scale $\tau^{-1}$, as defined in \eqref{eq:Husimi}. Assume that $J\leq \tau^{1/8}$. Then there exist constants $C(K,R), c(K,R)>0$, depending only on $K$ and $R$, such that
			\begin{align*}
				\int_{\{\|Pu\|_{L^2}^2>R\}}\|Pu\|_{L^2}^{6}\dif \mu^{\tau^{-1}}_{P,\Gamma} (u)\leq C(K,R)  e^{-c(K,R)\tau}.
			\end{align*}
		\end{lemma}
		\begin{proof}
			The idea is based on a generalization of the coherent-state resolution of the identity in \eqref{eq:resolution_coherent2}. For a radial observable depending only on $\|Pu\|_{L^2}^2$, its anti-Wick quantization acts as a scalar on each $n$-particle sector, by Schur’s lemma.  This scalar can be expressed as an expectation over a Gamma random variable and thus computed explicitly. Since the state is supported on the sectors $n\leq \cK^2 \tau$, whereas the tail estimate starts at $R\tau$ with $R>\cK^2$, this gap yields the desired exponential decay.
			
		Let $G$ be a nonnegative radial function on $	P\gH $, and define its anti-Wick quantization by
			\begin{align*}
				\operatorname{OP}_{\rm aw}(G):=\left(\frac{\tau}{\pi}\right)^J\int_{P\gH}G(\|u\|_{L^2}^2)\,|\xi(\sqrt\tau u)\rangle\langle\xi(\sqrt\tau u)|\,\dif u.
			\end{align*}
		 By the coherent-state expansion, the restriction of $\mathrm{OP}_{\rm aw}(G)$ to the $n$-particle sector is
			\begin{align*}
				\mathrm{OP}_{\rm aw} (G) \big|_{(P\gH)^{(n)}} =\left(\frac{\tau}{\pi}\right)^J\int_{P\gH} G(\|u\|_{L^2}^2) e^{-\tau \|u\|_{L^2}^2} \frac{\tau^n}{n!}\,|u^{\otimes n} \rangle\langle u^{\otimes n}|\,\dif u .
			\end{align*}
			Since $G$ is radial, the weight $G(\|u\|_{L^2}^2) e^{-\tau \|u\|_{L^2}^2}$ is invariant under the natural action of the unitary group $\cV(J)$ on $P \gH$. Together with the invariance of Lebesgue measure under unitary changes of variables, this implies that, for every
	 $\cV \in \cV(J)$,
			\begin{align*}
				\mathrm{OP}_{\rm aw} (G) \big|_{(P\gH)^{(n)}} \cV^{\otimes n}=\cV^{\otimes n}	\mathrm{OP}_{\rm aw} (G) \big|_{(P\gH)^{(n)}} .
			\end{align*}
			Moreover, $(P\gH)^{(n)} \simeq \mathrm{Sym}^n(\mathbb C^J)$ is an irreducible representation of $\cV(J)$. Hence, Schur's lemma (see, e.g., \cite[Lemma 1.7]{FH91}) implies that the above operator is a scalar multiple of the identity on $(P\gH)^{(n)}$. Thus, there exists $c_n\in\mathbb C$ such that
			\begin{align*}
				\mathrm{OP}_{\rm aw} (G) \big|_{(P\gH)^{(n)}} =c_n  \mathds{1}_{(P\gH)^{(n)}}.
			\end{align*}
		Taking the trace on both sides yields
			\begin{align*}
				\left(\frac{\tau}{\pi}\right)^J \int_{P\gH} G(\|u\|_{L^2}^2) e^{-\tau \|u\|_{L^2}^2} \frac{(\tau \|u\|_{L^2}^2)^n}{n!} \dif u =	c_n \mathrm{dim} \left( (P\gH)^{(n)} \right) = 	c_n { n+J-1 \choose n}.
			\end{align*}
		After the change of variables $y=\tau\|u\|_{L^2}^2$, the left-hand side becomes
			\begin{align*}
				\frac{1}{(J-1)!n!}\int_0^\infty G(y/\tau) e^{-y} y^{n+J-1} \dif y.
			\end{align*}
			This implies that
			\begin{align*}
				c_n=  \frac{1}{(n+J-1)!}\int_0^\infty G(y/\tau) e^{-y} y^{n+J-1} \dif y = \mathbf{E}[G(\mathbf{Y}/\tau)],
			\end{align*}
			where $\mathbf{Y} \sim \Gamma(n+J,1)$ has the Gamma distribution. We now choose $G(\|u\|_{L^2}^2)=\|u\|_{L^2}^6\mathds{1}_{\{\|u\|^2_{L^2} > R\}}$. Since the $P$-localized state $\Gamma_P$ is supported on $\{\cN_P/\tau \leq K^2\}$, this yields
			\begin{align*}
				\int_{\{\|Pu\|_{L^2}^2>R\}}\|Pu\|_{L^2}^{6}\dif \mu^{\tau^{-1}}_{P,\Gamma}(u)&=\tr_{\gF(P\gH)} \left( 	\mathrm{OP}_{\rm aw} (G) \Gamma_P \right) \leq \sup_{0\leq n\leq K^2\tau} c_n  \tr_{\gF(P\gH)} \left( 	\Gamma_P \right)
				\\ & = \sup_{0\leq n\leq K^2\tau} \mathbf{E} \left( \Big(\frac{\mathbf{Y}}{\tau}\Big)^3 \mathds{1}_{\{\mathbf{Y} > R\tau\}} \right).
			\end{align*}
		It remains to estimate this Gamma tail uniformly for $0\leq n \leq K^2\tau$. For $\theta \in (0,1)$, since $\mathds{1}_{(R\tau,\infty)}(x) \leq e^{\theta(x-R\tau)}$, we have
			\begin{align*}
				\mathbf{E} \left[ \Big(\frac{\mathbf{Y}}{\tau}\Big)^3 \mathds{1}_{\{\mathbf{Y} > R\tau\}} \right] & \leq   e^{-\theta R \tau} 	 \mathbf{E} \left[ \Big(\frac{\mathbf{Y}}{\tau}\Big)^3 e^{ \theta \mathbf{Y} } \right] 
				=  \frac{e^{-\theta R\tau}   }{\tau^3(n+J-1)!}\int_0^\infty e^{-(1-\theta )y} y^{n+J+2} \dif y 
				\\ &= \frac{(1-\theta)^{-n-J-3}e^{-\theta R\tau}   }{\tau^3(n+J-1)!}\int_0^\infty e^{-y} y^{n+J+2} \dif y 
			\\& 	 \leq  (1-\theta)^{-3}  \frac{(n+J+2)^3}{\tau^3} e^{-(n+J)\log (1-\theta)-\theta  R\tau}.
			\end{align*} 
			Since $K^2<R$, we may choose $\theta_0\in(0,1)$ sufficiently small such that $1\leq \frac{-\log(1-\theta_0)}{\theta_0} < \frac{R}{K^2}$. Using $J\leq  \tau^{\frac14}$ and $n\leq K^2 \tau$, we obtain, for all sufficiently large $\tau>0$, 
			\begin{align*}
				 e^{-(n+J)\log (1-\theta_0)-\theta_0  R\tau} \leq e^{-\tau^{1/4}\log (1-\theta_0)-(\theta_0  R+ \log(1-\theta_0)K^2)\tau}\leq e^{-\frac12(\theta_0  R+ \log(1-\theta_0)K^2)\tau},
			\end{align*}
			where in the last inequality we used $\frac{-2\log (1-\theta_0)}{\theta_0  R+ \log(1-\theta_0)K^2} \leq \sqrt{\tau}$ for sufficiently large $\tau>0$. Hence,
			\begin{align*}
				\sup_{0\leq n\leq K^2\tau} \mathbf{E} \left( \Big(\frac{\mathbf{Y}}{\tau}\Big)^3 \mathds{1}_{\{\mathbf{Y} > R\tau\}} \right)  & \leq (1-\theta_0)^{-3} \left(K^2 + \frac{\sqrt{\tau}+2}{\tau} \right)^3 e^{-\frac12(\theta_0  R+ \log(1-\theta_0)K^2)\tau}
			\leq C(K,R)  e^{-c(K,R)\tau}.
			\end{align*}
			This proves the desired estimate.
		\end{proof}
	Applying Lemma~\ref{lemma:interacting:tail} to the Gibbs state $\Gamma_\tau^{f_\eta}$, with $P=\mathds{1}(h\leq \Lambda_e)$ and $R=\cK^2+1$, we conclude that for all sufficiently large $\tau>0$ satisfying $\eps^{-3}\leq \Lambda_e^{\frac18} \leq \tau^{\frac{1}{32}}$,
		\begin{align*}
			\int_{P\gH}  \cW^\eps_P \mathds{1}_{\{\|u\|_{L^2}^2 > \cK^2+1 \}}  \dif \mu_{P,\tau}^{\tau^{-1}}(u)  & \leq \eps^{-2} \|w\|^2_{L^\infty} 	\int_{\{\|Pu\|_{L^2}^2>\cK^2+1\}}\|Pu\|_{L^2}^{6}\dif \mu_{P,\tau}^{\tau^{-1}}(u)   \\&\leq C(\cK,w) \eps^{-2} e^{-c(\cK)\tau} \leq C\Lambda_e^{-\frac18}.
		\end{align*}
			Inserting this bound into \eqref{variation} gives
			\begin{align}\label{lower:holder3}
					-\log  \frac{\cZ^{f_\eta}_{\tau}}{\mathcal{Z}^{f_\eta}_{\tau,0}}  
			 \geq -\log \left( \int_{P\gH} e^{ \cW_P^\eps \mathds{1}_{\{\|u\|_{L^2}^2 \leq \cK^2+1 \}} } \dif   \mu_{P,0}^{\tau^{-1}}(u) \right) -C\Lambda_e^{-\frac18} .
			\end{align}
			
		\textbf{Step 3: Quantitative $L^1$-comparison between $\mu_{P,0}^{\tau^{-1}}$
			and 	$ \mu^{f_\eta}_{0,P}$.}
		To prove \eqref{ineq:lower bound}, it remains to compare $ \int_{P\gH} e^{ \cW_P^\eps \mathds{1}_{\{\|u\|_{L^2}^2 \leq  \cK^2+1 \}} } \dif   \mu_{P,0}^{\tau^{-1}}(u) $ with $\int_{P\gH} e^{ \cW_P^\eps} \dif   \mu^{f_\eta}_{0,P}$.
	Since $\mu_{P,0}^{\tau^{-1}}$ involves a particle number cutoff, whereas $\mu^{f_\eta}_{0,P}$ incorporates a mass cutoff, a more refined comparison is needed. Our first goal is therefore to establish a quantitative $L^1$-estimate between them.
		\begin{lemma}\label{lemma:L1:norm}
			For all sufficiently large $\tau>0$, $\Lambda_e>0$ and $ \eta\in \big(0,\frac{1}{2}\cK^2\big)$ satisfying $\eta^{-1} \leq  \Lambda_e^{\frac18}\leq \tau^{\frac{1}{32}}$, we have
			\begin{align}\label{L1:norm}
				\| \mu_{P,0}^{\tau^{-1}}- \mu^{f_\eta}_{0,P}\|_{L^1(P\gH)}\leq C \Lambda_e^{-\frac14} ,
			\end{align}
			for some constant $C>0$ independent of $\tau$ and $\Lambda_e$.
		\end{lemma}
		\begin{proof}
			The argument is close in spirit to the semi-classical analysis used in Proposition~\ref{prop:upper bound}. We first decompose the lower symbol into a principal part, which gives rise to the projected Gaussian measure, and an error term arising from the replacement of the full cutoff by its projected analogue. We then derive a lower bound on the principal part via a coherent-state decomposition. Finally, a total variation estimate, together with the normalization by the free partition function, completes the proof. We begin by writing the lower symbol $\mu_{P,0}^{\tau^{-1}}$ as
			\begin{align}\label{first way}
				\dif \mu_{P,0}^{\tau^{-1}}(u)&= \left( \frac{\tau}{\pi}\right)^J \left\langle  \xi(\sqrt{\tau}u), (\Gamma^{f_\eta}_{\tau,0})_P \xi(\sqrt{\tau}u) \right\rangle \dif u  \nonumber
				\\&\geq  \left( \frac{\tau}{\pi}\right)^J \mathds{1}_{\{\|Pu\|_{L^2} \leq \cK \}} \left\langle  \xi(\sqrt{\tau}u), \frac{\Gamma_{\tau,0,P}f_\eta\big(\frac{\cN_P}{\tau}\big)}{\tr_{\gF(\gH)} \big(\Gamma_{\tau,0}f_\eta\big(\frac{\cN}{\tau}\big)\big)} \xi(\sqrt{\tau}u)  \right\rangle \dif u
				\\ &- \left( \frac{\tau}{\pi}\right)^J \mathds{1}_{\{\|Pu\|_{L^2} \leq \cK \}} \left\langle  \xi(\sqrt{\tau}u), \frac{\Gamma_{\tau,0,P}f_\eta\big(\frac{\cN_P}{\tau}\big)-\tr_{\gF(Q\gH)}\big( 	\cU \Gamma_{\tau,0}f_\eta\big(\frac{\cN}{\tau}\big)	\cU ^*\big)}{\tr_{\gF(\gH)}\big(\Gamma_{\tau,0}f_\eta \big(\frac{\cN}{\tau}\big)\big)} \xi(\sqrt{\tau}u) \right\rangle \dif u, \nonumber
			\end{align}
			where $\cU$ is the unitary equivalence introduced in \eqref{def:unitary}.
			Since $\cU \cN \cU^*=\cN_P  \otimes \mathds{1}_{\gF (Q\gH)} +  \mathds{1}_{\gF (P \gH)} \otimes \cN_Q $ and the Gaussian state is factorized $\cU \Gamma_{\tau,0} \cU^* =\Gamma_{\tau,0,P} \otimes \Gamma_{\tau,0,Q} $, we compare the second term on the right-hand side of \eqref{first way} as
			\begin{align}\label{compare:partial trace}
				&\Big|\Gamma_{\tau,0,P}f_\eta(\cN_P/\tau)-\tr_{\gF(Q\gH)}\big( 	\cU  \Gamma_{\tau,0}f_\eta(\cN/\tau) 	\cU ^*\big) \Big| \nonumber
				\\ & \quad = \Big|\tr_{\gF(Q\gH)} \big( (\Gamma_{\tau,0,P } \otimes  \Gamma_{\tau,0,Q })( f_\eta(\cN_P/\tau) \otimes \mathds{1}_{\gF (Q\gH)} ) \big)  - \tr_{\gF(Q\gH)}\big( 	\cU  \Gamma_{\tau,0} 	\cU^* 	\cU  f_\eta(\cN/\tau) 	\cU ^*\big)  \Big| \nonumber
				\\ & \quad = \Big|\tr_{\gF(Q\gH)} \Big( \Gamma_{\tau,0,P } \otimes  \Gamma_{\tau,0,Q } \Big( f_\eta\Big(\frac{\cN_P}{\tau}  \otimes \mathds{1}_{\gF (Q\gH)} \Big)  - f_\eta\Big(\frac{\cN_P}{\tau}  \otimes \mathds{1}_{\gF (Q\gH)} +  \mathds{1}_{\gF (P \gH)} \otimes \frac{\cN_Q}{\tau} \Big) \Big) \Big|
				\\& \quad \leq  \frac{1}{\tau}\tr_{\gF(Q\gH)}\left( ( \Gamma_{\tau,0,P } \otimes  \Gamma_{\tau,0,Q }) (  \mathds{1}_{\gF (P \gH)} \otimes \cN_Q ) \right)\|f_\eta'\|_{L^\infty}  \nonumber
				\\ & \quad=  \|f_\eta'\|_{L^\infty}  \tr_{\gF(Q\gH)}\left(   \Gamma_{\tau,0,Q }  \frac{\cN_Q}{\tau} \right) \Gamma_{\tau,0,P }  \leq  C\eta^{-1}\Lambda_e^{-\frac12} \Gamma_{\tau,0,P}, \nonumber
			\end{align}
		where we used \eqref{Wp-W:second} in the last inequality.
		Using the same computation as in \cite[Page 93]{NZZ25}, we obtain
			\begin{equation}\label{same NZZ}
				\begin{aligned}
					& \left(	 \frac{\tau}{\pi}\right)^J\left\langle  \xi(\sqrt{\tau}u), \Gamma_{\tau,0,P} \xi(\sqrt{\tau}u) \right\rangle \dif u 
					=\bigotimes_{j=1}^J \left[ \frac{\tau}{\pi} (1- e^{-\lambda_j/\tau}) \mathrm{exp} \left( -\tau|\alpha_j|^2( 1- e^{-\frac{\lambda_j}{\tau}}) \right)   \right] \dif \alpha_j
					\\ &\quad  \leq \bigotimes_{j=1}^J  \left[ \frac{\lambda_j}{\pi}e^{ -\lambda_j|\alpha_j|^2 +\frac{1}{\tau}\lambda_j^2 |\alpha_j|^2   }   \right] \dif \alpha_j 
					\leq e^{\frac{1}{\tau}\Lambda_e^2\|Pu\|_{L^2}^2} \dif  \mu_{0,P}(u), 
				\end{aligned}
			\end{equation}
			with the complex variable $\alpha_j=\langle u,u_j\rangle$. Here in the last line, we used $\lambda_j/\tau\geq 1-e^{-\lambda_j/\tau} \geq \lambda_j/\tau-\frac12 (\lambda_j/\tau)^2$.
			It follows from \eqref{compare:partial trace} and \eqref{same NZZ} that the second term on the right-hand side of \eqref{first way} is bounded by
			\begin{align}\label{first way4}
				&\frac{C\eta^{-1}\Lambda_e^{-\frac12}}{\tr_{\gF(\gH)} (\Gamma_{\tau,0}f_\eta(\cN/\tau))}   \mathds{1}_{\{\|Pu\|_{L^2} \leq \cK \}}\left( \frac{\tau}{\pi}\right)^J  \left\langle  \xi(\sqrt{\tau}u), \Gamma_{\tau,0,P} \xi (\sqrt{\tau}u) \right\rangle \dif u  \nonumber
				\\ &\leq \frac{C \eta^{-1} \Lambda_e^{-\frac12} \mathds{1}_{\{\|Pu\|_{L^2} \leq \cK \}}  e^{\frac{\Lambda_e^2}{\tau} \|Pu\|^2_{L^2}}   \dif \mu_{0,P}}{\tr_{\gF(\gH)}(\Gamma_{\tau,0}f_\eta(\cN/\tau)) }  
\leq  C\eta^{-1}  \Lambda_e^{-\frac12} e^{\frac{\Lambda_e^2}{\tau} \cK^2} \dif \mu_{0,P}  \leq C \Lambda_e^{-\frac14} \dif \mu_{0,P} ,
			\end{align}
			where we used \eqref{upper_bound_Ztau0} and $\Lambda_e\leq \tau^{\frac14} $ in the last inequality.
			
	We now turn to the first term on the right-hand side of \eqref{first way}. This term is treated exactly as in Step~3 of the proof of Proposition~\ref{prop:upper bound}: the quantum cutoff is replaced by its classical counterpart through a coherent-state decomposition, and the resulting error is negligible. Repeating the computation leading to \eqref{conherent-state:identity} and \eqref{tr:lower_two_part}, we have
			\begin{align}\label{first way1}
				& \frac{\mathds{1}_{\{\|Pu\|_{L^2} \leq \cK \}} }{\tr_{\gF(\gH)}(\Gamma_{\tau,0}f_\eta(\cN/\tau))} \left( \frac{\tau}{\pi}\right)^J \left\langle  \xi (\sqrt{\tau}u), \Gamma_{\tau,0,P}f_\eta(\cN_P/\tau) \xi(\sqrt{\tau}u) \right\rangle \dif u 
				\\ &\geq  \frac{  \prod_{j=1}^J \frac{\tau}{\pi}(1-e^{-\lambda_j/\tau}) }{\tr_{\gF(\gH)}(\Gamma_{\tau,0}f_\eta(\cN/\tau))}  \mathds{1}_{\{\|Pu\|_{L^2} \leq \cK \}}  \mathbf{E} \Big( f_\eta  \Big( \frac{\mathbf{X}}{\tau}\Big)  \mathrm{exp}  \Big( -\frac{\mathbf{X}}{\tau\|Pu\|_{L^2}^2 }\langle u,PhPu \rangle \Big)  \Big)    \dif u , \nonumber
			\end{align}
			where $\mathbf{X}$ has the Poisson distribution with mean $\tau \|Pu\|_{L^2}^2$. By the same calculations as in \eqref{assertion:diff:Hamiltonian} and \eqref{transform:numbercut:to:masscut}, it follows that
			\begin{align}\label{first way2}
			&	\mathds{1}_{\{\|Pu\|_{L^2} \leq \cK \}}  \mathbf{E} \Big( f_\eta  \Big( \frac{\mathbf{X}}{\tau}\Big)  \mathrm{exp}  \Big( -\frac{\mathbf{X}}{\tau\|Pu\|_{L^2}^2 }\langle u,PhPu \rangle \Big)  \Big) 
				\\ &\geq f_\eta\big( \|Pu\|_{L^2} ^2\big)  e^{-\langle u,PhPu \rangle}- (\eta^{-1} +\Lambda_e)\mathds{1}_{\{\|Pu\|_{L^2} \leq \cK \}}  e^{-\langle u,PhPu \rangle}  \mathbf{E} \Big|  \frac{\mathbf{X}}{\tau} - \|Pu\|_{L^2} ^2 \Big|
\nonumber				
\\ &\geq  f_\eta\big( \|Pu\|_{L^2} ^2\big)  e^{-\langle u,PhPu \rangle}-C_\cK\tau^{-\frac12}  \Lambda_e e^{-\langle u,PhPu \rangle} \nonumber .
			\end{align}
			Substituting \eqref{first way2} into \eqref{first way1} implies 
			\begin{align}\label{first:way3}
				& \frac{\mathds{1}_{\{\|Pu\|_{L^2} \leq \cK \}} }{\tr_{\gF(\gH)}(\Gamma_{\tau,0}f_\eta(\cN/\tau))} \left( \frac{\tau}{\pi}\right)^J \left\langle  \xi (\sqrt{\tau}u), \Gamma_{\tau,0,P}f_\eta(\cN_P/\tau) \xi(\sqrt{\tau}u) \right\rangle \dif u \nonumber
				\\ &\quad \geq  \frac{  \prod_{j=1}^J \frac{\tau}{\lambda_j}(1-e^{-\lambda_j/\tau}) }{\tr_{\gF(\gH)}(\Gamma_{\tau,0}f_\eta(\cN/\tau))}    f_\eta\big( \|Pu\|_{L^2} ^2\big)   \prod_{j=1}^J \frac{\lambda_j}{\pi}e^{-\langle u,PhPu \rangle} \dif u\nonumber
				\\ &\qquad -  \frac{ C_\cK\tau^{-\frac12}   \Lambda_e }{\tr_{\gF(\gH)}(\Gamma_{\tau,0}f_\eta(\cN/\tau))}    \prod_{j=1}^J \frac{\tau}{\lambda_j}(1-e^{-\lambda_j/\tau})  \prod_{j=1}^J \frac{\lambda_j}{\pi}e^{-\langle u,PhPu \rangle} \dif u \nonumber
				\\ & \quad \geq \left(1-\Lambda_e^{-1}\right) \frac{f_\eta(\|Pu\|_{L^2}^2 )}{\tr_{\gF(\gH)}(\Gamma_{\tau,0}f_\eta(\cN/\tau))}  \dif \mu_{0,P}(u)  - C_\cK\Lambda_e^{-1} \dif \mu_{0,P}(u), 
			\end{align}
			where we used \eqref{upper_bound_Ztau0}, \eqref{bernoulli:inequality} and $\Lambda_e\leq \tau^{\frac14}$ in the last line.
			Combining \eqref{first way4} and \eqref{first:way3}, we obtain
			\begin{align*}
				\mu_{P,0}^{\tau^{-1}}(u) \geq  \left(1-\Lambda_e^{-1}\right)  \frac{\int_{\gH} f_\eta(\|Pu\|_{L^2}^2) \dif \mu_{0}(u)}{\tr_{\gF(\gH)}(\Gamma_{\tau,0}f_\eta(\cN/\tau))}  \mu^{f_\eta}_{0,P}(u)- C\Lambda_e^{-\frac14}  \mu_{0,P}(u).
			\end{align*}
			Together with \eqref{upper_bound_Ztau0}, \eqref{convergence on same index2:new} and \eqref{diff:Gibbs:cutoff}, this further implies that for $\eta^{-1} \leq  \Lambda_e^{\frac18}\leq  \tau^{\frac{1}{32}}$,
				\begin{align}\label{first way5}
					( \mu_{P,0}^{\tau^{-1}}&-\mu^{f_\eta}_{0,P})_-(u)  \leq \frac{\big|\int_{\gH} f_\eta(\|Pu\|_{L^2}^2) \dif \mu_{0}(u)-\tr_{\gF(\gH)}(\Gamma_{\tau,0}f_\eta(\cN/\tau))\big|}{\tr_{\gF(\gH)}(\Gamma_{\tau,0}f_\eta(\cN/\tau))} \mu^{f_\eta}_{0,P}(u)  \nonumber
					\\ &\qquad \qquad \qquad+\Lambda_e^{-1} \frac{\int_{\gH} f_\eta(\|Pu\|_{L^2}^2) \dif \mu_{0}(u)}{\tr_{\gF(\gH)}(\Gamma_{\tau,0}f_\eta(\cN/\tau))} \mu^{f_\eta}_{0,P}(u)
					+C\Lambda_e^{-\frac14} \mu_{0,P}(u)  \nonumber
					\\ & \leq C \Big|\int_{\gH} f_\eta(\|Pu\|_{L^2}^2) \dif \mu_{0}(u)-\tr_{\gF(P\gH)}(\Gamma_{\tau,0,P}f_\eta(\cN_P/\tau)) \Big| \mu^{f_\eta}_{0,P}(u)  \nonumber
					\\&\quad +C \Big| \tr_{\gF(P\gH)}(\Gamma_{\tau,0,P}f_\eta(\cN_P/\tau))- \tr_{\gF(\gH)}(\Gamma_{\tau,0}f_\eta(\cN/\tau)) \Big| \mu^{f_\eta}_{0,P}(u)   \nonumber
					+ C\Lambda_e^{-1} \mu^{f_\eta}_{0,P}(u)+C\Lambda_e^{-\frac14}\mu_{0,P}(u)  \nonumber
					\\ & \leq C\tau^{-\frac14}  \mu^{f_\eta}_{0,P}(u)+ C\Lambda_e^{-\frac14} \mu^{f_\eta}_{0,P}(u) +C\Lambda_e^{-\frac14} \mu_{0,P}(u), 
				\end{align}
			where $f_{-}:=\mathrm{max}(-f,0)$ is the negative part. Since $\mu_{P,0}^{\tau^{-1}}$ and $\mu^{f_\eta}_{0,P}$ are probability measures on $P\gH$, we have
			\begin{align*}
				0=	\int_{P\gH}	 ( \mu_{P,0}^{\tau^{-1}}-\mu^{f_\eta}_{0,P}  ) =  \int_{P\gH}	 ( \mu_{P,0}^{\tau^{-1}}-\mu^{f_\eta}_{0,P}  )_{+} - \int_{P\gH}	 (\mu_{P,0}^{\tau^{-1}}-\mu^{f_\eta}_{0,P}  )_{-}  .
			\end{align*}
			We therefore deduce from \eqref{first way5} that
			\begin{equation*}
				\int_{P\gH}	| \mu_{P,0}^{\tau^{-1}}-\mu^{f_\eta}_{0,P}  | = 2 \int_{P\gH}	 ( \mu_{P,0}^{\tau^{-1}}-\mu^{f_\eta}_{0,P}  )_{-}   \leq  C\Lambda_e^{-\frac14} +C\tau^{-\frac14}\leq C\Lambda_e^{-\frac14},
			\end{equation*}
			with some constant $C>0$ independent of $\tau$ and $\Lambda_e$.
		\end{proof}
		
		\textbf{Step~4: Exponential moment estimate and conclusion of the lower bound.}
	With the $L^1$-comparison in hand, we control the right-hand side of \eqref{lower:holder3} by
		\begin{equation}\label{lower:holder1}
			\begin{aligned}
				\int_{P\gH} e^{ \cW_P^\eps \mathds{1}_{\{\|u\|_{L^2}^2 \leq \cK^2+ 1\}} } \dif   \mu_{P,0}^{\tau^{-1}}(u) & \leq  \int_{P\gH} e^{ \cW_P^\eps} \dif   \mu^{f_\eta}_{0,P}(u) + \int_{P\gH} e^{ \cW_P^\eps \mathds{1}_{\{\|u\|_{L^2}^2 \leq \cK^2+ 1 \}} } | \dif   \mu_{P,0}^{\tau^{-1}}-  \dif   \mu^{f_\eta}_{0,P}|
				\\ &\leq  \int_{P\gH} e^{ \cW_P^\eps} \dif   \mu^{f_\eta}_{0,P}(u)+ e^{\eps^{-2} \|w\|_{L^\infty}^2(\cK^2+ 1)^3} \|\mu_{P,0}^{\tau^{-1}}-\mu^{f_\eta}_{0,P}\|_{L^1(P\gH)}
				\\ &\leq  \int_{P\gH} e^{ \cW_P^\eps} \dif   \mu^{f_\eta}_{0,P}(u) + e^{M \eps^{-2}}\Lambda_e^{-\frac14}\leq  \int_{P\gH} e^{ \cW_P^\eps} \dif   \mu^{f_\eta}_{0,P}(u) + \Lambda_e^{-\frac18}.
			\end{aligned}
		\end{equation}
	 Using the fact that $\log(1+t)=O(|t|)$ for small $|t|$, together with \eqref{lower:holder3}, we obtain
	\begin{align}\label{conclusion:lower bound}
		-\log  \frac{\cZ^{f_\eta}_{\tau}}{\mathcal{Z}^{f_\eta}_{\tau,0}}   \nonumber
		&\geq -\log \left( \int_{P\gH} e^{\cW_P^\eps \mathds{1}_{\{\|u\|_{L^2}^2 \leq \cK^2+1 \}} } \dif   \mu_{P,0}^{\tau^{-1}}(u) \right)-C\Lambda_e^{-\frac18} 
		\\ &\geq -\log \left( \int_{P\gH} e^{\cW_P^\eps } \dif   \mu^{f_\eta}_{0,P}(u) \right)-\log \left(1+\frac{C \Lambda_e^{-\frac18} }{\int_{P\gH} e^{\cW_P^\eps} \dif   \mu^{f_\eta}_{0,P}(u)}  \right)-C\Lambda_e^{-\frac18} 
		\\ & \geq -\log \left( \int_{P\gH} e^{\cW_P^\eps} \dif   \mu^{f_\eta}_{0,P}(u)\right)- O\left(\Lambda_e^{-\frac18} \right),  \nonumber
	\end{align}
	where we used $\int e^{\cW_P^\eps} \dif   \mu^{f_\eta}_{0,P}(u) \geq 1$ in the last inequality.
	This concludes the proof of Proposition~\ref{prop:lower bound}.
	\end{proof}

		\begin{remark}\label{rmk:lower:exponential}
		Estimate \eqref{lower:holder1} is the only point in the lower-bound argument where an exponential dependence on $\eps^{-2}$
		appears. This loss comes from the tail estimate, which requires large mass parameter $R$ to be strictly bigger than the support parameter $\cK^2$. In the critical regime, there is no room to further adjust $R$ in order to exploit the tail estimate, and the exponential term has to be controlled by a pointwise bound. In the subcritical regime, however, one still has enough room to choose $R<\cK_c^2$, and the classical moment bound can then be used to control this term.
	\end{remark}
	
	\subsection{Proof of Proposition~\ref{Thm:Hartree}}\label{subsec:proof:hartree}
	With Propositions~\ref{prop:upper bound} and \ref{prop:lower bound} in hand, it remains to remove the finite-dimensional localization and recover the full Hartree expression. We begin with a Cauchy-type estimate for the projected classical partition functions, following the strategy of \cite[Theorem 10.2]{NZZ25}, by combining the upper and lower free-energy bounds from the previous two subsections for the same quantum quantity.
	\begin{proposition}\label{pro:cauchy:seq1}
		Let $w$ satisfy Assumption~\ref{assum:on:w}, and let $\cK>0$ be arbitrary. Let $\Lambda_e>0$ be sufficiently large, and let $ \eta\in \big(0,\frac{1}{2}\cK^2\big)$ and $\eps\in(0,1)$ satisfy $\eta^{-1}e^{M\eps^{-2}} \leq \Lambda_e^{\frac18}$, where $M=(2\|w\|^2_{L^\infty}+1)(\cK^2+4)^3$. Let the cutoff $f_\eta$ satisfy Assumption~\ref{assum:cutoff} with cutoff parameter $\cK$, and let $P:=\mathds{1}(h\leq \Lambda_e)$ be the orthogonal projection on $ \gH$. With $ \cW^\eps$ defined in \eqref{Hartree:interaction} and $\cW^\eps_P$ in \eqref{def:widetilde:WP}, we have
		\begin{align}\label{P:to:infty1}
		\left| \log \left( \int_{P\gH} e^{\cW^\eps_P(u)} \dif \mu^{f_\eta}_{0,P} (u) \right)- \log  \left( \int_{\gH} e^{ \cW^\eps(u)} \dif \mu^{f_\eta}_0 (u) \right)\right|\leq O\left( \Lambda_e^{-\frac18} \right).
		\end{align}
		Here $\mu^{f_\eta}_0$ and $\mu^{f_\eta}_{0,P} $ are the Gaussian measures defined in \eqref{def:truncated:Gaussian:measure} and \eqref{truncated Gaussian measure}, respectively, with cutoff $f_\eta$.
	\end{proposition}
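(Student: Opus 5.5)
Following the strategy of \cite[Theorem 10.2]{NZZ25}, we will obtain the estimate \eqref{P:to:infty1} by a Cauchy-type argument. We apply Propositions~\ref{prop:upper bound} and \ref{prop:lower bound} to one and the same quantum quantity $-\log(\cZ^{f_\eta}_{\tau}/\cZ^{f_\eta}_{\tau,0})$, but with two different projections. Fix $\Lambda_e\le\Lambda_e'$, and set $P=\mathds{1}(h\le\Lambda_e)$ and $P'=\mathds{1}(h\le \Lambda_e')$. Both pairs satisfy the constraint $\eta^{-1}e^{M\eps^{-2}}\le\Lambda_e^{1/8}$, since it is monotone in $\Lambda_e$. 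Choose $\tau$ so large that $\Lambda_e'\le\tau^{1/4}$; this is possible because $\tau$ does not enter the classical quantities at all. The upper bound with $P$ and the lower bound with $P'$ then give
\begin{align*}
-\log \int_{P'\gH} e^{\cW^\eps_{P'}}\dif\mu^{f_\eta}_{0,P'} - C\Lambda_e'^{-\frac18}\le -\log\frac{\cZ^{f_\eta}_{\tau}}{\cZ^{f_\eta}_{\tau,0}}\le -\log \int_{P\gH} e^{\cW^\eps_{P}}\dif\mu^{f_\eta}_{0,P} + C\Lambda_e^{-\frac18}.
\end{align*}
Exchanging the roles of $P$ and $P'$ gives the reverse inequality. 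Since the left- and right-hand sides are independent of $\tau$, we conclude that
\[
\bigl|\log Z_P-\log Z_{P'}\bigr|\le C\Lambda_e^{-1/8},
\]
where $Z_P:=\int e^{\cW^\eps_P}\dif\mu^{f_\eta}_{0,P}$. In particular, $(\log Z_P)$ is Cauchy as $\Lambda_e\to\infty$, with a rate that is uniform in $\Lambda_e'$.

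It then remains to identify the limit of $Z_{P'}$ as $\Lambda_e'\to\infty$ with $\int_{\gH}e^{\cW^\eps}\dif\mu^{f_\eta}_0$, for fixed $\eps,\eta$. First, since $\mu_{0,P}$ is the pushforward of $\mu_0$ under $u\mapsto Pu$, we can write
\[
Z_{P'}=\frac{\int_{\gH} e^{\cW^\eps(P'u)} f_\eta(\|P'u\|^2)\dif\mu_0(u)}{\int_{\gH} f_\eta(\|P'u\|^2)\dif\mu_0(u)}.
\]
Here $\cW^\eps_{P'}(u)=\cW^\eps(P'u)$, because $W^\eps$ is the multiplication kernel in \eqref{def:operator W}. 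As $\Lambda_e'\to\infty$ we have $P'u\to u$ in $L^2$ for $\mu_0$-a.e.\ $u$. The map $\cW^\eps$ is continuous on $L^2$ for fixed $\eps$, by the bound $|\cW^\eps(u)-\cW^\eps(v)|\lesssim\eps^{-2}(\|u\|^4+\|v\|^4)\|u-v\|\,(\|u\|+\|v\|)$ (Young's inequality with $w^\eps\in L^\infty$), and $f_\eta$ is continuous. Hence both integrands converge pointwise. They are dominated by $\|f_\eta\|_\infty e^{\eps^{-2}\|w\|_\infty^2\cK^6}$ on the support of $f_\eta$, so dominated convergence applies to the numerator and to the denominator. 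The denominator is bounded below by the positive constant in \eqref{lower:bound:for:integral:f(Pu)}. Therefore $\log Z_{P'}\to\log\int e^{\cW^\eps}\dif\mu^{f_\eta}_0$, and letting $\Lambda_e'\to\infty$ in the Cauchy estimate yields \eqref{P:to:infty1}.

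The main point requiring care is the first step: we must check that the hypotheses of both propositions hold simultaneously for the two cutoff levels $\Lambda_e$ and $\Lambda_e'$ with a single large $\tau$. We must also check that the constants in the $O(\Lambda_e^{-1/8})$ errors depend only on $w$ and $\cK$, and not on $\tau$, $\eta$ or $\Lambda_e'$. Both facts are built into the statements of Propositions~\ref{prop:upper bound} and \ref{prop:lower bound}, so the issue is bookkeeping rather than a new estimate. The dominated-convergence step is routine because $\eps$ and $\eta$ are fixed there.
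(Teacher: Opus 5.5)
Your proposal is correct and is essentially the paper's proof. The paper likewise applies Propositions~\ref{prop:upper bound} and \ref{prop:lower bound} with $P$ and $\overline{P}=\mathds{1}(h\leq\overline{\Lambda}_e)$ at a single large $\tau$ to obtain the Cauchy bound \eqref{P:to:infty4}. It then identifies the limit using the $\mu_0$-a.s.\ convergence $\cW^\eps_P(u)\to\cW^\eps(u)$, which follows from $|\cW^\eps_P(u)-\cW^\eps(u)|\leq\|w^\eps\|_{L^\infty}^2\|u\|_{L^2}^5\|u-Pu\|_{L^2}$, together with dominated convergence under the bound $e^{\eps^{-2}\|w\|_{L^\infty}^2\cK^6}\|f_\eta\|_{L^\infty}$.
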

	\begin{proof}
		Our starting point is the convergence of the quantum free energy established in the previous subsections. Combining Proposition~\ref{prop:upper bound} and Proposition~\ref{prop:lower bound}, we obtain that, for sufficiently large $\Lambda_e>0$, $ \eta\in \big(0,\frac{1}{2}\cK^2\big)$ and $\eps\in(0,1)$ satisfying $\Lambda_e^{\frac18} \geq \eta^{-1}e^{M\eps^{-2}}$,
		\begin{align}\label{P:to:infty2}
			\left| -\log  \frac{\cZ^{f_\eta}_{\tau}}{\mathcal{Z}^{f_\eta}_{\tau,0}} + \log \left( \int_{P \gH} e^{\cW_P^\eps(u) } \dif \mu^{f_\eta}_{0,P} (u)\right) \right| \leq O\left(  \Lambda_e^{-\frac18} \right),
		\end{align}
		where we have chosen $\tau \to \infty$ sufficiently fast such that $ \Lambda_e \leq \tau^{\frac14}$. Similarly, for any given
		$\overline{\Lambda}_e\geq \Lambda_e$, we also obtain the same bound with $P$ replaced by $\overline{P}=\mathds{1}(h\leq \overline{\Lambda}_e)$, provided that $\tau$ is sufficiently large such that $ \eta^{-1}e^{M\eps^{-2}}\leq \Lambda_e^{\frac18} \leq \overline{\Lambda}_e^{\frac18}  \leq \tau^{\frac{1}{32}}$. More precisely, we may choose the same sufficiently large $\tau$ such that
		\begin{align}\label{P:to:infty3}
			\left| -\log  \frac{\cZ^{f_\eta}_{\tau}}{\mathcal{Z}^{f_\eta}_{\tau,0}} + \log \left( \int_{\overline{P}\gH} e^{\cW_{\overline{P}}^\eps(u) } \dif \mu^{f_\eta}_{0,\overline{P}} (u)\right) \right| \leq O\left(  \overline{\Lambda}_e^{-\frac18} \right) \leq O\left(  \Lambda_e^{-\frac18} \right),
		\end{align}
		and \eqref{P:to:infty2} hold simultaneously.
		Thus by the triangle inequality, we find that for all $\overline{\Lambda}_e \geq \Lambda_e $
		\begin{align}\label{P:to:infty4}
		\left| \log \left( \int_{\overline{P}\gH} e^{\cW_{\overline{P}}^\eps(u) } \dif \mu^{f_\eta}_{0,\overline{P}} (u)\right) - \log \left( \int_{P \gH} e^{\cW_P^\eps(u) } \dif \mu^{f_\eta}_{0,P} (u)\right) \right| \leq O\left(  \Lambda_e^{-\frac18} \right).
		\end{align}
		Recalling the definitions of $\cW^\eps_P$ and $\cW^\eps$ in \eqref{def:widetilde:WP} and \eqref{Hartree:interaction}, it follows that, for any given $\eps>0$
		\begin{align*}
			\left|\cW_P^\eps(u) -\cW^\eps(u) \right| &\leq 	\frac{1}{3!} \left|  \big\langle   u^{\otimes 3}, W^\eps u^{\otimes 3}\big\rangle  -  \big\langle   P^{\otimes 3} u^{\otimes 3}, W^\eps P^{\otimes 3} u^{\otimes 3}\big\rangle  \right| 
			\\& \leq \frac{1}{3!} \|w^\eps\|_{L^\infty}^2\left( \|Pu\|_{L^2}^4+ \|Pu\|_{L^2}^2 \|u\|_{L^2}^2+ \|u\|_{L^2}^4 \right) \big( \|Pu\|_{L^2}+ \|u\|_{L^2}\big)  \|u-Pu\|_{L^2}
			\\ &\leq  \|w^\eps\|_{L^\infty}^2 \|u\|_{L^2}^5\|u-Pu\|_{L^2} .
		\end{align*}
		Since $\|w^\eps\|_{L^\infty}<\infty$ and $u\in L^2$ $\mu_0$-almost surely, it follows that $\cW_P^\eps(u)\to \cW^\eps(u)$ as $\Lambda_e \to \infty$, $\mu_0$-almost surely.
		By the continuity of the exponential function and $f_\eta$, we obtain
		\begin{align}\label{P:to:infty5}
			e^{\cW_P^\eps(u) } f_\eta(\|Pu\|_{L^2}^2) \xrightarrow{\Lambda_e \to \infty} e^{\cW^\eps(u) } f_\eta(\|u\|_{L^2}^2),
		\end{align}
		$\mu_0$-almost surely. Furthermore, for any given $\eps>0$, we have
		\begin{align}\label{P:to:infty6}
			\sup_{\Lambda_e \geq 1} e^{\cW_P^\eps(u) } f_\eta(\|Pu\|_{L^2}^2) \leq e^{\eps^{-2}\|w\|_{L^\infty}^2 \cK^6} \|f_\eta\|_{L^\infty} <\infty.
		\end{align}
		Combining \eqref{P:to:infty5} with \eqref{P:to:infty6}, then the dominated convergence theorem implies that
		\begin{align}\label{P:to:infty7}
			\lim_{\Lambda_e \to \infty}  \int_{P\gH} e^{\cW_P^\eps(u) } \dif \mu^{f_\eta}_{0,P} (u) =	\lim_{\Lambda_e \to \infty} \frac{ \int_{\gH} e^{\cW_P^\eps(u) } f_\eta(\|Pu\|_{L^2}^2) \dif \mu_0 (u)}{\int_{\gH} f_\eta(\|Pu\|_{L^2}^2) \dif \mu_0 (u)  } = \int_{\gH} e^{\cW^\eps(u) } \dif \mu^{f_\eta}_{0} (u).
		\end{align}
		Incorporating \eqref{P:to:infty7} with \eqref{P:to:infty4} and letting $\overline{\Lambda}_e \to \infty$, we obtain \eqref{P:to:infty1}.
	\end{proof}
	
	With the above preparations, we are now in a position to conclude \eqref{key:bound1} and \eqref{key:bound2}.
	\begin{proof}[Proof of \eqref{key:bound1}]
		The conclusion follows from a combination of Propositions~\ref{prop:upper bound}, \ref{prop:lower bound} and \ref{pro:cauchy:seq1}. We first verify that, under the assumptions of Proposition~\ref{Thm:Hartree}, the parameter constraints required in the above propositions can indeed be satisfied. The condition $1>\eps \geq M \left(\log  \tau \right)^{-\frac12}$ implies that $e^{64M\eps^{-2}}  \leq   \tau$.
		Consequently, for $\eta \in \big[\tau^{-\frac{1}{64}}, \frac12 \cK^2\big)$, we may choose $\Lambda_e>0$ sufficiently large such that 
		\begin{align} \label{relation:para}
		\eta^{-1}e^{M\eps^{-2}} \leq \Lambda_e^{\frac18}\leq  \tau^{\frac{1}{32}} .
		\end{align}
		Therefore, combining \eqref{P:to:infty1} and \eqref{P:to:infty2}, it follows that
		\begin{align*}
			&	\left| -\log  \frac{\cZ^{f_\eta}_{\tau}}{\mathcal{Z}^{f_\eta}_{\tau,0}} + \log \left( \int_{ \gH} e^{\cW^\eps(u) } \dif \mu^{f_\eta}_0 (u)\right) \right| 
			\leq 	\left| -\log  \frac{\cZ^{f_\eta}_{\tau}}{\mathcal{Z}^{f_\eta}_{\tau,0}} + \log \left( \int_{P \gH} e^{\cW_P^\eps(u) } \dif \mu^{f_\eta}_{0,P} (u)\right) \right|  
			\\ &\qquad \qquad  \qquad  \qquad + 	\left|  \log \left( \int_{P \gH} e^{\cW_P^\eps(u) } \dif \mu^{f_\eta}_{0,P} (u)\right) -\log \left( \int_{ \gH} e^{\cW^\eps(u) } \dif \mu^{f_\eta}_0 (u)\right)  \right| \leq C \Lambda_e^{-\frac18} .
		\end{align*} 
		By the elementary inequality $|e^x-e^y| \leq e^y(e^{|x-y|}-1)$, $x,y \in \R$, we obtain
		\begin{align*}
			\left|\frac{\cZ^{f_\eta}_{\tau}}{\mathcal{Z}^{f_\eta}_{\tau,0}} -\int_{ \gH} e^{\cW^\eps(u) } \dif \mu^{f_\eta}_0 (u)  \right| & \leq  \int_{ \gH} e^{\cW^\eps(u) } \dif \mu^{f_\eta}_0 (u)  \big(e^{C\Lambda_e^{-\frac18}}-1\big) \leq Ce^{\|w\|_{L^\infty}^2\cK^6\eps^{-2}} \Lambda_e^{-\frac18}
			\\ &\leq Ce^{\frac12 M\eps^{-2}}  \Lambda_e^{-\frac18} \leq C \Lambda_e^{-\frac{1}{16}},
		\end{align*}
		where we used \eqref{relation:para} in the last inequality. Letting $\tau \to \infty$ and $\Lambda_e \to \infty$ completes the proof.
	\end{proof}
	\begin{proof}[Proof of \eqref{key:bound2}]
	The proof consists of three steps. We first combine the quantitative de Finetti estimate with the $L^1$-control of the lower symbols to obtain trace-norm convergence after localization to $P\gH$. We then remove the projection on the quantum side using the one-body tail estimate from Lemma~\ref{errors:localization_interaction}. On the classical side, we compare the two partition functions directly via the pointwise convergence $\cW_P^\eps(u)\to \cW^\eps(u)$. Finally, we combine the localized convergence with these approximation estimates to obtain the full trace-norm statement. Throughout the proof, we assume that the parameters satisfy \eqref{relation:para},
	so that Propositions~\ref{prop:upper bound}, \ref{prop:lower bound} and \ref{pro:cauchy:seq1} apply directly.
		
	\textbf{Step 1: convergence after localization.} Let $P=\mathds{1}(h\leq \Lambda_e)$ and $J=\tr(P)$. Let $\cW_P^\eps$ and $\mu^{f_\eta}_{0,P}$ be defined as in \eqref{def:widetilde:WP} and \eqref{truncated Gaussian measure}, respectively. We introduce the probability measure $\mu^{\eps,f_\eta}_P$ on $P\gH$ by
	\begin{align*}
		\dif \mu^{\eps,f_\eta}_P(u):=\frac{e^{\cW_P^\eps(u)} \dif   \mu^{f_\eta}_{0,P}(u)}{\int e^{\cW_P^\eps(v)} \dif   \mu^{f_\eta}_{0,P}(v)}.
	\end{align*} 
	We also define $\big((\Gamma_{\tau}^{f_\eta})_P\big)^{(k)}=P^{\otimes k}(\Gamma^{f_\eta}_{\tau} \big)^{(k)} P^{\otimes k}$. In this step, we prove that, for all sufficiently large $\tau$ and $\Lambda_e$ satisfying \eqref{relation:para},
	\begin{align}\label{density matrix convergence}
		\left\| \frac{k!}{\tau^k}\big((\Gamma_{\tau}^{f_\eta})_P\big)^{(k)} - \int_{P\gH} |u^{\otimes k}\rangle  \langle u^{\otimes k} | \dif \mu^{\eps,f_\eta}_P(u) \right\| _{\gS^1(\gH^{(k)})}  \leq O\big(\Lambda_e^{-\frac{1}{32}}\big).
	\end{align}
To this end, let $\mu_{P,\tau}^{\tau^{-1}}$ denote the lower symbol associated with $\Gamma^{f_\eta}_\tau$ on $P\gH$. 
		By the quantitative de Finetti estimate \eqref{de finetti estimate}, we have for every $k\geq 1$,
		\begin{align}\label{de finetti error trace norm localized}
			 &	\left\| 	\frac{k!}{\tau^k}\big((\Gamma_{\tau}^{f_\eta})_P\big)^{(k)} -\int_{P\gH} |u^{\otimes k}\rangle  \langle u^{\otimes k} | \dif \mu^{\tau^{-1}}_{P,\tau}(u)  \right\|_{\gS^1(\gH^{(k)})}  \nonumber
			 	\\ &\quad  \leq \frac{1}{\tau^k}\sum_{\ell=0}^{k-1}{k\choose \ell}^2\frac{(k-\ell+J-1)!}{(J-1)!}\tr_{\gF(P\gH)}\big(\cN_P^\ell (\Gamma^{f_\eta}_{\tau})_P\big)
			 	\\ &\quad \lesssim_{k} \sum_{\ell=0}^{k-1} \left(\frac{J}{\tau}\right)^{k-\ell}   \tr_{\gF(\gH)}\left(\left(\frac{\cN}{\tau}\right)^\ell \Gamma^{f_\eta}_{\tau}\right) \lesssim_{k}   \sum_{\ell=0}^{k-1} \left(\frac{J}{\tau}\right)^{k-\ell}  \cK^{2\ell}  \leq C_{k,\cK}\Lambda_e^{-1}, \nonumber
		\end{align}
		where in the inequality we used $J=\tr(P)\leq \sqrt{\Lambda_e} \leq  \tau^{\frac18}$.
		
		It therefore remains to prove that, for all sufficiently large $\tau$ and $\Lambda_e$ satisfying \eqref{relation:para},
		\begin{align}\label{density matrix convergence3}
			\left\| \int_{P\gH} |u^{\otimes k}\rangle  \langle u^{\otimes k} | \dif \mu^{\tau^{-1}}_{P,\tau}(u)- \int_{P\gH} |u^{\otimes k}\rangle  \langle u^{\otimes k} | \dif \mu^{\eps,f_\eta}_P(u) \right\| _{\gS^1(\gH^{(k)})}  \leq O \left( \Lambda_e^{-\frac{1}{32}}\right).
		\end{align}
		Since $|u^{\otimes k}\rangle\langle u^{\otimes k}|$ is a positive rank-one operator with trace $\|u\|_{L^2}^{2k}$, we apply Cauchy--Schwarz inequality to obtain
			\begin{align}\label{weighted tv operator bound}
			&	\left\| \int_{P\gH} |u^{\otimes k}\rangle  \langle u^{\otimes k} | \dif \mu^{\tau^{-1}}_{P,\tau}(u)- \int_{P\gH} |u^{\otimes k}\rangle  \langle u^{\otimes k} | \dif \mu^{\eps,f_\eta}_P(u) \right\| _{\gS^1(\gH^{(k)})} \nonumber 
			\\ &\quad \leq \int_{P\gH}\|u\|_{L^2}^{2k}\,\big|\dif \mu^{\tau^{-1}}_{P,\tau}-\dif \mu^{\eps,f_\eta}_P\big|
			\\ &\quad \leq \left(\int_{P\gH}\|u\|_{L^2}^{4k}\,\dif \mu^{\tau^{-1}}_{P,\tau}(u)
			+\int_{P\gH}\|u\|_{L^2}^{4k}\,\dif \mu^{\eps,f_\eta}_P(u)\right)^{1/2}
			\|\mu^{\tau^{-1}}_{P,\tau}-\mu^{\eps,f_\eta}_P\|_{L^1(P\gH)}^{1/2}. \nonumber 
		\end{align}  
	We next bound the moments appearing above. By the definition of $\mu^{\eps,f_\eta}_P$,
		\begin{align}\label{density matrix convergence35}
			\int_{P\gH}\|u\|_{L^2}^{4k}\,\dif \mu^{\eps,f_\eta}_P(u)
			=\frac{\int_{\gH}\|Pu\|_{L^2}^{4k} e^{\cW_P^\eps(u)}f_\eta(\|Pu\|_{L^2}^2)\,\dif \mu_0(u)}{\int_{\gH} e^{\cW_P^\eps(u)}f_\eta(\|Pu\|_{L^2}^2)\,\dif \mu_0(u)}
			\leq \cK^{4k}.
		\end{align}
		For the moment associated with the lower symbol, applying \eqref{de finetti error trace norm localized} with $2k$ in place of $k$, we obtain for all sufficiently large $\tau$ and $\Lambda_e$ satisfying \eqref{relation:para},
			\begin{align}\label{density matrix convergence36}
				\int_{P\gH} &\|u\|_{L^2}^{4k} \dif \mu_{P,\tau}^{\tau^{-1}}(u)  =\tr\left(\int_{P\gH}|u^{\otimes 2k}\rangle\langle u^{\otimes 2k}|\,\dif \mu^{\tau^{-1}}_{P,\tau}(u)\right) \nonumber
				\\ &\leq \frac{(2k)!}{\tau^{2k}} \tr \left( \big((\Gamma_{\tau}^{f_\eta})_P\big)^{(2k)} \right) +	\left\| 	\frac{(2k)!}{\tau^{2k}} \big((\Gamma_{\tau}^{f_\eta})_P\big)^{(2k)} -\int_{P \gH} |u^{\otimes {2k}}\rangle  \langle u^{\otimes {2k}} | \dif \mu^{\tau^{-1}}_{P,\tau}(u)  \right\|_{\gS^1(\gH^{(2k)})}
				\\ &\leq  \tr_{\gF(\gH)} \left( \left( \frac{\cN}{\tau}\right)^{2k} \Gamma_{\tau}^{f_\eta} \right) + C_{k,\cK} \leq \cK^{4k} +C_{k,\cK} . \nonumber
			\end{align}
It thus suffices to derive an $L^1$-comparison between $\mu_{P,\tau}^{\tau^{-1}}$ and $\mu^{\eps,f_\eta}_P$. We assert that, for sufficiently large $\tau$ and $\Lambda_e$ satisfying \eqref{relation:para},
		\begin{align}\label{density matrix convergence30}
			\| \mu_{P,\tau}^{\tau^{-1}}-  \mu^{\eps,f_\eta}_P \|_{L^1(P\gH)} \leq C  \Lambda_e^{-\frac{1}{16}} .
		\end{align}
		Using the variational principle \eqref{variational principle quantum}, the Berezin-Lieb inequality \eqref{eq:Berezin-Lieb}, and invoking again \eqref{variation} with \eqref{conclusion:lower bound}, we obtain that for $R=\cK^2+1$,
			\begin{align}\label{density matrix convergence31}
				-\log  \frac{\cZ^{f_\eta}_{\tau}}{\mathcal{Z}^{f_\eta}_{\tau,0}}  &		\geq \cH_{\mathrm{cl}} (\mu_{P,\tau}^{\tau^{-1}}, \mu_{P,0}^{\tau^{-1}})-  \int_{P\gH}  \cW^\eps_P \mathds{1}_{\{\|u\|_{L^2}^2 \leq R \}}  \dif \mu_{P,\tau}^{\tau^{-1}}(u)-C\Lambda_e^{-\frac18}  \nonumber
				\\ & = \cH_{\mathrm{cl}} (\mu_{P,\tau}^{\tau^{-1}}, \mu')
				-\log \left( \int_{P\gH} e^{\cW_P^\eps (u) \mathds{1}_{\{\|u\|_{L^2}^2 \leq R \}}   } \dif   \mu_{P,0}^{\tau^{-1}}(u)\right)-C\Lambda_e^{-\frac18} 
				\\ & \geq  \cH_{\mathrm{cl}} (\mu_{P,\tau}^{\tau^{-1}}, \mu')-\log \left( \int_{P\gH} e^{\cW_P^\eps (u)} \dif  \mu^{f_\eta}_{0,P}(u)\right)- C \Lambda_e^{-\frac18} , \nonumber
			\end{align}
		where we used \eqref{conclusion:lower bound} in the last inequality and defined
		\begin{align*}
			\dif  \mu'(u):=\frac{e^{\cW_P^\eps(u)\mathds{1}_{\{\|u\|_{L^2}^2 \leq R \}}  } \dif   \mu_{P,0}^{\tau^{-1}}(u)}{\int_{P\gH} e^{\cW_P^\eps(v)\mathds{1}_{\{\|v\|_{L^2}^2 \leq R \}}  } \dif   \mu_{P,0}^{\tau^{-1}}(v)}.
		\end{align*}
		By combining \eqref{density matrix convergence31} with the free energy upper bound \eqref{ineq:upper bound}, and using Pinsker's inequality (see, e.g., \cite[Sec. 2]{CL14}), we obtain, for sufficiently large $\tau>0$ and $\Lambda_e>0$,
		\begin{align}\label{density matrix convergence32}
			\| \mu_{P,\tau}^{\tau^{-1}}-   \mu'  \|_{L^1(P\gH)} \leq \sqrt{2}  \cH_{\mathrm{cl}} (\mu_{P,\tau}^{\tau^{-1}}, \mu')^{\frac12} \leq C  \Lambda_e^{-\frac{1}{16}} .
		\end{align}
		Next, we compare $\mu'$ with $ \mu^{\eps,f_\eta}_P$:
			\begin{align}\label{density matrix convergence33}
					\| \mu' - \mu^{\eps,f_\eta}_P   \|_{L^1(P\gH)} & \leq \frac{1}{\int_{P\gH} e^{\cW_P^\eps \mathds{1}_{\{\|v\|_{L^2}^2 \leq R \}} } \dif   \mu_{P,0}^{\tau^{-1}}(v)} \left\|  e^{\cW_P^\eps \mathds{1}_{\{\|u\|_{L^2}^2 \leq R \}} }    \dif   \mu_{P,0}^{\tau^{-1}}-e^{\cW_P^\eps \mathds{1}_{\{\|u\|_{L^2}^2 \leq R \}} }   \dif \mu^{f_\eta}_{0,P} \right\|_{L^1(P\gH)}   \nonumber
				\\ & +\left| \frac{1}{\int_{P\gH} e^{\cW_P^\eps \mathds{1}_{\{\|v\|_{L^2}^2 \leq R \}} } \dif   \mu_{P,0}^{\tau^{-1}}(v)} - \frac{1}{\int_{P\gH} e^{\cW_P^\eps(v)} \dif   \mu^{f_\eta}_{0,P} (v)} \right| \int_{P\gH} e^{\cW_P^\eps(v)\mathds{1}_{\{\|v\|_{L^2}^2 \leq R \}} } \dif   \mu^{f_\eta}_{0,P} (v) \nonumber
				\\ & +\frac{1}{\int_{P\gH} e^{\cW_P^\eps(v)} \dif   \mu^{f_\eta}_{0,P} (v)} \int_{P\gH} \big| e^{\cW_P^\eps(v)\mathds{1}_{\{\|v\|_{L^2}^2 \leq R \}} } - e^{\cW_P^\eps(v) } \big| \dif   \mu^{f_\eta}_{0,P} (v)  .
					\end{align}
					The last term on the right-hand side vanishes because $  \mu^{f_\eta}_{0,P}$ is supported on $\{\|v\|_{L^2}^2 \leq \cK^2 \}$ and $R>\cK^2$. By $\int_{P\gH} e^{\cW_P^\eps \mathds{1}_{\{\|v\|_{L^2}^2 \leq R \}} } \dif   \mu_{P,0}^{\tau^{-1}}(v)\geq 1$ and the same calculations as in \eqref{lower:holder1}, the first term is controlled by
					\begin{align*}
						 \int_{P\gH}  e^{\cW_P^\eps \mathds{1}_{\{\|u\|_{L^2}^2 \leq R \}} }  \big|   \dif   \mu_{P,0}^{\tau^{-1}} -  \dif \mu^{f_\eta}_{0,P} \big| \leq e^{\eps^{-2} \|w\|_{L^\infty}^2R^3} \|\mu_{P,0}^{\tau^{-1}}-\mu^{f_\eta}_{0,P}\|_{L^1(P\gH)} \leq \Lambda_e^{-\frac18}.
					\end{align*}
					For the same reason, the second term on the right-hand side of \eqref{density matrix convergence33} is bounded by
	\begin{align*}
&	\left| 	\int_{P\gH} e^{\cW_P^\eps \mathds{1}_{\{\|v\|_{L^2}^2 \leq R \}} } \dif   \mu_{P,0}^{\tau^{-1}}(v)- 	\int_{P\gH} e^{\cW_P^\eps  } \dif   \mu^{f_\eta}_{0,P}(v) \right| 
	\\ &\leq  \int_{P\gH}  e^{\cW_P^\eps \mathds{1}_{\{\|u\|_{L^2}^2 \leq R \}} }  \big|   \dif   \mu_{P,0}^{\tau^{-1}} -  \dif \mu^{f_\eta}_{0,P} \big|+ \int_{P\gH} \big| e^{\cW_P^\eps(v)\mathds{1}_{\{\|v\|_{L^2}^2 \leq R \}} } - e^{\cW_P^\eps(v) } \big| \dif   \mu^{f_\eta}_{0,P} (v)  \leq \Lambda_e^{-\frac18}.
	\end{align*}					
		Collecting the above bounds together with \eqref{density matrix convergence32} and \eqref{density matrix convergence33}, the bound \eqref{density matrix convergence30} follows. Inserting \eqref{density matrix convergence35}--\eqref{density matrix convergence30} into \eqref{weighted tv operator bound}, we obtain \eqref{density matrix convergence3}.
		
\textbf{Step 2: removal of the projection on the quantum side.} We claim that, for all sufficiently large $\tau$ and $\Lambda_e$ satisfying \eqref{relation:para}, one has
\begin{align}\label{quantum tail claim trace norm}
		\left\| \frac{k!}{\tau^k} (\Gamma_{\tau}^{f_\eta})^{(k)} - \frac{k!}{\tau^k}\big((\Gamma_{\tau}^{f_\eta})_P\big)^{(k)}  \right\| _{\gS^1(\gH^{(k)})}  \leq O\big(\Lambda_e^{-\frac{1}{16}}\big).
\end{align}
Indeed, since $\big((\Gamma_{\tau}^{f_\eta})_P\big)^{(k)}=P^{\otimes k}( \Gamma^{f_\eta}_{\tau} \big)^{(k)} P^{\otimes k}$, the same argument as in \eqref{CS:ineq}--\eqref{CS:ineq4} yields
\begin{align*}
	&	\left\| \frac{k!}{\tau^k} (\Gamma_{\tau}^{f_\eta})^{(k)} - \frac{k!}{\tau^k}\big((\Gamma_{\tau}^{f_\eta})_P\big)^{(k)}  \right\| _{\gS^1(\gH^{(k)})} \leq \tr \left( (	\mathds{1}^{\otimes k}-P^{\otimes k}) \frac{k!}{\tau^k} (\Gamma_{\tau}^{f_\eta})^{(k)} \right)
		\\ &\qquad \qquad \qquad \qquad\qquad \qquad \qquad\qquad \qquad  +  2\sqrt{ \tr \left( (	\mathds{1}^{\otimes k}-P^{\otimes k}) \frac{k!}{\tau^k} (\Gamma_{\tau}^{f_\eta})^{(k)} \right) \tr \left( \frac{k!}{\tau^k} (\Gamma_{\tau}^{f_\eta})^{(k)} \right)}
		\\ &\qquad \qquad \leq  \tr \left( (	\mathds{1}^{\otimes k}-P^{\otimes k}) \frac{k!}{\tau^k} (\Gamma_{\tau}^{f_\eta})^{(k)} \right)+ 2\cK^k \sqrt{ \tr \left( (	\mathds{1}^{\otimes k}-P^{\otimes k}) \frac{k!}{\tau^k} (\Gamma_{\tau}^{f_\eta})^{(k)} \right) }.
\end{align*}
		It therefore remains to estimate the first trace term. To this end, we use the elementary inequality
		\begin{align*}
			\mathds{1}^{\otimes k}-P^{\otimes k}\leq \sum_{j=1}^k \mathds{1}^{\otimes(j-1)}\otimes Q\otimes \mathds{1}^{\otimes(k-j)},
		\end{align*} 
		and then argue exactly as in the proof of \eqref{WP-WGammatau}, with the $3$-body density matrix replaced by the $k$-body one. This yields
		\begin{align}\label{first quantum tail estimate}
			 \tr & \left( (	\mathds{1}^{\otimes k}-P^{\otimes k}) \frac{k!}{\tau^k} (\Gamma_{\tau}^{f_\eta})^{(k)} \right) \leq  \frac{k!}{\tau^k} \tr\left[\left(\sum_{j=1}^k \mathds{1}^{\otimes(j-1)}\otimes Q\otimes \mathds{1}^{\otimes(k-j)}\right)(\Gamma^{f_\eta}_{\tau})^{(k)}\right] \nonumber
				\\&	\lesssim_k \frac{1}{\tau} \tr_{\gF(\gH)} \left( \cU^* \big(\mathds{1}_{\gF(P\gH)} \otimes  \cN_Q \big) \cU \left( \frac{\cN}{\tau} \right)^{k-1} \Gamma^{f_\eta}_{\tau} \right)
		\\ &	\lesssim_k \frac{1}{\tau}  \cK^{2k-2}   \tr_{\gF(\gH)}  \left( \cU^* \big(\mathds{1}_{\gF(P\gH)} \otimes  \cN_Q \big) \cU \Gamma^{f_\eta}_{\tau} \right)  
		=  \frac{ C_k }{\tau}  \cK^{2k-2}  \tr \left( Q (\Gamma^{f_\eta}_{\tau})^{(1)}  \right)   \nonumber
			\\ &\lesssim_{k,\cK} \frac{1}{\tau}  \left| \tr \left( Q \big((\Gamma^{f_\eta}_{\tau})^{(1)}- \Gamma_{\tau,0}^{(1)} \big) \right) \right|+ \frac{1}{\tau}   \tr \left( Q\Gamma_{\tau,0}^{(1)} \right)  \leq C \Lambda_e^{-\frac18}, \nonumber
		\end{align}
where the last inequality follows from the same one-body tail estimates as those used in \eqref{Wp-W:second}--\eqref{control_relative_entropy:11}. This proves \eqref{quantum tail claim trace norm}.
		
	\textbf{Step 3: quantitative comparison between the finite-dimensional Hartree operator and the full Hartree operator.}
		Following the strategy used in the proof of Proposition~\ref{pro:cauchy:seq1}, we claim that, for all sufficiently large $\tau$ and $\Lambda_e$ satisfying \eqref{relation:para},
		\begin{align}\label{overline:mu:P:to:infty}
			\left\| \int_{P\gH} |u^{\otimes k}\rangle  \langle u^{\otimes k} | \dif \mu^{\eps,f_\eta}_P(u) -  \int_{\gH} |u^{\otimes k}\rangle  \langle u^{\otimes k} | \dif \mu^{\eps,f_\eta}(u)  \right\| _{\gS^1(\gH^{(k)})}  \leq O\big(\Lambda_e^{-\frac{1}{32}}\big).
		\end{align}
	Starting from \eqref{density matrix convergence} and \eqref{quantum tail claim trace norm}, it follows that
		\begin{align}\label{P:to:infty:new1}
		\left\| \frac{k!}{\tau^k} (\Gamma_{\tau}^{f_\eta})^{(k)} - \int_{P\gH} |u^{\otimes k}\rangle  \langle u^{\otimes k} | \dif \mu^{\eps,f_\eta}_P(u) \right\| _{\gS^1(\gH^{(k)})}  \leq O\big(\Lambda_e^{-\frac{1}{32}}\big) ,
		\end{align}
		provided that $\tau \to \infty$ sufficiently fast satisfying $   \Lambda_e\leq \tau^{\frac14}$. For any given $\overline{\Lambda}_e\geq \Lambda_e$ and $\overline{P}=\mathds{1}(h\leq \overline{\Lambda}_e)$, by the argument as in \eqref{P:to:infty2}--\eqref{P:to:infty4}, we may choose the same sufficiently large $\tau$ such that
		\begin{align*}
			\left\| \frac{k!}{\tau^k} (\Gamma_{\tau}^{f_\eta})^{(k)} - \int_{\overline{P}\gH} |u^{\otimes k}\rangle  \langle u^{\otimes k} | \dif \mu^{\eps,f_\eta}_{\overline{P}}(u) \right\| _{\gS^1(\gH^{(k)})} \leq O \left( \overline{\Lambda}_e^{-\frac{1}{32}}\right),
		\end{align*}
		and \eqref{P:to:infty:new1} hold simultaneously. Thus by the triangle inequality, we find that for all $\overline{\Lambda}_e\geq \Lambda_e\to \infty$,
		\begin{align}\label{P:to:infty:new}
			\left\| \int_{P\gH} |u^{\otimes k}\rangle  \langle u^{\otimes k} | \dif \mu^{\eps,f_\eta}_P(u)- \int_{\overline{P}\gH} |u^{\otimes k}\rangle  \langle u^{\otimes k} | \dif \mu^{\eps,f_\eta}_{\overline{P}}(u) \right\| _{\gS^1(\gH^{(k)})} \leq O \left( \Lambda_e^{-\frac{1}{32}}\right).
		\end{align}
		Thus $ \int_{P\gH} |u^{\otimes k}\rangle  \langle u^{\otimes k} | \dif \mu^{\eps,f_\eta}_P(u)$ is a Cauchy family in $\mathfrak{S}^1(\gH^{(k)})$. Then, by the definitions of $\mu^{\eps,f_\eta}_P$ and $\mu^{\eps,f_\eta}$, we obtain
			\begin{align}\label{P:to:infty:newS1}
			&\left\| \int_{P\gH} |u^{\otimes k}\rangle  \langle u^{\otimes k} | \dif \mu^{\eps,f_\eta}_P(u)- \int_{\gH} |u^{\otimes k}\rangle  \langle u^{\otimes k} | \dif \mu^{\eps,f_\eta}(u) \right\| _{\gS^1(\gH^{(k)})}
			\\ &\leq \left| \frac{1}{\int_{\gH} e^{\cW_P^\eps(u)}f_\eta(\|Pu\|_{L^2}^2) \dif \mu_0(u)} -  \frac{1}{\int_{\gH} e^{\cW^\eps(u)}f_\eta(\|u\|_{L^2}^2) \dif \mu_0(u)} \right| \int_{\gH} \|Pu\|_{L^2}^{2k} e^{\cW_P^\eps(u)}f_\eta(\|Pu\|_{L^2}^2) \dif \mu_0(u) \nonumber
			 \\ & \quad +\frac{ \left\| \int_{\gH } \big( |(Pu)^{\otimes k}\rangle  \langle (Pu)^{\otimes k} | e^{\cW_P^\eps(u)}f_\eta(\|Pu\|_{L^2}^2) -  |u^{\otimes k}\rangle  \langle u^{\otimes k} | e^{\cW^\eps(u)}f_\eta(\|u\|_{L^2}^2)\big)\dif \mu_0(u)   \right\| _{\gS^1(\gH^{(k)})} }{\int_{\gH} e^{\cW^\eps(u)}f_\eta(\|u\|_{L^2}^2) \dif \mu_0(u)}  \nonumber
			 \\ & \leq  \int_{\gH} \left| e^{\cW_P^\eps(u)}f_\eta(\|Pu\|_{L^2}^2)-  e^{\cW^\eps(u)}f_\eta(\|u\|_{L^2}^2) \right|  \dif \mu_0(u) \frac{\int_{\gH} \|Pu\|_{L^2}^{2k} \dif \mu^{\eps,f_\eta}_P(u)}{\int_{\gH} e^{\cW^\eps(u)}f_\eta(\|u\|_{L^2}^2) \dif \mu_0(u)}  \nonumber
			 \\ &  \quad +\frac{\int_{\gH } \big\| |(Pu)^{\otimes k}\rangle  \langle (Pu)^{\otimes k} | e^{\cW_P^\eps(u)}f_\eta(\|Pu\|_{L^2}^2) -  |u^{\otimes k}\rangle  \langle u^{\otimes k} | e^{\cW^\eps(u)}f_\eta(\|u\|_{L^2}^2)  \big\| _{\gS^1(\gH^{(k)})} \dif \mu_0(u)}{\int_{\gH} e^{\cW^\eps(u)}f_\eta(\|u\|_{L^2}^2) \dif \mu_0(u)} . \nonumber
		\end{align}
		To estimate these two terms, we first observe that for any $\eps \in (0,1)$, $\eta\in(0,\frac12 \cK^2)$ and $\Lambda_e>0$,
		\begin{align*}
			\int e^{\cW^\eps} f_\eta(\|u\|_{L^2}^2) \dif   \mu_{0}(u) &\geq \int   f_\eta(\|u\|_{L^2}^2)  \dif   \mu_{0}(u)  \geq  \int  \mathds{1}_{\{\|u\|_{L^2}^2 \leq \cK^2/2\}}  \dif   \mu_{0}(u)>0,
			\\ \int_{\gH} \|Pu\|_{L^2}^{2k} \dif \mu^{\eps,f_\eta}_P(u)& \leq  \cK^{2k}.
		\end{align*}
		Moreover, from \eqref{P:to:infty5} and  \eqref{P:to:infty6}, it holds that 	$\mu_0$-almost surely,
		\begin{align*}
			& \left\| |(Pu)^{\otimes k}\rangle  \langle (Pu)^{\otimes k} | e^{\cW_P^\eps(u)}f_\eta(\|Pu\|_{L^2}^2) -  |u^{\otimes k}\rangle  \langle u^{\otimes k} | e^{\cW^\eps(u)}f_\eta(\|u\|_{L^2}^2)  \right\| _{\gS^1(\gH^{(k)})} 
			 \\ &\quad  \leq 2 \|u\|_{L^2}^{k} \|(Pu)^{\otimes k} -u^{\otimes k}\|_{L^2(\gH^{(k)})} e^{\eps^{-2}\|w\|_{L^\infty}^2\cK^6}\|f_\eta\|_{L^\infty} 
			  \\ & \qquad  +  \|u\|_{L^2}^{2k} \big| e^{\cW_P^\eps(u)}f_\eta(\|Pu\|_{L^2}^2) -e^{\cW^\eps(u)}f_\eta(\|u\|_{L^2}^2)  \big|  \xrightarrow{\Lambda_e \to \infty} 0,
		\end{align*}
		and
		\begin{align*}
		&\left\| |(Pu)^{\otimes k}\rangle  \langle (Pu)^{\otimes k} | e^{\cW_P^\eps(u)}f_\eta(\|Pu\|_{L^2}^2) -  |u^{\otimes k}\rangle  \langle u^{\otimes k} | e^{\cW^\eps(u)}f_\eta(\|u\|_{L^2}^2)  \right\| _{\gS^1(\gH^{(k)})} 
		\\ &\quad 
			\leq 2\cK^{2k}e^{\eps^{-2}\|w\|_{L^\infty}^2\cK^6}\|f_\eta\|_{L^\infty}.
		\end{align*}
	Therefore, by the dominated convergence theorem, the second term on the right-hand side of \eqref{P:to:infty:newS1} converges to zero as $\Lambda_e \to \infty$. The first term can be treated in exactly the same way and also converges to zero. Hence, for every given $\eps\in (0,1)$,
	\begin{align*}
		\int_{P\gH} |u^{\otimes k}\rangle  \langle u^{\otimes k} | \dif \mu^{\eps,f_\eta}_P(u) \xrightarrow{\Lambda_e \to \infty} \int_{\gH} |u^{\otimes k}\rangle  \langle u^{\otimes k} | \dif \mu^{\eps,f_\eta}(u) ,\qquad \mathrm{in} \quad \gS^1(\gH^{(k)}).
	\end{align*}
Together with \eqref{P:to:infty:new}, this proves \eqref{overline:mu:P:to:infty} by letting $\overline{\Lambda}_e \to \infty$.
		Finally, combining \eqref{density matrix convergence}, \eqref{quantum tail claim trace norm}, and \eqref{overline:mu:P:to:infty}, and letting $\tau \to \infty$ and $\Lambda_e \to \infty$, we conclude the proof of \eqref{key:bound2}.
	\end{proof}
	The proof of Proposition~\ref{Thm:Hartree} applies verbatim to any fixed
	non-trivial $g\in C_c^\infty([0,\infty);\R_+)$. We can now conclude the proof of Theorem~\ref{Rmk:Hartree}.
\begin{proof}[Proof of Theorem~\ref{Rmk:Hartree}]
	We first observe that Proposition~\ref{Thm:Hartree} applies directly in the present setting. Indeed, in Theorem~\ref{Rmk:Hartree}, both $\eps>0$ and the cutoff $g\in C_c^\infty([0,\infty);\R_+)$ are fixed, so that the only remaining limit is $\tau \to \infty$. Therefore, the constraints in Proposition~\ref{Thm:Hartree} are automatically satisfied, and we obtain
	\begin{align*}
		\left| \frac{\mathcal{Z}^{g}_{\tau}}{\mathcal{Z}^{g}_{\tau,0}} - \frac{\int_{\gH} e^{\cW^\eps}  g( \|u\|^2_{L^2})\dif \mu_{0}(u)  }{\int_{\gH}  g( \|u\|^2_{L^2}) \dif \mu_{0}(u) } \right| \to 0.
	\end{align*}
	Together with \eqref{convergence:for:any:g}, this implies
	\begin{align*}
		 \frac{\mathcal{Z}^{g}_{\tau}}{\mathcal{Z}_{\tau,0}} = \frac{\mathcal{Z}^{g}_{\tau}}{\mathcal{Z}^{g}_{\tau,0}} \cdot   \tr_{\gF(\gH)} \left( \Gamma_{\tau,0} g\left( \cN/\tau \right) \right)  \xrightarrow{\tau \to \infty}  \int_{\gH} e^{\cW^\eps}  \dif \mu^g_{0}(u)   \cdot \int_{\gH}  g( \|u\|^2_{L^2}) \dif \mu_{0}(u)   = \int_{\gH} e^{\cW^\eps}  g( \|u\|^2_{L^2})\dif \mu_{0}(u)  .
	\end{align*}
	The convergence of the reduced density matrices is obtained exactly as in the proof of \eqref{key:bound2}.
\end{proof}
	
		\section{From quantum model to Hartree measure: mass-subcritical regime}\label{sec:mass:subcritical:regime}		
		This section aims to prove the quantum-to-Hartree convergence in the mass-subcritical regime. The proof follows the same variational strategy as in Section~\ref{sec3:lower:upper:free-energy}, but the subcritical assumption allows one to improve the estimates at the points where Section~\ref{sec3:lower:upper:free-energy} relied on pointwise exponential bounds, see also Remarks~\ref{rmk:upper:exponential}, ~\ref{rmk:upper:exponential2}, and ~\ref{rmk:lower:exponential}. For the lower bound on the relative free energy, the only change concerns the final estimate of the weighted integral against the lower symbol. In the critical regime, this term was controlled by the rough bound in \eqref{lower:holder1}, yielding an exponential loss of $\eps^{-2}$. In the present subcritical regime, this loss can be avoided by adjusting the tail region in Lemma~\ref{lemma:interacting:tail} and using the uniform projected Oh-Sosoe-Tolomeo bound in \cite[Sec. 4]{OST22} instead.
		A similar refinement applies to the upper bound: we improve the estimates in Lemma~\ref{lemma:relative entropy} and in the subsequent semiclassical analysis, in particular \eqref{first_hard_term1} and \eqref{second_easy_term} by replacing the pointwise exponential bounds with uniform moment bounds. The main result of this section is the following proposition.
			\begin{proposition}\label{Pro:Hartree}
			Let $w$ satisfy Assumption~\ref{assum:on:w}, and let $\cK\in(0,\cK_c)$ be arbitrary. For all sufficiently large $\tau>0$, let $\eta \in \big[ \tau^{-\frac{1}{64}} ,\frac{1}{2}\cK^2  \big)$, $\eps\in[   \tau^{-\frac{1}{96}},1)$, and let the cutoff $f_\eta$ satisfy Assumption~\ref{assum:cutoff} with cutoff parameter $\cK$.
			Denote by $\cZ^{f_\eta}_{\tau}$ and $\mathcal{Z}^{f_\eta}_{\tau,0}$ the associated partition functions. Then, as $\tau \to \infty$,
			\begin{align}\label{key:bound:sub1}
				\left| \frac{\cZ^{f_\eta}_{\tau}}{\cZ^{f_\eta}_{\tau,0}} - \frac{\int_{\gH} e^{ \cW^\eps(u) } f_\eta(\|u\|_{L^2}^2) \dif \mu_0(u)}{\int_{\gH} f_\eta(\|u\|_{L^2}^2) \dif \mu_0(u) } \right| \to 0,
			\end{align}
			where $	\cW^\eps$ is the Hartree interaction defined in \eqref{Hartree:interaction}. Moreover, for all $k\geq 1$,
			\begin{align}\label{key:bound:sub2}
				\left\| \frac{k!}{\tau^k}(\Gamma_{\tau}^{f_\eta})^{(k)} - \int_{\gH} |u^{\otimes k}\rangle  \langle u^{\otimes k} | \dif \mu^{\eps,f_\eta}(u) \right\| _{\gS^1(\gH^{(k)})}  \to 0,
			\end{align}
			where $	\mu^{\eps,{f_\eta}}$ is the Hartree measure defined in \eqref{def:Hartree:measure}.
		\end{proposition}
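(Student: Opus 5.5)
We will follow the scheme of Section~\ref{sec3:lower:upper:free-energy} verbatim. We prove localized upper and lower bounds of the form \eqref{ineq:upper bound} and \eqref{ineq:lower bound}, then a Cauchy argument as in Proposition~\ref{pro:cauchy:seq1}, and finally the density-matrix argument of \eqref{key:bound2}. The change is that we replace every place where the pointwise bound $e^{\cW_P^\eps}\le e^{C\eps^{-2}}$ was used (Remarks~\ref{rmk:upper:exponential}, \ref{rmk:upper:exponential2}, \ref{rmk:lower:exponential}) by a bound uniform in $\eps$ and $\Lambda_e$. We will now only require a polynomial relation of the form $\eta^{-1}\eps^{-c}\le \Lambda_e^{\frac18}\le\tau^{\frac1{32}}$, which is compatible with $\eps\ge\tau^{-1/96}$ and $\eta\ge\tau^{-1/64}$.

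\textbf{Classical input.} Fix an intermediate mass $\cK_s$ with $\cK<\cK_s<\cK_c$. The key tool is the uniform projected Oh--Sosoe--Tolomeo bound of Appendix~\ref{sec:appendix:B}: for some $p>1$,
\begin{align*}
\sup_{\eps\in(0,1),\,\Lambda_e\geq1}\int e^{p\,\cW_P^\eps(u)}\mathds{1}_{\{\|Pu\|_{L^2}\le \cK_s\}}\dif\mu_0(u)\le C(\cK_s).
\end{align*}
Two points enter here. First, $\cW_P^\eps(u)\le \cW(Pu)$ by Young's inequality, since $\|w^\eps\|_{L^1}=1$. Second, $p$ can be taken slightly above $1$ because $p^{1/2}\cK_s<\cK_c$ by mass-criticality scaling.

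\textbf{Lower bound.} We apply Lemma~\ref{lemma:interacting:tail} with $R=\cK_s^2>\cK^2$ instead of $\cK^2+1$. This gives \eqref{lower:holder3} with the indicator $\mathds{1}_{\{\|u\|^2\le\cK_s^2\}}$. In \eqref{lower:holder1} we then estimate
\begin{align*}
\int e^{\cW_P^\eps\mathds 1_{\{\|u\|^2\le \cK_s^2\}}}\,|\dif\mu_{P,0}^{\tau^{-1}}-\dif\mu_{0,P}^{f_\eta}|
\end{align*}
by Hölder's inequality. We use $L^p$ of the weight against $\mu_{P,0}^{\tau^{-1}}+\mu^{f_\eta}_{0,P}$ and the $L^1$ bound of Lemma~\ref{lemma:L1:norm} raised to the power $1/p'$. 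The $L^p$ moment against $\mu^{f_\eta}_{0,P}$ is bounded by the input above. Against $\mu_{P,0}^{\tau^{-1}}$, we use the pointwise domination obtained in \eqref{same NZZ}, namely $\mu_{P,0}^{\tau^{-1}}\le C e^{\Lambda_e^2\tau^{-1}\|Pu\|^2}\mu_{0,P}$ on the bounded-mass region (up to the constant $1/C(\cK)$). This yields a polynomial error $\Lambda_e^{-1/(4p')}$ with no $\eps$-loss beyond the $\eps^{-2}e^{-c\tau}$ tail.

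\textbf{Upper bound.} This is the \emph{main obstacle}. Lemma~\ref{errors:localization_interaction} already costs only powers of $\eps^{-1}$. In Lemma~\ref{lemma:relative entropy}, \eqref{first part in Gamma_tau,P-Gamma_1.6} must be redone without pulling out $\|e^{\bbW_{\tau,P,\cU}}\|$. We plan to factorize via $\cU$: the $Q$-part contributes $\tr(\cN_Q/\tau\,\Gamma_{\tau,0,Q})\lesssim\Lambda_e^{-1/2}$. The $P$-part becomes the interacting truncated partition function $\tr_{\gF(P\gH)}(e^{-\bbH_{\tau,P}}\mathds 1_{\{\cN_P\le\cK^2\tau\}})/\cZ_{\tau,0,P}$. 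We bound the latter uniformly by an upper semiclassical (Berezin--Lieb/coherent-state upper symbol) estimate, giving $\int e^{\cW^\eps_P}\mathds 1_{\{\|u\|\le\cK_s\}}\dif\mu_{0,P}$ plus an exponentially small Gamma-tail, as in Lemma~\ref{lemma:interacting:tail}, for masses in $(\cK^2,\cK_s^2]$. The input then bounds this uniformly. The uncertain part is controlling the upper-symbol/lower-symbol discrepancy of $\bbW_{\tau,P}$ for the exponential itself. If a direct bound fails, we would try Golden--Thompson combined with the Peierls--Bogoliubov lower bound already used in Step~3a, or a Hölder-in-trace interpolation that splits $e^{\bbW}$ into a $p$-power. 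Similarly, in \eqref{first_hard_term1} and \eqref{second_easy_term} the factor $e^{\eps^{-2}\cK^6}$ is replaced by $C(\cK_s)$ through the input. With these uniform bounds, the remaining steps, namely \eqref{finite_analysis1}--\eqref{finite_analysis3}, the Cauchy argument, and the proofs of \eqref{key:bound:sub1} and \eqref{key:bound:sub2} (where $e^{\cW^\eps}$ integrals are again uniformly bounded), go through unchanged, yielding errors polynomial in $\Lambda_e^{-1}$ and $\eps^{-1}$.
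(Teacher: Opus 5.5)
Your lower bound and overall architecture follow the paper. This includes the factorization through $\cU$, which reduces Lemma~\ref{lemma:relative entropy} to a bound on $\tr_{\gF(P\gH)}\big(e^{-\bbH_{\tau,P}}\mathds{1}_{\{\cN_P\le\cK^2\tau\}}\big)$. The gap is at the step you call the main obstacle. You never establish a bound on this truncated interacting partition function (relative to the free one) that is uniform in $\eps$, and that bound is the entire subcritical improvement of the upper bound.

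\emph{Why your route is incomplete.} The Fock-space upper-symbol Berezin--Lieb inequality requires an anti-Wick operator, and the sharp projection $\mathds{1}_{\{\cN_P\le\cK^2\tau\}}$ is not one. To use it, you would have to dominate the projection by $e^{c-\mathrm{OP}_{\rm aw}(G)}$ for a radial penalty $G$. That $G$ must vanish below some mass in $(\cK^2,\cK_s^2)$, yet grow like $\eps^{-2}\|u\|^6$ beyond $\cK_s^2$ so that it beats the cubic attraction. You would then also need to control the lower-order terms of the anti-Wick symbol of $\bbW_{\tau,P}$. None of this is in the proposal.

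\emph{Why the fallbacks fail.} Golden--Thompson is exactly where the critical argument lost $\|e^{\bbW_{\tau,P,\cU}}\mathds{1}_{\{\cN_{P,\cU}\le\cK^2\tau\}}\|\le e^{\eps^{-2}\|w\|_{L^\infty}^2\cK^6}$. Peierls--Bogoliubov only bounds traces of exponentials from below, which is the wrong direction. Any H\"older splitting of $e^{\bbW}$ inside the trace again needs an operator-norm bound on $e^{\bbW}$ on the truncated sector.

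\emph{The missing observation.} An upper bound on a partition-function ratio is a lower bound on a relative free energy, so the machinery of your own lower-bound step applies verbatim. Let $\Gamma^{\mathds{1}_\cK}_{\tau,P}\propto e^{-\bbH_{\tau,P}}\mathds{1}_{\{\cN_P\le\cK^2\tau\}}$ and $\Gamma^{\mathds{1}_\cK}_{\tau,0,P}\propto e^{-\bbH_{\tau,0,P}}\mathds{1}_{\{\cN_P\le\cK^2\tau\}}$, with lower symbols $\nu_{P,\tau},\nu_{P,0}$ at scale $\tau^{-1}$.
\begin{itemize}
\item The Gibbs variational principle writes $-\log$ of the ratio of their partition functions as $\cH(\Gamma^{\mathds{1}_\cK}_{\tau,P},\Gamma^{\mathds{1}_\cK}_{\tau,0,P})-\tau^{-3}\tr\big(W^\eps_P(\Gamma^{\mathds{1}_\cK}_{\tau,P})^{(3)}\big)$.
\item Lemma~\ref{thm:rel-entropy} and \eqref{eq:Chiribella} bound this below by $\cH_{\mathrm{cl}}(\nu_{P,\tau},\nu_{P,0})-\int\cW^\eps_P\,\dif\nu_{P,\tau}$. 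The correction terms in \eqref{eq:Chiribella} have the favourable sign because $W^\eps_P\ge 0$.
\item The classical variational principle then gives a lower bound of $-\log\big(1+\int_{\{\|u\|^2\le\cK_s^2\}}e^{\cW^\eps_P}\dif\nu_{P,0}\big)-\int_{\{\|u\|^2>\cK_s^2\}}\cW^\eps_P\,\dif\nu_{P,\tau}$.
\item The last term is $O(\eps^{-2}e^{-c\tau})$ by Lemma~\ref{lemma:interacting:tail} with $R=\cK_s^2$, since the state lives on $\{\cN_P\le\cK^2\tau\}$.
\item The first term is bounded uniformly in $\eps$ using $\dif\nu_{P,0}\le Ce^{\Lambda_e^2\|u\|^2/\tau}\dif\mu_{0,P}$ together with your Oh--Sosoe--Tolomeo input.
\end{itemize}
No comparison between upper and lower symbols of $\bbW_{\tau,P}$ is ever needed. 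In your normalization by $\cZ_{\tau,0,P}$, the leftover free factor $\tr(\Gamma_{\tau,0,P}\mathds{1}_{\{\cN_P\le\cK^2\tau\}})$ is at most $1$. Lemma~\ref{lemma:relative entropy} then closes with error $C\eta^{-1}\eps^{-2}\Lambda_e^{-1/2}$.

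A minor point: the Gagliardo--Nirenberg scaling gives the condition $p^{1/4}\cK_s<\cK_c$, not $p^{1/2}\cK_s<\cK_c$. Either way some $p>1$ exists.
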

		\begin{proof}
			 The proof is divided into three steps, following the picture introduced in Section~\ref{sec3:lower:upper:free-energy}. Throughout the proof, we fix
	$\cK_s \in (\cK,\cK_c)$ and assume that $\tau$ and $\Lambda_e$ are sufficiently large. We also let $ \eta\in \big(0,\frac{1}{2}\cK^2\big)$ and $ \eps \in(0,1)$ such that
	\begin{align}\label{refined:relation}
		\eta^{-1} \leq \Lambda_e^{\frac{1}{8}}, \qquad  \eps^{-3}\leq \Lambda_e^{\frac{1}{8}}, \qquad  \Lambda_e\leq \tau^{\frac14}.
	\end{align}
	We first establish the lower bound on the free energy.
	
	\textbf{Step 1: Free energy lower bound in the mass-subcritical regime.}
	The key point in the subcritical regime is that the auxiliary mass parameter $\cK_s$ appearing in the tail estimate for the interacting lower symbol may be chosen strictly below the critical mass $\cK_c$. The large-mass contribution is still absorbed by Lemma~\ref{lemma:interacting:tail}, while the contribution from the complementary region can be controlled by the uniform Oh-Sosoe-Tolomeo bound in \cite[Sec. 4]{OST22}, rather than by the rough pointwise exponential estimate used in the critical case. Since the estimate \eqref{variation} from Subsection~\ref{sec:lower bound} is independent of pointwise exponential bounds, it gives, uniformly for all parameters satisfying \eqref{refined:relation},
			 	\begin{align*}
			 	-\log  \frac{\cZ^{f_\eta}_{\tau}}{\mathcal{Z}^{f_\eta}_{\tau,0}}  
			 	&\geq \cH_{\mathrm{cl}} (\mu_{P,\tau}^{\tau^{-1}}, \mu_{P,0}^{\tau^{-1}})-  \int_{P\gH}  \cW^\eps_P   \dif \mu_{P,\tau}^{\tau^{-1}}(u)-C\Lambda_e^{-\frac18} \nonumber
			 	\\ & \geq -\log \left( \int_{P\gH} e^{ \cW_P^\eps \mathds{1}_{\{\|u\|_{L^2}^2 \leq \cK_s ^2 \}} } \dif   \mu_{P,0}^{\tau^{-1}}(u) \right)- \int_{P\gH}  \cW^\eps_P \mathds{1}_{\{\|u\|_{L^2}^2 > \cK_s^2  \}}  \dif \mu_{P,\tau}^{\tau^{-1}}(u) -C\Lambda_e^{-\frac18} .
			 \end{align*}
			 For the mass-tail contribution, we apply Lemma~\ref{lemma:interacting:tail} to the Gibbs state $\Gamma_\tau^{f_\eta}$ with $R=\cK_s^2$. This yields, again uniformly for all parameters satisfying \eqref{refined:relation},
			 	\begin{align*}
			 		\int_{P\gH}  \cW^\eps_P \mathds{1}_{\{\|u\|_{L^2}^2 > \cK_s^2\}}  \dif \mu_{P,\tau}^{\tau^{-1}}(u)  & \leq \eps^{-2} \|w\|^2_{L^\infty} 	\int_{\{\|Pu\|_{L^2}^2>\cK_s^2\}}\|Pu\|_{L^2}^{6}\dif \mu_{P,\tau}^{\tau^{-1}}(u)   \\&\leq C(\cK,\cK_s,w)\eps^{-2} e^{-c(\cK,\cK_s)\tau} \leq C\Lambda_e^{-\frac18}.
			 	\end{align*}
		Combining the preceding two estimates, we obtain
			 \begin{align}\label{variation1}
			 		-\log  \frac{\cZ^{f_\eta}_{\tau}}{\mathcal{Z}^{f_\eta}_{\tau,0}}  \geq
			 		-\log \left( \int_{P\gH} e^{ \cW_P^\eps \mathds{1}_{\{\|u\|_{L^2}^2 \leq \cK_s^2 \}} } \dif   \mu_{P,0}^{\tau^{-1}}(u) \right)- C\Lambda_e^{-\frac18}. 
			 \end{align}
	It remains to compare the integral with respect to the free lower-symbol $ \mu_{P,0}^{\tau^{-1}}$ with the corresponding integral under the projected truncated Gaussian measure $ \mu^{f_\eta}_{0,P}$. We decompose
			 \begin{align}\label{lower:tail}
			 		\int_{P\gH} e^{ \cW_P^\eps \mathds{1}_{\{\|u\|_{L^2}^2 \leq \cK_s^2 \}} } \dif   \mu_{P,0}^{\tau^{-1}}(u) & \leq  \int_{P\gH} e^{ \cW_P^\eps} \dif   \mu^{f_\eta}_{0,P}(u) + \int_{P\gH} e^{ \cW_P^\eps \mathds{1}_{\{\|u\|_{L^2}^2 \leq \cK_s^2 \}} } | \dif   \mu_{P,0}^{\tau^{-1}}-  \dif   \mu^{f_\eta}_{0,P}| .
			 \end{align}
The following comparison is the only point where the subcritical argument differs from the lower-bound proof in Subsection~\ref{sec:lower bound}. Since $\cK_s<\cK_c$, the uniform projected Oh-Sosoe-Tolomeo bound gives a sufficiently small constant $\varsigma>0$ such that
\begin{align}\label{mian:appendix A}
	\sup_{\Lambda_e>0}   \int_{\gH} e^{\frac{1+\varsigma}{6}\|Pu\|_{L^6}^6} \mathds{1}_{\{  \|Pu\|_{L^2} \leq \cK_s \}} \dif \mu_{0}(u) <\infty.
\end{align}
For completeness, this estimate is proved in Appendix~\ref{sec:appendix:B}.
		The $L^1$-comparison estimate \eqref{L1:norm} is also independent of the pointwise exponential bound, and hence remains valid here: $\|\mu_{P,0}^{\tau^{-1}}-   \mu^{f_\eta}_{0,P} \|_{L^1(P\gH)}\leq C\Lambda_e^{-\frac14}$ whenever $\eta^{-1}\leq \Lambda_e^{\frac18}\leq \tau^{\frac{1}{32}}$. H\"older's inequality then gives
			\begin{align}\label{Holder:variation}
		&	\int_{P\gH} e^{ \cW_P^\eps \mathds{1}_{\{\|u\|_{L^2}^2 \leq \cK_s^2 \}} } | \dif   \mu_{P,0}^{\tau^{-1}}-  \dif   \mu^{f_\eta}_{0,P}|  \nonumber
			\\ &\leq  	\left( \int_{P\gH} e^{(1+\varsigma) \cW_P^\eps \mathds{1}_{\{\|u\|_{L^2}^2 \leq \cK_s^2 \}} } |\dif   \mu_{P,0}^{\tau^{-1}}-  \dif   \mu^{f_\eta}_{0,P}|\right)^{\frac{1}{1+\varsigma}}  \|\mu_{P,0}^{\tau^{-1}}-   \mu^{f_\eta}_{0,P} \|_{L^1(P\gH)}^{\frac{\varsigma}{1+\varsigma}} \nonumber
			\\ &\leq  	\left( \int_{P\gH}\big(  \mathds{1}_{\{\|u\|_{L^2}^2 \leq \cK_s^2 \}}  e^{(1+\varsigma) \cW_P^\eps } + \mathds{1}_{\{\|u\|_{L^2}^2 > \cK_s^2 \}}   \big)(\dif   \mu_{P,0}^{\tau^{-1}}+  \dif   \mu^{f_\eta}_{0,P})  \right)^{\frac{1}{1+\varsigma}} \Lambda_e^{-\frac{\varsigma}{4(1+\varsigma)}} \nonumber
			\\ &\leq \left( \int_{P\gH} \mathds{1}_{\{\|u\|_{L^2}^2 \leq \cK_s^2 \}}  e^{(1+\varsigma) \cW_P^\eps }  (\dif   \mu_{P,0}^{\tau^{-1}}+  \dif   \mu^{f_\eta}_{0,P}) +2  \right)^{\frac{1}{1+\varsigma}} \Lambda_e^{-\frac{\varsigma}{4(1+\varsigma)}}.
		\end{align}
	It remains to prove that the first factor on the right-hand side of \eqref{Holder:variation} is uniformly bounded. This is precisely where the mass-subcritical condition enters. By \cite[Lemma 2.3]{RS25} and $\|w^\eps\|_{L^1}=1$, for every $\eps >0$ we have
		\begin{align}\label{lemma2.3:RS25}
			\cW^\eps_P\leq \frac{1}{6} \|w^\eps\|_{L^1}^2 \|Pu\|_{L^6}^6=\frac{1}{6}  \|Pu\|_{L^6}^6.
		\end{align}
		Together with \eqref{mian:appendix A}, this implies
			\begin{align}\label{uniform:subcritical}
				\sup_{\eps>0,\, \Lambda_e>0} \int_{\{\|Pu\|_{L^2}^2 \leq \cK_s^2 \} } e^{(1+\varsigma) \cW_P^\eps }  \dif   \mu_{0,P} \leq  	\sup_{\Lambda_e>0} \int_{ \{\|Pu\|_{L^2}^2 \leq \cK_s^2 \}}  e^{\frac{1+\varsigma}{6} \|Pu\|_{L^6}^6 }\dif   \mu_{0,P} <\infty .
			\end{align}		
			We next establish the corresponding exponential moment bound for the lower symbol $ \mu_{P,0}^{\tau^{-1}}$. Combining \eqref{first way}--\eqref{same NZZ}, we obtain
			\begin{align*}
				\dif \mu_{P,0}^{\tau^{-1}}(u)
				&=  \left( \frac{\tau}{\pi}\right)^J \frac{1}{\tr_{\gF(\gH)} \big(\Gamma_{\tau,0}f_\eta\big(\frac{\cN}{\tau}\big)\big)} \left\langle  \xi(\sqrt{\tau}u),\tr_{\gF(Q\gH)}\big( 	\cU \Gamma_{\tau,0}f_\eta\big(\cN/ \tau\big)	\cU ^*\big) \xi(\sqrt{\tau}u)  \right\rangle \dif u
				\\ &\leq  C  \left( \frac{\tau}{\pi}\right)^J   \left\langle  \xi(\sqrt{\tau}u), \big(\Gamma_{\tau,0,P}f_\eta\big(\cN_P/ \tau\big)+ \eta^{-1}\Lambda_e^{-\frac12} \Gamma_{\tau,0,P} \big) \xi(\sqrt{\tau}u)  \right\rangle \dif u
				\\ &\leq C   \left( \frac{\tau}{\pi}\right)^J   \left\langle  \xi(\sqrt{\tau}u), \Gamma_{\tau,0,P} \xi(\sqrt{\tau}u)  \right\rangle \dif u
				\leq C	e^{\frac{1}{\tau}\Lambda_e^2\|Pu\|_{L^2}^2} \dif  \mu_{0,P}(u).
			\end{align*}
	Hence, using \eqref{uniform:subcritical} together with the constraint $\Lambda_e \leq \tau^{\frac14}$, we obtain
	\begin{align}\label{uniform:subcritical2}
		\sup_{\substack{ \eps>0, \\ 0< \Lambda_e \leq \tau^{1/4} }} & \int_{\{\|Pu\|_{L^2}^2 \leq \cK_s^2 \} } e^{(1+\varsigma) \cW_P^\eps }  \dif   \mu_{P,0}^{\tau^{-1}}  \leq  C	\sup_{0< \Lambda_e \leq \tau^{1/4} } \int_{ \{\|Pu\|_{L^2}^2 \leq \cK_s^2 \}}  e^{\frac{1+\varsigma}{6} \|Pu\|_{L^6}^6 } 	e^{\frac{1}{\tau}\Lambda_e^2\|Pu\|_{L^2}^2} \dif  \mu_{0,P} \nonumber
		\\ & \leq C	e^{\cK_s^2} \sup_{ \Lambda_e >0} \int_{ \{\|Pu\|_{L^2}^2 \leq \cK_s^2 \}}  e^{\frac{1+\varsigma}{6} \|Pu\|_{L^6}^6 } 	 \dif  \mu_{0,P} <\infty .
	\end{align}
 Combining \eqref{Holder:variation}, \eqref{uniform:subcritical}, and \eqref{uniform:subcritical2}, we obtain
 \begin{align}\label{observable:against:difference}
 	\int_{P\gH} e^{ \cW_P^\eps \mathds{1}_{\{\|u\|_{L^2}^2 \leq \cK_s^2 \}} } | \dif   \mu_{P,0}^{\tau^{-1}}-  \dif   \mu^{f_\eta}_{0,P}| \leq C \Lambda_e^{-\frac{\varsigma}{4(1+\varsigma)}}.
 \end{align}
 Inserting \eqref{observable:against:difference} into \eqref{variation1} and \eqref{lower:tail}, we conclude that, for all sufficiently large $\tau$ and $\Lambda_e$, and all $ \eta\in \big(0,\frac{1}{2}\cK^2\big)$ and $ \eps \in(0,1)$ satisfying \eqref{refined:relation},
 \begin{align}\label{lower:bound:sub}
 	-\log  \frac{\cZ^{f_\eta}_{\tau}}{\mathcal{Z}^{f_\eta}_{\tau,0}}  \geq
 	-\log \left( \int_{P\gH} e^{ \cW_P^\eps  } \dif     \mu^{f_\eta}_{0,P}(u) \right)- C\Lambda_e^{-\frac{\varsigma}{4(1+\varsigma)}}.
 \end{align}
 This proves the desired projected lower bound in the mass-subcritical regime.
	
	\textbf{Step 2: Free energy upper bound in the mass-subcritical regime.}
	We now prove the matching upper bound. Compared with Section~\ref{sec3:lower:upper:free-energy}, two modifications are needed in the mass-subcritical regime. The first concerns the localization error in the relative entropy estimate of Lemma~\ref{lemma:relative entropy}. The second occurs at the final stage of the semiclassical analysis, namely in \eqref{first_hard_term1} and \eqref{second_easy_term}, where one has to control the error produced by the cutoff. In both places, the pointwise exponential bounds from the critical argument are replaced by the uniform moment bounds available in the subcritical regime. We also note that Lemma~\ref{errors:localization_interaction} remains valid here, since its proof does not use pointwise exponential bounds. We begin with the corresponding refinement of Lemma~\ref{lemma:relative entropy}.
	\begin{lemma}\label{lemma:relative entropy:sub}
	For all sufficiently large $\tau$ and $\Lambda_e$, and all $ \eta\in \big(0,\frac{1}{2}\cK^2\big)$, $\eps \in(0,1)$ satisfying \eqref{refined:relation}, we have
	\begin{align*}
		\left| \cH(\Gamma_{\mathrm{ts}}, \Gamma^{f_\eta}_{\tau,0}) -\cH( \Gamma^{f_\eta}_{\tau,P}, \Gamma^{f_\eta}_{\tau,0,P})\right| \leq  C \Lambda_e^{-\frac{1}{4}},
	\end{align*}
	where $C$ is a constant depending only on $w$ and $\cK$.
\end{lemma}
\begin{proof}
We use the notation introduced in Lemma~\ref{lemma:relative entropy}. It is enough to compare the three quantities 
\begin{align}\label{aim:three:terms}
	\left| \frac{ \textbf{A}_{\tau,P}}{ \textbf{B}_{\tau,P}}-\frac{ \textbf{A}_\tau}{ \textbf{B}_\tau}\right|  ,\qquad \log\!\left(\frac{ \textbf{B}_\tau}{ \textbf{B}_{\tau,P}}\right), \qquad \log\!\left(\frac{ \textbf{C}_\tau}{ \textbf{C}_{\tau,P}}\right).
\end{align}
We start with the first term. By \eqref{first part in Gamma_tau,P-Gamma_1} and \eqref{first part in Gamma_tau,P-Gamma_1.4}, we have
\begin{align*}
 \left| \textbf{A}_{\tau}- \textbf{A}_{\tau,P} \right| &\leq  C(  \cK,w)  \eta^{-1}\eps^{-2} \cZ_{\tau,0,Q}^{-1}  \tr_{\gF(\gH)}  \left(  e^{-\bbH_{\tau,0}+\bbW_{\tau,P,\cU} } \frac{\cN_{Q,\cU}}{\tau} \mathds{1}_{\{ \cN_{P,\cU}/\tau  \leq \cK^2\}}  \right)
	\\&=C(  \cK,w)  \eta^{-1}\eps^{-2}\tr_{\gF(P\gH)\otimes \gF(Q\gH)}\Big( e^{-\bbH_{\tau,P}}  \mathds{1}_{\{ \cN_P/\tau  \leq \cK^2\}} \otimes \Gamma_{\tau,0,Q} \cN_Q /\tau\Big) 
	\\ &= C(  \cK,w) \eta^{-1}\eps^{-2}  \tr_{\gF(P\gH)}\Big( e^{-\bbH_{\tau,P}}  \mathds{1}_{\{ \cN_P/\tau  \leq \cK^2\}} \Big)   \tr_{\gF(Q\gH)}\Big(  \Gamma_{\tau,0,Q} \cN_Q /\tau\Big) 
	\\ &\leq  C   \eta^{-1}\eps^{-2}\Lambda_e^{-\frac12} \tr_{\gF(P\gH)}\Big( e^{-\bbH_{\tau,P}}  \mathds{1}_{\{ \cN_P/\tau  \leq \cK^2\}} \Big),
\end{align*}
where in the last inequality we used  \eqref{Wp-W:second}.
The same argument gives the corresponding denominator estimate:
\begin{align*}
	\left| \textbf{B}_{\tau}- \textbf{B}_{\tau,P} \right| \leq  C   \eta^{-1}\Lambda_e^{-\frac12} \tr_{\gF(P\gH)}\Big( e^{-\bbH_{\tau,P}}  \mathds{1}_{\{ \cN_P/\tau  \leq \cK^2\}} \Big).
\end{align*}
Inserting these two bounds into \eqref{Gamma_tau,P-Gamma_12}, and using \eqref{lower:bound:for:denominator} together with $\frac{\textbf{A}_\tau}{\textbf{B}_\tau} \leq \eps^{-2}\|w\|_{L^\infty}^2   \cK^6$, we obtain
	\begin{align}\label{comparison:truncated:partition}
		& \left| \frac{ \textbf{A}_{\tau,P}}{ \textbf{B}_{\tau,P}}-\frac{ \textbf{A}_\tau}{ \textbf{B}_\tau}\right| 
 \leq  \frac{| \textbf{A}_{\tau,P}- \textbf{A}_{\tau}| }{  \textbf{B}_{\tau,P} } + \frac{ \eps^{-2}\|w\|_{L^\infty}^2   \cK^6 | \textbf{B}_{\tau,P}- \textbf{B}_{\tau}|  }{\textbf{B}_{\tau,P}   } \nonumber
	\\  & \leq C   \eta^{-1}\eps^{-2}\Lambda_e^{-\frac12} \frac{\tr_{\gF(P\gH)}\Big( e^{-\bbH_{\tau,P}}  \mathds{1}_{\{ \cN_P/\tau  \leq \cK^2\}} \Big)}{\tr_{\gF(P\gH)}\Big( e^{-\bbH_{\tau,0,P}}  f_\eta( \cN_P/\tau  ) \Big)} 
	\\ & =C   \eta^{-1}\eps^{-2}\Lambda_e^{-\frac12} \frac{\tr_{\gF(P\gH)}\Big( e^{-\bbH_{\tau,P}}  \mathds{1}_{\{ \cN_P/\tau  \leq \cK^2\}} \Big)}{\tr_{\gF(P\gH)}\Big( e^{-\bbH_{\tau,0,P}}  \mathds{1}_{\{ \cN_P/\tau  \leq \cK^2\}} \Big)} \cdot \frac{\tr_{\gF(P\gH)}\Big( e^{-\bbH_{\tau,0,P}}  \mathds{1}_{\{ \cN_P/\tau  \leq \cK^2\}}  \Big)}{\tr_{\gF(P\gH)}\Big( e^{-\bbH_{\tau,0,P}}f_\eta( \cN_P/\tau  )\Big)}. \nonumber
	\end{align}
	 We first control the second factor on the right-hand side of \eqref{comparison:truncated:partition}. To do so, replace the sharp cutoff by a smooth one. Choose $\tilde{f} \in C_c^\infty([0,\infty);\R)$ such that $\tilde{f}\equiv 1$ on $[0,\cK^2]$. Applying Lemma~\ref{project:convergence:rate} to $\tilde{f}$ and $f_\eta$, we obtain, for all $\eta \in(0,\frac{\cK^2}{2})$ and all sufficiently large $\tau$ and $\Lambda_e$ satisfying \eqref{refined:relation},
	\begin{align}\label{bound:for:second:ratio}
	&	\frac{\tr_{\gF(P\gH)}\Big( e^{-\bbH_{\tau,0,P}}  \mathds{1}_{\{ \cN_P/\tau  \leq \cK^2\}}  \Big)}{\tr_{\gF(P\gH)}\Big( e^{-\bbH_{\tau,0,P}}  f_\eta( \cN_P/\tau  ) \Big)} \leq \frac{\tr_{\gF(P\gH)}\Big( \Gamma_{\tau,0,P} \tilde{f}(\cN_P/\tau )\Big)}{\tr_{\gF(P\gH)}\Big(  \Gamma_{\tau,0,P}  f_\eta(\cN_P/\tau )\Big)} \nonumber
  \leq \frac{ \int_{\mathfrak{H}} \tilde{f} (\|Pu\|_{L^2}^2) \, \dif \mu_0 (u) +C\tau^{-\frac14}}{\int_{\mathfrak{H}} f_\eta (\|Pu\|_{L^2}^2) \, \dif \mu_0 (u) -C\tau^{-\frac14}} 
  	\\ &\quad \leq \frac{\|\tilde{f}\|_{L^\infty}+1}{\int_{\mathfrak{H}} f_\eta (\|u\|_{L^2}^2) \, \dif \mu_0 (u) -C\Lambda_e^{-\frac18}-C\tau^{-\frac14} } \leq   \frac{\|\tilde{f}\|_{L^\infty}+1}{ \frac12 \int_{\gH} \mathds{1}_{\{ \|u\|_{L^2}^2 \leq \frac{\cK^2}{4}\}} \dif \mu_0(u) } \leq C.
	\end{align}
	Here, in the last line we used \eqref{int:f(u)-f(Pu)} and $\int_{\mathfrak{H}} f_\eta (\|u\|_{L^2}^2)  \dif \mu_0 (u) -C\Lambda_e^{-\frac18}-C\tau^{-\frac14} \geq \int_{\gH} \mathds{1}_{\{ \|u\|_{L^2}^2 \leq \frac{\cK^2}{4}\}} \dif \mu_0(u)-C\Lambda_e^{-\frac18} \geq  \frac12\int_{\gH} \mathds{1}_{\{ \|u\|_{L^2}^2 \leq \frac{\cK^2}{4}\}} \dif \mu_0(u)$, valid for all sufficiently large $\tau$ and $\Lambda_e$.
	
It remains to obtain a uniform bound on the first ratio of truncated partition functions on the right-hand side of \eqref{comparison:truncated:partition}. We use the estimates from Subsection~\ref{sec:lower bound} to derive a lower bound on the corresponding relative free energy. Define
	\begin{align*}
		 \Gamma^{\mathds{1}_\cK}_{\tau,P}:= \frac{e^{-\bbH_{\tau,P}}  \mathds{1}_{\{ \cN_P/\tau  \leq \cK^2\}}}{\tr_{\gF(P\gH)}\Big( e^{-\bbH_{\tau,P}}  \mathds{1}_{\{ \cN_P/\tau  \leq \cK^2\}} \Big)}  ,\qquad  	 \Gamma^{\mathds{1}_\cK}_{\tau,0,P}: = \frac{e^{-\bbH_{\tau,0,P}}  \mathds{1}_{\{ \cN_P/\tau  \leq \cK^2\}}}{\tr_{\gF(P\gH)}\Big( e^{-\bbH_{\tau,0,P}}  \mathds{1}_{\{ \cN_P/\tau  \leq \cK^2\}} \Big)}.
	\end{align*}
	Let $\nu^{\tau^{-1}}_{P,\tau}$ and $\nu^{\tau^{-1}}_{P,0}$ be the interacting and free lower symbols associated with these two states on $P\gH$ at scale $\tau^{-1}$, respectively. Combining \eqref{lower:variational}, \eqref{lower bound interaction energy1}, and \eqref{variation}, and using the variational principle together with the Berezin-Lieb type inequality \eqref{eq:Berezin-Lieb}, we obtain
		\begin{align}\label{lower:variational:sub}
			-\log  &  \frac{\tr_{\gF(P\gH)}\Big( e^{-\bbH_{\tau,P}}  \mathds{1}_{\{ \cN_P/\tau  \leq \cK^2\}} \Big)}{\tr_{\gF(P\gH)}\Big( e^{-\bbH_{\tau,0,P}}  \mathds{1}_{\{ \cN_P/\tau  \leq \cK^2\}} \Big)} 
		=\cH( \Gamma^{\mathds{1}_\cK}_{\tau,P}, \Gamma^{\mathds{1}_\cK}_{\tau,0,P}) -  \frac{1}{\tau^3} \tr  \left(W^\eps_{P} (	 \Gamma^{\mathds{1}_\cK}_{\tau,P})^{(3)} \right)\nonumber
		\\&	\qquad \geq  \cH_{\mathrm{cl}} (\nu_{P,\tau}^{\tau^{-1}}, \nu_{P,0}^{\tau^{-1}})-       \int_{P\gH}  \cW^\eps_P(u) \dif \nu_{P,\tau}^{\tau^{-1}}(u)    \nonumber
			\\ &\qquad \geq - \log \left( \int_{P\gH} e^{ \cW_P^\eps \mathds{1}_{\{\|u\|_{L^2}^2 \leq \cK_s^2 \}} } \dif   \nu_{P,0}^{\tau^{-1}}(u) \right)-\int_{P\gH}  \cW^\eps_P(u) \mathds{1}_{\{\|u\|_{L^2}^2 > \cK_s^2   \}} \dif \nu_{P,\tau}^{\tau^{-1}}(u).\nonumber
			\\ &\qquad  \geq - \log \left( \int_{P\gH} \mathds{1}_{\{\|u\|_{L^2}^2 \leq \cK_s^2 \}}  e^{ \cW_P^\eps } \dif   \nu_{P,0}^{\tau^{-1}}(u) +1 \right)-\int_{P\gH}  \cW^\eps_P(u) \mathds{1}_{\{\|u\|_{L^2}^2 > \cK_s^2   \}} \dif \nu_{P,\tau}^{\tau^{-1}}(u).
		\end{align}
		The two terms on the right-hand side are controlled as in Step~1. First, by \eqref{same NZZ}, we have
	     \begin{align*}
	     		\dif \nu_{P,0}^{\tau^{-1}}(u)
	     	&=  \left( \frac{\tau}{\pi}\right)^J \frac{1}{\tr_{\gF(P\gH)} \big(\Gamma_{\tau,0,P}\mathds{1}_{\{ \cN_P/\tau  \leq \cK^2\}} \big)} \left\langle  \xi(\sqrt{\tau}u),	\Gamma_{\tau,0,P}\mathds{1}_{\{ \cN_P/\tau  \leq \cK^2\}}  \xi(\sqrt{\tau}u)  \right\rangle \dif u
	     	\\ &\leq  C  \left( \frac{\tau}{\pi}\right)^J   \left\langle  \xi(\sqrt{\tau}u),   \Gamma_{\tau,0,P}  \xi(\sqrt{\tau}u)  \right\rangle \dif u
	     	\leq C  e^{\frac{1}{\tau}\Lambda_e^2\|Pu\|_{L^2}^2} \dif  \mu_{0,P}(u).
	     \end{align*}
	     Here we used $\tr_{\gF(P\gH)} \big(\Gamma_{\tau,0,P}\mathds{1}_{\{ \cN_P/\tau  \leq \cK^2\}} \big)\geq C$, which follows from the same argument as \eqref{bound:for:second:ratio}. Therefore, by \eqref{uniform:subcritical2} and the constraint $\Lambda_e \leq \tau^{\frac14}$, we get
	     \begin{align*}
	     	\sup_{\substack{ \eps>0, \\ 0< \Lambda_e \leq \tau^{1/4} }}  \int_{P\gH}  \mathds{1}_{\{\|u\|_{L^2}^2 \leq \cK_s^2 \}}  e^{ \cW_P^\eps } \dif   \nu_{P,0}^{\tau^{-1}}(u) \leq  C	\sup_{0< \Lambda_e \leq \tau^{1/4} } \int_{ \{\|Pu\|_{L^2}^2 \leq \cK_s^2 \}}  e^{\frac{1+\varsigma}{6} \|Pu\|_{L^6}^6 } 	e^{\frac{1}{\tau}\Lambda_e^2\|Pu\|_{L^2}^2} \dif  \mu_{0,P}<\infty.
	     \end{align*}
	     For the second term on the right-hand side of \eqref{lower:variational:sub}, we apply Lemma~\ref{lemma:interacting:tail} to the Gibbs state $ \Gamma^{\mathds{1}_\cK}_{\tau,P}$ with $R=\cK_s^2$. This gives, for all admissible parameters satisfying \eqref{refined:relation},
	     \begin{align*}
	     	\int_{P\gH}  \cW^\eps_P(u) \mathds{1}_{\{\|u\|_{L^2}^2 > \cK_s^2   \}} \dif \nu_{P,\tau}^{\tau^{-1}}(u)\leq \eps^{-2} \|w\|^2_{L^\infty} 	\int_{\{\|Pu\|_{L^2}^2>\cK_s^2\}}\|Pu\|_{L^2}^{6}\dif \nu_{P,\tau}^{\tau^{-1}}(u)   \leq C \eps^{-2} e^{-c\tau} <\infty.
	     \end{align*}
Combining the preceding two estimates, we conclude that, uniformly for all parameters satisfying \eqref{refined:relation},
	\begin{align}\label{bound:for:first:ratio}
	\frac{\tr_{\gF(P\gH)}\Big( e^{-\bbH_{\tau,P}}  \mathds{1}_{\{ \cN_P/\tau  \leq \cK^2\}} \Big)}{\tr_{\gF(P\gH)}\Big( e^{-\bbH_{\tau,0,P}}  \mathds{1}_{\{ \cN_P/\tau  \leq \cK^2\}} \Big)}  \leq C.
	\end{align}
	Hence, whenever $\eta^{-1}\leq \Lambda_e^{\frac18}$, $\eps^{-3} \leq \Lambda_e^{\frac18}$, and $\Lambda_e \leq \tau^{\frac14}$, we deduce from \eqref{comparison:truncated:partition}, \eqref{bound:for:second:ratio} and \eqref{bound:for:first:ratio} that
	\begin{align*}
		 \left| \frac{ \textbf{A}_{\tau,P}}{ \textbf{B}_{\tau,P}}-\frac{ \textbf{A}_\tau}{ \textbf{B}_\tau}\right| \leq C \eta^{-1}\eps^{-2}\Lambda_e^{-\frac12}  \leq C\Lambda_e^{-\frac14}.
	\end{align*}
	For the second term in \eqref{aim:three:terms}, we use $ \log(1+t) = O(t) $ for $|t|$ sufficiently small, together with \eqref{comparison:truncated:partition}, \eqref{bound:for:second:ratio} and \eqref{bound:for:first:ratio}. Thus, whenever $\eta^{-1}\leq \Lambda_e^{\frac18}$, $\eps^{-3} \leq \Lambda_e^{\frac18}$, and $\Lambda_e \leq \tau^{\frac14}$,
		\begin{align*}
		\left| \log \left(\frac{ \textbf{B}_\tau}{ \textbf{B}_{\tau,P}}\right) \right| =  \left| \log\!\left(1+ \frac{   \textbf{B}_\tau -\textbf{B}_{\tau,P} }{   \textbf{B}_{\tau,P}}\right) \right| 
		\leq \frac{ |\textbf{B}_\tau -\textbf{B}_{\tau,P} | }{  \textbf{B}_{\tau,P}}
		\leq C\Lambda_e^{-\frac{1}{4}}.
	\end{align*}
Finally, the term $ \log\!\left(\frac{ \textbf{C}_\tau}{ \textbf{C}_{\tau,P}}\right)$ involves only the free Gibbs states, so the estimate \eqref{third:free:state:control} applies unchanged. This completes the proof of the lemma.
		\end{proof}
We now return to the projected semiclassical estimate. Since the estimates \eqref{lower:bound3} and \eqref{bernoulli:inequality} do not rely on the pointwise exponential bound, they remain valid here. Therefore, for all parameters satisfying \eqref{refined:relation}, we have
\begin{align}\label{uniform:frequencey:cutoff}
	\frac{\tr _{\gF(P\gH)}\left(e^{-\bbH_{\tau,P}}f_\eta(\mathcal{N}_P/\tau) \right) }{\tr_{\gF(P\gH)}(e^{-\bbH_{\tau,0,P}})} & \geq  \left( 1-\Lambda_e^{-1} \right) \int_{P\gH} f_\eta\left( \|Pu\|_{L^2}^2 \right)  e^{\cW_P^\eps }  
	\dif \mu_{0,P}(u)  \nonumber
	\\ &\qquad - C_{\cK_c,w} \frac{ \Lambda_e}{\sqrt{\tau}}    \int_{P\gH}  \mathds{1}_{\{\|Pu\|_{L^2}\leq \cK\} } e^{\cW_P^\eps } 
	\dif \mu_{0,P}(u) .
\end{align}
Using the uniform estimate \eqref{uniform:subcritical} once more, we obtain
		\begin{align*}
			\sup_{\eps, \, \Lambda_e>0}  \int_{P\gH} f_\eta\left( \|Pu\|_{L^2}^2 \right)  e^{\cW_P^\eps } 
			\dif \mu_{0,P}(u)  \leq  	 \sup_{\Lambda_e>0}  \int_{P\gH} \mathds{1}_{\{\|Pu\|_{L^2} \leq  \cK \}} e^{\frac{1}{6}  \|Pu\|_{L^6}^6 } 
			\dif \mu_{0,P}(u)  <\infty.
		\end{align*}
	Substituting this bound into \eqref{uniform:frequencey:cutoff} yields
			\begin{align}\label{semi:subcritical:lower}
			\frac{\tr _{\gF(P\gH)}\left(e^{-\bbH_{\tau,P}}f_\eta(\mathcal{N}_P/\tau) \right) }{\tr_{\gF(P\gH)}(e^{-\bbH_{\tau,0,P}})}  \geq   \int_{P\gH} f_\eta\left( \|Pu\|_{L^2}^2 \right)  e^{\cW_P^\eps }  
			\dif \mu_{0,P}(u)  - C\Lambda_e^{-1}.
		\end{align}
	We then combine \eqref{semi:subcritical:lower} with Lemmas~\ref{errors:localization_interaction} and \ref{lemma:relative entropy:sub}, and repeat the argument leading to \eqref{finite_analysis1}--\eqref{finite_analysis3}. The pointwise exponential bounds are not used in these estimates. Hence, for all sufficiently large $\tau$ and $\Lambda_e$, and all $ \eta\in \big(0,\frac{1}{2}\cK^2\big)$, $\eps \in(0,1)$ satisfying \eqref{refined:relation}, we have
	\begin{align}\label{upper:bound:sub}
		-\log \frac{\cZ^{f_\eta}_{\tau}}{\cZ^{f_\eta}_{\tau,0}}  \leq - \log \left( \int_{P\gH}  e^{\cW_P^\eps(u) } 
		\dif \mu^{f_\eta}_{0,P}(u) \right) + O\left(  \Lambda_e^{-\frac18}\right).
	\end{align}
	This proves the desired projected upper bound in the mass-subcritical regime.
			
				\textbf{Step 3: Conclusion of Proposition~\ref{Pro:Hartree}.} The preceding two steps identify the projected free energy. It remains to remove the projection. This is done by the Cauchy argument from Proposition~\ref{pro:cauchy:seq1}: the projected Hartree partition functions form a Cauchy family as $\Lambda_e\to\infty$, and their limit is the corresponding full partition function. This argument is independent of the pointwise exponential bounds. Hence, combining \eqref{lower:bound:sub} and \eqref{upper:bound:sub}, we obtain that, for all sufficiently large $\Lambda_e>0$ and all $\eta\in \big(0,\frac{1}{2}\cK^2\big)$, $\eps\in(0,1)$ satisfying $\eta^{-1}\leq \Lambda_e^{\frac18}$ and $\eps^{-3}\leq \Lambda_e^{\frac18}$,
				\begin{align*}
				\left|	\log \left( \int_{P\gH}  e^{\cW_P^\eps(u) } 
				\dif \mu^{f_\eta}_{0,P}(u)\right) -\log \left( \int_{\gH}  e^{\cW^\eps(u) } 
				\dif \mu^{f_\eta}_{0}(u)\right)\right| \leq O\left(  \Lambda_e^{-\sigma} \right),
				\end{align*}
			for some sufficiently small $\sigma>0$, depending only on the exponent $\varsigma$ in the projected Oh-Sosoe-Tolomeo estimate \eqref{mian:appendix A}. Under the assumptions $\eta\geq \tau^{-\frac{1}{64}}$ and $\eps\geq \tau^{-\frac{1}{96}}$, we may choose $\Lambda_e$ so that \eqref{refined:relation} holds and $\Lambda_e\to\infty$. Therefore,
					\begin{align*}
					&	\left| -\log  \frac{\cZ^{f_\eta}_{\tau}}{\mathcal{Z}^{f_\eta}_{\tau,0}} + \log \left( \int_{ \gH} e^{\cW^\eps(u) } \dif \mu^{f_\eta}_0 (u)\right) \right| 
					\leq 	\left| -\log  \frac{\cZ^{f_\eta}_{\tau}}{\mathcal{Z}^{f_\eta}_{\tau,0}} + \log \left( \int_{P \gH} e^{\cW_P^\eps(u) } \dif \mu^{f_\eta}_{0,P} (u)\right) \right|  
					\\ &\qquad \qquad  \qquad  \qquad + 	\left|  \log \left( \int_{P \gH} e^{\cW_P^\eps(u) } \dif \mu^{f_\eta}_{0,P} (u)\right) -\log \left( \int_{ \gH} e^{\cW^\eps(u) } \dif \mu^{f_\eta}_0 (u)\right)  \right| \leq O\left(  \Lambda_e^{-\sigma} \right).
				\end{align*} 
					Using \eqref{lemma2.3:RS25} and the elementary inequality $|e^x-e^y| \leq e^y(e^{|x-y|}-1)$, $x,y\in\R$, we then obtain
			\begin{align*}
				\left|\frac{\cZ^{f_\eta}_{\tau}}{\mathcal{Z}^{f_\eta}_{\tau,0}} -\int_{ \gH} e^{\cW^\eps(u) } \dif \mu^{f_\eta}_0 (u)  \right| & \leq  \int_{ \gH} e^{\cW^\eps(u) } \dif \mu^{f_\eta}_0 (u)  \big(e^{C\Lambda_e^{-\sigma}}-1\big) 
				\\ &\leq C\Lambda_e^{-\sigma} \int_{ \{\|u\|_{L^2} \leq \cK\}} e^{\frac16 \|u\|_{L^6}^6}  \dif \mu_0 (u)  \leq C \Lambda_e^{-\sigma},
			\end{align*}
			where the last step follows from the subcritical integrability result of \cite[Theorem 1.1]{OST22}. Letting $\tau\to\infty$ and $\Lambda_e\to\infty$ proves \eqref{key:bound:sub1}.
				
			We next prove the convergence of the reduced density matrices. The argument follows the proof of \eqref{key:bound2}; we only record the changes needed in the subcritical regime. First, we claim convergence to the projected Hartree density matrices. Namely, there exists $\beta>0$, independent of $\eps,\eta,\tau$ and $\Lambda_e$, such that
				\begin{align}\label{sub:dm:projected-convergence}
					\left\|\frac{k!}{\tau^k}((\Gamma^{f_\eta}_\tau)_P)^{(k)}
					-\int_{P\gH}|u^{\otimes k}\rangle\langle u^{\otimes k}|\dif\mu_P^{\eps,f_\eta}(u)\right\|_{\gS^1(\gH^{(k)})}
					\leq C_{k,\cK}\Lambda_e^{-\beta},
				\end{align}
			where the projected Hartree measure is defined by 
				$	\dif \mu^{\eps,f_\eta}_P(u)
				:= \frac{e^{\cW_P^\eps(u)}\dif\mu^{f_\eta}_{0,P}(u)}{\int_{P\gH} e^{\cW_P^\eps(v)}\dif\mu^{f_\eta}_{0,P}(v)}$ and $((\Gamma^{f_\eta}_\tau)_P)^{(k)}=P^{\otimes k}(\Gamma^{f_\eta}_\tau)^{(k)}P^{\otimes k}$. Since the estimates \eqref{de finetti error trace norm localized}--\eqref{density matrix convergence36} do not use pointwise exponential bounds, they remain valid under the present constraint \eqref{refined:relation}. Thus,
				\begin{align*}
					&	\left\|\frac{k!}{\tau^k}((\Gamma^{f_\eta}_\tau)_P)^{(k)}
					-\int_{P\gH}|u^{\otimes k}\rangle\langle u^{\otimes k}|\dif\mu_P^{\eps,f_\eta}(u)\right\|_{\gS^1(\gH^{(k)})}
						\\ &\quad \leq \left\|\frac{k!}{\tau^k}((\Gamma^{f_\eta}_\tau)_P)^{(k)}-
						\int_{P\gH}|u^{\otimes k}\rangle\langle u^{\otimes k}|\dif\mu^{\tau^{-1}}_{P,\tau}(u)\right\|_{\gS^1(\gH^{(k)})}  
				\\ &\quad +	\left\| \int_{P\gH} |u^{\otimes k}\rangle  \langle u^{\otimes k} | \dif \mu^{\tau^{-1}}_{P,\tau}(u)- \int_{P\gH} |u^{\otimes k}\rangle  \langle u^{\otimes k} | \dif \mu^{\eps,f_\eta}_P(u) \right\| _{\gS^1(\gH^{(k)})} 
				\leq C_{k,\cK}\big( \Lambda_e^{-1}+	\|\mu^{\tau^{-1}}_{P,\tau}-\mu^{\eps,f_\eta}_P\|_{L^1}^{1/2}\big).
				\end{align*}
				It remains to control $\|\mu^{\tau^{-1}}_{P,\tau}-\mu^{\eps,f_\eta}_P\|_{L^1}$ without invoking the pointwise exponential bounds. Combining the lower bound argument leading to \eqref{variation1} and \eqref{lower:bound:sub} with the upper bound \eqref{upper:bound:sub}, we obtain
					\begin{align}\label{density matrix convergence42}
						- \log & \left( \int_{P\gH}  e^{\cW_P^\eps(u) } 
						\dif \mu^{f_\eta}_{0,P}(u) \right) + C \Lambda_e^{-\frac18}  \geq 
					-\log  \frac{\cZ^{f_\eta}_{\tau}}{\mathcal{Z}^{f_\eta}_{\tau,0}}  \nonumber 
					 \\ &		\geq \cH_{\mathrm{cl}} (\mu_{P,\tau}^{\tau^{-1}}, \mu_{P,0}^{\tau^{-1}})-  \int_{P\gH}  \cW^\eps_P \mathds{1}_{\{\|u\|_{L^2}^2 \leq \cK_s^2 \}}  \dif \mu_{P,\tau}^{\tau^{-1}}(u)-C\Lambda_e^{-\frac18}  \nonumber
					\\ & = \cH_{\mathrm{cl}} (\mu_{P,\tau}^{\tau^{-1}}, \mu')
					-\log \left( \int_{P\gH} e^{\cW_P^\eps (u) \mathds{1}_{\{\|u\|_{L^2}^2 \leq  \cK_s^2  \}}   } \dif   \mu_{P,0}^{\tau^{-1}}(u)\right)-C\Lambda_e^{-\frac18}  \nonumber
					\\ & \geq  \cH_{\mathrm{cl}} (\mu_{P,\tau}^{\tau^{-1}}, \mu')-\log \left( \int_{P\gH} e^{\cW_P^\eps (u)} \dif  \mu^{f_\eta}_{0,P}(u)\right)-  C\Lambda_e^{-\frac{\varsigma}{4(1+\varsigma)}}, 
				\end{align}
				where $	\dif  \mu'(u):=\frac{ \mathrm{\exp}\big(\cW_P^\eps(u)\mathds{1}_{\{\|u\|_{L^2}^2 \leq \cK_s^2 \}} \big)    \dif   \mu_{P,0}^{\tau^{-1}}(u)}{\int_{P\gH} \mathrm{\exp}\big(\cW_P^\eps(v)\mathds{1}_{\{\|v\|_{L^2}^2 \leq \cK_s^2 \}} \big) \dif   \mu_{P,0}^{\tau^{-1}}(v)}$. Pinsker's inequality gives
				\begin{align*}
					\|\mu^{\tau^{-1}}_{P,\tau}-\mu'\|_{L^1} \leq \sqrt{2}\cH_{\mathrm{cl}}\big(\mu^{\tau^{-1}}_{P,\tau},\mu'\big)^{\frac12}
					\leq C\Lambda_e^{-\frac{\varsigma}{8(1+\varsigma)}}. 
				\end{align*}
				We now compare $\mu'$ with $\mu^{\eps,f_\eta}_P$. By the same calculation as in \eqref{density matrix convergence33},
					\begin{align*}
						\| \mu' - \mu^{\eps,f_\eta}_P &   \|_{L^1(P\gH)} \leq \frac{1}{\int_{P\gH} e^{\cW_P^\eps \mathds{1}_{\{\|v\|_{L^2}^2 \leq \cK_s^2 \}} } \dif   \mu_{P,0}^{\tau^{-1}}(v)} \int_{P\gH}  e^{\cW_P^\eps \mathds{1}_{\{\|u\|_{L^2}^2 \leq \cK_s^2  \}} }  \big|  \dif   \mu_{P,0}^{\tau^{-1}}-  \dif \mu^{f_\eta}_{0,P} \big|  \nonumber
						\\ & +\frac{1}{\int_{P\gH} e^{\cW_P^\eps \mathds{1}_{\{\|v\|_{L^2}^2 \leq \cK_s^2 \}} } \dif   \mu_{P,0}^{\tau^{-1}}(v)} \left| \int_{P\gH} e^{\cW_P^\eps \mathds{1}_{\{\|v\|_{L^2}^2 \leq \cK_s^2 \}} } \dif   \mu_{P,0}^{\tau^{-1}}(v) - \int_{P\gH} e^{\cW_P^\eps(v)} \dif   \mu^{f_\eta}_{0,P} (v) \right| 
					\\ & \leq C \int_{P\gH}  e^{\cW_P^\eps \mathds{1}_{\{\|u\|_{L^2}^2 \leq \cK_s^2  \}} }  \big|  \dif   \mu_{P,0}^{\tau^{-1}}-  \dif \mu^{f_\eta}_{0,P} \big| + C \int_{P\gH} \big| e^{\cW_P^\eps(v)\mathds{1}_{\{\|v\|_{L^2}^2 \leq \cK_s^2 \}} } - e^{\cW_P^\eps(v) } \big| \dif   \mu^{f_\eta}_{0,P} (v)  ,
					\end{align*}
						where we used $\int_{P\gH} e^{\cW_P^\eps \mathds{1}_{\{\|v\|_{L^2}^2 \leq \cK_s^2 \}} } \dif   \mu_{P,0}^{\tau^{-1}}(v)\geq 1$ in the last line. The second term on the right-hand side vanishes because $\mu^{f_\eta}_{0,P}$ is supported on $\{\|v\|_{L^2}^2\leq \cK^2\}$ and $\cK_s^2>\cK^2$. The first term is exactly \eqref{observable:against:difference}, and is bounded by $C\Lambda_e^{-\frac{\varsigma}{4(1+\varsigma)}}$. Consequently, $\|\mu^{\tau^{-1}}_{P,\tau}-\mu^{\eps,f_\eta}_P\|_{L^1}^{1/2}\leq C\Lambda_e^{-\beta}$ for some sufficiently small $\beta>0$. This proves the localized convergence \eqref{sub:dm:projected-convergence}.
					
				The removal of the projection on the quantum side is unchanged, since it relies only on Lemma~\ref{errors:localization_interaction}; the assumptions $\eps^{-3}\leq\Lambda_e^{1/8}$ and $\Lambda_e\leq\tau^{1/4}$ are precisely those needed for that lemma. Therefore, 
				\begin{align}\label{sub:dm:quantum-projection-removal}
					\left\|\frac{k!}{\tau^k}(\Gamma^{f_\eta}_\tau)^{(k)}-\frac{k!}{\tau^k}((\Gamma^{f_\eta}_\tau)_P)^{(k)}\right\|_{\gS^1(\gH^{(k)})}
					\leq C_{k,\cK}\Lambda_e^{-\frac18}.
				\end{align}
				The removal of the projection on the classical side uses only \eqref{sub:dm:projected-convergence}, \eqref{sub:dm:quantum-projection-removal}, and the Cauchy argument of Proposition~\ref{pro:cauchy:seq1}. These ingredients are independent of the pointwise exponential bounds. Thus, under the constraint \eqref{refined:relation},
				\begin{align}\label{sub:dm:classical-projection-removal}
					\left\|\int_{P\gH}|u^{\otimes k}\rangle\langle u^{\otimes k}|\dif\mu_P^{\eps,f_\eta}(u)
					-\int_{\gH}|u^{\otimes k}\rangle\langle u^{\otimes k}|\dif\mu^{\eps,f_\eta}(u)\right\|_{\gS^1(\gH^{(k)})}
					\leq C_{k,\cK}\Lambda_e^{-\beta}.
				\end{align}
				Combining \eqref{sub:dm:projected-convergence}, \eqref{sub:dm:quantum-projection-removal}, and \eqref{sub:dm:classical-projection-removal}, we obtain
				\begin{align*}
					\left\|\frac{k!}{\tau^k}(\Gamma^{f_\eta}_\tau)^{(k)}-
					\int_{\gH}|u^{\otimes k}\rangle\langle u^{\otimes k}|\dif\mu^{\eps,f_\eta}(u)\right\|_{\gS^1(\gH^{(k)})}
					\leq C_{k,\cK}\Lambda_e^{-\beta}.
				\end{align*}
				Finally, under the hypotheses $\eta\geq\tau^{-1/64}$ and $\eps\geq\tau^{-1/96}$, we may choose $\Lambda_e$ so that \eqref{refined:relation} holds and $\Lambda_e\to\infty$. This proves \eqref{key:bound:sub2} and completes the proof of Proposition~\ref{Pro:Hartree}.
		\end{proof}

	\section{From Hartree measure to the focusing $\Phi^6_1$ measure}\label{sec:hartree to phi61}
 The purpose of this section is to recover the focusing $\Phi^6_1$ measure. The argument has two parts. First, the Hartree interaction $\cW^\eps$ converges to the local interaction $\cW$ by mollifier approximation and dominated convergence. Second, the cutoff $f_\eta(\|u\|_{L^2}^2)$ converges to $\mathds{1}_{\{\|u\|_{L^2}\leq \cK\}}$, reducing the problem to continuity of the law of $\|u\|_{L^2}^2$ under the Gaussian measure $\mu_0$.
	The key input is the sharp normalizability at the threshold $\cK_c$ from \cite[Theorem 1.4]{OST22}, which provides the integrable bound needed for dominated convergence near the critical mass. 
	\begin{proposition}\label{prop:hartree to phi}
		Let $w$ satisfy Assumption~\ref{assum:on:w}. For any $\cK \in (0,\cK_c]$, let the cutoff $f_\eta$ satisfy Assumption~\ref{assum:cutoff} with  cutoff parameter $\cK$. For the interaction potential energy $\cW^\eps$ defined in \eqref{Hartree:interaction}, the Hartree partition function converges to the focusing $\Phi^6_1$ partition function as $\eps, \eta \to 0$:
		\begin{align*}
			\left| \int_{\gH} e^{\cW^\eps(u)} f_\eta(\|u\|_{L^2}^2) \dif \mu_0 (u)- \int_{\gH} e^{\cW(u)}  \mathds{1}_{\{ \|u\|_{L^2}\leq \cK \}}\dif \mu_{0}(u) \right| \to 0,
		\end{align*}
		where $\cW=\frac16 \|u\|_{L^6}^6$ is the classical potential energy and $\mu_0$ is the free Gibbs measure defined in \eqref{def:free Gibbs measure}.
	\end{proposition}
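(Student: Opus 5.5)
We split the difference into two pieces and treat each by dominated convergence, with the sharp Oh--Sosoe--Tolomeo bound \cite[Theorem 1.4]{OST22} as the dominating function. Write
\begin{align*}
\int e^{\cW^\eps} f_\eta(\|u\|_{L^2}^2)\dif\mu_0-\int e^{\cW}\mathds{1}_{\{\|u\|_{L^2}\leq\cK\}}\dif\mu_0
=\int \big(e^{\cW^\eps}-e^{\cW}\big) f_\eta(\|u\|_{L^2}^2)\dif\mu_0+\int e^{\cW}\big(f_\eta(\|u\|_{L^2}^2)-\mathds{1}_{\{\|u\|_{L^2}\leq\cK\}}\big)\dif\mu_0 .
\end{align*}
Since $0\le f_\eta\le \mathds{1}_{[0,\cK^2]}$ and, by \eqref{lemma2.3:RS25} with $\|w^\eps\|_{L^1}=1$, $\cW^\eps(u)\le \cW(u)$, both integrands are dominated by $2e^{\cW(u)}\mathds{1}_{\{\|u\|_{L^2}\le\cK\}}$. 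This function is $\mu_0$-integrable for every $\cK\le\cK_c$ by \cite[Theorem 1.4]{OST22} (for $\cK<\cK_c$ already by \cite[Theorem 1.1]{OST22}). This uniform domination is the only place where the threshold enters, and it is the reason the statement holds up to and including the endpoint.

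For the first term, we show $\cW^\eps(u)\to\cW(u)$ for $\mu_0$-a.e.\ $u$. Under $\mu_0$ we have $u\in H^{s}(\T)$ for every $s<\tfrac12$ almost surely, hence $u\in L^p(\T)$ for all $p<\infty$, and in particular $|u|^2\in L^3$. Because $w^\eps$ is an approximate identity with $\|w^\eps\|_{L^1}=1$, we get $w^\eps*|u|^2\to|u|^2$ in $L^3(\T)$. Then Hölder's inequality gives
\begin{align*}
\Big|\int (w^\eps*|u|^2)^2|u|^2-\int|u|^6\Big|\le \|w^\eps*|u|^2-|u|^2\|_{L^3}\big(\|w^\eps*|u|^2\|_{L^3}+\||u|^2\|_{L^3}\big)\||u|^2\|_{L^3}\to0 .
\end{align*}
The first term therefore tends to zero by dominated convergence, uniformly in $\eta$, since $f_\eta\le1$ and the pointwise bound $|e^{\cW^\eps}-e^{\cW}|f_\eta\le 2e^{\cW}\mathds{1}_{\{\|u\|_{L^2}\le\cK\}}$ does not involve $\eta$. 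This matters because the joint limit in $\eps,\eta$ should not depend on the order in which the limits are taken.

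For the second term, $f_\eta(s)\to\mathds{1}_{[0,\cK^2)}(s)$ pointwise by Assumption~\ref{assum:cutoff}. The integrand thus converges to $-e^{\cW}\mathds{1}_{\{\|u\|_{L^2}^2=\cK^2\}}$, so it remains to show that $\mu_0(\|u\|_{L^2}^2=\cK^2)=0$. Under $\mu_0$,
\begin{align*}
\|u\|_{L^2}^2=\sum_n |g_n|^2/\lambda_n ,
\end{align*}
where the $|g_n|^2$ are independent exponential random variables. Conditioning on all $g_n$ with $n\ge2$, the remaining summand $|g_1|^2/\lambda_1$ has an absolutely continuous law, so $\|u\|_{L^2}^2$ has no atoms. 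Dominated convergence with the same majorant then finishes the proof.

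The main obstacle is purely the integrability of the majorant at $\cK=\cK_c$, and this is exactly what \cite[Theorem 1.4]{OST22} supplies. The a.s.\ convergence $\cW^\eps\to\cW$ and the absence of atoms are routine.
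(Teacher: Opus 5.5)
Your proof is correct and follows the paper's argument: the same splitting into an $\eps$-term and an $\eta$-term, the same $\mu_0$-a.s.\ convergence $\cW^\eps\to\cW$ via H\"older and Young in $L^3$, and the same $\eta$-independent majorant $e^{\cW}\mathds{1}_{\{\|u\|_{L^2}\le\cK\}}$, which \cite[Theorem 1.4]{OST22} makes integrable up to the endpoint. The only difference is the last step. You rule out an atom of $\|u\|_{L^2}^2$ at $\cK^2$ by conditioning on $(g_n)_{n\ge2}$ and using that $|g_1|^2/\lambda_1$ has an absolutely continuous law, whereas the paper shows the stronger fact that $\|u\|_{L^2}^2$ has a bounded continuous density, via integrability of its characteristic function $\prod_j \lambda_j/(\lambda_j-it)$; your version is shorter and suffices here.
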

	\begin{proof}
		First, for any $\eps,\eta>0$, we have
		\begin{equation}\label{hartree:to:phi61:1}
			\begin{aligned}
			&	\left| \int_{\gH} e^{\cW^\eps(u)} f_\eta(\|u\|_{L^2}^2) \dif \mu_0 (u)- \int_{\gH} e^{\cW(u)}  \mathds{1}_{\{ \|u\|_{L^2}\leq \cK \}}\dif \mu_{0}(u) \right| 
			\\ &\leq \int_{\gH} \left|e^{\cW^\eps(u)} -e^{\cW(u)}  \right| f_\eta(\|u\|_{L^2}^2) \dif \mu_0 (u)+\int_{\gH} e^{\cW(u)}  \left| \mathds{1}_{\{ \|u\|_{L^2}\leq \cK\}} - f_\eta(\|u\|_{L^2}^2) \right| \dif \mu_{0}(u) 
			\\ &\leq \int_{\gH} \left|e^{\cW^\eps(u)} -e^{\cW(u)}  \right| \mathds{1}_{\{ \|u\|_{L^2}\leq \cK_c \}}   \dif \mu_0 (u)+\int_{\gH} e^{\cW(u)} \mathds{1}_{\{ \cK^2 -\eta\leq  \|u\|_{L^2}^2 \leq \cK^2 \}} \dif \mu_{0}(u) .
			\end{aligned}
		\end{equation}
		We aim to show that both terms on the right-hand side of \eqref{hartree:to:phi61:1} vanish as $\eps, \eta \to 0$. Concerning the first term, we claim that
		\begin{align}\label{hartree:to:phi61:1:a.s.cv}
			\lim_{\eps \to 0}  e^{\cW^\eps(u)}  = e^{\cW(u)}  , \qquad \mu_0\mathrm{-almost \,\, surely}.
		\end{align}
		By a fundamental factorization and H\"older's inequality, we deduce that for any $\eps>0$
			\begin{align}\label{dominated control 1}
				\left| \cW^\eps(u)-\cW(u) \right| &=\frac16 \left|\int_{\T} \left(w^\eps*|u|^2 -|u|^2 \right)
				\left(w^\eps*|u|^2 +|u|^2 \right)|u|^2  \dif x \right| \nonumber
				\\&\leq  \left\| w^\eps*|u|^2-|u|^2 \right\|_{L^3} \left(\left\| w^\eps*|u|^2 \right\|_{L^3} +  \|u\|_{L^6}^2\right)  \|u\|_{L^6}^2  \nonumber
				\\ &\leq  \left\| w^\eps*|u|^2-|u|^2 \right\|_{L^3}  \left(\left\| w^\eps \right\|_{L^1} \|u\|_{L^6}^2+  \|u\|_{L^6}^2\right)  \|u\|_{L^6}^2  \nonumber
				\\ &=2  \left\| w^\eps*|u|^2-|u|^2 \right\|_{L^3} \|u\|_{L^6}^4,
			\end{align}
		where we used Young's inequality in the third line and the fact $\| w^\eps \|_{L^1} =1$ in the last line. By the Sobolev embedding $H^{\frac13}\subset L^6$, we have that $|u|^2 \in L^3$, $\mu_0$-almost surely. Then, it follows that 
		\begin{align}\label{dominated control 2}
			\| w^\eps*|u|^2-|u|^2 \|_{L^3} \to0 , \qquad \mu_0-\mathrm{almost \, \, surely}.
		\end{align}
		By combining \eqref{dominated control 1}, \eqref{dominated control 2} together with the fact that $ \|u \|_{L^6}< \infty$ $\mu_0$-almost surely, we obtain
		\begin{align}\label{a.s.cv}
			\lim_{\eps \to 0}  \cW^\eps(u) = \cW(u) , \qquad \mu_0\mathrm{-almost \,\, surely}.
		\end{align}
		Then, the claim \eqref{hartree:to:phi61:1:a.s.cv} follows from the continuity of the exponential function.	
		Combining \eqref{lemma2.3:RS25} with \cite[Theorem 1.4]{OST22} yields that for any $\eps>0$
		\begin{align}\label{uniform integrability1}
			\left| e^{\cW^\eps(u)} - e^{\cW(u)}  \right| \mathds{1}_{\{\|u\|_{L^2}\leq \cK_c \}}   \leq  2 e^{\frac{1}{6}\|u\|_{L^6}^6} \mathds{1}_{\{\|u\|_{L^2}\leq \cK_c \}} \in L^1(\dif \mu_0).
		\end{align}
		In view of \eqref{hartree:to:phi61:1:a.s.cv}, \eqref{uniform integrability1}, and the dominated convergence theorem, it follows that
		\begin{align}\label{hartree:to:phi61:1:cv}
			\lim_{\eps \to 0} \int_{\gH} \left|e^{\cW^\eps(u)} -e^{\cW(u)}  \right| \mathds{1}_{\{ \|u\|_{L^2}\leq \cK_c \}}   \dif \mu_0 (u)=0.
		\end{align}
		
		Next, we turn to the second term on the right-hand side of \eqref{hartree:to:phi61:1} and show that
		\begin{align}\label{f:to:indicator}
			\lim_{\eta \to 0}\int_{\gH} e^{\frac16 \|u\|_{L^6}^6}  \mathds{1}_{\{ \cK^2-\eta \leq \|u\|_{L^2}^2 \leq \cK^2\}}  \dif \mu_{0} (u) =0,
		\end{align}
		for any $\cK\in(0,\cK_c]$. Consider the measure $\dif \mu_{\Phi}(u):= \mathds{1}_{\{  \|u\|_{L^2}\leq  \cK\}}  \dif \mu_{0} (u)$. It follows from \cite[Theorem 1.4]{OST22} that \begin{align*}
			e^{\frac16 \|u\|_{L^6}^6} \in L^1(\dif \mu_{\Phi}).
		\end{align*}
		To derive \eqref{f:to:indicator}, it suffices to show that
		\begin{align*}
			\lim_{\eta \to 0}	 \mu_{\Phi} \left( \left\{ \cK^2-\eta \leq \|u\|_{L^2}^2 \leq \cK^2\right\} \right)=0.
		\end{align*}
		Since $ \mu_{\Phi} ( \{ \cK^2-\eta \leq \|u\|_{L^2}^2 \leq \cK^2 \} ) = \mu_0 ( \{ \cK^2-\eta \leq \|u\|_{L^2}^2 \leq \cK^2 \} \cap  \{  \|u\|_{L^2}^2 \leq \cK^2 \}  ) = \mu_0 ( \{ \cK^2-\eta \leq \|u\|_{L^2}^2 \leq \cK^2 \}   )$, this reduces to proving that
		\begin{align*}
			\lim_{\eta \to 0}	 \mu_0 \left( \left\{ \cK^2-\eta \leq \|u\|_{L^2}^2 \leq \cK^2\right\} \right)=0.
		\end{align*}
		Under the Gaussian free field measure $\mu_0$, we can write $\|u\|_{L^2}^2=\sum_{j=1}^\infty |\alpha_j|^2$, where $\{\alpha_j\}_{j\geq 1}$ denotes a sequence of independent complex-valued Gaussian random variables. Specifically, $\mathrm{Re} \, \alpha_j$ and $\mathrm{Im} \,\alpha_j$ are independent real-valued mean-zero Gaussian random variables with variance $(2\lambda_j)^{-1}$. Consequently, $\{|\alpha_j|^2\}_{j\geq 1}$ are independent exponential random variables with density $f_j(x)= \lambda_j e^{-\lambda_j x}\mathds{1}_{\{x\geq0\}}$. 
		Define the random variable $S_N=\sum_{j=1}^N |\alpha_j|^2$, it follows that $S_N\xrightarrow{N \to \infty} S_\infty:=\|u\|_{L^2}^2$ $\mu_0$-almost surely.
		Our objective is to establish that $S_\infty$ admits a continuous density with respect to the Lebesgue measure. To this end, we first observe that the characteristic function of the random variable $|\alpha_j|^2$ is given by
		\begin{align*}
			\phi_j(t)=	\int e^{it |\alpha_j|^2}  \dif \mu_0 =\frac{\lambda_j}{\lambda_j-it}, \qquad t \in \R.
		\end{align*}
		Since $S_N$ converges to $S_\infty$ $\mu_0$-almost surely, the dominated convergence theorem and the independence of $|\alpha_j|^2$ imply that the characteristic function of $S_\infty$ is given by
		\begin{align*}
			\phi_{S_\infty}(t)=\lim_{N\to \infty}   \phi_N(t)=\lim_{N\to \infty}  \prod_{j=1}^N \int e^{it |\alpha_j|^2}  \dif \mu_0= \prod_{j=1}^\infty \frac{\lambda_j}{\lambda_j-it}, \qquad  t \in \R.
		\end{align*}
		We then verify 
		\begin{align*}
			\int_{\R} |	\phi_{S_\infty}(t) |\dif t &\leq 	\int_{\R}  \prod_{j=1}^\infty \left| \frac{\lambda_j}{\lambda_j-it}\right| \dif t \leq 	\int_{\R}  \prod_{j=1}^\infty \frac{\lambda_j}{\sqrt{\lambda_j^2+t^2} }\dif t
			\\ &\leq \int_{|t|\leq 1}1 \dif t+ \int_{|t|> 1} \frac{\lambda_1\lambda_2}{\sqrt{(\lambda_1^2+t^2)(\lambda_2^2+t^2)}} \dif t <\infty.
		\end{align*}
		By the inversion formula of the characteristic function (see, e.g., \cite[Theorem 3.3.5]{Dur19}), we obtain that $S_\infty$ has bounded continuous density $f_{S_\infty}$. Then, we obtain
		\begin{align*}
			\lim_{\eta \to 0}\mu_0 \left( \left\{ \cK^2-\eta \leq \|u\|_{L^2}^2 \leq \cK^2\right\} \right) =\lim_{\eta \to 0} \int_{\cK^2 -\eta}^{\cK^2} f_{S_\infty}(x) \dif x=0.
		\end{align*}
		Combining \eqref{hartree:to:phi61:1}, \eqref{hartree:to:phi61:1:cv} and \eqref{f:to:indicator}, we complete the proof of Proposition~\ref{prop:hartree to phi}.
	\end{proof}
	Following the proof of \eqref{f:to:indicator}, we also derive for any $\cK\in (0,\cK_c]$,
	\begin{align}\label{f:to:indicator1}
		\lim_{\eta \to 0} \int_{\gH} f_\eta(\|u\|_{L^2}^2) \dif \mu_{0} (u) = \int_{\gH}  \mathds{1}_{\{  \|u\|_{L^2} \leq \cK \}}  \dif \mu_{0} (u) .
	\end{align}

	\section{Conclusion of the main results}\label{Proof of main results}
We are now ready to complete the proofs of the main results. In Subsection~\ref{subsec:5.1}, we prove the convergence of the relative partition function in both the subcritical and critical regimes. In Subsection~\ref{conclusion:density:matrix}, we establish the convergence of the reduced density matrices in these two regimes. Finally, using the non-normalizability result of \cite[Theorem 2.2(b)]{LRS88}, we prove the divergence of the relative partition function stated in Theorem~\ref{blow:up:thm}.
	
	\subsection{Convergence of the relative partition function in the subcritical/critical regime}\label{subsec:5.1}
In this subsection, we prove the convergence of the relative partition function in both the mass-subcritical and critical regimes. The strategy is the same in the two cases. We first use the quantum-to-Hartree convergence at fixed mass cutoff obtained in the preceding sections, and then pass from the Hartree-level approximation to the focusing $\Phi^6_1$ measure.

	\begin{proof}[Proof of \eqref{main:result:partition} and \eqref{main:result:partition:sub}]
		Let $\cK\in (0,\cK_c]$ be arbitrary. Throughout this proof, and in the subsequent arguments, let the cutoff $f_\eta$ satisfy Assumption~\ref{assum:cutoff} with cutoff parameter $\cK$.
		Combining Proposition~\ref{Thm:Hartree} in the critical case, Proposition~\ref{Pro:Hartree} in the subcritical case, and Proposition~\ref{prop:hartree to phi}, and then letting $\tau \to \infty$ and $\eps,\eta \to 0$, we obtain
		\begin{align}\label{proof of main1}
			\left| \frac{\mathcal{Z}^{f_\eta}_{\tau}}{\mathcal{Z}^{f_\eta}_{\tau,0}} - \frac{\int_{\gH} e^{\frac16 \|u\|_{L^6}^6}  \mathds{1}_{\{ \|u\|_{L^2}\leq \cK \}} \dif \mu_{0}(u)  }{\int_{\gH}  \mathds{1}_{\{ \|u\|_{L^2}\leq \cK \}} \dif \mu_{0}(u)  }\right| \to 0.
		\end{align}
	The only point requiring comment is the admissible choice of the interaction scale. If $\cK<\cK_c$, then we may take $\eps \geq \tau^{-\frac{1}{96}} \to 0$ whereas at the threshold $\cK=\cK_c$, the argument requires $\eps  \gtrsim ( \log \tau)^{-\frac12} \to 0$. It remains to identify the denominator with the corresponding quantum free partition function. Combining \eqref{f:to:indicator1} with \eqref{convergence:for:feta}, we have, as $\tau \to \infty$ and $\eta\to 0$,
		\begin{align*}
			&\left|	 \frac{\mathcal{Z}^{f_\eta}_{\tau,0}}{\cZ_{\tau,0}} -
			\int_{\mathfrak{H}} \mathds{1}_{\{\|u\|_{L^2}\leq \cK\}} \, \dif \mu_0 (u) \right|
			\\ &\leq 	\left| \tr_{\gF(\gH)} \left( \Gamma_{\tau,0} f_\eta\left( \cN/\tau\right) \right)-\int_{\gH} f_\eta(\|u\|^2_{L^2}) \dif \mu_0(u) \right| +\left| \int_{\gH}  \mathds{1}_{\{  \|u\|_{L^2} \leq \cK \}} -f_\eta(\|u\|_{L^2}^2)  \dif \mu_{0} (u)  \right|  \to 0.
		\end{align*}
		Together with \eqref{proof of main1}, this proves \eqref{main:result:partition} and \eqref{main:result:partition:sub}.
	\end{proof}

		\subsection{Convergence of the density matrices in the subcritical/critical regime}\label{conclusion:density:matrix}
	We next prove the trace-class convergence of the reduced density matrices. The argument is again the same in the mass-subcritical and critical regimes.
	\begin{proof}[Proof of \eqref{main:result:density matrices} and \eqref{main:result:density matrices:sub}]
	Let $\cK\in(0,\cK_c]$ be arbitrary. By the triangle inequality,
		\begin{align}\label{triangle decomposition final trace norm}
		&	 \left\| \frac{k!}{\tau^k}(\Gamma^{f_\eta}_{\tau})^{(k)} -\int_{\gH}|u^{\otimes k}\rangle\langle u^{\otimes k}|\,\dif \mu^{\cK}(u) \right\|_{\gS^1(\gH^{(k)})}	\nonumber
		 \\ &\qquad  \leq  \left\| \frac{k!}{\tau^k}(\Gamma^{f_\eta}_{\tau})^{(k)}  
	- \int_{\gH}|u^{\otimes k}\rangle\langle u^{\otimes k}|\,\dif \mu^{\eps,f_\eta}(u) \right\|_{\gS^1(\gH^{(k)})}  \nonumber
		\\ &\qquad \qquad +\left\| \int_{\gH}|u^{\otimes k}\rangle\langle u^{\otimes k}|\,\dif \mu^{\eps,f_\eta}(u)  - \int_{\gH}|u^{\otimes k}\rangle\langle u^{\otimes k}|\,\dif \mu^{\cK}(u) \right\|_{\gS^1(\gH^{(k)})} .
		\end{align}
		The first term on the right-hand side of \eqref{triangle decomposition final trace norm} tends to $0$ by \eqref{key:bound2} in the critical regime and by \eqref{key:bound:sub2} in the subcritical regime. We now estimate the second term. By the same calculations as in \eqref{P:to:infty:newS1}, we deduce for all $k\geq 1$,
		\begin{align}\label{density matrix convergence4:new}
	&	\left\| \int_{\gH}|u^{\otimes k}\rangle\langle u^{\otimes k}|\,\dif \mu^{\eps,f_\eta}(u)  - \int_{\gH}|u^{\otimes k}\rangle\langle u^{\otimes k}|\,\dif \mu^{\cK}(u) \right\|_{\gS^1(\gH^{(k)})} 
		\\&\quad \leq  \frac{1}{ \int_{\gH} e^{\cW(u)}  \mathds{1}_{\{ \|u\|_{L^2}\leq \cK \}} \dif \mu_{0}(u)} \int_{\gH} \|u\|_{L^2}^{2k}	\left|  e^{\cW^\eps(u)} f_\eta(\|u\|_{L^2}^2) - e^{\cW(u)}  \mathds{1}_{\{ \|u\|_{L^2}\leq \cK \}}  \right|  \dif \mu_{0}(u)\nonumber 
		\\ &\quad + \frac{\int_{\gH} \|u\|_{L^2}^{2k}  \dif \mu^{\eps,f_\eta}(u)}{ \int_{\gH} e^{\cW(u)}  \mathds{1}_{\{ \|u\|_{L^2}\leq \cK \}} \dif \mu_{0}(u)} \int_{\gH} \left|  e^{\cW^\eps(u)} f_\eta(\|u\|_{L^2}^2) - e^{\cW(u)}  \mathds{1}_{\{ \|u\|_{L^2}\leq \cK \}}  \right|  \dif \mu_{0}(u). \nonumber 
		\end{align}
		By $\int_{\gH} \|u\|_{L^2}^{2k}  \dif \mu^{\eps,f_\eta}(u)\leq \cK^{2k}$ and Proposition~\ref{prop:hartree to phi}, it only remains to show that as $\eps,\eta \to 0$,
		\begin{equation}\label{density matrix convergence41}
			\begin{aligned}
				&	\int_{\gH} \|u\|_{L^2}^{2k}	\left|  e^{\cW^\eps(u)} f_\eta(\|u\|_{L^2}^2) - e^{\cW(u)}  \mathds{1}_{\{ \|u\|_{L^2}\leq \cK \}}  \right|  \dif \mu_{0}(u)
				\\	& \leq \int_{\gH}   \|u\|_{L^2}^{2k}	 \left|e^{\cW^\eps(u)} -e^{\cW(u)}  \right| \mathds{1}_{\{ \|u\|_{L^2}\leq \cK_c \}}   \dif \mu_0 (u)
		+\int_{\gH}  \|u\|_{L^2}^{2k}	 e^{\cW(u)} \mathds{1}_{\{ \cK^2 -\eta\leq  \|u\|_{L^2}^2 \leq \cK^2 \}} \dif \mu_{0}(u) \to0.
			\end{aligned}
		\end{equation}
		Repeating the arguments as in the proof of Proposition~\ref{prop:hartree to phi}, one can check that it holds $\mu_0$-almost surely,
		\begin{align*}
			\lim_{ \eps \to 0}  e^{\cW^\eps(u)}  \|u\|_{L^2}^{2k}	 = e^{\cW(u)}   \|u\|_{L^2}^{2k}	,
		\end{align*}
		for all $k\geq 1$.
		Also, it follows from \eqref{uniform integrability1} that
		\begin{equation*}
			\begin{aligned}
			 \|u\|_{L^2}^{2k}		\left| e^{\cW^\eps(u)} - e^{\cW(u)}  \right| \mathds{1}_{\{\|u\|_{L^2}\leq \cK_c \}}   \leq  C_{k,\cK_c} e^{\frac{1}{6}\|u\|_{L^6}^6} \mathds{1}_{\{\|u\|_{L^2}\leq \cK_c \}} \in L^1(\dif \mu_0).
			\end{aligned}
		\end{equation*}
		Hence, by using the dominated convergence theorem, we obtain that as $\eps \to 0$
		\begin{align}\label{hartree:to:phi61:1:cv:new}
			\int_{\gH}  \|u\|_{L^2}^{2k}	 \left|e^{\cW^\eps(u)} -e^{\cW(u)}  \right| \mathds{1}_{\{ \|u\|_{L^2}\leq \cK_c \}}   \dif \mu_0  (u)\to 0.
		\end{align}
		Invoking again \eqref{f:to:indicator} implies that as $\eta\to 0$
		\begin{align}
			\int_{\gH} \|u\|_{L^2}^{2k}	e^{\cW(u)} \mathds{1}_{\{ \cK^2 -\eta\leq  \|u\|_{L^2}^2 \leq \cK^2 \}} \dif \mu_{0}(u)\lesssim_{\cK,k}	\int_{\gH}  e^{\cW(u)} \mathds{1}_{\{ \cK^2 -\eta\leq  \|u\|_{L^2}^2 \leq \cK^2 \}} \dif \mu_{0}(u) \to0.
		\end{align}
		This, together with \eqref{hartree:to:phi61:1:cv:new}, implies \eqref{density matrix convergence41}. Hence, we complete the proof of \eqref{main:result:density matrices} and \eqref{main:result:density matrices:sub} .
	\end{proof}

	\subsection{Blow-up of the relative partition function above the threshold}
	In this subsection, we prove the divergence of the relative partition function by combining the Hartree-type convergence result in Proposition~\ref{Thm:Hartree} with the non-normalizability above the threshold established in \cite[Theorem 2.2(b)]{LRS88}. The mechanism is straightforward. Proposition~\ref{Thm:Hartree} yields a semiclassical lower bound in terms of the Hartree partition function, which diverges as $\eps \to 0$ above the threshold. Hence, the same divergence must occur for the quantum relative partition function.
	\begin{proof}[Proof of Theorem~\ref{blow:up:thm}]
		To apply Proposition~\ref{Thm:Hartree}, we first choose the cutoff $f_\eta$ as in Assumption~\ref{assum:cutoff}, with cutoff parameter $\cK> \cK_c$. For $\eta>0$ sufficiently small, we have $\cK^2- \eta > \frac12(\cK^2_c+\cK^2)$, and hence $\mathds{1}_{[0,\frac12(\cK^2_c+\cK^2) ] } (x)\leq f_\eta (x)\leq g(x)$. It follows from Proposition~\ref{Thm:Hartree} and \eqref{convergence:for:feta} that, for $1>\eps \geq M \left(\log  \tau\right)^{-\frac{1}{2}}$ and $\eta \in [\tau^{-\frac{1}{64}}, \frac12 \cK^2)$, as $\tau \to \infty$,
		\begin{align}\label{blow:up11}
			\frac{\tr_{\gF(\gH)}\left(e^{-\bbH_\tau}  g(\cN/\tau)  \right)}{\tr_{\gF(\gH)} \left(e^{-\mathbb{H}_{\tau,0}}   \right)}&  \geq  	\frac{\tr_{\gF(\gH)}\left(e^{-\bbH_\tau}  f_\eta(\cN/\tau) \right)}{\tr_{\gF(\gH)} \left(e^{-\mathbb{H}_{\tau,0}}   \right)} 
			\to \int_{\gH} e^{\cW^\eps(u)} f_\eta(\|u\|_{L^2}^2) \dif \mu_0(u) 
			\\ & \qquad \qquad \geq \int_{\gH} e^{\cW^\eps(u)} \mathds{1}_{\{\|u\|^2_{L^2}\leq \frac12\cK^2_c+\frac12 \cK^2 \}} \dif \mu_0(u)  =:    \cZ^\eps. \nonumber
		\end{align}
	To conclude the proof, it remains to verify that $	 \lim_{\eps \to 0} \cZ^\eps=\infty$.
		Given $\eps\geq 0$ and $R>0$, we define an auxiliary function
		\begin{align*}
			F_{\eps,R}(u):=
			\begin{cases}
				e^{\min  \left\{ \cW^\eps(u), R \right\}}, &\eps >0,\\
				e^{\min  \left\{ \cW(u), R \right\}}, & \eps =0.
			\end{cases}
		\end{align*}
		Then, for any $\eps>0$ and $R>0$,  it holds that
		\begin{align}\label{Zeps:below}
			\cZ^\eps \geq \int_{ \gH}  	F_{\eps,R}(u) \mathds{1}_{\{\|u\|^2_{L^2}\leq \frac12\cK^2_c+\frac12 \cK^2 \}}   \dif \mu_0(u).
		\end{align}
		Observe that $F_{0,R}$ increases to $e^{\cW(u)}$ almost surely. By the monotone convergence theorem, we obtain
		\begin{align*}
			\int_{ \gH}  	F_{0,R}(u) \mathds{1}_{\{\|u\|^2_{L^2}\leq \frac12\cK^2_c+\frac12 \cK^2 \}} \dif \mu_0(u)  \xrightarrow{R\to \infty} \int_{ \gH}  e^{\cW(u)} \mathds{1}_{\{\|u\|^2_{L^2}\leq \frac12\cK^2_c+\frac12 \cK^2 \}} \dif \mu_0(u) =\infty ,
		\end{align*}
		where the last equality follows from the non-normalizability of the focusing $\Phi^6_1$ measure above the optimal threshold, see \cite[Theorem 2.2 (b)]{LRS88}. Hence, for any $A>0$, there exists $R(A)$ such that 
		\begin{align}\label{blow:up2}
			\int_{ \gH}  	F_{0,R(A)}(u) \mathds{1}_{\{\|u\|^2_{L^2}\leq \frac12\cK^2_c+\frac12 \cK^2 \}}  \dif \mu_0(u) \geq 2A.
		\end{align}
		From \eqref{a.s.cv}, we deduce that $	F_{\eps,R(A)}(u)\xrightarrow{\eps \to 0} e^{\min  \left\{ \cW(u), R(A) \right\}}$, $\mu_0$-almost surely. Since $	F_{\eps,R(A)}\leq e^{R(A)}$, the dominated convergence theorem yields
		\begin{align*}
			\int_{ \gH}  	F_{\eps,R(A)}(u) \mathds{1}_{\{\|u\|^2_{L^2}\leq \frac12\cK^2_c+\frac12 \cK^2 \}} \dif \mu_0(u) \xrightarrow{\eps \to 0} \int_{ \gH}  	F_{0,R(A)}(u) \mathds{1}_{\{\|u\|^2_{L^2}\leq \frac12\cK^2_c+\frac12 \cK^2 \}}  \dif \mu_0(u).
		\end{align*}
		Together with \eqref{Zeps:below} and \eqref{blow:up2}, this implies that for any $A>0$, there exists $\eps(A)>0$ such that for any $\eps \in(0,\eps(A))$
		\begin{align*}
			\cZ^\eps \geq 	\int_{ \gH}  	F_{\eps,R(A)}(u) \mathds{1}_{\{\|u\|^2_{L^2}\leq \frac12\cK^2_c+\frac12 \cK^2 \}}  \dif \mu_0(u)  \geq \frac12 \int_{ \gH}  	F_{0,R(A)}(u) \mathds{1}_{\{\|u\|^2_{L^2}\leq \frac12\cK^2_c+\frac12 \cK^2 \}}  \dif \mu_0(u) \geq A.
		\end{align*}
		Hence, we obtain $ \lim_{\eps \to 0} \cZ^\eps=\infty$. This completes the proof of Theorem~\ref{blow:up:thm}.
	\end{proof}

		\appendix
	\renewcommand{\appendixname}{Appendix~\Alph{section}}
	\renewcommand{\theequation}{A.\arabic{equation}}
	\section{Quantitative convergence rate for the truncated free Gibbs state}\label{sec:appendix A}
	In this appendix, we prove the auxiliary convergence statement for the free Gibbs state that is used repeatedly in the variational argument. The point of the result is to justify, with a quantitative error, the replacement of the quantum mass cutoff $f_\eta(\cN/\tau)$ by its classical counterpart $f_\eta(\|u\|_{L^2}^2)$. Since the cutoff
	depends on the small parameter $\eta$, we keep track of the $\eta$-dependence in the estimates.
	\begin{proposition}\label{lemma:non-interacting}
	Let $\cK>0$ be arbitrary, and let $\tau >0$ be sufficiently large. For any $\eta \in \big[ \tau^{-\frac{1}{64}}, \frac12 \cK^2\big)$, let the cutoff $f_\eta$ satisfy Assumption~\ref{assum:cutoff} with cutoff parameter $\cK$. Let $\Gamma_{\tau,0}$  be the free Gibbs state defined in \eqref{def:free:Gibbs:state}. Then, as $\tau \to \infty$, $\eta \to 0$,
	\begin{align}\label{convergence:for:feta}
	\left|	\tr_{\gF(\gH)} \left( \Gamma_{\tau,0} f_\eta (\cN/\tau)\right) - \int_{\mathfrak{H}}  f_\eta(\|u\|_{L^2}^2) \,\dif \mu_0(u) \right| \to 0.
	\end{align}
	 More generally, for every $g\in C^\infty_c([0,\infty); \R_+)$,
	\begin{align}\label{convergence:for:any:g}
		\lim_{\tau \to \infty}	 \tr_{\gF(\gH)} \left( \Gamma_{\tau,0} g\left( \cN/\tau \right) \right) =  \int_{\mathfrak{H}}  g(\|u\|_{L^2}^2) \,\dif \mu_0(u) .
	\end{align}
\end{proposition}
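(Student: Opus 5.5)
The starting point is that the free Gibbs state factorizes over the eigenmodes of $h$. Under $\Gamma_{\tau,0}$ the occupation numbers $n_j$ of the modes $u_j$ are independent geometric random variables with $\mathbf{P}(n_j=m)=(1-e^{-\lambda_j/\tau})e^{-m\lambda_j/\tau}$. Hence $\tr(\Gamma_{\tau,0} f_\eta(\cN/\tau))=\mathbf{E}\, f_\eta(S_\tau)$, where
\begin{align*}
S_\tau:=\tau^{-1}\sum_j n_j .
\end{align*}
On the classical side, $\|u\|_{L^2}^2=S_\infty=\sum_j|\alpha_j|^2$, where the $|\alpha_j|^2$ are independent exponential variables with parameter $\lambda_j$, as in the proof of Proposition~\ref{prop:hartree to phi}. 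The claim therefore reduces to a quantitative weak convergence $S_\tau\to S_\infty$ tested against $f_\eta$, whose derivatives blow up like $\eta^{-j}$.

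\textbf{Step 1: characteristic functions.} For each $j$, $\mathbf{E}e^{it n_j/\tau}=\dfrac{1-e^{-\lambda_j/\tau}}{1-e^{-\lambda_j/\tau+it/\tau}}$. I will compare this with $\lambda_j/(\lambda_j-it)$. For $\lambda_j\le\tau^{1/2}$ (a low-mode cutoff, possibly with a smaller power) and $|t|\le \tau^{1/4}$, a Taylor expansion gives a relative error $O((\lambda_j+|t|)/\tau)$. Summing over the at most $\tau^{1/4}$ such modes, the product of the low-mode factors differs from the classical product by $O(\tau^{-1/4}(1+|t|))$. For the high modes I will avoid characteristic functions. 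Instead I use the first moments $\mathbf{E}\,\tau^{-1}\sum_{\lambda_j>\Lambda} n_j\le \tr(\mathds 1(h>\Lambda)h^{-1})\lesssim\Lambda^{-1/2}$, which is exactly the computation in \eqref{Wp-W:second}, together with the analogous bound for $S_\infty$. Since $f_\eta$ is $\eta^{-1}$-Lipschitz, removing the high modes costs $O(\eta^{-1}\Lambda^{-1/2})$ on each side.

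\textbf{Step 2: Fourier inversion against $f_\eta$.} For the low-mode sums $S^{\le}_\tau$ and $S^{\le}_\infty$ I write
\begin{align*}
\mathbf{E}f_\eta(S^{\le}_\tau)-\mathbf{E}f_\eta(S^{\le}_\infty)=\frac1{2\pi}\int\hat f_\eta(t)\big(\phi_\tau(t)-\phi_\infty(t)\big)\,\dif t .
\end{align*}
Here I replace $f_\eta$ by $f_\eta\chi$ with a fixed compactly supported $\chi$, using the first moments to control large masses. I then split the integral at $|t|=T$. On $|t|\le T$ Step 1 gives an error $O(\tau^{-1/4}T^2\|f_\eta\|_{L^1})$. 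On $|t|>T$ I bound $|\hat f_\eta(t)|\lesssim \eta^{-2}|t|^{-2}$ using two derivatives of $f_\eta$, and $|\phi|\le1$, which gives an error $O(\eta^{-2}T^{-1})$. Choosing $T=\tau^{1/16}$ and $\Lambda=\tau^{1/4}$, and using $\eta\ge\tau^{-1/64}$, all errors become negative powers of $\tau$. The rate is of order $\tau^{-1/64}$ or better, which is consistent with the rate $C\tau^{-1/4}$ quoted in \eqref{diff:trf(NP)-intf(u)} if the exponents are chosen more carefully. That finer projected version with $P$ (Lemma~\ref{project:convergence:rate}) follows from the same argument restricted to the modes in $P$.

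\textbf{Step 3: the general case and the main obstacle.} For \eqref{convergence:for:any:g} the same argument applies with a fixed $g$, so no $\eta$-bookkeeping is needed; alternatively it follows from convergence in law $S_\tau\to S_\infty$ via Lévy's continuity theorem. The main obstacle is the uniformity of Step 1: I need the product over up to $\tau^{1/4}$ factors to be controlled with an error growing only polynomially in $|t|$. The plan is to bound $\log$ of each ratio by $C(\lambda_j+|t|)|t|/\lambda_j\tau$ and sum, using $\sum_j\lambda_j^{-1}<\infty$. This makes the sum at most $C|t|(1+|t|)\tau^{-3/4}$, which is even better than needed.
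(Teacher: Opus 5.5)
Your proposal is correct, but it takes a different route from the paper. The paper never uses the geometric law of the occupation numbers explicitly. It first proves the projected estimate of Lemma~\ref{project:convergence:rate}, using the coherent-state resolution of the identity to write $\tr(\Gamma_{\tau,0,P}f_\eta(\cN_P/\tau))$ as a Gaussian average (variances $q_j^{-1}$, $q_j=\tau(1-e^{-\lambda_j/\tau})$) of a Poisson expectation $\mathbf{E}f_\eta(\tau^{-1}\sum_j\mathbf{X}_j)$. In your language, each geometric $n_j$ is represented as a Poisson mixture over an exponential variable. It then bounds three elementary errors by $C\tau^{-1/4}$, using only $\|f_\eta'\|_{L^\infty}\lesssim\eta^{-1}$: the Poisson fluctuations, the shift $e^{-\lambda_j/\tau}\to 1$ inside $f_\eta$, and the total variation between the two product Gaussian densities. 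The projection is then removed by essentially the same high-mode tail bounds as in your Step~1.

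The paper's route yields the projected estimate with an explicit rate, uniformly under $\eta^{-1}\le\Lambda_e^{1/8}\le\tau^{1/32}$. That is what Sections~\ref{sec3:lower:upper:free-energy}--\ref{sec:mass:subcritical:regime} reuse repeatedly, and it is the same coherent-state computation as in Step~3 of Proposition~\ref{prop:upper bound} and in Lemma~\ref{lemma:L1:norm}. Your characteristic-function argument is purely classical and self-contained. However, it needs two derivatives of $f_\eta$ and a frequency cutoff, so with your exponents the Fourier tail only gives about $\tau^{-1/32}$. The claimed $\tau^{-1/4}$ for the projected version is plausible (sharper per-mode bounds, more derivatives) but not carried out; this is harmless for the qualitative statement here.

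Two minor points. The per-mode ratio is actually $1+O(|t|/\tau)$, since $z\mapsto(1-e^{-z})/z$ is analytic and nonvanishing near $0$, so your ``main obstacle'' is not one. Also, no large-mass truncation is needed: $f_\eta$ vanishes beyond $\cK^2$, so you only need a smooth compactly supported extension to $s<0$.

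Your geometric observation in fact gives a shorter proof than either argument. Let $E_j\sim\mathrm{Exp}(\lambda_j)$ be independent, which is the law of $|\alpha_j|^2$ under $\mu_0$. Then $\lfloor\tau E_j\rfloor$ has exactly the law $\mathbf{P}(n_j=m)=(1-e^{-\lambda_j/\tau})e^{-m\lambda_j/\tau}$. Setting $n_j:=\lfloor\tau E_j\rfloor$, one has $0\le S_\infty-S_\tau\le\sum_j\min(E_j,\tau^{-1})$, and hence $\mathbf{E}|S_\infty-S_\tau|\le\sum_j\min(\lambda_j^{-1},\tau^{-1})\lesssim\tau^{-1/2}$. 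Therefore the difference in \eqref{convergence:for:feta} is $O(\eta^{-1}\tau^{-1/2})$, with no projection and no Fourier analysis. Restricting to the $J$ modes of $P$ gives the projected version with error $O(\eta^{-1}J\tau^{-1})$.
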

The proof of Proposition~\ref{lemma:non-interacting} proceeds in two steps. First, we restrict the problem to a finite-dimensional low-
frequency space $P\gH$ and prove a projected quantum-to-classical estimate with a quantitative rate. This is the content of Lemma~\ref{project:convergence:rate} below. In this finite-dimensional setting, the coherent-state decomposition turns the quantum trace into an integral of Poisson expectations. The desired convergence then follows by comparing the Poisson random variable with its mean and by comparing the finite-dimensional quantum Gaussian density with the corresponding classical Gaussian density. Second, we remove the projection, both on the quantum side and on the classical side, using the high-frequency tail of the free Gibbs state and of the Gaussian measure.
	\begin{lemma}\label{project:convergence:rate}
	We use the same conventions as in Proposition~\ref{lemma:non-interacting}. Assume that $\eta^{-1} \leq \Lambda_e^{\frac18} \leq \tau^{\frac{1}{32}}$, and let $P:=\mathds{1}(h\leq \Lambda_e)$ be the orthogonal projection on $ \gH$. Let $\Gamma_{\tau,0,P} $ be the free Gibbs state on $\gF(P\gH)$. Then, there exists a constant $C>0$ independent of $\tau$, $\Lambda_e$ and $\eta$, such that
		\begin{align}\label{convergence on same index2:new}
			\Big| \tr_{\gF(P\gH)} \left( \Gamma_{\tau,0,P} f_\eta (\cN_P/\tau)\right) - \int_{\mathfrak{H}} f_\eta (\|Pu\|_{L^2}^2) \, \dif \mu_0 (u) \Big| \leq C \tau^{-\frac14}.
		\end{align}
	\end{lemma}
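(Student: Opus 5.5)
Since $\Gamma_{\tau,0,P}$ commutes with $\cN_P$, I would express $\tr_{\gF(P\gH)}(\Gamma_{\tau,0,P} f_\eta(\cN_P/\tau))$ as a coherent-state integral over $P\gH$, following the computation in \eqref{conherent-state:identity}--\eqref{tr:lower_two_part} with the interaction switched off. The Gibbs state is a function of $\bbH_{\tau,0,P}$, so the trace is not literally $\int \mathbf{E}[\,\cdot\,]$ of a Poisson variable. The cleanest route uses the lower symbol and the resolution of the identity \eqref{eq:resolution_coherent2} at scale $\tau^{-1}$:
\begin{align*}
\tr_{\gF(P\gH)}\big(\Gamma_{\tau,0,P} f_\eta(\cN_P/\tau)\big) = \Big(\frac{\tau}{\pi}\Big)^J \int_{P\gH} \big\langle \xi(\sqrt\tau u), f_\eta(\cN_P/\tau)\,\Gamma_{\tau,0,P}\,\xi(\sqrt\tau u)\big\rangle \,\dif u .
\end{align*}
Since $\Gamma_{\tau,0,P}$ is Gaussian, I would instead use its \emph{upper} (P-)symbol. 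For $\Gamma_{\tau,0,P}=\bigotimes_j (1-e^{-\lambda_j/\tau})e^{-\lambda_j a_j^\dagger a_j/\tau}$, the state is exactly a mixture of coherent states, $\Gamma_{\tau,0,P}=\int |\xi(\sqrt\tau u)\rangle\langle\xi(\sqrt\tau u)|\,\dif\tilde\mu(u)$, where $\tilde\mu$ is the centered Gaussian with covariance $\tau^{-1}(e^{\lambda_j/\tau}-1)^{-1}$ on mode $j$. Because $f_\eta(\cN_P/\tau)$ is diagonal in $n$, this gives exactly
\begin{align*}
\tr\big(\Gamma_{\tau,0,P} f_\eta(\cN_P/\tau)\big)=\int_{P\gH}\mathbf{E}\big[f_\eta(\mathbf{X}_u/\tau)\big]\,\dif\tilde\mu(u),\qquad \mathbf{X}_u\sim\mathrm{Poisson}(\tau\|u\|^2_{L^2}).
\end{align*}

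Next I would split the error into two pieces. For the Poisson smearing, the mean value theorem and \eqref{expectation:Poisson} give $|\mathbf{E} f_\eta(\mathbf{X}_u/\tau)-f_\eta(\|u\|^2)|\le \|f_\eta'\|_\infty \tau^{-1/2}\|u\|_{L^2}$. Integrating against $\tilde\mu$, whose second moment is $\sum_j \tau^{-1}(e^{\lambda_j/\tau}-1)^{-1}\le \tr(h^{-1})$, bounds this piece by $C\eta^{-1}\tau^{-1/2}\le C\tau^{-1/2+1/64}$. For the Gaussian comparison, I must bound $|\int f_\eta(\|u\|^2)\,\dif(\tilde\mu-\mu_{0,P})|\le \|f_\eta\|_\infty\|\tilde\mu-\mu_{0,P}\|_{TV}$. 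The covariances per mode are $\lambda_j^{-1}$ and $\tau^{-1}(e^{\lambda_j/\tau}-1)^{-1}=\lambda_j^{-1}(1+O(\lambda_j/\tau))$. By Pinsker and the explicit Gaussian relative entropy $\sum_j (r_j-1-\log r_j)\lesssim \sum_j(\lambda_j/\tau)^2 \le J\Lambda_e^2/\tau^2\le \Lambda_e^{5/2}\tau^{-2}$, the TV distance is $O(\Lambda_e^{5/4}/\tau)\le O(\tau^{-1/2})$ using $\Lambda_e\le\tau^{1/4}$. Finally, $\int f_\eta(\|Pu\|^2)\dif\mu_0=\int f_\eta\,\dif\mu_{0,P}$ by definition of the cylindrical projection.

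Combining the two pieces gives the bound $C\tau^{-1/2+1/64}\le C\tau^{-1/4}$. The main obstacle I expect is justifying the exact P-representation with the correct normalization, namely that $\tilde\mu$ is a probability measure and the per-mode identity $\,(1-q)q^{a^\dagger a}=\int |\xi(z)\rangle\langle \xi(z)|\,\frac{1}{\pi s}e^{-|z|^2/s}\dif z$ holds with $s=q/(1-q)$, $q=e^{-\lambda_j/\tau}$, after rescaling $z=\sqrt\tau\alpha$. This is a standard thermal-state formula; if it proves awkward, I would fall back on the lower-symbol bounds of \eqref{same NZZ} together with a matching lower bound, which gives an $L^1$ comparison of the same order.
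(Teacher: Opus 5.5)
Your proposal is correct and is essentially the paper's proof. The paper's identity, with Gaussian weights $\frac{q_j}{\pi}e^{-q_j|\alpha_j|^2}$ and Poisson means $\tau|\alpha_j|^2e^{-\lambda_j/\tau}$, becomes exactly your P-representation with $\tilde\mu$ after the substitution $\alpha_j\mapsto e^{-\lambda_j/(2\tau)}\alpha_j$, so your Poisson step is $\cE_{\mathrm{Pois}}$ and your Gaussian step is $\cE_{\mathrm{cut}}+\cE_{\mathrm{dens}}$ merged, with the only real difference being that you bound the total variation by Pinsker and the explicit Gaussian relative entropy ($O(\Lambda_e^{5/4}\tau^{-1})$) where the paper telescopes mode by mode ($O(\Lambda_e^{3/2}\tau^{-1})$), both well below $\tau^{-1/4}$.
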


	\begin{proof}
Since $P\gH=\mathrm{span}\{u_1,\ldots,u_J\}$, we identify $\gF(P\gH)\simeq\bigotimes_{j=1}^J \gF(\mathbb{C} u_j)$. For the corresponding occupation-number basis, we write
$$|n_1,\ldots,n_J \rangle =|n_1\rangle_1 \otimes \cdots \otimes |n_J\rangle_J ,\qquad |n_j\rangle_j= \frac{1}{\sqrt{n_j!}}(a^\dagger_j)^{n_j} |0\rangle . $$ Then, $a_j^\dagger a_j  |n_j\rangle_j =n_j |n_j\rangle_j$, and it follows that
	\begin{equation}\label{number:hit:on:basis}
	\begin{aligned}
		 f_\eta\left(\frac{\cN_P}{\tau}\right) \big|n_1,\ldots,n_J \big\rangle &= f_\eta \left( \frac{\sum_{j=1}^J n_j}{\tau} \right) \big|n_1,\ldots,n_J \big\rangle ,
		 \\ e^{-\sum_{j=1}^J\lambda_ja^\dagger_ja_j/\tau}\big|n_1,\ldots,n_J \big\rangle &= e^{-\sum_{j=1}^J \lambda_jn_j/\tau} \big|n_1,\ldots,n_J \big\rangle.
	\end{aligned}
		\end{equation}
	For $u=\sum_{j=1}^J \alpha_j u_j\in P\gH$, the coherent state factorizes into one-mode states:
	\begin{align}\label{eq:extension}
		\xi(u)=\bigotimes_{j=1}^J \xi(\alpha_j u_j)=e^{-\sum_{j=1}^J |\alpha_j|^2/2} \sum_{n_1,\ldots,n_J\geq 0} \prod_{j=1}^J \frac{\alpha_j^{n_j}}{\sqrt{n_j!}} |n_1,\ldots,n_J \rangle .
	\end{align}
Using the resolution of the identity in coherent states \eqref{eq:resolution_coherent2}, together with \eqref{number:hit:on:basis} and \eqref{eq:extension}, we obtain
	\begin{align*}
		&\tr_{\gF(P\gH)} \left( \Gamma_{\tau,0,P} f_\eta (\cN_P/\tau)\right) 
		=\prod_{j=1}^J \frac{\tau}{\pi} (1-e^{-\lambda_j/\tau})
		\int_{P\gH} \left\langle \xi \big( \sqrt{\tau}u \big), e^{-\bbH_{\tau,0,P}}f_\eta (\cN_P/\tau) \xi \big( \sqrt{\tau}u \big) \right\rangle \,\dif u \\
		&= \int_{P\gH}  \prod_{j=1}^J \frac{\tau}{\pi} (1-e^{-\lambda_j/\tau}) e^{-\tau  |\alpha_j|^2}
		\sum_{n_1,\ldots,n_J\geq 0} f_\eta \left( \frac{\sum_{j=1}^J n_j}{\tau} \right)
		\prod_{j=1}^J \frac{(\tau|\alpha_j|^2e^{-\lambda_j/\tau})^{n_j}}{n_j!} \,\dif u \\
		&=  \int_{P\gH} \prod_{j=1}^J  \frac{q_j}{\pi}  e^{-q_j |\alpha_j|^2}
		\sum_{n_1,\ldots,n_J\geq 0} f_\eta \left( \frac{\sum_{j=1}^J n_j}{\tau} \right)
		\prod_{j=1}^J e^{-\tau|\alpha_j|^2e^{-\lambda_j/\tau}}
		\frac{(\tau|\alpha_j|^2e^{-\lambda_j/\tau})^{n_j}}{n_j!} \,\dif u,
	\end{align*}
	where we defined $q_j:= \tau(1-e^{-\lambda_j/\tau})$. Let $\{\mathbf{X}_j\}_{j=1}^J$ be independent Poisson random variables with means $\tau |\alpha_j|^2e^{-\lambda_j/\tau} $, respectively. Then the preceding identity becomes
\begin{align*}
	 \tr_{\gF(P\gH)} \left( \Gamma_{\tau,0,P} f_\eta (\cN_P/\tau)\right) = \int_{P\gH} \mathbf{E} \left[ f_\eta  \left(\frac{\sum_{j=1}^J \mathbf{X}_j}{\tau} \right)\right]   \prod_{j=1}^J  \frac{q_j}{\pi}  e^{-q_j |\alpha_j|^2}   \dif u.
\end{align*}
We compare this expression with the classical Gaussian integral in three elementary steps: the Poisson variables are first replaced by their means; the means are then replaced by $\sum_j |\alpha_j|^2$; finally, the Gaussian density with variances $q_j^{-1}$ is compared with the density with variances $\lambda_j^{-1}$. This gives
\begin{align*}
  \Big| &\tr_{\gF(P\gH)}  \left( \Gamma_{\tau,0,P} f_\eta (\cN_P/\tau)\right) - \int_{\gH}  f_{\eta} (\|Pu\|_{L^2}^2)\dif \mu_{0}(u) \Big|
	 \\ &\leq  \int_{P\gH} \mathbf{E} \Big| f_\eta  \Big( \frac{1}{\tau} \sum_{j=1}^J \mathbf{X}_j \Big) -  f_\eta  \Big(\sum_{j=1}^J |\alpha_j|^2 e^{-\lambda_j/\tau} \Big)  \Big|   \prod_{j=1}^J  \frac{q_j}{\pi}  e^{-q_j |\alpha_j|^2}   \dif u \\
	 &\quad +   \int_{P\gH}  \Big| f_\eta  \Big(\sum_{j=1}^J |\alpha_j|^2 e^{-\lambda_j/\tau} \Big) - f_\eta  \Big(\sum_{j=1}^J |\alpha_j|^2 \Big) \Big|   \prod_{j=1}^J  \frac{q_j}{\pi}  e^{-q_j |\alpha_j|^2}   \dif u \\
	 &\quad +  \int_{P\gH}   f_\eta  \Big(\sum_{j=1}^J |\alpha_j|^2 \Big) \Big|   \prod_{j=1}^J  \frac{q_j}{\pi}  e^{-q_j |\alpha_j|^2} -  \prod_{j=1}^J  \frac{\lambda_j}{\pi}  e^{-\lambda_j |\alpha_j|^2}  \Big| \dif u =: \cE_{\mathrm{Pois}}+\cE_{\mathrm{cut}}+ \cE_{\mathrm{dens}}.
\end{align*}
We now estimate these three errors separately. Since $\sum_{j=1}^J \mathbf{X}_j$ is Poisson with mean $\tau \sum_{j=1}^J |\alpha_j|^2e^{-\lambda_j/\tau} $, Assumption~\ref{assum:cutoff} and H\"older's inequality imply
\begin{align*}
	\cE_{\mathrm{Pois}} & \leq    \frac{\|f_\eta'\|_{L^\infty} }{\tau} \int_{P\gH} \mathbf{E} \Big|     \sum_{j=1}^J \mathbf{X}_j  -  \tau \sum_{j=1}^J |\alpha_j|^2 e^{-\lambda_j/\tau}  \Big|   \prod_{j=1}^J  \frac{q_j}{\pi}  e^{-q_j |\alpha_j|^2}   \dif u \\
	&\leq   \frac{\eta^{-1} }{\tau}   \int_{P\gH} \left( \mathbf{E} \Big|     \sum_{j=1}^J \mathbf{X}_j  -  \tau \sum_{j=1}^J |\alpha_j|^2 e^{-\lambda_j/\tau}  \Big|^2 \right)^{1/2}   \prod_{j=1}^J  \frac{q_j}{\pi}  e^{-q_j |\alpha_j|^2}   \dif u \\
	&\leq   \frac{\eta^{-1} }{\sqrt{\tau}}   \int_{P\gH} \Big( \sum_{j=1}^J |\alpha_j|^2 e^{-\lambda_j/\tau} \Big)^{1/2}   \prod_{j=1}^J  \frac{q_j}{\pi}  e^{-q_j |\alpha_j|^2}   \dif u 
	\leq   \frac{\eta^{-1} }{\sqrt{\tau}} \Big( \sum_{j=1}^J e^{-\lambda_j/\tau} \frac{q_j}{\pi}   \int_{\mathbb{C}} |\alpha_j|^2   e^{-q_j |\alpha_j|^2}   \dif \alpha_j \Big)^{1/2}  \\
	& =  \frac{\eta^{-1} }{\sqrt{\tau}} \Big( \sum_{j=1}^J q_j^{-1}e^{-\lambda_j/\tau}  \Big)^{1/2}
	=   \frac{\eta^{-1} }{\sqrt{\tau}}  \Big(\sum_{j=1}^J \frac{1}{\tau(e^{\lambda_j/\tau}-1)}  \Big)^{1/2} 
	\leq  \frac{\eta^{-1} }{\sqrt{\tau}}  \tr(Ph^{-1}) ^{1/2} \leq C \tau^{-\frac14}.
\end{align*}
Here the last inequality follows from $\eta^{-1}\leq \Lambda_e^{1/8} \leq  \tau^{1/32}$ and $\tr(Ph^{-1}) \leq \tr(h^{-1})\lesssim 1$. The second error $\cE_{\mathrm{cut}}$ measures the difference between $e^{-\lambda_j/\tau}$ and $1$ inside the cutoff. Using the Lipschitz bound on $f_\eta$, we obtain
\begin{align*}
	\cE_{\mathrm{cut}} & \leq \|f_\eta' \|_{L^\infty} \sum_{j=1}^J (1-e^{-\lambda_j/\tau}  )    \frac{q_j}{\pi}  \int_{\mathbb{C}}  |\alpha_j|^2 e^{-q_j |\alpha_j|^2}   \dif \alpha_j
	\\ & \leq \eta^{-1}  \sum_{j=1}^J (1-e^{-\lambda_j/\tau}  )    q_j^{-1} \leq \eta^{-1} \sqrt{\Lambda_e} \tau^{-1} \leq C \tau^{-\frac14}.
\end{align*}
For the last error $\cE_{\mathrm{dens}}$, we first reduce the $J$-dimensional total variation to one-dimensional comparisons by telescoping the product densities:
\begin{align}\label{J-dimensional:variation}
\int_{P\gH}& \Big| \prod_{j=1}^J  \frac{q_j}{\pi}  e^{-q_j |\alpha_j|^2}    -  \prod_{j=1}^J  \frac{\lambda_j}{\pi}  e^{-\lambda_j |\alpha_j|^2}   \Big| \dif u \nonumber \\
&\leq  \sum_{k=1}^J  \int_{\mathbb{C}^J} \Big( \prod_{j<k}  \frac{q_j}{\pi}  e^{-q_j |\alpha_j|^2} \Big) \Big|  \frac{q_k}{\pi}  e^{-q_k |\alpha_k|^2}  - \frac{\lambda_k}{\pi}  e^{-\lambda_k |\alpha_k|^2}   \Big| \Big( \prod_{j>k} \frac{\lambda_j}{\pi}  e^{-\lambda_j |\alpha_j|^2} \Big) \dif \alpha_1 \cdots \dif \alpha_J \\
&= \sum_{j=1}^J  \int_{\mathbb{C}} \Big| \frac{q_j}{\pi}  e^{-q_j |\alpha_j|^2}    -    \frac{\lambda_j}{\pi}  e^{-\lambda_j |\alpha_j|^2}   \Big| \dif \alpha_j
= \sum_{j=1}^J  \int_{0}^\infty \Big| \frac{q_j}{\lambda_j} e^{-q_j r /\lambda_j} -e^{-r}  \Big| \dif r. \nonumber
\end{align}
In the last step we used polar coordinates and the change of variables $r=\lambda_j |\alpha_j|^2$.
	By the elementary inequalities $\frac{x}{1+x}\leq 1- e^{-x}\leq x$, $x \geq 0$, and by $\lambda_j \leq \Lambda_e \leq \tau$, we have
	\begin{align*}
	\frac12 \leq \frac{1}{1+\lambda_j/\tau}	\leq  \frac{q_j}{\lambda_j}  =\frac{1- e^{-\lambda_j/\tau}}{\lambda_j/\tau} \leq 1.
	\end{align*}
	Combining this bound with the mean value theorem gives
	\begin{align*}
	&  \int_{0}^\infty \Big| \frac{q_j}{\lambda_j} e^{-q_j r /\lambda_j} -e^{-r}  \Big| \dif r
	\leq \Big| 1-\frac{q_j}{\lambda_j}  \Big| \int_{0}^\infty \sup_{s\in[\frac12,1]} |1-sr| e^{-sr } \dif r
	\\ &\quad \leq  \Big| 1-\frac{q_j}{\lambda_j}  \Big| \int_{0}^\infty (1+r) e^{-r/2 } \dif r \leq C  \Big| 1-\frac{q_j}{\lambda_j}  \Big| =C\Big| \frac{\lambda_j/\tau-1+e^{-\lambda_j/\tau}}{\lambda_j/\tau}\Big| \leq \frac{C\lambda_j}{2\tau},
	\end{align*}
	where we used $0\leq x-1+e^{-x}\leq x^2/2$, for $x\geq 0$ in the last inequality. Inserting this estimate into \eqref{J-dimensional:variation}, we obtain
	\begin{align*}
		\cE_{\mathrm{dens}}  \leq C\tau^{-1} \sum_{j=1}^J \lambda_j \leq C\Lambda_e^{\frac32} \tau^{-1} \leq C\tau^{-\frac14}.
	\end{align*}
Combining the estimates on $	\cE_{\mathrm{Pois}} $, $	\cE_{\mathrm{cut}} $, $	\cE_{\mathrm{dens}} $  proves \eqref{convergence on same index2:new}.
	\end{proof}
	
We now remove the projection. The quantum error is controlled by the expected number of particles in the high-frequency sector, while the classical error is controlled by the Gaussian high-frequency tail. 
	We first have that, for any $\tau>0$, $\Lambda_e>0$ and $\eta\in \big(0,\frac12 \cK^2\big)$,
	\begin{align*}
		& \tr_{\gF(P\gH)} \Big( f_\eta(\cN_P/\tau)    \Gamma_{\tau,0,P} \Big)   =   \tr_{\gF(P\gH) \otimes \gF(Q\gH)}  \Big( (f_\eta(\cN_P/\tau)   \Gamma_{\tau,0,P}  )\otimes \Gamma_{\tau,0,Q}  \Big)
		\\ &\quad =  \tr_{\gF(\gH)}  \Big( f_\eta(\cN_{P,\cU} / \tau )\cU^*(\Gamma_{\tau,0,P} \otimes \Gamma_{\tau,0,Q}) \cU\Big)
		=   \tr_{\gF(\gH)}  \Big( f_\eta(\cN_{P,\cU} /\tau)  \Gamma_{\tau,0} \Big),
	\end{align*}
	where we defined $\cN_{P,\cU}=\cU^* (\cN_P \otimes \mathds{1}_{\gF(Q\gH)})\cU$. Since
	$\cN=\cN_{P,\cU}+\cU^*(\mathds{1}_{\gF(P\gH)}\otimes \cN_Q)\cU$, the mean-value formula implies that for $\eta^{-1}\leq \Lambda_e^{\frac18}$,
	\begin{align}\label{diff:Gibbs:cutoff}
		&  \Big|\tr_{\gF(P\gH)} \Big( f_\eta(\cN_P/\tau)    \Gamma_{\tau,0,P} \Big)  -\tr_{\gF(\gH)}  \left(  f_\eta \left(\cN/\tau\right) \Gamma_{\tau,0} \right)\Big| \nonumber \\
		&=   \Big| \tr_{\gF(\gH)}  \Big( f_\eta(\cN_{P,\cU} /\tau)  \Gamma_{\tau,0} \Big)  -\tr_{\gF(\gH)}  \left(  f_\eta \left(\cN/\tau\right) \Gamma_{\tau,0} \right)\Big| \\
		& \leq \|f_\eta'\|_{L^\infty} \tr_{\gF(\gH)}  \Big( \Gamma_{\tau,0}\cU^* (\mathds{1}_{\gF(P\gH)}\otimes \cN_{Q})\cU/\tau \Big)
		\leq \eta^{-1} \tr_{\gF(Q\gH)} \big( \Gamma_{\tau,0,Q} \cN_Q/\tau \big) \leq C\eta^{-1}\Lambda_e^{-\frac12} \leq C\Lambda_e^{-\frac14},  \nonumber
	\end{align}
	where we used \eqref{Wp-W:second} in the last line.
On the classical side, the same Lipschitz estimate gives
		\begin{align}\label{int:f(u)-f(Pu)}
		\Big|  \int_{\gH} &  f_\eta\left( \|u\|_{L^2}^2 \right) 
		\dif \mu_{0}(u)  - \int_{\gH}  f_\eta\left( \|Pu\|_{L^2}^2 \right) 
		\dif \mu_{0}(u) \Big| 
		\leq \|f_\eta'\|_{L^\infty} \int_{\gH} ( \|u\|_{L^2}+  \|Pu\|_{L^2})  \|u-Pu\|_{L^2} \dif \mu_0(u)   \nonumber \\
		&\lesssim \eta^{-1} \Big(\int_{\gH} \|u\|_{L^2}^2\dif \mu_0(u)\Big)^{\frac12}  \Big(\int_{\gH} \|u-Pu\|_{L^2}^2\dif \mu_0(u)\Big)^{\frac12}   \nonumber \\
		&\lesssim \eta^{-1}\Big(\sum_{j\geq 1}\lambda_j^{-1}\Big)^{\frac12} \Big(\sum_{\lambda_j> \Lambda_e}\lambda_j^{-1}\Big)^{\frac12}
		\lesssim \eta^{-1}\Lambda_e^{-\frac14} \leq  C\Lambda_e^{-\frac18},
		\end{align}
		where in the last line we used $\tr(h^{-1})=\sum_{j\geq 1}\lambda_j^{-1}<\infty$ and $\eta^{-1}\leq \Lambda_e^{\frac18}$.
\begin{proof}[Proof of Proposition~\ref{lemma:non-interacting}]
	 For $\tau$ sufficiently large,  since $\eta \geq  \tau^{-\frac{1}{64}}$, we may choose $\Lambda_e \to \infty $ such that $\eta^{-1} \leq \Lambda_e^{\frac18} \leq \tau^{\frac{1}{32}}$. Combining \eqref{convergence on same index2:new}, \eqref{diff:Gibbs:cutoff} and \eqref{int:f(u)-f(Pu)}, we obtain
	 \begin{align*}
	 		&\Big|\tr_{\gF(\gH)} \left( \Gamma_{\tau,0} f_\eta (\cN/\tau)\right) - \int_{\mathfrak{H}}  f_\eta(\|u\|_{L^2}^2) \,\dif \mu_0(u) \Big|  \\
	 		&\leq 
	 		\Big| \tr_{\gF(\gH)}  \left(  f_\eta \left(\cN/\tau\right) \Gamma_{\tau,0} \right) -\tr_{\gF(P\gH)} \Big( f_\eta(\cN_P/\tau)    \Gamma_{\tau,0,P} \Big)  \Big|  \\
	 		&\quad +  \Big| \tr_{\gF(P\gH)} \left( \Gamma_{\tau,0,P} f_\eta (\cN_P/\tau)\right) - \int_{\mathfrak{H}} f_\eta (\|Pu\|_{L^2}^2) \, \dif \mu_0 (u) \Big|  \\
	 		&\quad +\Big|   \int_{\gH}  f_\eta\left( \|Pu\|_{L^2}^2 \right) 
	 		\dif \mu_{0}(u)-  \int_{\gH}  f_\eta\left( \|u\|_{L^2}^2 \right) 
	 		\dif \mu_{0}(u)   \Big| 
	 		\leq C\Lambda_e^{-\frac18} + C\tau^{-\frac14}.
	 \end{align*}
	 Letting $\tau \to \infty$ and $\Lambda_e \to \infty$ proves \eqref{convergence:for:feta}. 
	The convergence in \eqref{convergence:for:any:g} follows by the same argument with the fixed test function $g\in C^\infty_c([0,\infty);\R_+)$. Since $g$ does not depend on $\eta$, the preceding estimates are simpler and no $\eta$-dependent loss has to be tracked. This completes the proof of the proposition.
\end{proof}

		\renewcommand{\appendixname}{Appendix~\Alph{section}}
	\renewcommand{\theequation}{B.\arabic{equation}}
	\section{Uniform projected Oh-Sosoe-Tolomeo bound in the mass subcritical regime}\label{sec:appendix:B}
	In this appendix, we prove the uniform estimate \eqref{mian:appendix A}. The argument is an adaptation of \cite[Sec. 4]{OST22}. There are two differences in the present setting: the nonlinear weight is raised to the power $1+\varsigma$, and the field is replaced by its low-frequency projection $Pu$. Since we work strictly below the	critical mass, the extra factor $1+\varsigma$ can be absorbed by choosing $\varsigma>0$ sufficiently small. The projection introduces no further difficulty, as the estimates in the original argument remain valid after projection. We therefore record only the points at which these two modifications enter the proof.
	
	We first recall the following Gagliardo-Nirenberg-Sobolev inequality on $\mathbb{T}$, see \cite[Lemma 3.3]{OST22}.
		For every $m > 0$, there exists  $C(m) > 0$ such that
		\begin{align}\label{GNS:ineq}
			\|v\|_{L^6(\T)}^6 \leq  \left( C_{\mathrm{GNS}}(1,6)+m \right)  \|v'\|^2_{L^2(\T)} 	 	\|v\|^4_{L^2(\T)} 	 	+C(m) 	\|v\|^6_{L^2(\T)},
		\end{align}
	for all $v \in H^1(\T)$.
	Here, $C_{\mathrm{GNS}}(1,6)=3\|Q\|_{L^2(\R)}^{-4}$ and $\|Q\|_{L^2(\R)}$ is the optimal threshold given by \eqref{optimal:threshold}. We use the following standard Littlewood--Paley projections on $\mathbb{T}$:
	\begin{align*}
		u_{\leq k} :=
			 \sum_{|n| \leq 2^k} \widehat{u}(n) e^{2\pi i n x} , \quad  k\geq 0, \qquad \qquad 
		u_{\geq k}:= 
		\begin{cases}
			 	 \sum_{ |n| > 2^{k-1}} \widehat{u}(n) e^{2\pi i n x},   & k\geq 1,
			 	 \\ u , &k=0.
		\end{cases}
	\end{align*}
	Then $u=	u_{\leq k-1}+ 	u_{\geq k}$ for every $k\geq 1$.	
	\begin{proof}[Proof of \eqref{mian:appendix A}]
	We first split the integral into the regions $\{\|Pu\|_{L^6}\leq 1\}$ and $\{\|Pu\|_{L^6}> 1\}$:
	\begin{align*}
		\int_{\gH} e^{\frac{1+\varsigma}{6}\|Pu\|_{L^6}^6} \mathds{1}_{\{  \|Pu\|_{L^2} \leq \cK_s \}}  \dif \mu_{0}(u) &=  \int_{\gH} e^{\frac{1+\varsigma}{6}\|Pu\|_{L^6}^6} \mathds{1}_{\{  \|Pu\|_{L^2} \leq \cK_s , \, \|Pu\|_{L^6}\leq 1 \}}  \dif \mu_{0}(u) 
		\\ &\quad +\int_{\gH} e^{\frac{1+\varsigma}{6}\|Pu\|_{L^6}^6} \mathds{1}_{\{  \|Pu\|_{L^2} \leq \cK_s , \, \|Pu\|_{L^6}> 1 \}}  \dif \mu_{0}(u) . 
	\end{align*}
	The first contribution is trivially bounded by $e^{\frac{1+\varsigma}{6}}$, so it remains to control the latter region.
	For $k\geq 1$, define $E_k$ by
	\begin{equation}\label{def:Ek}
		\begin{aligned}
			E_k&:=\{\|(Pu)_{\geq 0}\|_{L^6 } >1 , \ldots,\|(Pu)_{\geq k-1}\|_{L^6 } >1 , \|(Pu)_{\geq k}\|_{L^6 } \leq 1 \} 
			\\ &\quad \subset \{ \|(Pu)_{\geq k-1}\|_{L^6 } >1 , \|(Pu)_{\geq k}\|_{L^6 } \leq 1 \}.
		\end{aligned}
	\end{equation}
The sets $E_k$ are disjoint. Since $(Pu)_{\geq k}=0$ for $k> J:=\tr[P]$, then $ E_k=\emptyset$ for any $k> J$. Thus, the $E_k$ form a disjoint decomposition of $\{ \|(Pu)\|_{L^6 } >1 \}$, and it is harmless to sum over
all $k\geq 1$, since only finitely many terms are non-zero: 
	\begin{equation*}\label{series uniform bound}
		\int_{\gH} e^{\frac{1+\varsigma}{6}\|Pu\|_{L^6}^6} \mathds{1}_{\{  \|Pu\|_{L^2} \leq \cK_s , \, \|Pu\|_{L^6}> 1 \}}  \dif \mu_{0}(u)=
			 \sum_{k\geq 1} \int_{\gH} e^{\frac{1+\varsigma}{6}\|Pu\|_{L^6}^6} \mathds{1}_{\{  \|Pu\|_{L^2} \leq \cK_s , \, E_k \}}  \dif \mu_{0}(u) . 
	\end{equation*}
As in \cite[(4.4)]{OST22}, H\"older's inequality and Young’s inequality imply that, for $u$ satisfying $\|(Pu)_{\geq k}\|_{L^6}\leq 1$ and some small $\gamma>0$ (to be chosen later):
		\begin{align*}
			\int_{\T} |Pu|^6 \dif x&=\int_{\T} |(Pu)_{\leq k-1}+(Pu)_{\geq k}|^6 \dif x \leq \sum_{\ell=0}^6  {6 \choose \ell} \int_{\T}|(Pu)_{\leq k-1}|^{6-\ell}|(Pu)_{\geq k}|^{\ell} \dif x \nonumber
			\\ &  \leq 	\int_{\T} |(Pu)_{\leq k-1}|^6 \dif x+ \sum_{\ell=1}^6 {6 \choose \ell} \left(  \frac{6-\ell}{6} \gamma \|(Pu)_{\leq k-1}\|_{L^6}^{6} + 	\frac{\ell}{6}\gamma^{-\frac{6-\ell}{\ell}}  	 \right)
			\\ &\leq \left( 1 +(2^6-1)\gamma \right) 	\int_{\T} |(Pu)_{\leq k-1}|^6 \dif x +C_6(\gamma). \nonumber
		\end{align*}
		Combining this bound with \eqref{def:Ek} yields
		\begin{align}\label{bound:Pu:L6}
				\int_{\gH} &  e^{\frac{1+\varsigma}{6}\|Pu\|_{L^6}^6} \mathds{1}_{\{  \|Pu\|_{L^2} \leq \cK_s , \, E_k \}}  \dif \mu_{0}(u) 
		\\ &	\leq 	\int_{\gH} e^{\frac{1+\varsigma}{6}\|Pu\|_{L^6}^6} \mathds{1}_{\{  \|Pu\|_{L^2} \leq \cK_s , \, \|(Pu)_{\geq k-1}\|_{L^6 } >1 , \|(Pu)_{\geq k}\|_{L^6 } \leq 1 \}}  \dif \mu_{0}(u) \nonumber
			\\ & \leq e^{C_6(\gamma)} \int_{\gH} \mathrm{exp}\Big(\frac{(1+\varsigma)(1+63\gamma)}{6}\int_{\T } |(Pu)_{\leq k-1}|^6 \dif x\Big) \mathds{1}_{\{  \|Pu\|_{L^2} \leq \cK_s , \, \|(Pu)_{\geq k-1}\|_{L^6 } >1  \}}  \dif \mu_{0}(u) .\nonumber
		\end{align}	 
	 Applying \eqref{GNS:ineq} to $(Pu)_{\leq k-1}$ with some small $m>0$ (to be chosen later), and using H\"older's inequality, we deduce from \eqref{bound:Pu:L6} that
	\begin{align}\label{bound:Pu:L6a}
		&	\int_{\gH} e^{\frac{1+\varsigma}{6}\|Pu\|_{L^6}^6} \mathds{1}_{\{  \|Pu\|_{L^2} \leq \cK_s , \, E_k \}}  \dif \mu_{0}(u) 
		\\ & \leq e^{C_6(\gamma)+C(m)\cK_s^6}  \int_{\{   \|(Pu)_{\geq k-1}\|_{L^6 } >1  \}} \mathrm{exp} \Big(\frac{(C_{\mathrm{GNS}}(1,6)+m)\cK_s^4(1+\varsigma)(1+63\gamma)}{6}\int_{\T } |\partial_x(Pu)_{\leq k-1}|^2 \dif x\Big) \dif \mu_{0}(u) \nonumber 
		\\ &\leq e^{C_6(\gamma)+C(m)\cK_s^6}  \left(  \int_{\gH} \mathrm{exp} \Big( \big(3\|Q\|^{-4}_{L^2(\R)}+m\big)\frac{\cK_s^4(1+\varsigma)(1+63\gamma)(1+\iota)}{6}\int_{\T } |\partial_x(Pu)_{\leq k-1}|^2 \dif x\Big)   \dif \mu_{0}(u)  \right)^{\frac{1}{1+\iota}} \nonumber
		\\ &\qquad \times \left[ \mu_0\left(  \|(Pu)_{\geq k-1}\|_{L^6 } >1 \right) \right]^{\frac{\iota}{1+\iota}}. \nonumber
	\end{align}
	This is the only point where the extra factor $1+\varsigma$ enters. Since $\cK_s <\|Q\|_{L^2(\R)}$, we can choose $m,\varsigma,\gamma,\iota>0$ sufficiently small such that 
	\begin{align*}
		\frac{\cK_s^4}{\|Q\|^4_{L^2(\R)}}\left( 1+\frac{m}{3}\|Q\|^4_{L^2(\R)} \right)(1+\varsigma)(1+63\gamma)(1+\iota)  =:2c<1.
	\end{align*}
	Writing $u=\sum_{j\in\mathbb Z}\alpha_j e^{2\pi i j x}$ under $\mu_0$, the first integral on the right-hand side of \eqref{bound:Pu:L6a} is bounded by
		\begin{align}\label{uniform:projection1}
		&\int_{\gH}\exp\left(c \int_{\T}|\partial_x(Pu)_{\leq k-1}|^2\,\dif x\right)\,\dif\mu_0(u) \leq \int_{\gH}\exp\left(c \sum_{1\leq  |j| \leq 2^{k-1}} (2\pi j)^2 |\alpha_j|^2 \right)\,\dif\mu_0(u) \nonumber
		\\&\qquad=\prod_{ 1\leq  |j|\leq 2^{k-1}}\frac{\lambda_j}{\lambda_j-c(2\pi j)^2}
		\leq \prod_{1\leq |j|\leq 2^{k-1}}(1-2c)^{-1}
		\leq (1-2c)^{-2^k},
	\end{align}
	where in the second inequality we used $c(2\pi j)^2 /\lambda_j \leq 2c $. For the second probability on the right-hand side of \eqref{bound:Pu:L6a}, the Gaussian high-frequency $L^6$ tail estimate from \cite[(2.8)]{OST22} gives
	\begin{align*}
			\mu_0\left(\|u_{\geq k-1}\|_{L^6}>1\right)
		\leq C_0 \mathrm{exp} \Big(-C 2^{\frac43 k}  \Big),
	\end{align*}
	for some constants $C_0,C>0$ independent of $k$.
	The same bound remains valid uniformly after inserting the projection $P$. Indeed, the proof of the Gaussian high-frequency tail estimate in \cite[(2.2)--(2.8)]{OST22} is uniform with respect to an additional upper frequency cutoff. Applying that argument to the truncated field $(Pu)_{\geq k-1}$ gives
	\begin{align}\label{uniform:projection}
	\mu_0\left(\|(Pu)_{\geq k-1}\|_{L^6}>1\right)
	\leq C_0 \mathrm{exp} \Big(-C 2^{\frac43 k}  \Big).
\end{align}
Substituting \eqref{uniform:projection1} and \eqref{uniform:projection} into \eqref{bound:Pu:L6a}, we obtain
	\begin{align*}
		\eqref{bound:Pu:L6a} \leq C_{\iota,\gamma,m,\cK_s} \left( 1- 2c \right)^{-\frac{2^k}{1+\iota}}  \mathrm{exp} \Big(-C \frac{\iota}{1+\iota}2^{\frac43 k}  \Big) = C_{\iota,\gamma,m,\cK_s}	\mathrm{exp}\left( -C\frac{\iota}{1+\iota}2^{\frac43 k}+\frac{2^k}{1+\iota} \log \frac{1}{1-2c} \right).
	\end{align*}
	The right-hand side is summable in $k$, uniformly in $P$, because the negative term has order $2^{4k/3}$ whereas the positive term has order $2^k$. Therefore
	\begin{align*}
		&\sum_{k\geq 1} \int_{\gH} e^{\frac{1+\varsigma}{6}\|Pu\|_{L^6}^6} \mathds{1}_{\{  \|Pu\|_{L^2} \leq \cK_s , \, E_k \}}  \dif \mu_{0}(u)  
	\leq C_{\iota,\gamma,m,\cK_s} \sum_{k\geq 1}
		\mathrm{exp}\left( -C\frac{\iota}{1+\iota}2^{\frac43 k}+\frac{2^k}{1+\iota} \log \frac{1}{1-2c} \right)<\infty .
	\end{align*}
This proves \eqref{mian:appendix A}. 	\end{proof}

	\begingroup
	\makeatletter
	\let\addcontentsline\@gobblethree
	\makeatother

	\section*{Acknowledgments}
	The authors would like to thank Jacky J. Chong and Tadahiro Oh for inspiring discussions and helpful comments on an earlier version of this paper.
Part of this work was completed while the first author was visiting LMU Munich, and he would like to thank the members of the Department of Mathematics at LMU for the warm hospitality and support. P.T.N. gratefully acknowledges the financial support from the European Research Council via the ERC Consolidator Grant RAMBAS (Project Nr. 10104424) and the Deutsche Forschungsgemeinschaft (DFG, German Research Foundation) -- TRR 352, Project-ID 470903074. L.L. and R.Z. are grateful to the financial supports by National Key R\&D Program of China (No. 2022YFA1006300) and the financial supports of the NSFC (No. 12271030, No. 12426205) and the financial supports by the Deutsche Forschungsgemeinschaft (DFG, German Research Foundation) --SFB 1283, Project-ID 317210226. 
	
	\endgroup

	\def\cprime{$'$} \def\ocirc#1{\ifmmode\setbox0=\hbox{$#1$}\dimen0=\ht0
		\advance\dimen0 by1pt\rlap{\hbox to\wd0{\hss\raise\dimen0
				\hbox{\hskip.2em$\scriptscriptstyle\circ$}\hss}}#1\else {\accent"17 #1}\fi}

\end{document}